\newif\ifdraft\draftfalse
\definecolor{LimeGreen}{rgb}{0.2, 0.8, 0.2}
\definecolor{Vert}{rgb}{0.235, 0.80, 0.294}
\definecolor{Crimson}{rgb}{0.86, 0.08, 0.24}
\definecolor{Orange}{rgb}{0.98, 0.6, 0.01}
\definecolor{Blue}{rgb}{0, 0.5, 1}
\newcommand\todo[1]{{\footnotesize \color{OliveGreen}
[\textbf{To do:} #1]}}
\newcommand\os[1]{\marginpar[{\color{cyan}\small\dbend}]{
\color{cyan}\small\dbend}{\footnotesize \color{cyan}
[#1 - \textbf{Olivier}]}}
\newcommand\mh[1]{\marginpar[{\color{orange}\small\dbend}]{
\color{orange}\small\dbend}{\footnotesize \color{orange} 
[#1 - \textbf{Matt}]}} 
\newcommand\am[1]{\marginpar[{\color{orange}\small\dbend}]{
\color{orange}\small\dbend}{\footnotesize \color{red} 
[#1 - \textbf{Andrzej}]}} 
\newcommand\cb[1]{\marginpar[{\color{orange}\small\dbend}]{
\color{orange}\small\dbend}{\footnotesize \color{OliveGreen} 
[#1 - \textbf{Chriss}]}}
\newcommand\todo[1]{}
\newcommand\os[1]{}
\newcommand\cb[1]{}
\newcommand\am[1]{}
\newcommand\mh[1]{}
\newcommand\mhnopar[1]{}
\newcommand\review[1]{}
\newcommand   \ie{\hbox{\sl i.e.\ }}
\newcommand   \eg{\hbox{\sl e.g.\ }}
\newcommand   \resp{\hbox{\sl resp.\ }}
\newcommand\textbfit[1]{{\bf\em #1}}
\newcommand{\concept}[1]{\textbfit{\boldmath #1}}
\newcommand{\defin}[1]{\concept{#1}}
\newcommand\makeset[1]{\{\,#1\,\}}
\newcommand\anglebra[1]{\langle\, #1 \,\rangle}
\renewcommand\anglebra[1]{(#1)}
\newcommand\mor{\longrightarrow}
\newcommand\domain[1]{{\sf dom}(#1)}
\newcommand\Op[2]{\mathit{Op}_{#1}^{#2}}
\newcommand\topone[1]{{\mathit top}_1\,{#1}}
\newcommand\pushone[1]{{\mathit push}_1^{#1}}
\newcommand\collapse{{\mathit collapse}}
\newcommand\pushlk[2]{{\mathit push}_{1}^{#2,#1} }
\newcommand\popn[1]{{\mathit pop}_{#1}}
\newcommand\pushn[1]{{\mathit push}_{#1}}
\newcommand\topn[1]{{\mathit top}_{#1}}
\newcommand{\toprew}[1]{\mathit rew_1^{#1}}
\newcommand\mksk[1]{\hbox{\tt{[}} #1 \hbox{\tt{]}}}
\newcommand\lsk{\hbox{\tt{[}}}
\newcommand\rsk{\hbox{\tt{]}}}
\newcommand\order[1]{{\mathit ord}(#1)}
\newcommand\id{\mathit{id}}
\newcommand\arrlab[1]{\xrightarrow{#1}}
\newcommand\stack{s}
\newcommand\varstack{t}
\renewcommand{\epsilon}{\varepsilon}
\renewcommand{\phi}{\varphi}
\newcommand{\Eloise}{\'Elo\"ise\xspace}
\newcommand{\Abelard}{Abelard\xspace}
\renewcommand{\phi}{\varphi}
\newcommand{\ttrue}{t\! t}
\newcommand{\ffalse}{f\!\! f}
\renewcommand{\ttrue}{q_{\mathfrak{t}}}
\renewcommand{\ffalse}{q_{\mathfrak{f}}}
\newcommand{\VE}{V_\Ei}
\newcommand{\VA}{V_\Ai}
\newcommand{\Ei}{\mathrm{E}}
\newcommand{\Ai}{\mathrm{A}}
\newcommand{\fstrat}{\widetilde{\phi}}
\newcommand{\strat}{\phi}
\newcommand{\arena}{\mathcal{G}}
\newcommand{\pgraph}{G}
\newcommand{\pggraph}{\mathcal{G}}
\newcommand{\game}{\mathbb{G}}
\newcommand{\pgame}{\mathbb{G}}
\newcommand{\fgraph}{\widetilde{G}}
\newcommand{\fggraph}{\widetilde{\mathcal{G}}}
\newcommand{\fgame}{\widetilde{\mathbb{G}}}
\newcommand{\fplay}{{\widetilde{\lambda}}}
\newcommand{\play}{\lambda}
\newcommand{\WC}{\Omega}
\newcommand{\col}{\rho}
\newcommand{\colors}{C}
\newcommand{\exptime}{\textsc{ExpTime}}
\newcommand{\pprocess}{\mathcal{P}}
\newcommand{\stacks}{St}
\newcommand{\sh}{sh}
\newcommand {\Stepsg}[1]{\ensuremath{\mathit{Steps}_{#1}}}
\newcommand{\pcol}[1]{mcol^{#1}}
\newcommand{\vect}[1]{\overrightarrow{#1}}
\newcommand {\Rounds}[1]{\ensuremath{\mathit{Rounds}_{#1}}}
\newcommand{\ini}{\textrm{in}}
\newcommand{\eat}[1]{}
\renewcommand{\pprocess}{\mathcal{A}}
\newcommand{\maxcolor}{d}
\newcommand{\LinkRk}{LinkRk}
\newcommand{\sdp}{abstract pushdown automaton\xspace}
\newcommand{\sdps}{abstract pushdown automata\xspace}
\newcommand{\sda}{abstract pushdown alphabet\xspace}
\newcommand{\sd}{abstract pushdown content\xspace}
\newcommand{\sdg}{abstract pushdown graph\xspace}
\newcommand{\sdgg}{abstract pushdown arena\xspace}
\newcommand{\sdpgame}{abstract pushdown parity game\xspace}
\newcommand{\sdgame}{abstract pushdown game\xspace}
\newcommand{\SDPGAMES}{Abstract Pushdown Parity Games\xspace}
\newcommand\mklksk[2]{\hbox{\tt{[}} #2 \nd(#1){\hbox{\tt{]}}}}
\newcommand{\transgraph}[1]{\mathrm{Graph}(#1)}
\newcommand{\flaten}[1]{\widetilde{#1}}
\newcommand{\target}[1]{\mathit{target}(#1)}
\newcommand{\targetf}{\mathit{target}}
\newcommand{\stratCPDA}[1]{\varphi_{#1}}
\newcommand{\rk}{{\rm rk}}
\newcommand{\indef}{\circlearrowleft}
\newcommand{\nalr}{\dagger}
\newcommand{\mcolsofar}{\theta}
\newcommand{\pprocessrk}{\mathcal{A}_{\rk}}
\newcommand{\colrk}{\col_{\rk}}
\newcommand{\Ark}{\pprocessrk}
\newcommand{\Gammark}{\Gamma_{\rk}}
\newcommand{\Qrk}{Q_{\rk}}
\newcommand{\QrkE}{Q_{\rk,\Ei}}
\newcommand{\QrkA}{Q_{\rk,\Ai}}
\newcommand{\Deltark}{\Delta_{\rk}}
\newcommand{\pgraphrk}{{G}_{\rk}}
\newcommand{\pggraphrk}{\mathcal{G}_{\rk}}
\newcommand{\pgamerk}{\pgame_{\rk}}
\renewcommand{\path}{\pplay}
\newcommand{\ind}[1]{ind(#1)}
\newcommand{\pstratrk}{\strat_{\rk}}
\newcommand{\pplayrk}{\play_{\rk}}
\newcommand{\Srk}{\mathcal{S}_{\rk}}
\newcommand{\Vrk}{V_{\rk}}
\newcommand{\Brk}{\mathcal{B}_{\rk}}
\newcommand{\lf}{{\rm lf}}
\newcommand{\Blf}{\mathcal{B}_{\lf}}
\newcommand{\pprocesslf}{\mathcal{A}_{\lf}}
\newcommand{\Gammalf}{\Gamma_{\lf}}
\newcommand{\Qlf}{Q_{\lf}}
\newcommand{\Deltalf}{\Delta_{\lf}}
\newcommand{\pgraphlf}{{G}_{\lf}}
\newcommand{\pggraphlf}{\mathcal{G}_{\lf}}
\newcommand{\pgamelf}{\pgame_{\lf}}
\newcommand{\pstratlf}{\strat_{\lf}}
\newcommand{\pplaylf}{\play_{\lf}}
\newcommand{\Slf}{\mathcal{S}_{\lf}}
\newcommand{\abs}{{\rm abs}}
\newcommand{\pggraphabs}{\mathcal{G}_{\abs}}
\newcommand{\pgameabs}{\mathbb{G}_{\abs}}
\newcommand{\saut}{\textcolor{white}{.\smallskip}}
\newcommand{\Qf}{\widetilde{Q}}
\newcommand{\Gammaf}{\widetilde{\Gamma}}
\newcommand{\fpprocess}{\widetilde{\pprocess}}
\newcommand{\Deltaf}{\widetilde{\Delta}}
\newcommand{\qinif}{\widetilde{q_0}}
\newcommand{\Sf}{\widetilde{\mathcal{S}}}
\renewcommand{\S}{\mathcal{S}}
\newcommand{\Last}{Last}
\newcommand{\Extremal}{Ext}
\newcommand{\Memory}{Mem}
\newcommand{\expn}[1]{{\rm Exp}_{#1}}
\newcommand{\cpp}{C\nolinebreak\hspace{-.05em}\raisebox{.4ex}{\tiny\textbf{ +}}\nolinebreak\hspace{-.10em}\raisebox{.4ex}{\tiny\textbf{ +}}}
\begin{document}

\theoremstyle{acmdefinition}
\newtheorem{remark}[theorem]{Remark}
\newtheorem{property}[theorem]{Property}

\title{Collapsible Pushdown Parity Games}

\author{Christopher~H. Broadbent}
\affiliation{%
  \institution{Department of Computer Science, University of Oxford}
  \city{Oxford}
  \country{UK}
}
\email{chbroadbent@gmail.com}

\author{Arnaud Carayol}
\affiliation{%
  \institution{CNRS, LIGM (Université Paris Est \& CNRS)}
  \streetaddress{5 boulevard Descartes — Champs sur Marne}
  \city{Marne-la-Vallée Cedex 2}
  \postcode{77454}
  \country{France}
}
\email{Arnaud.Carayol@univ-mlv.fr}

\author{Matthew Hague}
\affiliation{%
  \institution{Royal Holloway, University of London}
  \city{London}
  \country{UK}
}
\email{Matthew.Hague@rhul.ac.uk}

\author{Andrzej~S. Murawski}
\affiliation{%
  \institution{Department of Computer Science, University of Oxford}
  \city{Oxford}
  \country{UK}
}
\email{Andrzej.Murawski@cs.ox.ac.uk}

\author{C.-H. Luke Ong}
\affiliation{%
  \institution{Department of Computer Science, University of Oxford}
  \city{Oxford}
  \country{UK}
}
\email{Luke.Ong@cs.ox.ac.uk}

\author{Olivier Serre}
\orcid{0000-0001-5936-240X}
\affiliation{%
  \institution{Université de Paris, IRIF, CNRS}
  \streetaddress{Bâtiment Sophie Germain, Case courrier 7014, 
8 Place Aurélie Nemours}
  \city{Paris Cedex 13}
  \postcode{75205}
  \country{France}}
\email{Olivier.Serre@cnrs.fr}

\begin{abstract}
This paper studies a large class of two-player perfect-information turn-based parity games on infinite graphs, namely those generated by collapsible pushdown automata. The main motivation for studying these games comes from the connections from collapsible pushdown automata and higher-order recursion schemes, both models being equi-expressive for generating infinite trees. Our main result is to establish the decidability of such games and to provide an effective representation of the winning region as well as of a winning strategy. Thus, the results obtained here provide all necessary tools for an in-depth study of logical properties of trees generated by collapsible pushdown automata/recursion schemes.
\end{abstract}

\acmSubmissionID{}

%
%

\begin{CCSXML}
<ccs2012>
   <concept>
       <concept_id>10003752.10003766</concept_id>
       <concept_desc>Theory of computation~Formal languages and automata theory</concept_desc>
       <concept_significance>500</concept_significance>
       </concept>
   <concept>
       <concept_id>10003752.10003790.10011192</concept_id>
       <concept_desc>Theory of computation~Verification by model checking</concept_desc>
       <concept_significance>500</concept_significance>
       </concept>
 </ccs2012>
\end{CCSXML}

\ccsdesc[500]{Theory of computation~Formal languages and automata theory}
\ccsdesc[500]{Theory of computation~Verification by model checking}

%
%

\keywords{Higher-Order (Collapsible) Pushdown Automata, Two-Player Perfect-Information Trun-Based Parity Games, Logic}

\maketitle

\renewcommand{\shortauthors}{Broadbent et al.}


\section{Introduction}

This paper studies a large class of two-player perfect-information turn-based parity games on infinite graphs, namely those generated by collapsible pushdown automata (CPDA). 

\subsection*{Parity Games on Infinite Graphs}

A two-player perfect-information turn-based parity game on a graph (or simply a parity game) is played by two players, \Eloise and \Abelard, who move a pebble along edges of a graph whose vertices have been partitioned between the two players and coloured by a function assigning to every vertex a colour chosen in a finite subset of $\mathbb{N}$. The player owning the current vertex, chooses where to move the pebble next and so on forever. Hence, a play is an infinite path in the graph, and the winner is determined thanks to the colouring function by declaring \Eloise to win if and only if the smallest colour appearing infinitely often is even. 

Parity games have been widely studied since the 80s because of their close links to important problems arising from logic. A fundamental result of Rabin is that $\omega$-regular tree languages, equivalently tree languages definable in monadic second-order (MSO) logic, form a Boolean algebra~\cite{Rabin69}. The difficult part of the proof is complementation, and since the publication of this result in 1969, it has been a challenging problem to simplify it. A much simpler one was obtained by Gurevich and Harrington in \cite{GurevichH82} making use of parity games for checking membership of a tree in the language accepted by an automaton: \Eloise builds a run on the input tree while \Abelard  tries to exhibit a rejecting branch in the run. The proof of Gurevich and Harrington was followed by many others trying to simplify the original proof of Rabin, and beyond this historical result, the tight connection between automata and games is one of the main tools in the areas of automata theory and logic (see \eg \cite{Thomas97,Wilke2001,Walukiewicz04}). 

The above-mentioned result of Rabin is equivalent to the fact that, given a formula from MSO logic, one can decide whether it holds in the complete infinite binary tree. Whether this result can be extended to more and more complex classes of trees is an active line of research since then. While decidability of MSO logic on the complete binary tree is equivalent to deciding whether \Eloise has a winning strategy in a parity game played on a \emph{finite} graph, extensions to more complex trees require one to consider games played on infinite graphs (and the more general the trees, the more general the graphs to be considered). 

Since the late 1990s, another important motivation for considering games played on infinite graphs emerged because of their connections with program verification. Here, there is a trade-off between richness of the graph describing the program to verify and decidability of the logic used to express the property to check. Regarding logic, most of the logics considered in program verification are captured by the $\mu$-calculus (an extension of modal logic with fixpoint operators) and therefore the model-checking problem is reduced again to solving a parity game played on a graph that is a synchronised product between the graph describing the system to verify and a finite graph describing the dynamic of the formula. Hence, the quest here is to look for graphs that model programs using natural features in programming languages (\eg recursion, higher-order arguments, rich data domains, etc.) and whose associated parity games remain decidable.

Both objectives —~extending Rabin's result to richer trees and verifying programs with natural features in programming languages~— games played on graphs generated by pushdown automata and their extensions, in particular collapsible pushdown automata, have proven to be fruitful. In a nutshell, collapsible pushdown automata extend usual pushdown automata by replacing the (order-$1$) stack by an order-$n$ stack that is defined as a stack whose elements are order-$(n-1)$ stacks and whose base symbols are equipped with links pointing deeper in the stack and that can later be used to collapse the stack.

\subsection*{Main Results}

Collapsible pushdown automata are equi-expressive with higher-order recursion schemes —~these are essentially finite typed deterministic term rewriting systems that generate an infinite tree when one applies the rewriting rules \emph{ad infinitum}~— for generating trees~\cite{HMOS08,HMOS17}, this class of trees subsumes all known classes of trees with decidable MSO theories. Regarding programs, collapsible pushdown automata permit to capture higher-order procedure calls —~a central feature in modern day programming and supported by many languages such as  \cpp, Haskell, OCaML, Javascript, Python, or Scala. 

Hence, considering parity games played on transition graphs of (collapsible) pushdown automata is a central problem for both extending Rabin's seminal result and verifying real-life programs. The study of such games raises three questions of increasing difficulty.
\begin{enumerate}
	\item Decide, for a given initial position, whether \Eloise has a winning strategy, \ie whether she has a way to play that guarantees she wins regardless of the choices of \Abelard. In the context of program verification, the counterpart of this question is the (local) model-checking problem.
	\item Finitely describe \Eloise's winning region, \ie the set of all positions from which she has a winning strategy. While in the setting of games on finite graphs this is equivalent to the previous question, when considering an infinite graph it is unclear whether a finite presentation of the winning region exists and, when it does, specific tools must be used to describe such an object. In the context of program verification, the counterpart of this question is the global model-checking problem.
	\item Finitely describe, for a given initial position, a winning strategy for \Eloise. Note that a classical result (positional determinacy~\cite{EJ91}) on parity games states that winning strategies can always be chosen to be positional, \ie to depend only on the current vertex; however, when describing a winning strategy in a game played on an infinite graph, the purpose is to find a suitable machine model of implementing a winning strategy rather  than focusing on capturing a special (simple) form of winning strategies. In the context of program verification, the counterpart of this question is the synthesis problem.
\end{enumerate}

In this paper we positively answer those questions. More specifically, our main Theorem implies the following.
\begin{enumerate}
	\item One can decide, for a given initial position, whether \Eloise has a winning strategy and this is an $n$-\exptime-complete problem, where $n$ is the order of the underlying collapsible pushdown automaton.
	\item We introduce a model of finite-state automata defining regular sets of configurations of collapsible pushdown automata and prove that the winning region is always such an (effective) regular set.
	\item We introduce a model of collapsible pushdown automata  tailored to describing strategies and prove that, for any game, we can compute a winning strategy described by such a machine. 
\end{enumerate}

Note that the above-mentioned results were presented by the authors in a series of papers in the LiCS conference~\cite{HMOS08,BCOS10,CS12} and that the current paper gives 
a unifying and complete presentation of their proofs.

\subsection*{Related Work}
We briefly review the known results on collapsible pushdown parity games (and subclasses). See Table~\ref{fig:CPDAgames} for a summary.

The first paper explicitly considering pushdown games (\ie order-$1$ CPDA games) is \cite{Walukiewicz96,Walukiewicz01}: an optimal algorithm for deciding the winner is given (\exptime-complete) as well as a construction of a strategy realised by a synchronised pushdown automaton. However, decidability can be derived from the MSO decidability of pushdown graphs \cite{MullerS85} in combination with the existence of positional winning strategies in parity games on infinite graphs \cite{EJ91}: indeed one can write an MSO formula stating the existence of a positional winning strategy for \Eloise (see \emph{e.g.} \cite{CachatPHD} for such a formula). A construction similar to the one in \cite{Walukiewicz96,Walukiewicz01} was given by Serre in his Ph.~D.~\cite{SerrePHD}, and we partly build upon it in the present paper. Another approach, using two-way alternating parity tree automata, was developed by Vardi in~\cite{Vardi98}.  The winning region was characterised in \cite{Serre03,Cachat02} and later in \cite{Hag08,HagueO11} using saturation techniques.

Cachat first considered parity games played on transition graphs of higher-order pushdown automata (HOPDA, a strict subclass of collapsible pushdown automata) in \cite{Cachat03} providing an optimal algorithm for deciding the winner ($n$-\exptime-complete, where $n$ is the order). As for pushdown games, decidability can be derived from the MSO decidability of higher-order pushdown graphs \cite{Caucal02} in combination with the existence of positional winning strategies in parity games on infinite graphs \cite{EJ91}. An alternative simpler proof was given in \cite{CHMOS08} that permits moreover to characterise the winning region and to construct a synchronised order-$n$ higher-order pushdown automaton realising a winning strategy. Also see \cite{CS08} for an approach extending the techniques of \cite{Vardi98} to higher-order, and \cite{BM04,HagueO08} for saturation techniques (for the reachability winning condition only).

Order-$2$ collapsible pushdown parity games were considered in \cite{KNUW05} (under the name of panic automata), where an optimal algorithm for deciding the winner (2-\exptime-complete) was given. The general case was later solved in \cite{HMOS08}. Winning regions were characterised in \cite{BCOS10} and the winning strategies in \cite{CS12} (even if the results are somehow implicit in \cite{HMOS08}). 
Finally, in \cite{BCHS12}, for the case of the reachability winning condition, the approach of \cite{HagueO08} was extended, leading to an algorithm based on the saturation method to compute the winning region, and on top of this algorithm the C-SHORe tool was developed~\cite{BCHS13}.

\definecolor{LightSkyBlue}{rgb}{0.53, 0.81, 0.98}
	\definecolor{Gold}{rgb}{1.0, 0.84, 0.0}

\begin{table}
\begin{center}
\begin{tikzpicture}

\shadedraw [top color=orange,shading angle=-90]  (-5,0) rectangle (7,4);
\shadedraw [top color=LightSkyBlue,shading angle=-90]  (-5,4) rectangle (7,8);
\shadedraw [top color=Gold,shading angle=-90]  (-5,8) rectangle (7,12);

\draw[very thick] (-5,0) rectangle (7,12);

\node at (-3,-0.5) {\textbf{Pushdown}};
\draw[very thick,dotted] (-1,0) -- (-1,12);
\node at (1,-0.5) {\textbf{$\mathbf{n}$-HOPDA}};
\draw[very thick,dotted] (3,0) -- (3,12);
\node at (5,-0.5) {\textbf{$\mathbf{n}$-CPDA}};
\node[rotate=90] at (-5.5,2) {\textbf{Solving}};
\node[rotate=90] at (-5.5,6) {\textbf{Winning region}};
\node[rotate=90] at (-5.5,10) {\textbf{Winning strategy}};

\node[text width=5.5cm,align=center] at (-3,2) {Decidable\\ {\cite{MullerS85} + \cite{EJ91}}\\\exptime-complete\\ {\cite{Walukiewicz96,Vardi98,SerrePHD}}};

\node[text width=5.5cm,align=center] at (1,2) {Decidable\\ {\cite{Caucal02} + \cite{EJ91}}\\\bigskip $n$-\exptime-complete\\ {\cite{Cachat03,CHMOS08}}};

\node[text width=5.5cm,align=center] at (5,2) {$n$-\exptime-complete\\ {\cite{HMOS08}}\\ \bigskip See also {\cite{KNUW05}} for\\ a previous study\\ at order-$2$};

\node[text width=5cm,align=center] at (-3,6) {Regular {\cite{Serre03,Cachat02,Hag08,HagueO11}}};

\node[text width=4cm,align=center] at (1,6) {Regular\\ {\cite{CHMOS08,CS08}}\\ \bigskip See also {\cite{BM04,HagueO08}} for reachability using saturation methods};

\node[text width=4cm,align=center] at (5,6) {Regular\\ {\cite{BCOS10}}\\ \bigskip See also {\cite{BCHS12}} for reachability using saturation methods};

\node[text width=3.9cm,align=center] at (-3,10) {Realised by a synchronised pushdown automaton\\ {\cite{Walukiewicz96,SerrePHD}}};
\node[text width=4cm,align=center] at (1,10) {Realised by a synchronised $n$-HOPDA\\ {\cite{CHMOS08,CS08}}};
\node[text width=4cm,align=center] at (5,10) {Realised by a synchronised $n$-CPDA\\ {\cite{HMOS08,CS12}}};

\end{tikzpicture}
\caption{Known results on collapsible pushdown parity games and subclasses.}\label{fig:CPDAgames}
\end{center}
\end{table}

\subsection*{Consequences}

The consequences of the results presented here, together with the equi-expressivity result~\cite{HMOS08,HMOS17,CS12} between  higher-order recursion schemes and collapsible pushdown automata for generating trees, are mainly for the study of logical properties of the infinite trees generated by recursion schemes. In particular, they imply the decidability of the MSO model-checking problem, both its local~\cite{HMOS08} and global version (also known as reflection)~\cite{BCOS10}, and the MSO selection problem (a synthesis-like problem)~\cite{CS12}. 

Due to space constraints, these results are discussed in full detail in a companion paper~\cite{BCOS20}.

\subsection*{Structure of This Paper}

The article is organised as follows. Section~\ref{section:Preliminaries} introduces the main concepts and some intermediate results. In Section~\ref{section:Results} we state our main result. Its proof is by induction and each induction step is divided into three sub-steps, which are respectively described in Section~\ref{section:rankAware} (providing a normal form for CPDA), Section~\ref{section:outermostLinks} (getting rid of the outmost links in the stack structure) and Section~\ref{section:reducingOrder} (reducing the order of the CPDA). Section~\ref{section:summary} summarises the proof and establishes matching upper and lower complexity bounds. Finally, Section~\ref{section:consequences} discusses some logical consequences for collapsible pushdown graphs.

\section{Preliminaries}\label{section:Preliminaries}

\subsection{Basic Objects}

An \defin{alphabet} $A$ is a (possibly infinite) set of letters. In the sequel $A^*$ denotes the set of \defin{finite words} over $A$, and $A^\omega$ the set of \defin{infinite words} over $A$. The empty word is written $\epsilon$ and the length of a word $u$ is denoted by $|u|$. Let $u$ be a finite word and $v$ be a (possibly infinite) word. Then $u\cdot v$ (or simply $uv$) denotes the concatenation of $u$ and $v$; the word $u$ is a prefix of $v$ iff there exists a word $w$ such that $v=u\cdot w$.

A \defin{graph} is a pair $G=(V,E)$, where $V$ is a (possibly infinite) set of
\defin{vertices} and $E\subseteq V\times V$ is a (possibly infinite) set of
\defin{edges}. For every vertex $v$ we let $E(v)=\{w\mid (v,w)\in E\}$. A \defin{dead-end} is a vertex $v$ such that $E(v)=\emptyset$.

When $\tau$ is a (partial) mapping, we let $\domain{\tau}$ denote its domain.

\subsection{Two-Player Perfect-Information Parity Games}

An \defin{arena} is a triple $\arena=(G,\VE,\VA)$, where $G=(V,E)$ is a graph and $V=\VE\uplus\VA$ is a partition of the vertices among two players, \Eloise and \Abelard. For simplicity in the definitions, we assume that $G$ has no dead-end.

\Eloise and \Abelard play in $\arena$ by moving a pebble along edges. A \defin{play} from an initial vertex $v_0$ proceeds as follows: the player owning $v_0$ (\ie \Eloise if $v_0\in \VE$, \Abelard otherwise) moves the pebble to a vertex $v_1\in E(v_0)$. Then the player owning $v_1$ chooses a successor $v_2\in E(v_1)$ and so on. As we assumed that there is no dead-end, a play is an infinite word $v_0v_1v_2\cdots \in V^\omega$ such that for all $0\leq i$ one has $v_{i+1}\in E(v_i)$. A \defin{partial play} is a prefix of a play, \ie it is a finite word $v_0v_1\cdots v_\ell \in V^*$  such that for all $0\leq i<\ell$ one has $v_{i+1}\in E(v_i)$.

A \defin{strategy} for \Eloise is a function $\strat_\Ei:V^*V_\Ei\rightarrow V$ assigning, to every partial play ending in some vertex $v\in \VE$, a vertex $v'\in E(v)$. Strategies of \Abelard are defined likewise, and usually denoted $\strat_\Ai$.
In a given play $\play=v_0v_1\cdots$ we say that \Eloise (\resp \Abelard) \defin{respects a strategy} $\strat_\Ei$ (\resp $\strat_\Ai$) if whenever $v_i\in V_\Ei$ (\resp $v_i\in V_\Ai$) one has $v_{i+1} = \strat_\Ei(v_0\cdots v_i)$ (\resp $v_{i+1} = \strat_\Ai(v_0\cdots v_i)$).

A \defin{winning condition} is a subset $\WC\subseteq V^\omega$ and a (two-player perfect information) \defin{game} is a pair $\game=(\arena,\WC)$ consisting of an arena and a winning condition. A game is finite if it is played on a finite arena.

A play $\lambda$ is \defin{won} by \Eloise if and only if $\lambda\in \WC$; otherwise $\lambda$ is won by \Abelard. A strategy $\strat_\Ei$ is \defin{winning} for \Eloise in $\game$ from a vertex $v_0$ if any play starting from $v_0$ where \Eloise respects $\strat_\Ei$ is won by her. Finally a vertex $v_0$ is \defin{winning} for \Eloise in $\game$ if she has a winning strategy $\strat_\Ei$ from $v_0$. Winning strategies and winning vertices for \Abelard are defined likewise.

A \defin{parity winning condition} is defined by a \defin{colouring function} $\col$, \ie a mapping $\col: V \rightarrow \colors \subset \mathbb{N}$, where $\colors$ is a \emph{finite} set of \defin{colours}. The parity winning condition associated with $\col$ is the set $\WC_\col = \{v_0 v_1 \cdots \in V^\omega \mid \liminf (\col(v_i))_{i \geq 0} \text{ is even}\}$, \ie a play is winning if and only if the smallest colour visited infinitely often is even.
A \defin{parity game} is a game of the form $\game=(\arena,\WC_\col)$ for some colouring function.

\subsection{Stacks with Links and Their Operations}

Fix an alphabet $\Gamma$ of \concept{stack symbols} and a distinguished
\concept{bottom-of-stack symbol} $\bot \in \Gamma$. An \concept{order-$0$
  stack} (or simply \concept{$0$-stack}) is just a stack symbol. An
\concept{order-${(n+1)}$ stack} (or simply \concept{${(n+1)}$-stack}) $\stack$ is
a non-null sequence, written $\mksk{\stack_1 \cdots \stack_l}$, of $n$-stacks
such that every non-$\bot$ $\Gamma$-symbol $\gamma$ that occurs in $\stack$ has
a \emph{link} to a stack of some order $e$ (say, where $0 \leq e \leq n$)
situated below it in $\stack$; we call the link an \concept{${(e+1)}$-link}. The
\concept{order} of a stack $\stack$ is written $\order{\stack}$. 
The \concept{height} of a stack $\mksk{\stack_1 \cdots \stack_{l}}$ is defined as $l$.

As usual, the bottom-of-stack symbol $\bot$ cannot be popped from or
pushed onto a stack. Thus we require an \concept{order-1 stack} to be a
non-null sequence $\mksk{\gamma_1 \cdots \gamma_l}$ of elements of $\Gamma$ such
that for all $1 \leq i \leq l$, $\gamma_i = \bot$ iff $i = 1$. We inductively define
$\bot_k$, the \concept{empty ${k}$-stack}, as follows: $\bot_0 = \bot$
and $\bot_{k+1} = \mksk{\bot_k}$.

We first define the operations $\popn{i}$ and $\topn{i}$ with $i \geq
1$: $\topn{i}(\stack)$ returns the top $(i-1)$-stack of $\stack$, and
$\popn{i}(\stack)$ returns $\stack$ with its top $(i-1)$-stack
removed. Precisely let $\stack = \mksk{\stack_1 \cdots \stack_{l+1}}$ be a stack with
$1 \leq i \leq \order{\stack}$:
\[\begin{array}{rll}
\topn{i}(\underbrace{\mksk{\stack_1 \cdots \stack_{l+1}}}_{\hbox{$\stack$}}) & = &
\left\{\begin{array}{ll} \stack_{l+1} & \hbox{if $i = \order{\stack}$}\\
    \topn{i} (\stack_{l+1}) \quad & \hbox{if $i < \order{\stack}$}
\end{array}\right.\\
\popn{i}(\underbrace{\mksk{\stack_1 \cdots \stack_{l+1}}}_{\hbox{$\stack$}}) & = &  
\left\{\begin{array}{ll}
\mksk{\stack_1 \cdots \stack_l} & \hbox{if $i = \order{\stack}$ and $l \geq 1$}\\
\mksk{\stack_1 \cdots \stack_l \,
\popn{i}(\stack_{l+1})} \quad & \hbox{if $i < \order{\stack}$}
\end{array}\right.\\
\end{array}\]
By abuse of notation, we set $\topn{\order{\stack}+1}(\stack) = \stack$. Note that
$\popn{i}(\stack)$ is undefined if $\topn{i+1}(\stack)$ is a one-element
$i$-stack. For example $\popn{2}(\mksk{\mksk{\bot \, \alpha \, \beta}})$ and $\popn{1}(\mksk{\mksk{\bot \, \alpha \, \beta}\mksk{\bot}})$ are both undefined.

There are two kinds of $\mathit{push}$ operations. {We start with the
\emph{order-$1$} $\mathit{push}$}. Let $\gamma$ be a non-$\bot$ stack symbol and
$1 \leq e \leq \order{\stack}$, we define a new stack operation
$\pushlk{e}{\gamma}$ that, when applied to $\stack$, first attaches a link from
$\gamma$ to the $(e-1)$-stack \emph{immediately} below the top
$(e-1)$-stack of $\stack$, then pushes $\gamma$ (with its link) onto the top
1-stack of $\stack$. Formally, for $1 \leq e \leq \order{\stack}$ and $\gamma \in
(\Gamma \setminus \makeset{\bot})$, we define
\[ \pushlk{e}{\gamma}( 
\underbrace{\mksk{\stack_1 \cdots \stack_{l+1}}}_{\hbox{$\stack$}}) = 
\left\{
\begin{array}{ll}
\mksk{\stack_1 \cdots \stack_l \, \pushlk{e}{\gamma}(\stack_{l+1})} \quad &
\hbox{if $e < \order{\stack}$}\\
\mksk{\stack_1 \cdots \stack_l \, \stack_{l+1} \, \gamma^\dag} & \hbox{if
$e = \order{\stack} = 1$}\\
\mksk{\stack_1 \cdots \stack_l \, \pushone{{\widehat{\gamma}}}(\stack_{l+1})} & \hbox{if
$e = \order{\stack} \geq 2$ and $l \geq 1$}\\
\end{array} 
\right.  \]
where 
\begin{itemize}
\item $\gamma^\dag$ denotes the symbol $\gamma$ with a link to the
0-stack $\stack_{l+1}$
\item $\widehat{\gamma}$ denotes the symbol $\gamma$ with a link to the
  $(e-1)$-stack $\stack_l$; and we define
\[\pushone{\widehat{\gamma}}(\underbrace{\mksk{\varstack_1 \cdots \varstack_{r+1}}}_{\hbox{$\varstack$}}) = 
\left\{
\begin{array}{ll}
\mksk{\varstack_1 \cdots \varstack_r \,\pushone{\widehat{\gamma}}(\varstack_{r+1})} \quad &
\hbox{if $\order{\varstack} > 1$}\\
\mksk{\varstack_1 \cdots \varstack_{r+1} \, \widehat{\gamma} } & \hbox{otherwise \ie~$\order{\varstack} = 1$}\\
\end{array}
\right.\] 
\end{itemize}

The higher-order $\pushn{j}$, where $j \geq 2$, simply duplicates the
top $(j-1)$-stack of $\stack$. Precisely, let $\stack = \mksk{\stack_1 \cdots
  \stack_{l+1}}$ be a stack with $2 \leq j \leq \order{\stack}$:
\[\begin{array}{lll}
  \pushn{j}(\underbrace{\mksk{\stack_1 \cdots \stack_{l+1}}}_{\hbox{$\stack$}}) & = & \left\{\begin{array}{ll}
      \mksk{\stack_1 \cdots \stack_{l+1} \, \stack_{l+1}} & \hbox{if $j = \order{\stack}$}\\
      \mksk{\stack_1 \cdots \stack_l \, \pushn{j} (\stack_{l+1})} \quad & \hbox{if $j < \order{\stack}$}
\end{array}\right.\\
\end{array}\]
Note that in case $j = \order{\stack}$ above, the link structure of $\stack_{l+1}$ 
is preserved by the copy that is pushed on top by $\pushn{j}$. 

We also define, for any stack symbol $\gamma$, an operation on stacks that rewrites the topmost stack symbol \emph{without modifying} its link. 
Formally: 
\[\begin{array}{lll}
  \toprew{\gamma} \, \underbrace{\mksk{s_1 \cdots s_{l+1}}}_{\hbox{$s$}} & = & \left\{\begin{array}{ll}
      \mksk{s_1 \cdots s_l \, \toprew{\gamma} s_{l+1}} \quad & \hbox{if $\order{s}>1$}\\
      \mksk{s_1 \cdots s_{l} \, \widehat{\gamma}} & \hbox{if $\order{s}=1$ and $l\geq 1$}
\end{array}\right.\\
\end{array}\]
where $\widehat{\gamma}$ denotes the symbol $\gamma$ with a link to the same target as the link from $s_{l+1}$. Note that $\toprew{\gamma}(\stack)$ is undefined if $\topn{2}(s)$ is the empty $1$-stack.

Finally, there is an important operation called $\collapse$. We say
that the $n$-stack $\stack_0$ is a \concept{prefix} of an $n$-stack $\stack$,
written $\stack_0 \leq \stack$, just in case $\stack_0$ can be obtained from $\stack$ by a
sequence of (possibly higher-order) ${\mathit pop}$ operations.
Take an $n$-stack $\stack$ where $\stack_0 \leq \stack$, for some $n$-stack $\stack_0$, and $\topone{\stack}$ has a link to $\topn{e}(\stack_0)$. Then $\collapse \; \stack$ is defined to be $\stack_0$.

 \begin{example}\label{eg:3stack}
To avoid clutter, when displaying $n$-stacks in examples, we shall omit 1-links (indeed by construction they can only point to the symbol directly below), writing e.g.~$\mksk{\mksk{\bot} \mksk{\bot \alpha \, \beta}}$ instead of 
$\pstr[.1cm][1pt]{
\mksk{
\mksk{\bot}
\mksk{
\nd(n1){\bot}\,\,
\;
\nd(n2-n1,50){\alpha}\,\,
\;
\nd(n3-n2,50){\beta}
}
}
}
$.

    Take the 3-stack $\stack = \mksk{\mksk{\mksk{ \, \bot \, \alpha}} \;
      \mksk{\mksk{ \, \bot} \mksk{ \, \bot \, \alpha}}}$. We have
    \[
       \begin{array}{rll}
         \pushlk{2}{\gamma}(s)  & = &  \pstr[.1cm]{\mksk{\mksk{\mksk{ \, \bot \, \alpha}} \;
                                            \mksk{\mklksk{n1}{ \, \bot}
                                                  \mksk{ \, \bot \, \alpha \, \nd(n2-n1){\gamma}}}}} \\

\collapse \, (\pushlk{2}{\gamma}(s)) & = & \mksk{ \mksk{\mksk{ \, \bot\, \alpha}} \; \mksk{\mksk{ \, \bot}}}
\\
         \underbrace{ \pushlk{3}{\gamma}(\toprew{\beta} ( \pushlk{2}{\gamma}(s)))}_\theta
                              &= & \pstr[.75cm]{\mksk{\mklksk{n1}{\mksk{ \, \bot \, \alpha}} \;
                                             \mksk{\mklksk{n2}{ \, \bot}
                                                   \mksk{ \, \bot \, \alpha \, \nd(n3-n2){\beta}\, \nd(n4-n1){\gamma}}}}}.
       \end{array}
    \]
    Then $\pushn{2} (\theta)$ and $\toprew{\alpha}(\pushn{3}(\theta))$ are respectively
    \[
        \begin{array}{c}
          \pstr[.6cm]{\mksk{\mklksk{n1}{\mksk{ \, \bot \, \alpha}} \;
                            \mksk{\mklksk{n2}{ \, \bot}
                                  \mksk{ \, \bot \, \alpha \, \nd(n3-n2){\beta} \, \nd(n4-n1){\gamma}}
                                  \mksk{ \, \bot \, \alpha \, \nd(n5-n2){\beta} \, \nd(n6-n1){\gamma}}}}}
          \;\hbox{and} \\

          \pstr[1cm]{\mksk{\mklksk{n1}{\mksk{ \, \bot \, \alpha}} \;
                     \mksk{\mklksk{n2}{ \, \bot}
                           \mksk{ \, \bot \, \alpha \, \nd(n3-n2,40){\beta} \, \nd(n4-n1,39){\gamma}}} \;
                     \mksk{\mklksk{n5}{}
                           \mksk{ \, \bot \, \alpha \, \nd(n6-n5,40){\beta} \, \nd(n7-n1,37){\alpha}}}}}.
        \end{array}
    \]

    We have $\collapse \, (\pushn{2}( \theta)) = \collapse \, (\toprew{\alpha}(\pushn{3}(\theta))) = \collapse( \theta)  = \mksk{\mksk{\mksk{ \, \bot \, \alpha}}}$.
  \end{example}

The set $\Op{n}{\Gamma}$ of order-$n$ CPDA \concept{stack operations} over stack alphabet $\Gamma$ (or simply $\Op{n}{}$ if $\Gamma$ is clear from the context) comprises six types of operations:

\begin{enumerate}
\item $\popn{k}$ for each $1 \leq k \leq n$,
\item $\pushn{j}$ for each $2 \leq j \leq n$, 
\item $\pushlk{e}{\gamma}$ for each $1 \leq e \leq n$ and each $\gamma \in (\Gamma
  \setminus \makeset{\bot})$,
\item $\toprew{\gamma}$ for each $\gamma \in (\Gamma
  \setminus \makeset{\bot})$,
\item $\collapse$, and
\item $\id$ for the identity operation (\ie $id(\stack)=\stack$ for all stack $\stack$).
\end{enumerate}

\begin{remark}
One way to give a formal semantics of the stack operations is to work with appropriate numeric representations of the links as explained in \cite[Section~3.2]{HMOS17}. We believe that the informal presentation should be sufficient for this work and hence refer the reader to \cite{HMOS17} for a formal definition of stacks.
\end{remark}

\subsection{Collapsible Pushdown Automata (CPDA) and their Transition Graphs}

\emph{Collapsible pushdown automata} are a generalisation (to all
finite orders) of \emph{pushdown automata with links} \cite{AdMO05a}. 
They are defined as automata with a finite control and a stack as memory. In this work, we are interested in CPDA as generators for infinite graphs rather than word acceptors or generators of an infinite tree (see \cite{HMOS17} for corresponding definitions), hence we consider a non-deterministic version of them but do not equip them with an input alphabet.

An \concept{order-${n}$ collapsible pushdown automaton} (\concept{${n}$-CPDA}) is a 4-tuple 
$\mathcal{A} = \anglebra{\Gamma, Q,\Delta, q_0}$, where 
$\Gamma$ is the stack alphabet, 
$Q$ is the finite set of control states, 
$q_0\in Q$ is the initial state, 
and $\Delta \, : \, Q \times \Gamma\rightarrow \, 2^{Q \times \Op{n}{\Gamma}\times \Op{n}{\Gamma}}$ is the transition function and satisfies the following constraint. For any $q,\gamma\in Q\times \Gamma$, for any $(q',op_1,op_2)\in \Delta(q,\gamma)$ one has that $op_1 \in\{\toprew{\alpha}\mid \alpha\in \Gamma\}\cup\{id\}$ and $op_2\notin \{\toprew{\alpha}\mid \alpha\in \Gamma\}$: hence a transition will always act on the stack by (possibly) rewriting the top symbol and then (possibly) performing another kind of operation on the stack. In the following, we will use notation $(q',op_1;op_2)$ instead of $(q',op_1,op_2)$ (to stress that one performs $op_1$ followed by $op_2$).

\begin{remark}
Obviously allowing a top-rewriting operation followed by another stack operation does not add expressive power to the model. However, for technical reasons, this choice simplifies the presentation.
\end{remark}

\concept{Configurations} of an $n$-CPDA are pairs of the form $(q, \stack)$
where $q \in Q$ and $\stack$ is an $n$-stack over $\Gamma$; we call $(q_0,
\bot_n)$ the \concept{initial configuration}.

An $n$-CPDA $\mathcal{A}=\anglebra{\Gamma, Q,\Delta, q_0}$ naturally defines a transition graph $\transgraph{\mathcal{A}} :=(V,E)$ whose
vertices $V$ are the configurations of $\mathcal{A}$ and whose edge relation $E\subseteq V\times V$ is given by: 
$((q,\stack),(q',\stack'))\in E$ iff $\exists (q',op_1;op_2)\in\Delta(q,\topn{1}(\stack))$ such that $\stack'=op_2(op_1(\stack))$. Such a graph is called an \concept{${n}$-CPDA graph}. 

\begin{example}\rm\label{Example:CPDAAnBnCn}
Consider the following $2$-CPDA (that actually does not make use of links) $\mathcal{A}=\anglebra{\{\bot,\alpha\}, \{q_a,q_b,q_c,q_\sharp,\widetilde{q}_a,\widetilde{q}_b,\widetilde{q}_c\},\Delta, \widetilde{q}_a}$ with $\Delta$ as follows (we only give those transitions that may happen):
\begin{itemize}
\item $\Delta(\widetilde{q}_a,\bot)=\{(q_a,\id;\pushone{\alpha})\}$
\item $\Delta({q}_a,\alpha)=\{(q_a,\id;\pushone{\alpha}),(\widetilde{q}_b,\id;\pushn{2})\}$;
\item $\Delta(\widetilde{q}_b,\alpha)=\Delta({q}_b,\alpha)=\{(q_b,\id;\popn{1})\}$;
\item $\Delta({q}_b,\bot)=\{(\widetilde{q}_c,\id;\popn{2})\}$;
\item $\Delta(\widetilde{q}_c,\alpha)=\Delta({q}_c,\alpha)=\{(q_c,\id;\popn{1})\}$;
\item $\Delta({q}_c,\bot)=\{({q}_\sharp,\id;\id)\}$;
\item $\Delta({q}_\sharp,\bot,\_)=\emptyset$.
\end{itemize}

Then $\transgraph{\mathcal{A}}$ is given in Figure \ref{Fig:Example:CPDAAnBnCn}. 

\end{example}

\begin{figure}
\begin{center}
\begin{tikzpicture}[>=stealth',thick,scale=1,transform shape]
\node(a1) at (0,0) {$(\widetilde{q}_a,[[\bot]])$};
\node(a2) at (3,0) {$(q_a,[[\bot\alpha]])$};
\node(a3) at (6,0) {$(q_a,[[\bot\alpha\alpha]])$};
\node(a4) at (10,0) {$(q_a,[[\bot\alpha\alpha\alpha]])$};
\node(a5) at (12,0) {};
\draw[->] (a1) to (a2);
\draw[->] (a2) to (a3);
\draw[->] (a3) to (a4);
\draw[->,dotted] (a4) to (a5);
\node(b2) at (3,-1.5) {$(\widetilde{q}_b,[[\bot\alpha][\bot\alpha]])$};
\node(c2) at (3,-3) {$({q}_b,[[\bot\alpha][\bot]])$};
\node(d2) at (3,-4.5) {$(\widetilde{q}_c,[[\bot\alpha]])$};
\node(e2) at (3,-6) {$({q}_c,[[\bot]])$};
\draw[->] (a2) to (b2);\draw[->] (b2) to (c2);\draw[->] (c2) to (d2);\draw[->] (d2) to (e2);
\node(e1) at (0,-6) {$({q}_{\sharp},[[\bot]])$};
\draw[->] (e2) to (e1);
\node(b3) at (6,-1.5) {$(\widetilde{q}_b,[[\bot\alpha\alpha][\bot\alpha\alpha]])$};
\node(c3) at (6,-3) {$({q}_b,[[\bot\alpha\alpha][\bot\alpha]])$};
\node(d3) at (6,-4.5) {$({q}_b,[[\bot\alpha\alpha][\bot]])$};
\node(e3) at (6,-6) {$(\widetilde{q}_c,[[\bot\alpha\alpha]])$};
\node(f3) at (6,-7.5) {$({q}_c,[[\bot\alpha]])$};
\draw[->] (a3) to (b3);\draw[->] (b3) to (c3);\draw[->] (c3) to (d3);\draw[->] (d3) to (e3);\draw[->] (e3) to (f3);\draw[->] (f3) to (e2);
\node(b4) at (10,-1.5) {$(\widetilde{q}_b,[[\bot\alpha\alpha\alpha][\bot\alpha\alpha\alpha]])$};
\node(c4) at (10,-3) {$({q}_b,[[\bot\alpha\alpha\alpha][\bot\alpha\alpha]])$};
\node(d4) at (10,-4.5) {$({q}_b,[[\bot\alpha\alpha\alpha][\bot\alpha]])$};
\node(e4) at (10,-6) {$({q}_b,[[\bot\alpha\alpha\alpha][\bot]])$};
\node(f4) at (10,-7.5) {$(\widetilde{q}_c,[[\bot\alpha\alpha\alpha]])$};
\node(g4) at (10,-9) {$({q}_c,[[\bot\alpha\alpha]])$};
\draw[->] (a4) to (b4);\draw[->] (b4) to (c4);\draw[->] (c4) to (d4);\draw[->] (d4) to (e4);\draw[->] (e4) to (f4);\draw[->] (f4) to (g4);\draw[->] (g4) to (f3);
\node(g5) at (12,-9) {};
\draw[->,dotted] (g5) to (g4);
\end{tikzpicture}
\end{center}
\caption{Transition graph of the CPDA of Example \ref{Example:CPDAAnBnCn}.}\label{Fig:Example:CPDAAnBnCn}
\end{figure}
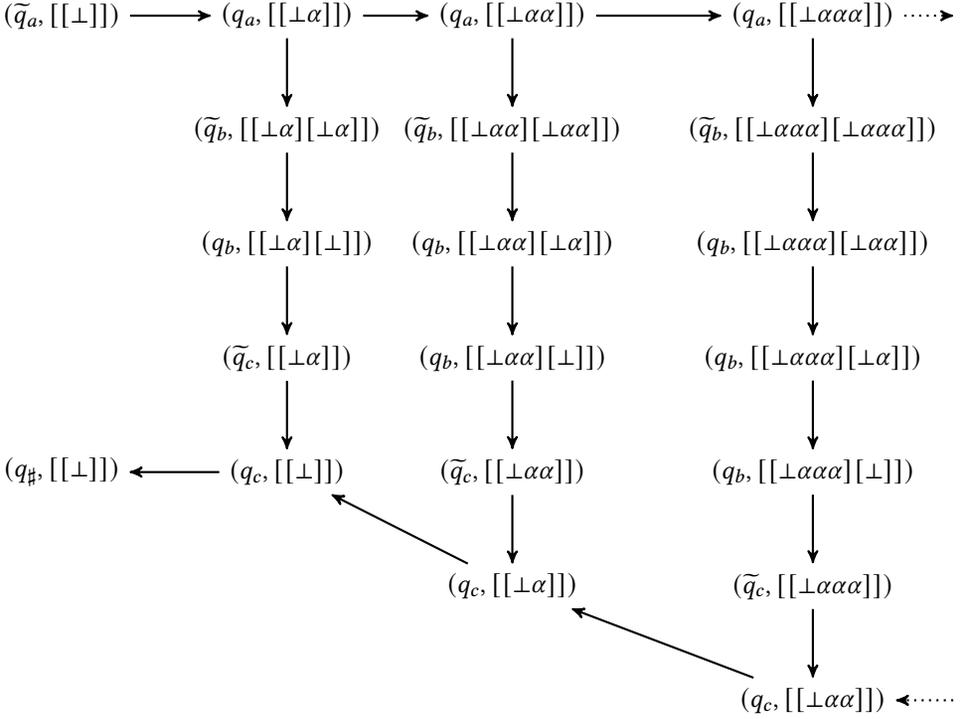

\subsection{CPDA Parity Games}

We now explain how CPDA can be used to define parity games. Let $\mathcal{A}=\anglebra{\Gamma, Q,\Delta, q_0}$ be an order-$n$ CPDA and let $\transgraph{\mathcal{A}}=(V,E)$ be its transition graph. 
Let $Q_\Ei\uplus Q_\Ai$ be a partition of $Q$ and let $\col:Q \mor
\colors\subset\mathbb{N}$ be a colouring function (over states). Altogether they
define a partition $V_\Ei\uplus V_\Ai$ of $V$, whereby a vertex belongs
to $V_\Ei$ iff its control state belongs to $Q_\Ei$, and a colouring
function $\col:V \mor \colors$, where a vertex is assigned the colour
of its control state. The structure
$\pggraph=(\transgraph{\mathcal{A}},V_\Ei,V_\Ai)$ defines an arena and the pair $\pgame=(\pggraph,\WC_\col)$ defines a parity game that we call an 
\defin{$n$-CPDA parity game}. 

Given an $n$-CPDA parity game, there are three main algorithmic questions:
\begin{enumerate}
\item Decide whether $(q_0,\bot_{n})$ is winning for \Eloise.
\item Provide a description of the winning region for \Eloise.
\item If $(q_0,\bot_{n})$ is winning for \Eloise, provide a description of a winning strategy for \Eloise from $(q_0,\bot_{n})$.
\end{enumerate}

\begin{remark}
Note that the first question is equivalent to the following one: given a vertex $v\in V$ decide whether $v$ is winning for \Eloise. Indeed, one can always design a new $n$-CPDA parity game that simulates the original one except that from the initial configuration the players are first forced to go to $v$, from where the simulation really starts.
\end{remark}

To answer the second question, we will introduce the notion of \emph{regular sets of stacks}, and to answer the third one we will consider \emph{strategies realised by $n$-CPDA transducers}.

\subsection{Regular Sets of Stacks with Links}\label{ssection:regSets}

We start by introducing a class of automata with a finite state-set that can be used to {recognize} sets of stacks. 
Let $\stack$ be an order-$n$ stack. We first associate with $\stack=\stack_{1},\cdots,\stack_{\ell}$ a well-bracketed word of depth $n$, $\flaten{\stack}\in(\Sigma\cup\{\lsk,\rsk\})^*$:
  \[ \flaten{\stack} \; := \; \begin{cases}
                   \lsk\flaten{\stack_{1}}\cdots\flaten{\stack_{\ell}}\rsk & \text{if }n\geq 1\\
                   \stack & \text{if }n=0 \text{ (\ie $\stack\in\Sigma$)}\\
                 \end{cases}
  \]
In order to reflect the link structure, we define a partial function $\target{\stack}:\{1,\cdots ,|\flaten{\stack}|\}\rightarrow \{1,\cdots ,|\flaten{\stack}|\}$ that assigns to every position in $\{1,\cdots ,|\flaten{\stack}|\}$ the index of the end of the stack targeted by the corresponding link (if exists; indeed this is undefined for $\bot, \lsk$ and $\rsk$). Thus with $\stack$ is associated the pair $\anglebra{\flaten{\stack},\target{\stack}}$; and with a set $S$ of stacks is associated the set $\widetilde{S}=\{\anglebra{\flaten{\stack},\target{\stack}}\mid \stack\in S\}$.

\begin{example}
Consider the stack $\stack=\pstr[.75cm]{\mksk{\mklksk{n1}{\mksk{ \bot \, \alpha}} \;
                                             \mksk{\mklksk{n2}{ \bot } 
                                                   \mksk{ \bot \, a \, \nd(n3-n2){\beta}\, \nd(n4-n1){\gamma}}}}}$. 
                                                   Then $$\flaten{\stack} = \pstr[.75cm]{\mksk{\mklksk{n1}{\mksk{\, \bot \, \alpha}} \;
                                             \mksk{\mklksk{n2}{\bot}
                                                   \mksk{ \,\bot\, \alpha \, {\beta}\, {\gamma}}}}}$$
and $\target{\stack}=\tau$ where
$\tau(5)=4$, $\tau(14)=13$, $\tau(15)=11$ and $\tau(16)=7$.
\end{example}

We consider \emph{deterministic} finite automata working on such representations of stacks. The automaton reads the word $\flaten{\stack}$ from left to right {(that is, from bottom to top)}. On reading a letter that does not have a link (i.e.~$\targetf$ is undefined on its index) the automaton updates its state according to the current state and the letter; on reading a letter that has a link, the automaton updates its state according to the current state, the letter and the state it was in after processing the targeted position. A run is accepting if it ends in a final state. 
{One can think of these automata as a deterministic version of {Stirling's \emph{dependency tree automata} \cite{Sti09b}} restricted to words.}

Formally, an \concept{automaton} is a tuple $\anglebra{R,A,r_{in},F,\delta}$ where $R$ is a finite set of states, $A$ is a finite input alphabet, $r_{in}\in R$ is the initial state, $F\subseteq R$ is a set of final states and $\delta: (R\times A) \cup (R\times A\times R) \rightarrow R$ is a transition function. With a pair $\anglebra{u,\tau}$ where $u=a_1\cdots a_n\in A^*$ and $\tau$ is a partial map from $\{1,\cdots n\}\rightarrow \{1,\cdots n\}$, we associate a \emph{unique} run $r_0\cdots r_n$ as follows:
\begin{itemize}
\item $r_0=r_{in}$;
\item for all $0\leq i< n$, $r_{i+1}=\delta(r_i,a_{i+1})$ if $i+1\notin Dom(\tau)$;
\item for all $0\leq i< n$, $r_{i+1}=\delta(r_i,a_{i+1},r_{\tau(i+1)})$ if $i+1\in Dom(\tau)$.
\end{itemize} 
The run is \emph{accepting} just if $r_{n}\in F$, and the pair $(u,\tau)$ is \emph{accepted} just if the associated run is accepting.

To recognise configurations instead of stacks, we use the same machinery but now add the control state at the end of the coding of the stack. We code a configuration $(q,\stack)$ as the pair $\anglebra{\flaten{\stack}\cdot q,\target{\stack}}$ (hence the input alphabet of the automaton also contains a copy of the control state of the corresponding CPDA). 

Finally, we say that a set $L$ of $n$-stacks over alphabet $\Gamma$ is \defin{regular} just if there is an automaton $\mathcal{B}$ such that for every $n$-stack $\stack$ over $\Gamma$, $\mathcal{B}$ accepts $\anglebra{\flaten{\stack},\target{\stack}}$ iff $\stack\in L$. Regular sets of configurations are defined in the same way.

Regular sets of stacks ({resp}.~configurations) form an effective Boolean algebra.

\begin{property}
Let $L_1,L_2$ be regular sets of $n$-stacks over an alphabet $\Gamma$.  Then $L_1\cup L_2$, $L_1\cap L_2$ and $Stacks(\Gamma)\setminus L_1$ are also regular (here $Stacks(\Gamma)$ denotes the set of all stacks over $\Gamma$).  The same holds for regular sets of configurations.
\end{property}

\begin{proof}
Closure under complement comes from the fact that we consider \emph{deterministic} automata. Closure under union or intersection is achieved by considering a Cartesian product, as in the case of finite automata on finite words.
\end{proof}

The following result shows that the notion of regular sets of $n$-stacks is robust with respect to the computational model of CPDA. The result is used only when discussing consequences in Section~\ref{section:markingWR} and therefore its proof can safely be skipped by the reader.

\begin{theorem}\label{theo:closure-reg-test}
Let $\mathcal{A}$ be an order-$n$ CPDA with a state-set $Q$ and a stack alphabet $\Gamma$, and let $L$ be a regular set of configurations.

Then, one can build an order-$n$ CPDA $\mathcal{A}'$ with a state-set $Q'$, a subset $F\subseteq Q'$ and a mapping $\chi:Q'\rightarrow Q$ such that the following holds.
\begin{enumerate}
\item Restricted to the reachable configurations from their respective initial configuration, the transition graph of $\mathcal{A}$ and $\mathcal{A'}$ are isomorphic.
\item For every configuration $(q,\stack)$ of $\mathcal{A}$ that is reachable from the initial configuration, the corresponding configuration $(q',s')$ of $\mathcal{A}$ is such that $q=\chi(q')$ and belongs $(q,s)$ belongs to $L$ if and only if $q'\in F$.
\end{enumerate}
\end{theorem}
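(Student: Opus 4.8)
The plan is to let $\mathcal{A}'$ simulate $\mathcal{A}$ step for step while carrying, inside its control states and its stack symbols, a bounded amount of bookkeeping that records the behaviour of the deterministic automaton $\mathcal{B}$ recognising $L$ on the current (flattened) stack. Concretely, I would take $\Gamma' = \Gamma \times D$ and $Q' = Q \times E$ for suitable finite sets $D,E$ of $\mathcal{B}$-annotations, let $\chi$ be the projection onto $Q$, and define the transition function of $\mathcal{A}'$ so that it copies that of $\mathcal{A}$ on the $\Gamma$- and $Q$-components (so that $op_1,op_2$ act exactly as in $\mathcal{A}$) while updating the annotations \emph{deterministically}. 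Since the annotation of every reachable configuration will be uniquely determined by the underlying configuration of $\mathcal{A}$, the map erasing annotations is then a graph isomorphism between the reachable parts of $\transgraph{\mathcal{A}'}$ and $\transgraph{\mathcal{A}}$, which gives item~(1); and once the annotation lets us read off the state that $\mathcal{B}$ reaches after $\flaten{\stack}$, we can define $F$ to consist of those states of $Q'$ for which flushing the $n$ trailing closing brackets of $\flaten{\stack}$ and then reading the $\Gamma$-control state drives $\mathcal{B}$ into an accepting state, which gives item~(2).

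The crux is choosing what to store so that it can be maintained in a single step. Storing with each symbol the \emph{absolute} state that $\mathcal{B}$ occupies at that position does not work: the operation $\pushn{j}$ duplicates the top $(j-1)$-stack, and the copy is processed by $\mathcal{B}$ starting from a different state than the original, so the copied states would instantly become stale while a single transition cannot refresh an entire copied substack. Instead I would store \emph{summaries}: with the top stack at each order $k$ I would record the function $R \to R$ that maps the state in which $\mathcal{B}$ enters that $k$-stack to the state in which it leaves it, together with the concrete $\mathcal{B}$-states sitting at the targets of the links leaving that stack downwards. Since links always point to stacks situated strictly \emph{below} their source and $\mathcal{B}$ reads $\flaten{\stack}$ from bottom to top, these external target states are already fixed by the prefix lying below the stack and do not change under any operation acting at or above it. As there are only finitely many functions $R \to R$ and finitely many tuples of states, the sets $D$ and $E$ are finite.

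Maintenance then splits by operation type. The operations $\id$, $\toprew{\gamma}$, $\pushlk{1}{\gamma}$ and $\popn{1}$ touch only the top $1$-stack: using the stored entry state of that $1$-stack and, for a freshly created link, the stored state at its target, one extends or retracts the top-level summary by a single application of $\mathcal{B}$'s transition function, and recomputes the new top state. A higher-order push $\pushlk{e}{\gamma}$ with $e \geq 2$ is handled similarly after locating the new link's target among the stored summaries. The operation $\popn{k}$ simply discards the top $(k-1)$-stack and reinstates the summary and entry state recorded for the now-exposed stack, and $\collapse$ likewise jumps to a prefix all of whose stored summaries remain valid.

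The main obstacle is the higher-order copy $\pushn{j}$, precisely because it is the one operation that does not act locally at the top symbol. Here the point is that the summary of the duplicated $(j-1)$-stack depends only on its content and on its downward external links, both of which the copy shares with the original; hence the copy inherits exactly the same summary, and the only quantity that must be recomputed in the single step is the copy's entry state, obtained from the current top state by closing and reopening the appropriate brackets. Making this rigorous requires care when a link inside the duplicated region points to an intermediate level that is itself copied: such a target must be kept as a \emph{parameter} of the summary rather than resolved to a fixed state, so that the summary stays genuinely copy-invariant. Verifying that the chosen annotations enjoy this invariance simultaneously at all orders, and that each of the six operations preserves it, is the technical heart of the argument; everything else is a routine induction on the length of the run, started from the fixed annotation attached to the initial configuration $(q_0,\bot_n)$.
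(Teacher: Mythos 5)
Your overall architecture coincides with the paper's: enrich each stack symbol with, for every order $k$, a summary of the run of $\mathcal{B}$ over the top $k$-stack deprived of its top $(k-1)$-stack, kept \emph{parameterised} by the states at the targets of the dangling links so that summaries are invariant under duplication by $\pushn{j}$ — these are precisely the paper's functions $\tau_k : R^{n-k}\rightarrow (R\rightarrow R)$, composed into $\tau_k^+$ to recover global behaviour — with acceptance tracked in the control state, $\chi$ the projection and $F$ defined from acceptance. However, one maintenance step, as you state it, fails. For $\popn{k}$ and $\collapse$ you propose to ``reinstate the summary and entry state recorded for the now-exposed stack'' and assert that a collapse jumps to a prefix ``all of whose stored summaries remain valid''. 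This is false for the components of order greater than $k$. Concretely, at order $2$: let the annotation of a symbol $x$ be written while $x$ is on top; perform $\pushone{y}$ (burying $x$ and freezing its annotation), then $\pushn{2}$, then $\popn{1}$, which re-exposes the \emph{copy} of $x$. The order-$2$ summary must now account for the original copy of the $1$-stack $[\cdots x\,y]$ sitting below the top $1$-stack, which did not exist there when $x$'s annotation was written; the stored value is stale. The per-symbol annotation is inductively correct only for the components of order $\leq k$ (the summarised content is restored verbatim, copies included, and parameterisation makes it copy-invariant); the components of order $> k$ must instead be \emph{inherited} from the annotation of the pre-pop top symbol, which is current by induction since $\popn{i}(\topn{i+1}(s))=\popn{i}(\topn{i+1}(\popn{k}(s)))$ for $i>k$. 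This is exactly why the paper follows every $\popn{k}$ and $\collapse$ with a $\toprew{}$ overwriting $\tau_n,\dots,\tau_{k+1}$ of the newly exposed symbol with the pre-pop values.

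The same mechanism undermines your storing of absolute ``entry states'' in the stack: they are not copy-invariant, for the very reason your own objection rules out storing absolute per-position states of $\mathcal{B}$ — after burying $x$, performing $\pushn{n}$ and re-exposing the copy, the copied annotation still records the entry state of the \emph{original} enclosing stack. (Your observation that only the duplicated $(j-1)$-stack's entry state needs recomputing also overlooks the stale entry states of all stacks nested inside the copy.) The paper sidesteps this entirely by storing \emph{nothing absolute} in stack symbols: symbols carry only the parameterised $\tau_k$'s, while all absolute information — in particular whether the current configuration lies in $L$, obtained as $\tau_n^+$ applied to the initial state of $\mathcal{B}$ — lives in the control state and is recomputed at each step by composing summaries and resolving parameters cascadingly (the states at targets of $k$-links being recovered by re-reading $\topn{k+1}(s)$ from the appropriate parameter). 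With these two repairs — inheritance of the high-order components after $\popn{k}$/$\collapse$, and absolute data confined to the control state — your sketch becomes the paper's proof.
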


\os{Do we put the proof here or do we postpone it to the final section where we actually need the result? I ask the question because it is the first proof of the paper and it is fairly technical even if I hope it is simpler to understand now than it was in the original LiCS10 paper…}
\os{The following proof was deeply rewritten and definitely needs proof reading (there may be some issues with indices…)}
\begin{proof}
Fix an order-$n$ CPDA $\mathcal A$ and an automaton $\mathcal{B}=\anglebra{R,\Gamma\cup\{[,]\},r_{in},F,\delta}$ accepting $L$.

Let $s$ be an order-$n$ stack. Let $0\leq k\leq n$ and let $t$ be the topmost $k$-stack of $s$, \ie $t=\topn{k+1}(s)$. We are interested in describing how $\mathcal{B}$ behaves when reading $\popn{k}(t)$  (for some technical reason we do not care of the topmost $(k-1)$-stack in $t$ as we will later compose those behaviours), with the convention that $\popn{0}(t) = t$. If there was no link, this behaviour could simply be described as a function from $R$ into $R$. However, as we extracted $t$ from $s$, there may be some “dangling link” of order greater than $k$.

We refer to Figure~\ref{fig:closure-reg-test} for an illustration of the concepts below for the case where $n=4$.
To retrieve the states attached to the respective targets of the links (of order $n, \cdots, k+1$ respectively) in $s$, we will use as a parameter $n-k$ states $r_n,\cdots,r_{k+1}$ in $R$: for $n$-links, we consider the run induced by reading $s$ starting from ${r_n}$ and this gives the values for the respective targets of the $n$-links; for $(n-1)$-links, we consider the run induced by reading $\topn{n}(s)$ starting from $r_{n-1}$ (note that states in dangling $n$-links are known thanks to $r_n$ from the previous step) and this gives the values for the respective targets of the $(n-1)$-links; \ldots; and for $(k+1)$-links, we consider the run induced by reading $\topn{k+2}(s)$ starting from $r_{k+1}$ (note that states in dangling $i$-links for $i>k$ are known thanks to $r_i$) and this gives the values for the respectives targets of the $(k+1)$-links.

Hence, we associate with $t$ a function $\tau_k:R^{n-k}\rightarrow(R\rightarrow R)$ such that $\tau_k(r_n,\dots,r_{k+1})$ defines a function from $R$ into $R$ that maps every state $r\in R$ to the state $\tau_k(r_n,\dots,r_{k+1})(r)$ that is reached by $\mathcal{B}$ when reading $\popn{k}(t)$ starting from $r$ and where the states attached to the respective targets of the links are determined by $r_n,\cdots,r_{k+1}$ as explained above.

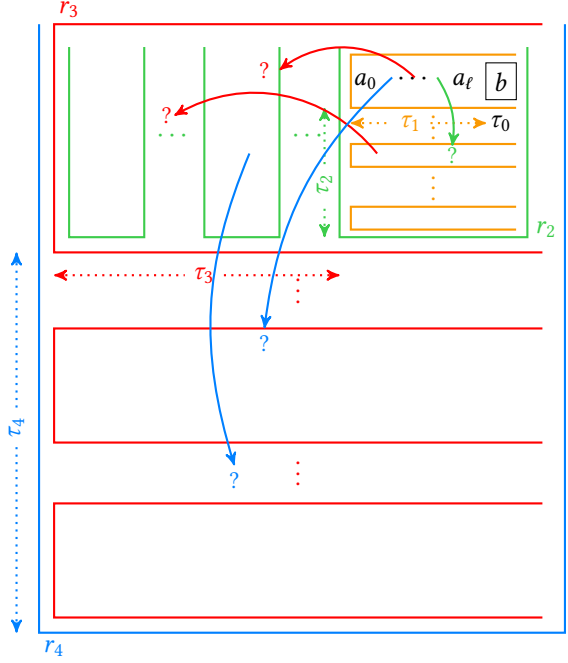
\begin{figure}
	\begin{tikzpicture}[>=stealth',scale=1,transform shape]
		\tikzstyle{ordre4}=[color=Blue];
		\tikzstyle{ordre3}=[color=red];
		\tikzstyle{ordre2}=[color=Vert];
		\tikzstyle{ordre1}=[color=Orange];
		\tikzstyle{pile}=[thick];
		\tikzstyle{tau}=[dotted,thick];
		\tikzstyle{lien}=[thick,->];
		\draw[ordre4,pile] (0,0) -- (0,-8) -- (7,-8) -- (7,0);
		\node[ordre4] at (.2,-8.2) {$r_4$};
		\node[rotate=90,ordre4] (tau4) at (-.3,-5.3) {$\tau_4$};
		\draw[ordre4,tau,<-] (-0.3,-8) -- (tau4); \draw[ordre4,tau,->] (tau4) -- (-0.3,-3);
		\begin{scope}[shift={(0.2,0)}]
			\draw[ordre3,pile] (6.5,0) -- (0,0) -- (0,-3)-- (6.5,-3);
			\node[ordre3,] at (.2,0.2) {$r_3$};
			\node[ordre3] (tau3) at (2,-3.3) {$\tau_3$};
			\draw[ordre3,tau,<-] (0,-3.3) -- (tau3);\draw[ordre3,tau,->] (tau3) -- (3.8,-3.3);
			\begin{scope}[shift={(0.2,-.3)}]
				\draw[ordre2,pile] (0,0) -- (0,-2.5) -- (1,-2.5)-- (1,0);
			\end{scope}
			\node[ordre2] at (1.6,-1.5) {$\cdots$};
			\begin{scope}[shift={(2,-.3)}]
				\draw[ordre2,pile] (0,0) -- (0,-2.5) -- (1,-2.5)-- (1,0);
			\end{scope}
			\node[ordre2] at (3.4,-1.5) {$\cdots$};
			\begin{scope}[shift={(3.8,-.3)}]
				\draw[ordre2,pile] (0,0) -- (0,-2.5) -- (2.5,-2.5)-- (2.5,0);
				\node[ordre2] at (2.75,-2.4) {$r_2$};
				\node[ordre2,rotate=90] (tau2) at (-.2,-1.8) {$\tau_2$};
				\draw[ordre2,tau,<-] (-.2,-2.5) -- (tau2);\draw[ordre2,tau,->] (tau2) -- (-.2,-.8);
				\begin{scope}[shift={(0.15,-.1)}]
					\draw[ordre1,pile] (2.2,0) -- (0,0) -- (0,-.7) -- (2.2,-.7); 
					\node[ordre1] (tau1) at (.8,-.9) {$\tau_1$};
					\draw[ordre1,tau,<-] (0,-.9) -- (tau1);\draw[ordre1,tau,->] (tau1) -- (1.8,-.9);

					\node at (.2,-.35) {$a_0$};
					\node at (.85,-.35) {$\cdots$};					
					\node at (1.5,-.35) {$a_\ell$};
					\node[rectangle,draw] at (2,-.35) {$b$};
					\node at (2,-.9) {$\tau_0$};
				\end{scope}
				\node[ordre1] at (1.25,-0.95) {$\vdots$};
				\begin{scope}[shift={(0.15,-1.275)}]
					\draw[ordre1,pile] (2.2,0) -- (0,0) -- (0,-.3) -- (2.2,-.3); 
				\end{scope}
				\node[ordre1] at (1.25,-1.75) {$\vdots$};
				\begin{scope}[shift={(0.15,-2.1)}]
					\draw[ordre1,pile] (2.2,0) -- (0,0) -- (0,-.3) -- (2.2,-.3); 
				\end{scope}
			\end{scope}
		\end{scope}
		\node[ordre3] at (3.45,-3.4) {$\vdots$};
		\begin{scope}[shift={(0.2,-4)}]
			\draw[ordre3,pile] (6.5,0) -- (0,0) -- (0,-1.5)-- (6.5,-1.5);
		\end{scope}
		\node[ordre3] at (3.45,-5.8) {$\vdots$};
		\begin{scope}[shift={(0.2,-6.3)}]
			\draw[ordre3,pile] (6.5,0) -- (0,0) -- (0,-1.5)-- (6.5,-1.5);
		\end{scope}
		\path[lien,ordre2] (5.3,-0.7) edge [bend left=20] (5.5,-1.6);
		\node[ordre2] at (5.5,-1.725) {\small ?};
		\path[lien,ordre3] (5,-0.7) edge [bend right=40] (3.2,-.6);
		\node[ordre3] at (3,-.6) {\small ?};
		\path[lien,ordre3] (4.5,-1.7) edge [bend right=40] (1.8,-1.2);
		\node[ordre3] at (1.7,-1.2) {\small ?};
		\path[lien,ordre4] (4.7,-0.7) edge [bend right=20] (3,-4);
		\node[ordre4] at (3,-4.2) {\small ?};
		\path[lien,ordre4] (2.8,-1.7) edge [bend right=20] (2.6,-5.8);
		\node[ordre4] at (2.6,-6) {\small ?};
	\end{tikzpicture}
	\caption{Illustration for the proof of Theorem~\ref{theo:closure-reg-test} when $n=4$. Missing states ({\small ?}) in $k$-link's target are retrieve by reading $\topn{k+1}(s)$ from $r_k$. For every $k$, $\tau_k^+$ is obtained by composing the $\tau_i$s for $i\leq k$.}\label{fig:closure-reg-test}
\end{figure}

A stack symbol of the CPDA $\mathcal{A}'$, is a pair, consisting of a  stack symbol of $\mathcal A$, and an $(n+1)$-tuple of the form $(\tau_{n}, \cdots, \tau_{0})$ where the $\tau_{i}$s are as above.

As the function $\tau_k$ describes the behaviour of $\popn{k}(\topn{k+1}(s))$, if we want to reconstruct the behaviour of $\topn{k+1}(s)$ we need to compose, in the appropriate way, the various $\tau_i$ function for $i\leq k$ which leads the following definition. We define $\tau^+_0(r_n \cdots r_1)$ to be the same function as $\tau_0(r_n \cdots,r_1)$; and for each $1 \leq k \leq n$, $$\tau^+_{k}(r_n \cdots r_{k+1}): \begin{cases}
 	R\rightarrow R\\
r\mapsto  \tau_{k-1}^+(r_n \cdots r_{k})(\tau_{k}(r_n \cdots r_{k+1})(r)) 	\end{cases}
$$.

Hence, each $\tau^+_k$ is a function from $R$ to $R$ induced by reading (the segment of) $s$ starting from $\topn{k+1}(s)$. As each $\tau^+_k$ can be obtained from the $\tau_i$s, we safely assume that we can access them directly in $\mathcal{A}'$ when reading the $\topn{1}$ element of the stack. Note that, considering $\tau^+_n$ applied to the initial state $r_\ini$ of $\mathcal{B}$ we deduce whether the current stack is accepted by $\mathcal{B}$: hence this information will be maintained, together with a state from $Q$, in the control state of $\mathcal{A}'$ and is used to define $F$. The function $\chi$ is the one erasing all auxiliary informations used by $\mathcal{A}'$ in its control state.

We now explain how $\mathcal{A}'$ behaves. Assume that the topmost stack symbol is $(a, (\tau_{n}, \cdots, \tau_{0}))$ and that the $\mathcal{A}$-state stored is $q$. Then, the possible transitions of $\mathcal{A}'$ mimic the ones of $\mathcal{A}$ when being in state $q$ with topmost stack symbol $a$. For each order-$n$ stack operation $op$ of $\mathcal A$, we define the corresponding stack operation of $\mathcal{A}'$:
\begin{itemize}
\item If $op=\pushn{k}$ then $\mathcal{A}'$ performs $\pushn{k}$ followed by $\toprew{a,(\tau_{n}, \cdots, \tau_{k+1}, \tau, \tau_{k-1}, \cdots, \tau_0)}$, where for every $r\in R$, $\tau(r_{n}, \cdots, r_{k+1})(r)=\delta(\tau_{k-1}^+(r_n, \cdots, r_{k+1},r)(r'),\rsk_{k})$ with $r'=\tau_k(r_n, \cdots, r_{k+1})(r)$. Indeed, after performing a $\pushn{k}$ operation the only $\topn{i}$ stack that is different from the one before, is for $i=k$. Hence, one only needs to update $\tau_k$, which now maps a state $r$ to the state $r'$ obtained by first applying the previous $\tau_k$ followed by the transformation induced by the former top ${k-1}$-stack (with the missing $k$-links being retrieve starting from $r$) together with the missing closing parenthesis $\rsk_k$.
\item If $op=\pushlk{k}{b}$ then $\mathcal{A}'$ performs $\pushlk{k}{(b,(\tau_{n}, \cdots, \tau_2,\tau, \tau_b))}$ where 
$\tau$ and $\tau_b$ are defined as follows. The function $\tau$ is equal to $\tau_1^+$ while the function $\tau_b(r_n,\dots,r_1)$ maps a state $r$ to $\delta(r,b,\tau_k(r_n,\dots,r_{k+1})(r_k))$. Indeed, one simply has to update $\tau_1$ and $\tau_0$. Regarding $\tau_1$ one needs now to take into the former topmost symbol which is exactly what does $\tau_1^+$. For $\tau_0$ one simulates the behaviour of $\mathcal{B}$ when reading a $b$ and uses $\tau_k$ with the appropriate parameters to retrieve the state in the target of the newly created link.
\item If $op=\popn{k}$ (\resp $\collapse$ following a $k$-link) then $\mathcal{A}'$ performs $\popn{k}$ (\resp $\collapse$), considers the new topmost stack symbol $(a',(\tau'_{n},\cdots,\tau'_0))$ and does a $\toprew{(a',(\tau_{n},\cdots,\tau_{k+1},\tau'_{k},\dots\tau'_0)}$. Indeed, for any stack $s$ and any $i>k$, $\popn{i}(\topn{i+1}(s))=\popn{i}(\topn{i+1}(\popn{k}(s)))$ and therefore $\tau_n,\cdots, \tau_{k+1}$ are inherited from the previous configuration while the other components are preserved from the last time where (possibly a copy of) the topmost symbol was on top of the stack (being inductively assumed to be correct).
\end{itemize}

Correctness of the construction follows inductively from the above definition.
\end{proof}

\subsection{CPDA strategies}

Let $\mathcal{A}=\anglebra{\Gamma, Q,\Delta, q_0}$ be an order-$n$ CPDA, let $\transgraph{\mathcal{A}}=(V,E)$ be its transition graph, let
$\pggraph=(\transgraph{\mathcal{A}},V_\Ei,V_\Ai)$ be an arena associated with $\mathcal{A}$ and let $\pgame=(\pggraph,\WC_\col)$ be a corresponding $n$-CPDA parity game.

We aim at defining a notion of $n$-CPDA \emph{transducers} that provide a description for strategies in $\pgame$, that is describe a function from partial plays in $\pgame$ into $V$. 

Consider a partial play $\play=v_0v_1\cdots v_\ell$ in $\pgame$ where $v_0 = (q_0,\bot_n)$. An alternative description of $\play$ is by a sequence $(q_1,rew_1;op_1)\cdots (q_\ell,rew_\ell;op_\ell)\in (Q\times\Op{n}{\Gamma}\times\Op{n}{\Gamma})^*$ such that $v_i=(q_i,\stack_i)$ for all $1\leq i\leq \ell$ and $\stack_i=op_i(rew_i(\stack_{i-1}))$ (with the convention that $\stack_0=\bot_n$). We may in the following use implicitly this representation of $\play$ when needed. Similarly, one can represent a strategy as a (partial) function $$\strat:(Q\times\Op{n}{\Gamma}\times\Op{n}{\Gamma})^*\rightarrow Q\times\Op{n}{\Gamma}\times\Op{n}{\Gamma}$$ the meaning being that in a partial play $\play$ ending in some vertex $(q,\stack)$ if $\strat(\play)=(q',rew;op)$ then the player moves to $(q',op(rew(\stack)))$.

An \concept{$n$-CPDA transducer realising a strategy} in $\pgame$ is a tuple $\mathcal{S}=\anglebra{\Sigma, R,\delta,\tau, r_0}$ where $\Sigma$ is a stack alphabet, $R$ is a finite set of states, $r_0\in R$ is the initial state,  
$$\delta: R\times \Sigma \times (Q\times \Op{n}{\Gamma}\times \Op{n}{\Gamma}) \rightarrow R\times \Op{n}{\Sigma}\times \Op{n}{\Sigma}$$
is a \emph{deterministic} transition function and
$$
\tau:R\times \Sigma \rightarrow Q\times\Op{n}{\Gamma}\times\Op{n}{\Gamma}
$$
is a deterministic choice function (note that we do not require $\tau$ to be total). For both $\delta$ and $\tau$ we have the same requirement as for the transition function for CPDAs, namely that the first stack operation should be a top-rewriting (or the identity) and that the second one should not be a top-rewriting.

A configuration of $\mathcal{S}$ is a pair $(r,\varstack)$ where $r$ is a state and $\varstack$ is an $n$-stack over $\Sigma$; the initial configuration of $\mathcal{S}$ is $(r_0,\bot_n)$. With a configuration $(r,\varstack)$ is associated, when defined, a (unique) move in $\pgame$ given by $\tau(r,\topn{1}(\varstack))$.
A partial play $\play= (q_1,rew_1;op_1)\cdots (q_\ell,rew_\ell;op_{\ell})$ in $\pgame$ induces a (unique, when defined) \concept{run} of $\mathcal{S}$ which is the sequence
$$ 
(r_0,\varstack_0)(r_1,\varstack_1)\cdots(r_\ell,\varstack_\ell)
$$
where $(r_0,\varstack_0)=(r_0,\bot_n)$ is the initial configuration of $\mathcal{S}$ and for all $0\leq i\leq \ell-1$ one has $\delta(r_i,\topn{1}(\varstack_i),(q_{i+1},rew_{i+1};op_{i+1}))=(r_{i+1},rew_{i+1}';op_{i+1}')$ with $\varstack_{i+1} = op_{i+1}'(rew_{i+1}'(\varstack_i))$. In other words, the control state and the stack of $\mathcal{S}$ are updated accordingly to $\delta$.

We say that $\mathcal{S}$ is \concept{synchronised} with $\mathcal{A}$ iff for all $(r,\alpha,(q,rew;op))\in R\times \Sigma \times (Q\times \Op{n}{\Gamma}\times \Op{n}{\Gamma}) $ such that $\delta(r,\alpha,(q,rew;op))=(r',rew';op')$ is defined one has that $op$ and $op'$ are of the same kind, \ie either they are both a $\popn{k}$ (for the same $k$) or both a $\pushn{k}$ (for the same $k$) or both a $\pushlk{e}{\_}$ (the symbol pushed being possibly different but the order of the link being the same) or both $\collapse$ or both $\id$. In particular, if one defines the \concept{shape} of a stack $\stack$ as the stack obtained by replacing all symbols appearing in $\stack$ by a fresh symbol $\sharp$ (but keeping the links) one has the following.

\begin{proposition}
Assume that $\mathcal{S}$ is synchronised with $\mathcal{A}$. Then, for any partial play $\play$ in $\pgame$ ending in a configuration with stack $\stack$, the run of $\mathcal{S}$ on $\play$, when exists, ends in a configuration with stack $\varstack$ such that $\stack$ and $\varstack$ have the same shape.
\end{proposition}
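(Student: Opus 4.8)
The plan is to proceed by induction on the length $\ell$ of the partial play $\play$, comparing the two stacks step by step. Write $\play = (q_1,rew_1;op_1)\cdots(q_\ell,rew_\ell;op_\ell)$, let $(q_i,\stack_i)$ be the configurations it visits (so $\stack_i = op_i(rew_i(\stack_{i-1}))$ and $\stack_0 = \bot_n$), and let $(r_0,\varstack_0)\cdots(r_\ell,\varstack_\ell)$ be the run of $\mathcal{S}$, so that $\varstack_i = op_i'(rew_i'(\varstack_{i-1}))$ where $\delta(r_{i-1},\topn{1}(\varstack_{i-1}),(q_i,rew_i;op_i)) = (r_i,rew_i';op_i')$. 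I will prove the strengthened claim that $\stack_i$ and $\varstack_i$ have the same shape for every $0 \le i \le \ell$. The base case $i=0$ is immediate, since $\stack_0 = \varstack_0 = \bot_n$.

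The heart of the argument is a symbol-blindness lemma, to be established by inspecting each of the six kinds of stack operation in turn: for every stack $\stack$ and every $op \in \Op{n}{}$, the shape of $op(\stack)$ depends only on the \emph{kind} of $op$ and on the shape of $\stack$. Concretely, writing $\widehat{op}$ for the operation obtained from $op$ by replacing any pushed or rewritten symbol with $\sharp$, I will verify the identity $\mathrm{shape}(op(\stack)) = \widehat{op}(\mathrm{shape}(\stack))$ and note that two operations of the same kind induce the \emph{same} $\widehat{op}$. The cases $\popn{k}$, $\pushn{j}$ and $\id$ are purely structural and manifestly commute with taking shapes; for $\pushlk{e}{\gamma}$ the only datum that survives into the shape is the order $e$ of the newly created link (the symbol $\gamma$ being erased to $\sharp$), which is precisely the information retained by ``same kind''; for $\collapse$ the result is determined by the order and target of the topmost link, both recorded in the shape; and a top-rewriting $\toprew{\gamma}$, like $\id$, leaves the shape unchanged because the top symbol of the shape is already $\sharp$ and the link is untouched.

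With the lemma in hand, the inductive step is routine. Since $rew_i$ and $rew_i'$ are each either a top-rewriting or $\id$, both leave shapes unchanged, so $\mathrm{shape}(rew_i(\stack_{i-1})) = \mathrm{shape}(\stack_{i-1})$ and similarly for $\varstack$. Applying the lemma and then the induction hypothesis,
\[
\mathrm{shape}(\stack_i) = \widehat{op_i}(\mathrm{shape}(\stack_{i-1})) = \widehat{op_i}(\mathrm{shape}(\varstack_{i-1})),
\]
while $\mathrm{shape}(\varstack_i) = \widehat{op_i'}(\mathrm{shape}(\varstack_{i-1}))$. Because $\mathcal{S}$ is synchronised with $\mathcal{A}$, the operations $op_i$ and $op_i'$ are of the same kind, hence $\widehat{op_i} = \widehat{op_i'}$ and the two right-hand sides coincide; thus $\stack_i$ and $\varstack_i$ have the same shape, completing the induction.

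I expect the main obstacle to be the careful verification of the symbol-blindness lemma for $\collapse$ and $\pushlk{e}{\gamma}$, as these are the operations whose behaviour genuinely interacts with the link structure: one must confirm that ``same kind'' retains exactly the order-of-link information that the shape remembers, and no more. The remaining cases, together with the side observation that definedness is itself shape-determined (so that the existence of $\mathcal{S}$'s run need not be separately re-derived), are straightforward.
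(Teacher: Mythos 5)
Your proof is correct, and it is exactly the argument the paper leaves implicit: the proposition is stated there without proof as an immediate consequence of the definition of synchronisation, the intended justification being precisely your induction on the length of the play together with the observation that each kind of stack operation acts on shapes independently of the symbols involved (with $\collapse$ and $\pushlk{e}{\gamma}$ being the only cases where the link structure, recorded in the shape, matters). Your symbol-blindness lemma $\mathrm{shape}(op(\stack)) = \widehat{op}(\mathrm{shape}(\stack))$ is a clean way of packaging that case analysis, so nothing further is needed.
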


The \concept{strategy realised by $\mathcal{S}$} is the (partial) function $\stratCPDA{\mathcal{S}}$ defined by letting 
$\stratCPDA{\mathcal{S}}(\play)=\tau((r,\topn{1}(\varstack)))$ where $(r,\varstack)$ is the last configuration of the run of $\mathcal{S}$ on $\play$.

We say that $\stratCPDA{\mathcal{S}}$ is \concept{well-defined} iff 
for any partial play $\play=(q_1,rew_1;op_1)\cdots (q_\ell,rew_\ell;op_{\ell})$ where \Eloise respects $\stratCPDA{\mathcal{S}}$ whenever the last vertex $(q_\ell,\stack_\ell)$ in $\play$ belongs to $V_\Ei$ one has $\stratCPDA{\mathcal{S}}(\play)\in\Delta(q,\topn{1}(\stack_\ell))$, \ie the move given by $\stratCPDA{\mathcal{S}}$ is a valid one.

\section{Main Result}\label{section:Results}

The following theorem is the central result of this paper.

\begin{theorem}\label{theorem:main}
Let $\mathcal{A}=\anglebra{\Gamma, Q,\delta, q_0}$ be an $n$-CPDA and let $\pgame$ be an $n$-CPDA parity game defined from $\mathcal{A}$. Then one has the following results.
\begin{enumerate}
\item Deciding whether $(q_0,\bot_n)$ is winning for \Eloise is an $n$-\exptime-complete problem.
\item The winning region for \Eloise (\emph{resp.} for \Abelard) is regular. Moreover, one can compute an automaton that recognises it.
\item If $(q_0,\bot_n)$ is winning for \Eloise then one can effectively construct an $n$-CPDA transducer $\mathcal{S}$ \emph{synchronised} with $\mathcal{A}$ realising a well-defined \emph{winning} strategy $\S$ for \Eloise in $\pgame$ from $(q_0,\bot_n)$.
\end{enumerate}
\end{theorem}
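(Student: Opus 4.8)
The plan is to prove all three statements simultaneously by induction on the order $n$ of $\mathcal{A}$, reducing an $n$-CPDA parity game to an $(n-1)$-CPDA parity game while tracking how the three pieces of data — the winner at the initial configuration, an automaton for the winning region, and a synchronised CPDA transducer for the strategy — are transported back and forth along the reduction. The base case $n=0$ is a finite parity game (at order $0$ a stack is a single symbol and only $\toprew{\gamma}$ and $\id$ survive, so $\transgraph{\mathcal{A}}$ has $|Q|\cdot|\Gamma|$ vertices), for which all three claims are classical: the winner is decidable, the winning region is a finite, hence trivially regular, set, and positional determinacy yields a trivial transducer realising the winning strategy.

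For the induction step I would decompose each order reduction into the three sub-steps announced in the introduction. First (\emph{rank-awareness}) I would put $\mathcal{A}$ into a normal form $\mathcal{A}_\rk$ that records in its control state enough information about the minimal colour seen since the target of each potential collapse link was on top; this does not change the underlying game up to a colour-preserving isomorphism, but it makes the effect of a future $\collapse$ on the parity condition locally readable. Second (\emph{removing outermost links}) I would build an order-$n$ game with no $n$-links: each time a symbol carrying an $n$-link would be pushed, the relevant player is instead asked to declare a \emph{prophecy} of the state and the minimal colour that would be witnessed should the link ever be followed by a $\collapse$; the opponent may either accept the prophecy and continue, or challenge it, in which case the play is diverted to actually perform the collapse and verify the guess. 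Correctness of this step is exactly where rank-awareness pays off. Third (\emph{reducing order}) I would exploit the absence of $n$-links: a $\pushn{n}$ now duplicates a self-contained top $(n-1)$-stack and a $\popn{n}$ discards it, so I can simulate the order-$n$ behaviour by an $(n-1)$-CPDA game in which the players negotiate, \ala Walukiewicz, the return state and minimal colour reached when the current top $(n-1)$-stack is eventually popped. After these three steps the order has dropped by one and the induction hypothesis applies.

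For each sub-step I would prove three transport lemmas. First, \emph{winner preservation}: $(q_0,\bot_n)$ is winning for \Eloise in the source game iff the corresponding configuration is winning in the target game; composing over all reductions down to the finite base case yields decidability, and since the state space grows by a single exponential at each order (the control state now stores functions or subsets over the previous state set), one obtains the $n$-\exptime\ upper bound. Second, \emph{regularity transfer}: I would show that the preimage of a regular set of configurations under each reduction is again regular, so that pulling the finite base-case winning region back through all reductions yields an effective automaton recognising the winning region of $\pgame$; here the automaton model on stacks with links of Section~\ref{ssection:regSets} and its Boolean closure are essential. Third, \emph{strategy transfer}: starting from the positional strategy of the finite game, I would lift a winning strategy through each reduction by letting the transducer's auxiliary stack mirror the game stack and carry the prophecies and return data; synchronisation is maintained because every reduction maps stack operations to operations of the same kind, and composing the transducers produces the required synchronised $n$-CPDA transducer realising a well-defined winning strategy. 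The matching $n$-\exptime\ lower bound I would obtain by a separate hardness reduction encoding a suitable alternating space-bounded device.

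The main obstacle, and the technical heart of the argument, is the interaction between $\collapse$ and the parity condition in the link-removal step. A collapse jumps the configuration back to an arbitrarily deep prefix of the current stack, yet the parity winner depends on the minimal colour \emph{actually} seen along the play, not on any static feature of the stack. Making the prophecy-and-challenge mechanism both sound (a declared minimal colour cannot undershoot what truly occurs) and complete (the correct prophecy is always available to the player who wins) requires the rank-awareness bookkeeping to be exactly right, and checking that the challenge branch faithfully recomputes this colour — while the whole construction stays inside the CPDA model and preserves regularity and synchronisation — is where almost all of the difficulty lies.
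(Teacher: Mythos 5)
Your proposal follows essentially the same route as the paper's own proof: an induction on the order in which each level is handled by the same three sub-steps --- making the CPDA rank-aware so the minimal colour since link creation is readable from the control state and top symbol (Section~\ref{section:rankAware}), removing the outermost $n$-links by having \Eloise declare colour-indexed sets of return states that \Abelard may accept or challenge (Section~\ref{section:outermostLinks}), and a Walukiewicz-style order reduction negotiating return states and minimal colours for the top $(n-1)$-stack (Section~\ref{section:reducingOrder}) --- with winner preservation, regularity transfer by preimage, and synchronised-transducer strategy lifting proved for each step, and the $n$-\exptime{} upper bound from one exponential blow-up per order. The only deviations are immaterial: the paper bottoms out by applying the conditional-game construction once more to the resulting order-$1$ game rather than positing an order-$0$ base case, and it obtains the matching lower bound from Engelfriet's emptiness result for nondeterministic higher-order pushdown automata (noting in passing that alternating-device hardness does not transfer directly to games) rather than a direct encoding of an alternating space-bounded machine.
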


The proof is by induction on the order and each induction step is itself divided into three steps: the first one is a normalisation result (Section~\ref{section:rankAware}), the second one removes the outermost links (Section~\ref{section:outermostLinks}) while the third one lowers the order (Section~\ref{section:reducingOrder}). Finally Section~\ref{section:summary} combines the previous constructions and provides the proof of Theorem~\ref{theorem:main}.



\section{Rank-aware CPDA}\label{section:rankAware}

Intuitively, a CPDA is ``rank-aware'' whenever, during any run of the CPDA, one can easily determine the smallest colour seen since the creation of the link on the topmost symbol.  In particular, one only needs to inspect the current control state and topmost stack symbol.  This information will be crucial in the next section when we show how to remove the outermost links from a CPDA.  In this section, we show that any CPDA can be transformed into an equivalent rank-aware CPDA.  The notion of equivalence is formalised in the statement of Theorem~\ref{lemma:rank-aware}.

Fix, for the whole section, an $n$-CPDA $\pprocess=\anglebra {\Gamma,Q,\Delta,q_0}$, a partition $Q_\Ei\uplus Q_\Ai$ of $Q$ and a colouring function $\col:Q\rightarrow\colors\subset\mathbb{N}$. Denote by $\pgraph$ its transition graph, by $\pggraph$ the arena induced by $\pgraph$ and the partition $Q_\Ei\uplus Q_\Ai$ and by $\pgame$ the parity game $(\pggraph,\WC_\col)$.

\subsection{Definitions}

Our main goal in this sub-section is to define the notion of rank-awareness.  To do this we will define the notion of \emph{link-rank}. Assume that in configuration $v_m$ the $\topn{1}$-element has a link (that is possibly a copy of a link) that was created in configuration $v_j$: then the link-rank in $v_m$ is defined as the smallest colour since the creation of the link, \ie $\min\{\col(v_j),\cdots \col(v_m)\}$.  Ultimately, we will show how to enrich the stack alphabet to be able to compute the link-rank. In order to maintain this information, we need to define several other concepts. First we will define indexed stacks, from which, we can then define the \emph{collapse-rank} (for updating after performing a $\collapse$) and the \emph{pop-rank} for $k$ (for updating after performing a $\popn{k}$).

A \defin{finite path} in $\pgraph$ is a non-empty sequence of configurations
$v_0 v_1 \cdots v_m$ such that for all $0\leq i\leq m-1$, there is an edge 
in $\pgraph$ from $v_i$ to $v_{i+1}$. 
An \defin{infinite path} is an infinite sequence of configurations
$v_0 v_1 \cdots$ such that for all $i\geq 0$, there is an edge 
in $\pgraph$ from $v_i$ to $v_{i+1}$. 
Note that we do not 
require $v_0$ to be the initial configuration.

We now define a generalisation of $n$-stacks called \emph{indexed $n$-stacks}. 
Following the same notations as in Section~\ref{ssection:regSets}, a stack $\stack$ is equivalently described as a pair
$\anglebra{\flaten{\stack},\target{\stack}}$ (recall that $\flaten{\stack}$ is a well-bracketed word description of $\stack$ and that $\target{\stack}$ gives the link structure). An \concept{indexed $n$-stack} is described by a triple $\anglebra{\flaten{\stack},\target{\stack},\ind{\stack}}$ where $\flaten{\stack}=\flaten{\stack}_1\cdots \flaten{\stack}_{|\flaten{\stack}|}$ and $\target{\stack}$ are as previously and where $\ind{\stack}:\{1,\dots,|\flaten{\stack}|\}\rightarrow \mathbb{N}$ is a partial function that is defined in any position $j< |\flaten{\stack}| - n$ such that $\flaten{\stack}_j\notin\{\lsk,\rsk\}$. The previous conditions on the domain of $\ind{\stack}$ ensure that any stack symbol in $\stack$ which is not the topmost one has a value by $\ind{\stack}$ that we refer to as its \concept{index}.
An \concept{indexed configuration} is a pair formed by a control state and an indexed stack. 

The {\concept{erasure}} of an indexed $n$-stack $\anglebra{\flaten{\stack},\target{\stack},\ind{\stack}}$ is the $n$-stack $\anglebra{\flaten{\stack},\target{\stack}}$. We extend the notion of erasure to indexed configurations in the obvious way.

The intended meaning of the index of some symbol in the stack is the
following.  The index is equal to the largest integer $i$ such that since
$v_i$ the symbol no longer appears as a $\topn{1}$-element. Hence, if one
uses the stack to store (and maintain) some information, the index is the
moment from which this information was no longer updated. Therefore when some
symbol appears again as the $\topn{1}$-element, one has to update the
information by taking into account all that happened since $v_i$
(included).


With any path $\play=v_0 v_1\cdots$, with $v_i=(p_i,\stack_i)$ for all $i\geq 0$, we inductively associate a
sequence of indexed configurations $\play'=v'_0 v'_1 \cdots$ such that the
following holds.
\begin{itemize}
\item The erasure of $\play'$ equals $\play$ (the \emph{erasure} of a sequence of indexed configurations being defined as the sequence of the respective erasures).
\item For any indexed configuration $v'_m=(q_m,\stack'_m)$ the following holds. Let $\stack'_m=\anglebra{\flaten{\stack'_m},\target{\stack'_m},\ind{\stack'_m}}$, let $\flaten{\stack'_m}=x_1\cdots x_h$, and let $j$ be in the domain of $\ind{\stack'_m}$ and such that $x_{j+1}=\rsk$. Then let $j'>j$ be the largest integer such that $x_{k}=\rsk$ for all $j+1\leq k\leq j'$ and let $i$ be the unique integer such that $x_i\cdots x_{j'}$ is well-bracketed. Then, for any $i<k<j'$, if $\ind{\stack'_m}(k)$ is defined, one has  $\ind{\stack'_m}(k)\leq \ind{\stack'_m}(j)$, and this inequality is strict if $\ind{\stack'_m}(j)\neq 0$. 
Intuitively, position $j$ is the topmost symbol of some $(j'-j)$-stack, and any symbol in this stack has an index smaller than the topmost symbol.
\end{itemize}

The intuitive idea behind the forthcoming definition of $\play'$ is rather
simple. The indices are always preserved, so one only cares about new positions
in the stack. On doing a $\pushn{k}$ the indices of the copied stack are
inherited from the original copy. Then when new indices are needed (because a
position is no longer the $\topn{1}$ one, it gets index $m+1$ if the current
configuration is $v_{m+1}$). 

Before going to the formal definition, we start with an example.

\begin{example}
In Figure~\ref{fig:exampleIndexedStacks}, we give an example (at order $3$) that illustrates the previous intuitive idea as well as the formal description below (ignore the information on colours for this example). We only describe the indexed stacked (omitting the control states), and indicate the stack operation (but omit the $\id$ operation). Indices are written as superscripts.

\begin{figure}[t]
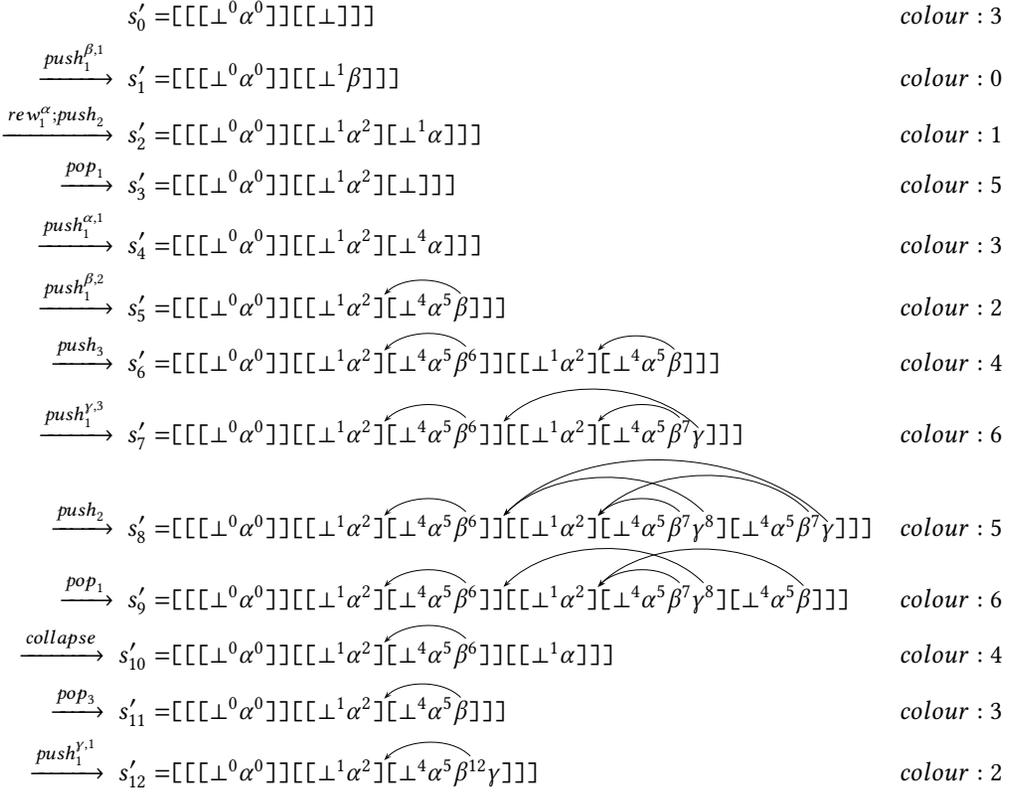

\begin{align*}
\stack_0'=&\mksk{\mksk{\mksk{\bot^0\alpha^0}}\mksk{\mksk{\bot}}}& colour: 3\\
\arrlab{\small\pushlk{1}{\beta}}\  \stack_1'=&\mksk{\mksk{\mksk{\bot^0\alpha^0}}\mksk{\mksk{\bot^1\beta}}}& colour: 0\\
\arrlab{\small \toprew{\alpha};\pushn{2}}\  \stack_2'=&\mksk{\mksk{\mksk{\bot^0\alpha^0}}\mksk{\mksk{\bot^1\alpha^2}\mksk{\bot^1\alpha}}}& colour: 1\\
\arrlab{\small \popn{1}}\  \stack_3'=&\mksk{\mksk{\mksk{\bot^0\alpha^0}}\mksk{\mksk{\bot^1\alpha^2}\mksk{\bot}}}& colour: 5\\
\arrlab{\small \pushlk{1}{\alpha}} \ \stack_4'=&\mksk{\mksk{\mksk{\bot^0\alpha^0}}\mksk{\mksk{\bot^1\alpha^2}\mksk{\bot^4\alpha}}}& colour: 3\\
\arrlab{\small \pushlk{2}{\beta}}\ \stack_5'=&\pstr[0.5cm]{\mksk{\mksk{\mksk{\bot^0\alpha^0}}\mksk{\mklksk{n1}{\bot^1\alpha^2}\mksk{\bot^4\alpha^5\nd(s1-n1){\beta}}}}}& colour: 2\\
\arrlab{\small \pushn{3}}\  \stack_6'=&\pstr[.5cm]{\mksk{\mksk{\mksk{\bot^0\alpha^0}}\mksk{\mklksk{n1}{\bot^1\alpha^2}\mksk{\bot^4\alpha^5\nd(s1-n1){\beta^6}}}\mksk{\mklksk{n11}{\bot^1\alpha^2}\mksk{\bot^4\alpha^5\nd(s11-n11){\beta}}}}}& colour: 4\\
\arrlab{\small \pushlk{3}{\gamma}} \ \stack_7'=&\pstr[0.7cm]{\mksk{\mksk{\mksk{\bot^0\alpha^0}}\mklksk{n3}{\mklksk{n1}{\bot^1\alpha^2}\mksk{\bot^4\alpha^5\nd(s1-n1){\beta^6}}}\mksk{\mklksk{n11}{\bot^1\alpha^2}\mksk{\bot^4\alpha^5\nd(s11-n11){\beta^7}\nd(s3-n3){\gamma}}}}}& colour: 6\\
\arrlab{\small \pushn{2}}\  \stack_8'=&\pstr[1cm]{\mksk{\mksk{\mksk{\bot^0\alpha^0}}\mklksk{n3}{\mklksk{n1}{\bot^1\alpha^2}\mksk{\bot^4\alpha^5\nd(s1-n1){\beta^6}}}\mksk{\mklksk{n11}{\bot^1\alpha^2}\mksk{\bot^4\alpha^5\nd(s11-n11){\beta^7}\nd(s3-n3){\gamma^8}}\mksk{\bot^4\alpha^5\nd(s112-n11){\beta^7}\nd(s31-n3){\gamma}}}}}& colour: 5\\
\arrlab{\small \popn{1}}\ \stack_9'=&\pstr[0.7cm]{\mksk{\mksk{\mksk{\bot^0\alpha^0}}\mklksk{n3}{\mklksk{n1}{\bot^1\alpha^2}\mksk{\bot^4\alpha^5\nd(s1-n1){\beta^6}}}\mksk{\mklksk{n11}{\bot^1\alpha^2}\mksk{\bot^4\alpha^5\nd(s11-n11){\beta^7}\nd(s3-n3){\gamma^8}}\mksk{\bot^4\alpha^5\nd(s112-n11){\beta}}}}}& colour: 6\\
\arrlab{\small \collapse}\ \stack_{10}'=&\pstr[0.5cm]{\mksk{\mksk{\mksk{\bot^0\alpha^0}}\mklksk{n3}{\mklksk{n1}{\bot^1\alpha^2}\mksk{\bot^4\alpha^5\nd(s1-n1){\beta^6}}}\mksk{\mklksk{n11}{\bot^1\alpha}}}}& colour: 4\\
\arrlab{\small \popn{3}}\  \stack_{11}'=&\pstr[.5cm]{\mksk{\mksk{\mksk{\bot^0\alpha^0}}\mklksk{n3}{\mklksk{n1}{\bot^1\alpha^2}\mksk{\bot^4\alpha^5\nd(s1-n1){\beta}}}}}& colour: 3\\
\arrlab{\small \pushlk{1}{\gamma}}\ \stack_{12}'=&\pstr[.5cm]{\mksk{\mksk{\mksk{\bot^0\alpha^0}}\mklksk{n3}{\mklksk{n1}{\bot^1\alpha^2}\mksk{\bot^4\alpha^5\nd(s1-n1){\beta^{12}}\gamma}}}} & colour: 2 \\
\end{align*}
\caption{Example of a sequence of indexed stacks.}\label{fig:exampleIndexedStacks}
\end{figure}
\end{example}

Now, we formally give the construction of $\play'$ (the previously mentioned properties easily follow from the definition). The initial configuration $v_0'=(p_0,\stack'_0)$, is obtained by letting $\ind{\stack'_0}$ be the constant (partial) function equal to $0$. Assume now that $v'_1\cdots v'_m$ has been
constructed, let $v'_{m}=(p_{m},\stack'_{m})$ with $\stack'_{m}=\anglebra{{\flaten{\stack}}_m,\target{\stack_m},\ind{\stack'_m}}$ 
and let $v_{m+1}=(p_{m+1},\stack_{m+1})$ with $\stack_{m+1}=\anglebra{{\flaten{\stack}}_{m+1},\target{\stack_{m+1}}}$. We let $v'_{m+1}=(p_{m+1},\stack'_{m+1})$ with $\stack'_{m+1}=\anglebra{{\flaten{\stack}}_{m+1},\target{\stack_{m+1}},\ind{\stack'_{m+1}}}$ where $\ind{\stack'_{m+1}}$ is defined thanks to the following  case distinction on which stack oprations have been applied to go from $v_m$ to $v_{m+1}$.
\begin{itemize}
\item A top-rewriting operation (possibly equal to $\id$) followed by a $\pushlk{k}{\gamma}$ operation is applied in configuration $v_m$. Then all previous indices are inherited and the former $\topn{1}$-element gets index $m+1$. Formally, $\ind{\stack'_{m+1}}(j) = \ind{\stack'_{m}}(j)$ whenever $j<|{\flaten{\stack}}_{m}|-n$ and $\ind{\stack'_{m+1}}(|{\flaten{\stack}}_{m}|-n) = m+1$.

\item A top-rewriting operation (possibly equal to $\id$) followed by a $\pushn{k}$ operation is applied.
First, all existing indices are preserved, \ie $\ind{\stack'_{m+1}}(j) = \ind{\stack'_{m}}(j)$ whenever $j$ belongs to the domain of $\ind{\stack'_{m}}$.
Then one writes ${\flaten{\stack}}_{m}$ as $\lsk\cdots \lsk t \rsk \rsk^{n-k+1}$ with $t$ being well-bracketed; hence, 
${\flaten{\stack}}_{m+1}=\lsk\cdots \lsk t' \rsk\lsk t' \rsk \rsk^{n-k+1}$ where $t'$ is obtained from $t$ by (possibly )changing its last symbol to reflect the top-rewriting operation. Then we let $\ind{\stack'_{m+1}}(|\flaten{\stack'_m}|-(n-k+1)+j )= \ind{\stack'_{m}}(|\flaten{\stack'_m}|-(n-k+1)-(|t|+2)+j)$ for all $j\geq 1$ such that the second member of the equality is defined: the indices are simply copied from the former top $(k-1)$-stack. 
Finally, the former $\topn{1}$-element gets index $m+1$: $\ind{\stack'_{m+1}}(|{\flaten{\stack}}_{m}|-n+k-3) = m+1$.

\item A top-rewriting operation (possibly equal to $\id$) followed by either a $pop_{k}$ operation or a $collapse$ or $\id$ is applied in
  configuration $v_m$ in $\play$. Then all indices are inherited from the previous indexed stack. 
  Formally, $\ind{\stack'_{m+1}}(j) = \ind{\stack'_{m}}(j)$ whenever $j$ belongs to the domain of $\ind{\stack'_{m+1}}$.
\end{itemize}

The following straightforward proposition is crucial. In
particular, it means that if we stored some information on the stack,
the index gives the ``expiration date'' of the stored information, that is the
step in the computation starting from which the information has no
longer been updated.

\begin{proposition}\label{proposition:expirydate}
  Let $\Lambda=v_0v_1\cdots$ be a path and $\Lambda'=v_0'v_1'\cdots$ be as above. Let $m\geq 0$, let $\stack'_m=\anglebra{{\flaten{\stack}}_m,\target{\stack_m},\ind{\stack'_m}}$  be the indexed stack in $v'_m$. Let $j$ be such that $i=\ind{\stack'_m}(j)$ is defined. If $i>0$, then $(i-1)$ is the largest integer such that the $j$-th letter of ${\flaten{\stack}}_m$ is a copy of $\topn{1}(\stack_{i-1})$. If $i=0$, there is no $i'$ such that the $j$-th letter of ${\flaten{\stack}}_m$ is a copy of $\topn{1}(\stack_{i'})$.
\end{proposition}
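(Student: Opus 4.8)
I would prove this by induction on $m$, following the inductive definition of $\play'$. The first step is to fix the reading of ``copy''. Tracing a stack position backwards through the operations, $\toprew{\gamma}$ leaves the (genealogical) position unchanged, possibly altering only its label; $\pushn{k}$ makes both resulting copies of the duplicated top $(k-1)$-stack point to the same pre-image; $\pushlk{e}{\gamma}$ creates a fresh position with no pre-image; and $\popn{k}$, $\collapse$, $\id$ create no new positions. Hence each position of $\stack_m$ determines, for every $i'\le m$ at which it already existed, a unique position of $\stack_{i'}$, and saying that the $j$-th letter of $\flaten{\stack}_m$ \emph{is a copy of} $\topn{1}(\stack_{i'})$ will mean exactly that this pre-image at step $i'$ is the $\topn{1}$-position of $\stack_{i'}$. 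With this reading the statement amounts to showing that an index carried by a position equals $1+\ell$, where $\ell$ is the most recent step in $[0,m]$ at which the position was the $\topn{1}$-element, the value $0$ recording the absence of any such step.

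For the base case $m=0$, the map $\ind{\stack'_0}$ is constantly $0$ on its domain, which consists of the non-$\topn{1}$ positions of $\stack_0$. Since no operation has yet been applied, distinct positions are pairwise non-copies, so no indexed position is a copy of $\topn{1}(\stack_0)$; as there is no earlier step, the $i=0$ clause holds. For the inductive step I would assume the claim at $v'_m$ and treat $v'_{m+1}$ by the case analysis in the definition of $\ind{\stack'_{m+1}}$. The easy case gathers every position whose index is inherited unchanged and which remains non-top in $\stack_{m+1}$: the surviving buried positions in the $\popn{k}/\collapse/\id$ case, the positions below the former top in $\pushlk{k}{\gamma}$, and, in $\pushn{k}$, the positions below the duplicated stack together with the non-top positions of its lower (original) copy. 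For such a position the index is inherited, its backward genealogy is unaffected so its set of copied top-elements among $\{\topn{1}(\stack_{i'})\}_{i'\le m}$ is unchanged, and it is not a copy of $\topn{1}(\stack_{m+1})$ (the new top being the pushed symbol, the top of the new copy, or, after a pop/collapse, a position distinct from any still-buried one). The induction hypothesis then transfers verbatim.

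It then remains to handle the positions whose index is freshly set. In the $\pushlk{k}{\gamma}$ case the former $\topn{1}$-element of $\stack_m$ becomes buried and receives index $m+1$; it occupies the slot of $\topn{1}(\stack_m)$, so it is a copy of $\topn{1}(\stack_m)$, and, having just been covered, it is not a copy of $\topn{1}(\stack_{m+1})$, whence the largest step is $m=(m+1)-1$, as required. The $\pushn{k}$ case is identical for the former $\topn{1}$-element, which after the duplication sits immediately below the new copy and receives index $m+1$. For the other positions of that new (upper) copy, each is by construction a copy of the corresponding position of the original $(k-1)$-stack, hence shares its backward genealogy for all steps $\le m$ and is not a copy of $\topn{1}(\stack_{m+1})$; since its index is copied from that original position, the induction hypothesis applied to the original yields the claim. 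The top of the new copy is the current $\topn{1}$-element and carries no index, so nothing is required of it. These subcases exhaust the indexed positions of $\stack'_{m+1}$, closing the induction.

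The only genuine difficulty will be bookkeeping, and the crux will be the $\pushn{k}$ case. There one must commit to the genealogical reading of ``copy'' so that $\toprew{}$ is transparent, decide which of the two copies produced by the duplication inherits the index $m+1$ (the buried, lower one) as opposed to becoming the fresh $\topn{1}$-element, and check that the two copies share the backward genealogy of their single pre-image. Everything else is a routine verification that passing from step $m$ to step $m+1$ never creates a new ``copy of $\topn{1}$'' relation for an already-buried position.
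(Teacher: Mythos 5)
Your proposal is correct and follows exactly the route the paper takes: its proof of Proposition~\ref{proposition:expirydate} is the one-line ``immediate by induction on $m$ and from the definition of $\play'$ from $\play$'', and your induction with case analysis over the stack operations is just that argument spelled out. Your one genuine addition --- fixing the backward-genealogical reading of ``copy'' (so that $\toprew{}$ is transparent, the two $\pushn{k}$ copies share a pre-image, and a buried twin is never a copy of a \emph{later} top) --- is a clarification the paper leaves informal, not a different method.
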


\begin{proof}
Immediate by induction on $m$ and from the definition of $\play'$ from $\play$.
\end{proof}

  Consider a finite path $\play=v_0v_1\cdots v_m$ in $\pgame$ ending in a
  configuration $v_m=(q,\stack)$ such that $\topn{1}(\stack)$ has an $n$-link
  (if the link is a $k$-link for some $k<n$ the following concepts are not
  relevant).  The \concept{link-ancestor} of {$v_m$} is the configuration $v_j$
  where the original copy of the $n$-link in $\topn{1}(\stack)$ was created%
  \footnote{Formally, one could index links as well: whenever performing, in
  configuration $v_j$, a $\pushlk{e}{\gamma}$, one attaches to the newly
  created link the index $j+1$. Later, if the link is copied (by doing a
  $\pushn{k}$ operation) then the index is copied as well.}, or $v_0$ if the
  link was present in the stack of the configuration $v_0$. The
  \concept{link-rank} of {$v_m$} is the minimum colour of a state occurring in
  $\play$ since its link-ancestor $v_j$ (inclusive) \ie it is
  $\min\{\col(v_{j}),\cdots \col(v_{m})\}$.

 \begin{example}
Consider the sequence of indexed stacks given in Figure~\ref{fig:exampleIndexedStacks}.
The link-ancestor of configuration $v_8$ is configuration $v_7$ and its link-rank is $5$.
The link-ancestor of configuration $v_{11}$ is configuration $v_5$ and its link-rank is $2$.
\end{example}


\begin{definition}
  An $n$-CPDA $\pprocess=\anglebra{\Gamma,Q,\Delta,q_0}$ equipped with a colouring function is \concept{rank-aware} 
  from a configuration $v_0$ if there exists a function  
  $\LinkRk:Q\times \Gamma\rightarrow \mathbb{N}$ such that
  for any finite path $\play = v_0 v_1 \cdots v_\ell$, 
   the link-rank (if defined) of the configuration
  $v_\ell=(q,\stack)$ is equal to $\LinkRk(q,\topn{1}(\stack))$. In other words, the link
  rank can be retrieved from the control state together with the $\topn{1}$-element of the stack.
\end{definition}


To show that any CPDA can be transformed into a rank-aware CPDA, we
need to define the collapse-rank and the pop-rank.  First, we introduce the
notion of \emph{ancestor}.  Fix a finite path $\Lambda=v_0v_1\cdots v_m$, let
$v_m=(q,s)$ be some configuration in $\Lambda$ and let $x$ be a symbol in
$\stack$. Then the \concept{ancestor} of $x$ is the configuration $v_i$ where
$i$ is the index of $x$ in $v'_m$ (the indexed version of $v_m$).

We now {introduce the notion of} \emph{collapse-rank}. Fix a finite path
$\Lambda=v_0v_1\cdots v_m$ and assume that the $\topn{1}$-element of $v_m$ has
a $(k+1)$-link for some $k$.  Then the \concept{collapse-ancestor} in $v_m$ is
the ancestor of the $\topn{1}$-element of the target $k$-stack and the
\concept{collapse-rank} in $v_m$ is the smallest colour visited since the
collapse-ancestor (included). 

\begin{example}
Consider the sequence of indexed stacks given in Figure~\ref{fig:exampleIndexedStacks} (the colours of the corresponding configurations are indicated on the right part of the figure).

In $v'_8$ the collapse-ancestor is $v'_6$ and the collapse-rank is therefore $4$.
In $v'_9$ the collapse-ancestor is $v'_2$ and the collapse-rank is therefore $1$.
\end{example}

Next, we give a notion of \emph{pop-rank}. Fix a partial play
$\Lambda=v_0v_1\cdots v_m$ and a configuration $v_m=(q,\stack)$ in
$\Lambda$. Then, for any $1\leq k\leq n$, the \concept{pop-ancestor} for $k$, when
defined, is the ancestor of the $\topn{1}$-element of 
$\popn{k}(\stack)$ and the \concept{pop-rank} for $k$, when
defined, is the smallest colour visited since the pop-ancestor for $k$ (included).
In particular, the pop-rank for $n$ is the smallest colour
visited since the stack has height at least the height of $\stack$.

\begin{example}
Again, consider the sequence of indexed stacks given in Figure~\ref{fig:exampleIndexedStacks}.

In configuration $v'_9$ the pop-ancestor (\emph{resp.} pop-rank) for $3$ is $v'_6$ (\emph{resp.} $4$), the pop-ancestor (\emph{resp.} pop-rank) for $2$ is $v'_8$ (\emph{resp.} $5$) and the pop-ancestor (\emph{resp.} pop-rank) for $1$ is $v'_5$ (\emph{resp.} $2$).

In configuration $v'_{12}$ the pop-ancestor (\emph{resp.} pop-rank) for $3$ is $v'_0$ (\emph{resp.} $0$), the pop-ancestor (\emph{resp.} pop-rank) for $2$ is $v'_2$ (\emph{resp.} $1$) and the pop-ancestor (\emph{resp.} pop-rank) for $1$ is $v'_12$ (\emph{resp.} $2$).
\end{example}

\begin{remark}\label{rk:undefinedancestor}
In the current setting, if the ancestor of the pointed stack (\emph{resp} the ancestor of the $\topn{1}$-element of $pop_k(\stack)$ / the link-ancestor) is $v_0$, then the collapse-rank (\emph{resp} the pop-rank / the link-rank) is simply the smallest colour seen since the beginning of the play. Hence, it does not make much sense but it permits the construction to remain uniform.
\end{remark}

\subsection{Main Result}

The next theorem shows that we can restrict our attention to CPDA games where the underlying CPDA is rank-aware.

\begin{theorem}\label{lemma:rank-aware}
For any $n$-CPDA $\pprocess=\langle \Gamma,Q,\Delta,q_0\rangle$ and any associated parity game $\pgame$, one can construct an
 $n$-CPDA $\pprocessrk$ and an associated parity game $\pgamerk$ such that the following holds.
 \begin{itemize} 
 \item There exists a mapping $\nu$ from the configurations of $\pprocess$ to that of $\pprocessrk$ such that:
	\begin{itemize}
		\item for any configuration $v_0$ of $\pprocess$, $\pprocessrk$ is rank-aware from $\nu(v_0)$;
		\item \Eloise has a winning strategy in $\pgame$ from a configuration $v_0$ iff she has a winning 	strategy in $\pgamerk$ from $\nu(v_0)$;
		\item both $\nu$ and $\nu^{-1}$ preserve regularity of sets of configurations.
	\end{itemize}
\item If there is an $n$-CPDA transducer $\Srk$ synchronised with $\pprocessrk$ realising a well-defined winning strategy for \Eloise in $\pgamerk$ from $\nu(q_0,\bot_{n})$, then one can effectively construct an $n$-CPDA transducer $\S$ synchronised with $\pprocess$ realising a well-defined winning strategy for \Eloise in $\pgame$ from the initial configuration $(q_0,\bot_{n})$.
\end{itemize}
\end{theorem}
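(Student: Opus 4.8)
The plan is to build $\pprocessrk$ by \emph{augmenting the stack alphabet} of $\pprocess$ with a bounded amount of colour bookkeeping, so that the link-rank becomes readable from the control state and the topmost symbol. Concretely, a stack symbol of $\pprocessrk$ is a pair $(\gamma,\theta)$ where $\gamma\in\Gamma$ and $\theta$ records, as elements of the finite set $\colors$: (i) the link-rank of the symbol, (ii) its collapse-rank, and (iii) the pop-ranks $p_1,\dots,p_n$ (one for each order $k$); since all components range over $\colors$, the alphabet stays finite. The control state of $\pprocessrk$ pairs a state of $\pprocess$ with a bounded \emph{bridging value} (an element of $\colors$ or a distinguished empty marker), whose role is explained below. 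The map $\nu$ sends $(q,\stack)$ to the configuration obtained by annotating \emph{every} symbol of $\stack$ with all ranks set to $\col(q)$ and the bridging value set to empty; by the convention of Remark~\ref{rk:undefinedancestor} this is exactly the bookkeeping appropriate to treating $v_0=(q,\stack)$ as the start of a play, where every ancestor is $v_0$ and the smallest colour seen so far is $\col(q)$. The witnessing function is then $\LinkRk(Q,(\gamma,\theta))=\min(\ell_\theta,b_Q)$, where $\ell_\theta$ is the link-rank stored in $\theta$ and $b_Q$ the bridging value in $Q$.

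The heart of the proof is the definition of $\Deltark$ and the verification that it maintains these quantities. Every transition $(q',rew;op)\in\Delta(q,\gamma)$ of $\pprocess$ is simulated by a single transition $(Q',rew';op')$ of $\pprocessrk$, where the top-rewriting slot $rew'$ first \emph{refreshes} the stored ranks of the current topmost symbol by taking the minimum with the current colour and with the bridging value carried in $Q$ (this incorporates exactly the colours seen since the symbol was last on top), and $op'$ is $op$ together with the update of $\theta$ dictated by $op$. For $\pushlk{e}{\gamma}$ the freshly pushed symbol receives all ranks equal to the current colour; for $\pushn{k}$ the duplicated $(k-1)$-stack carries its ranks along and the new top's pop-rank for $k$ is re-initialised; and for $\popn{k}$ or $\collapse$ the exposed symbol is necessarily \emph{stale} (its stored ranks were frozen when it was last on top), so it cannot be rewritten in the same transition — instead we read, from the old top, the pop-rank for $k$ (\resp the collapse-rank) that bridges the gap, store it as the new bridging value in $Q'$, and let the \emph{next} transition's refreshing $rew'$ fold it into the exposed symbol's ranks. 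This is precisely why every transition carries a top-rewrite in its first slot and why the bridging value lives in the control state. Correctness — that the stored link-rank refreshed by the bridging value always equals the true link-rank along any path from $\nu(v_0)$ — is then proved by induction on path length, the key ingredient being Proposition~\ref{proposition:expirydate}: the index of a buried symbol is its expiration date, so the bridging value (a minimum of colours taken since that expiration date) together with the frozen stored rank reconstitute the correct minimum over the whole interval. I expect this verification, in particular the interaction of $\pushn{k}$ (which copies sub-stacks, hence ranks) with pop/collapse sequences, to be the main obstacle; the remaining operations are comparatively routine.

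Since $\pprocessrk$ simulates $\pprocess$ step for step and the annotation neither blocks nor creates moves — each $\pprocess$-edge lifts to exactly one $\pprocessrk$-edge and the refresh is folded into $rew'$ rather than producing an auxiliary vertex — erasing the annotation is a graph morphism that induces, from each fixed start $\nu(v_0)$, a bijection between plays of $\pgame$ from $v_0$ and plays of $\pgamerk$ from $\nu(v_0)$. Setting $Q_{\rk,\Ei}$ so that a $\pprocessrk$-state belongs to \Eloise iff its $\pprocess$-projection does, and colouring a $\pprocessrk$-state by the colour of its projection, this bijection preserves ownership and the entire colour sequence, hence the parity outcome; consequently $v_0$ is winning for \Eloise in $\pgame$ iff $\nu(v_0)$ is winning for her in $\pgamerk$. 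For the regularity clause, $\nu$ adds the fixed uniform annotation ``all ranks $=\col(q)$, no bridging'', a map computable by the stack-automaton model of Section~\ref{ssection:regSets} reading bottom-up, so $\nu$ preserves regularity; and $\nu^{-1}$ is simply erasure, a letter-to-letter projection that a deterministic stack automaton realises by ignoring the $\theta$-components, so it preserves regularity as well.

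Finally, for the strategy clause, given an $n$-CPDA transducer $\Srk$ synchronised with $\pprocessrk$ realising a well-defined winning strategy from $\nu(q_0,\bot_n)$, I would obtain $\S$ by \emph{composing} the annotation bookkeeping of $\pprocessrk$ with $\Srk$. Reading a partial play of $\pgame$ presented as a sequence $(q_i,rew_i;op_i)$, the transducer $\S$ recomputes on the fly the corresponding annotated moves of $\pgamerk$ (a deterministic $n$-CPDA computation, being exactly the simulation defining $\Deltark$), feeds them to $\Srk$, and outputs $\Srk$'s recommended move with its annotation erased. The state and stack of $\S$ pair those of $\Srk$ with the annotation-tracking data, and since $op$, its $\pgamerk$-counterpart, and the move played by $\Srk$ are all of the same kind (annotations change neither the order nor the type of any stack operation), $\S$ is synchronised with $\pprocess$; well-definedness and the winning property then follow from the play-bijection above, which carries winning plays of $\pgamerk$ to winning plays of $\pgame$.
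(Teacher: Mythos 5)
Your overall architecture is the same as the paper's: annotate each stack symbol with its collapse-rank, link-rank and a pop-rank vector, defer the refresh of a newly exposed symbol to the next transition's top-rewrite slot (since a CPDA transition cannot rewrite \emph{after} a $\popn{k}$ or $\collapse$), carry the pending data in the control state, and then get the game equivalence by a colour- and ownership-preserving play bijection, regularity by on-the-fly annotation/erasure, and the strategy transducer by composing the bookkeeping with $\Srk$. But there is a genuine gap in the core construction: your \emph{bridging value is a single colour}, and a single colour cannot carry enough information to refresh the exposed symbol correctly. After a $\popn{k}$ (or a $\collapse$ along a $k$-link), the correct update of the new top's pop-rank vector is
$$\tau''(i)=\begin{cases}\min(\tau'(i),\tau(k),\col(q')) & \text{if } i\leq k,\\ \min(\tau(i),\col(q')) & \text{if } i>k,\end{cases}$$
where $\tau'$ is the exposed symbol's frozen vector and $\tau$ is the \emph{old} top's vector. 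The point is that for $i>k$ one has $\popn{i}(\popn{k}(\stack))=\popn{i}(\stack)$, so the pop-ancestor for order $i$ is unchanged and the correct value must be \emph{inherited from the old top}, component by component — it cannot be reconstructed from the exposed symbol's frozen $\tau'(i)$ plus one bridge. Your formula $\min(\tau'(i),b,\col(q'))$ with $b=\tau(k)$ can be strictly too small: at order $3$, bury $a$ under $b$ at step $x$, visit colour $0$ at step $x+1$, then do a $\pushn{3}$ at step $z$ and later a $\popn{1}$ exposing (the copy of) $a$. The true pop-rank for $3$ is the minimum colour since $v_z$ (say $5$), but your bridge $b=\tau(1)$ equals $0$, so you record $0$. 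This corruption is not benign for rank-awareness: a subsequent $\pushlk{3}{\gamma}$ initialises the new symbol's collapse-rank from the stored pop-rank for $3$, and collapse updates feed $m_c$ into link-ranks via $\min(m'_l,m_c,\col(q'))$, so $\LinkRk$ eventually misreports and the induction (via Proposition~\ref{proposition:expirydate}) breaks exactly at the $\pushn{j}$/pop interaction you yourself flagged as the main obstacle.

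The repair is what the paper does in Remark~\ref{rk:rankAwareTopRew}: the pending data stored in the control state is not one colour but (essentially) the \emph{entire former top annotation} $(m_c,m_l,\tau(1),\dots,\tau(n))$ together with a tag recording which kind of update is pending ($\popn{k}$ for which $k$, or $\collapse$), still a finite blowup of order $|\colors|^{n+3}$. Two further remarks. First, your initialisation of $\nu$ (all ranks set to $\col(q)$ on every symbol) differs from the paper's, which marks buried symbols with an undefined marker $\indef$ and keeps the minimal colour seen so far in the control state; with full-vector bridging your variant is essentially workable for the components $i\leq k$, since when the exposed symbol's ancestor is $v_0$ the old top's $\tau(k)$ already equals the minimum colour since the beginning, making your frozen $\col(q)$ redundant rather than harmful — but note again that the components $i>k$ must come from the old top in either scheme. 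Second, once the bridging data is enlarged as above, the remaining parts of your proposal (the step-for-step play bijection through the erasure projection, regularity of $\nu$ and $\nu^{-1}$ via letter-level annotation/erasure, and building $\S$ by pairing $\Srk$ with the annotation-tracking computation, preserving the kind of each stack operation) do match the paper's proof and go through.
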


\subsection{Proof of Theorem~\ref{lemma:rank-aware}}

The proof of Theorem~\ref{lemma:rank-aware} is a non-trivial generalisation of \cite[Lemma 6.3]{KNUW05} (which concerns 2-CPDA) to the general setting of $n$-CPDA {and starting from an arbitrary configuration}. 


Fix an $n$-CPDA
$\pprocess=\anglebra{\Gamma,Q,\Delta,q_0}$, a partition $Q_\Ei\uplus
Q_\Ai$ of $Q$ and a colouring function
$\col:Q\rightarrow\colors\subset\mathbb{N}$. Denote by $\pgame$ the
induced parity game. 
We define a rank-aware (to be proven) $n$-CPDA
$\pprocessrk=\anglebra{\Gammark,\Qrk,\Deltark,q_{0,\rk}}$ such that
$\Qrk=Q\times \colors$ and 
{$$\Gammark={\Gamma\times (\colors\cup\{\indef\})\times (\colors\cup\{\indef,\nalr\}) \times (\colors^{\{1,\dots,n\}}}\cup\{\indef\})$$}


We define a map $\nu$ that associates with any configuration of $\pprocess$ a configuration of $\pprocessrk$.  Let $(q,\stack)$ be a configuration in $\pprocess$. 
Then $\nu(q,\stack)=((q,\col(q)),\stack')$ where $\stack'$ is obtained by: 
\begin{itemize}
\item Replacing every internal (\ie that is not the $\topn{1}$-element) symbol $\gamma$ by $(\gamma,\indef,\indef,\indef)$ if it has an $n$-link and by $(\gamma,\indef,\nalr,\indef)$ otherwise.
\item Replacing the $\topn{1}$-element $\gamma$ by $(\gamma,\col(q),\col(q),\tau_{\col(q)}$ if it has an $n$-link and otherwise by $(\gamma,\col(q),\nalr,\tau_{\col(q)})$, where $\tau_{\col(q)}$ is the constant function assigning to any $1\leq i\leq n$ the value $\col(q)$.
\end{itemize}

We equip $\pprocessrk$ with a colouring function $\colrk$ by letting $\colrk(q,\mcolsofar)=\col(q)$.
Our construction will satisfy the following invariant. Let $\Lambda$ be a finite path in $\transgraph{\pprocessrk}$ starting in some configuration $\nu(q,\stack)$ ending in some configuration $((p,\mcolsofar),\stack)$ then the following holds.
 First, $\mcolsofar$ is the minimal colour visited from the beginning of the path. Second, if $\topn{1}(\stack)=(\alpha,m_c,m_l, \tau)$ then
\begin{itemize}
\item $m_c$ is the collapse-rank;
\item $m_l$ is the link-rank if it makes sense (\ie if there is an $n$-link in the current $\topn{1}$-symbol) or is $\nalr$ otherwise;
\item $\tau$ is the \emph{pop-rank}: $\tau(i)$ is the pop-rank for $i$ for every $1\leq i\leq n$.
\end{itemize}


Trivially, from the definition of $\nu$, the invariant holds at the beginning of the path. 

The transition function of $\pprocessrk$ mimics that of $\pprocess$ and updates the ranks as explained below. First, let us explain the meaning of symbols $\indef$. Such symbols will never been created using a $\pushlk{k}{\_}$ or a $\toprew{\indef}$ action: hence they can only be duplicated (using $\pushn{k}$) from symbols originally in the stack. The meaning of a symbol $\indef$ is that the corresponding object (collapse-rank, link-rank or pop-rank) has not yet been settled. However, when a $\indef$ symbol appears in the $\topn{1}$-element the various ranks can be easily retrieved as they necessarily equal the smallest colour visited so far (as noted in Remark~\ref{rk:undefinedancestor}): this is why we will compute the minimal colour visited so far in the control state of $\pprocessrk$.

In order to make the construction more readable, we do not formally describe $\Deltark$ but rather explain how $\pprocessrk$ behaves. It should be clear that $\Deltark$ can be formally described to fit this informal description (and that some extra control states are {actually} needed as we will allow to do several stack operation per transition); technical issues about this construction are discussed in Remark~\ref{rk:rankAwareTopRew}. Note that the description below also contains the inductive proof of its validity, namely that $m_c$, $m_l$ and $\tau$ are as stated above. To avoid case distinction on whether the link-rank is defined or not, we take the following convention that $\min(\nalr,i)=\nalr$ for every $i\in \mathbb{N}$.

The intuitive idea is the following. One stores in the stack information on the various ranks, and after performing a $\popn{k}$ or a $\collapse$, one needs to update the information stored in the new $\topn{1}$-element. Indeed this information has no longer been updated since the ancestor configuration (this was the last time it was on top of the stack). To update it, one uses either the collapse-rank / pop-rank in the previous configuration, which is exactly what is needed for this update.

Assume $\pprocessrk$ is in configuration $v_\ell=((q,\mcolsofar),\stack)$ with $\topn{1}(\stack)=({\alpha},m_c,m_l,\tau)$ and let $v_0v_1\cdots v_\ell$ be the beginning of the path of $\transgraph{\pprocessrk}$ where we denote $v_i=((q_i,\mcolsofar_i),\stack_i)$ (hence $q_\ell=q$ and $\stack_\ell=\stack$). 
For any $(q',\toprew{\gamma};op)\in \Delta(q,\alpha)$ (note that the case where no $\toprew{}$ is performed corresponds to the case where $\gamma=\alpha$) the following behaviours are those allowed in $((q,\mcolsofar),\stack)$.

\begin{enumerate}

\item Assume $op=\popn{k}$ for some $1\leq k\leq n$, let $\popn{k}(\stack)=\stack'$ and let $\topn{1}(\stack')=(\alpha',m_c',m_l',\tau')$. Then $\pprocessrk$ can go to the configuration $((q',\mcolsofar'),\stack'')$ where $\mcolsofar'=\min(\mcolsofar,\col(q'))$ and $\stack''$ is obtained from $\stack'$ by replacing $\topn{1}(\stack')$ by 
\begin{enumerate}
\item $(\alpha',\mcolsofar',\mcolsofar',(\mcolsofar',\ldots,\mcolsofar'))$ if $m_c'=\indef$, $m_l'=\indef$ and $\tau'\indef$;
\item $(\alpha',\mcolsofar',\nalr,(\mcolsofar',\ldots,\mcolsofar'))$ if $m_c'=\indef$, $m_l'=\nalr$ and $\tau'\indef$; 
\item
$(\alpha',\min(m_c',\tau(k),\col(q')),\min(m_l',\tau(k),\col(q')),\tau'')$ otherwise, with 
$$\tau''(i)=
\begin{cases}
        \min(\tau'(i),\tau(k),\col(q')) & \text{ if } i\leq k\\
        \min(\tau(i),\col(q')) &\text{ if } i> k.
\end{cases}$$
\end{enumerate}

Cases $(a)$ and $(b)$ correspond to the case where one reaches (possibly a copy) of a symbol that was in the stack from the very beginning and that never appeared as a $\topn{1}$-element: then the value of the collapse-rank, link-rank —~if defined this is case $(a)$ otherwise it is case $(b)$~— and pop-ranks are all equal to $\mcolsofar'$.

We now explain case $(c)$. 
Let $v_x$ be the ancestor of $\topn{1}(\popn{k}(\stack))$. Then $x>0$ as otherwise we would be in case $(a)$ or $(b)$. By Proposition~\ref{proposition:expirydate}, it follows that $\topn{1}(\popn{k}(\stack))=\topn{1}(\stack_{x-1})$, and by induction hypothesis, at step $(x-1)$, $m_c'$, $m_l'$ and $\tau'$ had the expected meaning. Let $y$ be the index of the $\topn{1}$-element of the pointed stack in $\stack'$: $y$ is also the $\topn{1}$-element of the pointed stack in $\stack_{x-1}$, and moreover $y<x$. Hence, the collapse-rank in $v_{\ell+1}$ is 
\begin{align*}
& \min\{\col(q_{y}),\ldots,\col(q_{x-1}),\col(q_x),\ldots,\col(q_\ell),\col(q')\}\\
= & \min\{\min\{\col(q_{y}),\ldots,\col(q_{x-1})\},\min\{\col(q_x),\ldots,\col(q_\ell)\},\col(q')\}\\
= & \min\{m'_c,\tau(k),\col(q')\}
\end{align*}
Similarly, when defined, the link-ancestor of $\stack'$ is the same as the one in $\stack_{x-1}$: hence the pop-rank in $v_{\ell+1}$ is $\min\{m'_l,\tau(k),\col(q')\}$.

For any $i\leq k$, $\topn{1}(\popn{i}(\stack'))=\topn{1}(\stack_{x-1})$ and therefore the pop-rank for $i$ in $v_{\ell+1}$ is obtained by updating $\tau'(i)$ to take care of the minimum colour seen since $v_x$ which, as for the collapse-rank, is $\min\{\tau(k),\col(q')\}$: therefore the pop-rank for $i$ in $v_{\ell+1}$ equals $\min\{\tau'(i),\tau(k),\col(q')\}$.

For any $i>k$, $\popn{i}(\stack')=\popn{i}(\stack)$ and thus $\topn{1}(\popn{i}(\stack'))=\topn{1}(\popn{i}(\stack))$. Therefore the pop-rank for $i$ in $v_{\ell+1}$ is obtained by updating the one in $v_{\ell}$ to take care of the new visited colour $\col(q')$: hence the pop-rank for $i$ in $v_{\ell+1}$ equals $\min\{\tau(i),\col(q')\}$.

\item Assume $op=collapse$, let $k$ be the order of the link in $\topn{1}(\stack)$, let $collapse(\stack)=\stack'$ and let $\topn{1}(\stack')=(\alpha',m_c',m_l',\tau')$. Then $\pprocessrk$ can go to the configuration $((q',\mcolsofar'),\stack'')$ where $\mcolsofar'=\min(\mcolsofar,\col(q'))$ and $\stack''$ is obtained from $\stack'$ by replacing $\topn{1}(\stack')$ by 
\begin{enumerate}
\item $(\alpha',\mcolsofar',\mcolsofar',(\mcolsofar',\ldots,\mcolsofar'))$ if $m_c'=\indef$, $m_l'=\indef$ and $\tau'=\indef$;
\item $(\alpha',\mcolsofar',\nalr,(\mcolsofar',\ldots,\mcolsofar'))$ if $m_c'=\indef$, $m_l'=\nalr$ and $\tau'=\indef$;
\item $(\alpha',\min(m_c',m_c,\col(q')),\min(m_l',m_c,\col(q')),\tau'')$ otherwise with 
$$\tau''(i)=
\begin{cases}
\min(\tau'(i),m_c,\col(q')) & \text{if }i\leq k\\ 
\min(\tau(i),\col(q')) & \text{if }i> k.\\ 
\end{cases}
$$ 
\end{enumerate}

The proof follows the same line as for the previous case. 
Cases $(a)$ and $(b)$ correspond to the case where one reaches (possibly a copy) of a symbol that was in the stack from the very beginning and that never appeared as a $\topn{1}$-element: then the value of the collapse-rank, link-rank —~if defined this is case $(a)$ otherwise it is case $(b)$~— and pop-ranks are all equal to $\mcolsofar'$.

We now explain case $(c)$. 
Let $v_x$ be the collapse-ancestor of $v_\ell$. Then $x > 0$ as otherwise we would be in case $(a)$ or $(b)$. By induction hypothesis, $m'_c$, $m'_l$ and $\tau'$ give the collapse-rank / link-rank / pop-ranks in $v_{x-1}$. 
Moreover the ancestor of the $\topn{1}$-element of the target of the top link in $\stack'$ is the same as the one in $v_{x-1}$. Therefore, the collapse-rank is obtained by taking the minimum of the collapse-rank in $v_{x-1}$ with $\min\{\col(q_x),\ldots \col(q_\ell),\col(q')\}=\min\{m_c,\col(q')\}$. Similarly (if defined) the link-ancestor in $\stack'$ being the same as the one in $v_{x-1}$, the link-rank is obtained by taking the minimum of the one in $v_{x-1}$ with $\min\{\col(q_x),\ldots ,\col(q_\ell),\col(q')\}=\min\{m_c,\col(q')\}$.

Let $i\leq k$. The ancestor of $\topn{1}(\popn{i}(\stack'))$ is the same as the ancestor of $\topn{1}(\popn{i}(\stack_{x-1}))$. Therefore the pop-rank for $i$ in $v_{\ell+1}$ is obtained by taking the minimum of the one in $v_{x-1}$ with $\min\{\col(q_x),\ldots \col(q_\ell),\col(q')\}=\min\{m_c,\col(q')\}$.

Let $i> k$. Then the ancestor of $\topn{1}(\popn{i}(\stack'))$ is the same as the ancestor of $\topn{1}(\popn{i}(\stack_\ell))$: indeed the collapse only modified the $\topn{k}$ stack, in other words $\popn{i}(\collapse(\stack))=\popn{i}(\stack)$. Therefore the pop-rank for $i$ in $v_{\ell+1}$ is obtained by taking the minimum of the one in $v_{\ell}$ with the new visited colour $\col(q')$.


\item Assume $op=\pushn{j}$ for some $2\leq j\leq n$, let $\pushn{j}(\toprew{(\gamma,m_c,m_l,\tau)}(\stack))=\stack'$ and let $\topn{1}(\stack')=(\gamma,m_c,m_l,\tau)$ (note that $\indef$ does not appear in $\topn{1}(\stack')$). Then, $\pprocessrk$ can go to the configuration $((q',\mcolsofar'),\stack'')$ where $\mcolsofar'=\min(\mcolsofar,\col(q'))$ and $\stack''$ is obtained from $\stack'$ when replacing $\topn{1}(\stack')$ by $(\gamma,\min(m_c,\col(q')),\min(m_l,\col(q')),\tau')$ with 
$$\tau'(i)=
\begin{cases}
        \min(\tau(i),\col(q')) & \text{if } i\neq j\\
        \col(q')  & \text{if } i= j
\end{cases}$$

Indeed, the collapse-ancestor in the new configuration is the same as the one in $\stack$. As by induction hypothesis $m_c$ is the collapse-rank in $v_{\ell}$, the collapse-rank in $v_{\ell+1}$ is obtained by updating $m_c$ to take care of the new visited colour, namely by taking $\min\{m_c,\col(q')\}$. Similarly, if defined, the link-ancestors in $v_{\ell}$ and $v_{\ell+1}$ are identical and then the link-rank in $v_{\ell+1}$ is $\min\{m_c,\col(q')\}$.

For any $i\neq j$, the ancestor of $\topn{1}(\popn{i}(\stack)')$ and the ancestor of $\topn{1}(\popn{i}(\stack'))$ are the same. Again using the induction hypothesis one directly gets that the pop-rank for $i$ in $v_{\ell+1}$ equals $\min\{\tau(i),\col(q')\}$.

The index of the ancestor of $top_1(pop_j(\stack'))$ is by definition $\ell+1$. Hence, as the only colour visited since $v_{\ell+1}$ is $\col(q')$ it equals the pop-rank for $j$.


\item  Assume $op=\pushlk{k}{\beta}$ with $1\leq k\leq n$, and $\beta\in(\Gamma\setminus\{\bot\})$. Then $\pprocessrk$ can go to $(q',\mcolsofar')$, where $\mcolsofar'=\min(\mcolsofar,\col'(q'))$, and apply successively $\toprew{(\gamma,m_c,m_l,\tau)}$ and $\pushlk{k}{(\beta,m'_c,m'_l,\tau')}$ where $m'_c=\min(\tau(k),\col(q'))$, $m'_l=\col(q')$ if $k=n$ and $m'_l=\nalr$ otherwise, and $\tau'(i)=\min(\tau(i),\col(q'))$ for every $i\geq 2$ and $\tau(1)=\col(q')$.

Indeed, the pointed stack in $\stack'$ is $\topn{k}(\popn{k}(\stack))$ and therefore the collapse-rank in $v_{\ell+1}$ is the minimum of the pop-rank for $k$ in $\stack$ and of the new visited colour $\col(q')$, that is $\min\{\tau(k),\col(q')\}$.

If $k=n$, the link-ancestor of $v_{\ell+1}$ is $v_{\ell+1}$ itself and hence the link-rank is the colour of the current configuration, namely $\col(q')$.

For any $i\geq 2$, as $\popn{i}(\stack)=\popn{i}(\stack')$ one also has that $\topn{1}(\popn{i}(\stack'))=\topn{1}(\popn{i}(\stack))$ and therefore the pop-rank for $i$ in $v_{\ell+1}$ equals the minimum of the one in $v_\ell$ with the new visited colour $\col(q')$, that is $\min\{\tau(i),\col(q')\}$. Finally as the ancestor of $\popn{1}(\stack')$ is $v_{\ell+1}$ then the pop-rank for $1$ is the current colour, namely $\col(q')$.

\end{enumerate} 

From the previous description (and the included inductive proof) we conclude that, for any configuration $v_0$ of $\pprocess$, $\pprocessrk$ is rank-aware from $\nu(v_0)$, where we let $\LinkRk((q,(\gamma,m_c,m_l,\tau)))=m_l$.

\begin{remark}\label{rk:rankAwareTopRew}
One may object that $\pprocessrk$ does not fit the definition of $n$-CPDA. Indeed, in a single transition it can do a top-rewriting followed by another stack operation and followed again by a top-rewriting (which itself depends on the new $\topn{1}$-element). One could add intermediate states and simply decompose such a transition into two transitions, but this would be problematic later when defining an $n$-CPDA transducer realising a winning strategy. 

Fortunately, one can define a variant $\pprocessrk'$ of $\pprocessrk$ that has the same properties as $\pprocessrk$ and additionally fits the definition of $n$-CPDA. The idea is simply to postpone the final top-rewriting to the next transition. Indeed, it suffices to add a new component on the control state where one encodes the top-rewriting that should be performed next: this top-rewriting is then performed in the next transition (note that this fits the definition as performing two top-rewriting is the same as doing only the last one). However, there is still an issue as the top-rewriting was actually depending on the $\topn{1}$-symbol (one updates the various ranks) hence, one cannot save the next top-rewriting in the control state without first observing the symbol to be rewritten. Again this is not a real problem, as it suffices to remember which kind of update should be done (one concerning a $\popn{k}$ or one concerning a $\collapse$) and to store in the control state the various objects needed for this update (for this, one can simply store the former $\topn{1}$-element).

One also needs to slightly modify the $\LinkRk$ function so that it returns the link-rank of the $\topn{1}$-symbol after it is rewritten. This can easily be done as the domain of $\LinkRk$ is $\Qrk\times \Gammark$.

Note that $\pprocessrk'$ and $\pprocessrk$ use the same stack alphabet, but that the state space of $\pprocessrk'$ uses an extra component of size linear in the one of the stack alphabet. 

In conclusion building a rank-aware (valid) $n$-CPDA from a non-aware one increases (by a multiplicative factor) the stack alphabet by $|\colors|^{n+3}$ and the state set by $\mathcal{O}(|\colors|^{n+3})$.

For now on, we uses $\pprocessrk$ to mean $\pprocessrk'$.
\end{remark}

We are now ready to conclude the proof of Theorem~\ref{lemma:rank-aware}. First recall that we defined $\colrk$ by letting $\colrk(q,\mcolsofar)=\col(q)$. Then, we define a partition $\QrkE\uplus\QrkA$ of $\Qrk$ by letting the states in $\QrkE$ be those states with their first component in $Q_\Ei$, and those states in $\QrkA$ be those states with their first component in $Q_\Ai$. Let $\pggraphrk$ be the corresponding arena and let $\pgamerk=(\pggraphrk,\WC_{\colrk})$ be the corresponding $n$-CPDA parity game.


Consider the projection $\zeta$ defined from configurations of $\pprocessrk$ into configurations of $\pprocess$ by only keeping the first component of the control state, and by only keeping the $\Gamma$ part of the symbols in the stack. Note that, on the domain of $\nu^{-1}$, $\zeta$ and $\nu^{-1}$ coincide. Also note that $\zeta$ preserves the shape of stacks\footnote{
Recall that the \emph{shape} of a stack is the stack obtained by replacing all non-$\bot$ symbols appearing in $\stack$ by a fresh dummy symbol $\sharp$ (but keeping the links).
}, \ie for any configuration $v_\rk$, the stack in $v_\rk$ has the same shape as the stack in $\nu(v_\rk)$. 

We extend $\zeta$ as a function from (possibly partial) plays in $\pgamerk$ into (possibly partial) plays in $\pgame$ by letting $\zeta(v_0'v_1'\cdots)=\zeta(v_0')\zeta(v_1')\cdots$. It is obvious that for any play $\play'$ in $\pgamerk$ starting from $\nu(v_0)$, its image $\zeta(\play')$ is a play in $\pgame$ starting from $v_0$; moreover these two plays induce the same sequence of colours and at any round the player that controls the current configuration is the same in both plays. 
Conversely, from the definition of $\pprocessrk$ it is also clear that there is, for any play $\play$ in $\pgame$ starting from $v_0$, a \emph{unique} play $\play'$ in $\pgamerk$ starting from $\nu(v_0)$ such that $\zeta(\play')=\play$.

In particular, $\zeta$ can be used to construct a strategy in $\pgame$ from a strategy in $\pgamerk$. Indeed, let $\strat_\rk$ be a strategy for \Eloise from $\nu(v_0)$ in $\pgamerk$. We define a strategy $\strat$ in $\pgame$ from $\nu(v_0)$. This strategy maintains as a memory a partial play $\play_\rk$ in $\pgamerk$ such that, if \Eloise respects $\strat$, in $\pgame$ starting from $v_0$ after having played $\play$ one has $\zeta(\play_\rk)=\play$ and moreover $\play_\rk$ is a play in $\pgamerk$ starting from $\nu(v_0)$ where \Eloise respects $\strat_\rk$. Initially, we let $\play_\rk=\nu(v_0)$. Assume that we have been playing $\play$ and that \Eloise has to play next. Then she considers $v_\rk=\strat_\rk(\play_\rk)$ and she plays to $v$ where $v=\zeta(v_\rk)$. 
Finally one updates $\play_\rk$ to be $\play_\rk\cdot v_\rk$. If it is \Abelard that has to play next and if he moves to some $v$, then \Eloise updates $\play_\rk$ to be $\play_\rk\cdot v_\rk$ where $v_\rk$ is the unique configuration such that $\play_\rk\cdot v_\rk$ is a valid play and such that $\zeta(v_\rk)=v$. A similar construction can be done to build a strategy of \Abelard in $\pgame$ from one in $\pgamerk$.

Now, assume that $\nu(v_0)$ is winning for \Eloise (\emph{resp.} \Abelard) and call $\strat_\rk$ an associated winning strategy. Let $\strat$ be the strategy in $\pgame$ obtained as explained above. Then $\strat$ is winning for \Eloise (\emph{resp.} \Abelard) in $\pgame$ from $v_0$ (this follows directly from the fact that $\strat_\rk$ is winning and that we have the property that $\zeta(\play_\rk)=\play$ for any partial play $\play$ in $\pgame$ consistent with $\strat$). Hence this proves that \Eloise has a winning strategy in $\pgame$ from $v_0$ iff she has a winning strategy in $\pgamerk$ from $\nu(v_0)$.

The fact that both $\nu$ and $\nu^{-1}$ preserve regular sets of configurations is obvious: for this one basically needs to simulate an automaton on the image by $\nu$ (or $\nu^{-1}$) that can be computed on-the-fly (except for the very last steps of $\nu$ where one needs to know the control state before deducing the $\topn{1}$ stack element as it has information on the colour of the control state. However, this is not a problem to have a slight --- finite --- delay in the final steps of the simulation).

Finally, from the previous construction of a strategy $\strat$ from a strategy $\strat_\rk$ we prove that if there is an $n$-CPDA transducer $\Srk$ synchronised with $\pprocessrk$ realising a well-defined winning strategy $\strat_\rk$ for \Eloise in $\pgamerk$ from $\nu(q_0,\bot_{n})$, then one can effectively construct an $n$-CPDA transducer $\S$ synchronised with $\pprocess$ realising a well-defined winning strategy $\strat$ for \Eloise in $\pgame$ from the initial configuration $(q_0,\bot_{n})$. Indeed, in our previous construction of $\strat$, we maintained a partial play $\play_\rk$ in $\pgamerk$ and used the value of $\strat_\rk(\play_\rk)$ to define $\strat(\play)$. But if $\strat_\rk$ is realised by an $n$-CPDA transducer $\Srk$, it suffices to remember the configuration of this transducer after playing $\play_\rk$ (as this suffices to compute $\phi_\rk(\play_\rk))$. Hence, the only things that need to be modified from $\Srk$ to obtain $\S$ is that one needs to ``embed'' the transition function of $\pprocess_\rk$ into it, so that $\S$ can read/output elements in $Q\times \Op{n}{\Gamma}\times \Op{n}{\Gamma}$ instead of $\Qrk\times \Op{n}{\Gammark}\times \Op{n}{\Gammark}$. This can easily (but writing the formal construction would be quite heavy) be achieved by noting that the shape of stacks is preserved by $\zeta$: hence if $\Srk$ is synchronised with $\pprocessrk$ then $\S$ is synchronised with $\pprocess$ (as $\pprocessrk$ and $\pprocess$ are ``synchronised'', and $\Srk$ and $\S$ are ``synchronised'' as well).


\subsection{Complexity}

If we summarise, the overall blowup in the transformation from $\pgame$ to $\pgamerk$ given by Theorem~\ref{lemma:rank-aware} is as follows.

\begin{proposition}\label{proposition:complexity-step1}
Let $\pprocess$ and $\pprocessrk$ be as in Theorem~\ref{lemma:rank-aware}. Then the set of states of $\pprocessrk$ has size $\mathcal{O}(|Q|(|\colors|+1)^{n+3})$ and the stack alphabet of $\pprocessrk$ has size $\mathcal{O}(|\Gamma|(|\colors|+1)^{2n+5})$.
Moreover the set of colours used in $\pgame$ and $\pgamerk$ are the same.
\end{proposition}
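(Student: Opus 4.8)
My plan is to treat this as a direct counting argument, reading both quantities off the construction of $\pprocessrk$ in the proof of Theorem~\ref{lemma:rank-aware}, with the modification of Remark~\ref{rk:rankAwareTopRew} folded in (recall that, by the convention at the end of that remark, $\pprocessrk$ now denotes the valid automaton $\pprocessrk'$). I would split the verification into three short steps: the stack alphabet, the state set, and the colouring.

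For the stack alphabet, I would simply expand the definition $\Gammark = \Gamma \times (\colors\cup\{\indef\}) \times (\colors\cup\{\indef,\nalr\}) \times (\colors^{\{1,\dots,n\}}\cup\{\indef\})$ and recall from Remark~\ref{rk:rankAwareTopRew} that $\pprocessrk'$ reuses this same alphabet. This gives $|\Gammark| = |\Gamma|\,(|\colors|+1)(|\colors|+2)(|\colors|^{n}+1)$, and bounding each rank field by a power of $(|\colors|+1)$ (using $|\colors|+2\le(|\colors|+1)^2$ and $|\colors|^n+1\le(|\colors|+1)^n$ for $|\colors|\ge 1$) yields $|\Gammark|\le|\Gamma|(|\colors|+1)^{n+3}$, comfortably inside the stated $\mathcal{O}(|\Gamma|(|\colors|+1)^{2n+5})$.

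For the state set, the base is $\Qrk=Q\times\colors$, of size $|Q|\,|\colors|$, and Remark~\ref{rk:rankAwareTopRew} adds one extra control component recording the top-rewriting postponed to the next transition. The point I would stress — and the only slightly delicate part — is that this component must carry only the data needed to perform the pending update, not the full former $\topn{1}$-symbol (which would wrongly inject a factor $|\Gamma|$): namely the kind of pending operation (one of the $n+1$ possibilities $\popn{1},\dots,\popn{n},\collapse$) together with the rank fields $(m_c,m_l,\tau)$ of the symbol about to be rewritten, since the new $\Gamma$-letter is supplied by the stack itself. Those rank fields range over a set of size $\mathcal{O}((|\colors|+1)^{n+2})$, so multiplying and absorbing the $(n+1)$ operation choices into the fixed order $n$ gives $\mathcal{O}(|Q|\,|\colors|\,(|\colors|+1)^{n+2}) = \mathcal{O}(|Q|(|\colors|+1)^{n+3})$, as claimed.

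Finally, for the colours I would observe that $\colrk(q,\mcolsofar)=\col(q)$, so $\colrk$ has the same image as $\col$ and $\pgamerk$ uses exactly the colours of $\pgame$. I expect the genuine obstacle to be only the state count: one has to notice that carrying just the rank information (and not the $\Gamma$-component) of the postponed symbol is what reconciles the informal ``size linear in the stack alphabet'' of Remark~\ref{rk:rankAwareTopRew} with the $|\Gamma|$-free bound asserted here.
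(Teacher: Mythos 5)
Your proposal is correct and takes essentially the same route as the paper, whose entire proof of this proposition is ``by construction together with Remark~\ref{rk:rankAwareTopRew}'', i.e.\ exactly the direct count you perform: $|\Qrk|=|Q|\,|\colors|$ times the pending-update component, $|\Gammark|=|\Gamma|(|\colors|+1)(|\colors|+2)(|\colors|^n+1)\le|\Gamma|(|\colors|+1)^{n+3}$ (the paper's exponent $2n+5$ being merely a loose upper bound), and colour preservation via $\colrk(q,\mcolsofar)=\col(q)$. The one point you rightly stress — that the postponed-rewrite component need only record the operation kind (for $\collapse$ one should also keep the order $k$ of the collapsed link, still only $\mathcal{O}(n)$ choices) together with the rank fields $(m_c,m_l,\tau)$, and \emph{not} the $\Gamma$-letter of the former $\topn{1}$-symbol, since the rewritten symbol inherits its $\Gamma$-letter from the newly revealed one — is precisely what reconciles the remark's closing claim of a state blow-up of $\mathcal{O}(|\colors|^{n+3})$ with its looser phrasing ``one can simply store the former $\topn{1}$-element'', and it is indeed needed for the stated $|\Gamma|$-free bound on $|\Qrk|$.
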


\begin{proof}
By construction together with Remark~\ref{rk:rankAwareTopRew}.
\end{proof}


\section{Removing the $n$-links}\label{section:outermostLinks}

\subsection{Main Result}

In this section, we show how one can remove the outmost (\ie order-$n$) links.  {In the following $\lf$ intended to mean \emph{link-free}.}  

\begin{theorem}\label{theo:outmost}
For any \emph{rank-aware} $n$-CPDA $\pprocessrk=\anglebra{ \Gammark,\Qrk,\Deltark,q_{0,\rk}}$ and any associated parity game $\pgamerk$, one can construct an
 $n$-CPDA $\pprocesslf$ and an associated parity game $\pgamelf$ such that the following holds.
 \begin{itemize} 
 \item $\pprocesslf$ does not create $n$-links.
 \item There exists a mapping $\nu$ from the configurations of $\pprocessrk$ to that of $\pprocesslf$ such that:
	\begin{itemize}
		\item \Eloise has a winning strategy in $\pgamerk$ from a configuration $v_0$ iff she has a winning strategy in $\pgamelf$ from $\nu(v_0)$;
		\item If the set of winning configurations for \Eloise in $\pgamelf$ is regular, then the set of winning configurations for \Eloise in $\pgamerk$ is regular as well.
	\end{itemize}
\item If there is an $n$-CPDA transducer $\Slf$ synchronised with $\pprocesslf$ realising a well-defined winning strategy for \Eloise in $\pgamelf$ from $\nu(q_{0,\rk},\bot_{n})$, then one can effectively construct an $n$-CPDA transducer $\Srk$ synchronised with $\pprocessrk$ realising a well-defined winning strategy for \Eloise in $\pgamerk$ from the initial configuration $(q_{0,\rk},\bot_{n})$.
\end{itemize}
\end{theorem}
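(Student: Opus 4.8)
The plan is to have $\pprocesslf$ simulate $\pprocessrk$ while replacing every creation of an $n$-link, together with the (possibly much later) $\collapse$ operations that use it, by a \emph{guess-and-verify} gadget in the spirit of Walukiewicz's reduction for pushdown parity games~\cite{Walukiewicz01}. The essential new ingredient — and the reason the previous section insisted on rank-awareness — is that the \emph{link-rank} is now readable from the topmost symbol, and the link-rank is exactly the ``minimal colour seen since the creation of the link'' that the classical construction would otherwise have to recompute by hand. The central observation I would exploit is this: when $\pprocessrk$ performs $\pushlk{n}{\gamma}$ on a stack $\stack$, the target of the link is the $(n-1)$-stack immediately below $\topn{n}(\stack)$, and \emph{any} later $\collapse$ through this link — even through a copy produced by intervening $\pushn{j}$ operations — returns to the \emph{same}, frozen stack $\stack_0=\popn{n}(\stack)$. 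Hence the only data a collapse can reveal are the control state $q'$ dictated by the collapsing transition and the link-rank $r$, which equals $\LinkRk(q,\topn{1}(\stack))$ at collapse time; and, crucially, $\stack_0$ is already available \emph{at creation time} as a single $\popn{n}(\stack)$.

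Concretely, I would let $\pprocesslf$ mimic every operation of $\pprocessrk$ verbatim except $n$-link creation and $n$-collapse; since $\pushlk{n}{\cdot}$ is the sole source of $n$-links, intercepting it guarantees that $\pprocesslf$ creates none. Where $\pprocessrk$ would perform $\pushlk{n}{\gamma}$, $\pprocesslf$ instead lets \Eloise announce a \emph{finite} set $R\subseteq\Qrk\times\colors$ — her claim of the pairs $(q',r)$ for which she can win the continuation from $(q',\stack_0)$ under the understanding that the excursion's minimal colour is $r$ — after which \Abelard selects one of two branches. In the \textbf{jump} branch \Abelard picks some $(q',r)\in R$; the game performs the single operation $\popn{n}$ (reaching $\stack_0$), passes through an auxiliary vertex of colour $r$, and resumes the simulation from control state $q'$ (reinitialising the ranks of the revealed top symbol exactly as $\nu$ does). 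In the \textbf{continue} branch the symbol $\gamma$, annotated with $R$, is pushed by $\pushlk{1}{\cdot}$ instead of $\pushlk{n}{\cdot}$, copies made by $\pushn{j}$ carrying the annotation along; and whenever the simulated $\pprocessrk$ would now $\collapse$ through this former $n$-link, with target state $q'$ and link-rank $r$ read off the top symbol, the play terminates at once, \Eloise winning iff $(q',r)\in R$ and \Abelard winning otherwise. Thus the jump branch tests soundness of \Eloise's claims, while the continue branch forces her to claim every collapse that actually occurs.

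Establishing the winner equivalence is the heart of the argument and the step I expect to be the main obstacle. One direction turns a winning strategy of \Eloise in $\pgamerk$ into one in $\pgamelf$ by announcing, at each link creation, precisely her set of winning post-collapse outcomes; the other direction reverses this. Both rest on a \emph{play-correspondence lemma}: a play of $\pgamerk$ is recovered from a play of $\pgamelf$ by re-expanding each jump (a single colour-$r$ visit) into the finite collapse-excursion it summarises, and one must check that this contraction preserves the least colour seen infinitely often — the delicate point being that the minimal colour of each contracted excursion is \emph{exactly} the recorded $r$, so that no small colour occurring infinitely often is created or destroyed, while terminated continue-branch plays are finite and so never affect the $\liminf$. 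The map $\nu$ is the natural embedding on configurations without $n$-links (in particular on $(q_{0,\rk},\bot_{n})$) and, on configurations containing $n$-links, annotates each $n$-linked symbol with the tight winning set above; since this annotation can be computed by a finite automaton reading the stack coding of Section~\ref{ssection:regSets}, both $\nu$ and $\nu^{-1}$ preserve regularity, which yields the required transfer of regularity of the winning region.

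Finally, for the transducer statement I would have $\Srk$ run $\Slf$ and resolve the gadget internally: to emulate a step of $\pgamerk$, $\Srk$ feeds into $\Slf$ the corresponding $\pgamelf$-moves, namely the announcement of $R$ dictated by $\Slf$ and \Abelard's (deterministic) choice of the continue branch. The one genuinely stateful point is a real $\collapse$ of $\pprocessrk$: at that moment $\Srk$ must resume playing as in the jump branch of $\Slf$ from $(q',\stack_0)$. This is handled by the standard strategy-stack device — when simulating a link creation, $\Srk$ stores, on its own order-$n$ stack (whose shape already mirrors that of $\pprocessrk$ by synchronisation), the $\Slf$-configuration witnessing the jump continuation, and restores it upon collapse. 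Because $\pprocesslf$ was built synchronised with $\pprocessrk$ and every gadget operation is of the same kind as the operation it simulates, $\Srk$ is synchronised with $\pprocessrk$ and realises a well-defined winning strategy, completing the proof.
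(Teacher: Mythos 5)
Your proposal follows essentially the same route as the paper's proof: a Walukiewicz-style guess-and-verify gadget in which \Eloise announces colour-indexed return-state sets at each $\pushlk{n}{\cdot}$, \Abelard chooses between an immediate $\popn{n}$-jump through a colour-$i$ vertex and a continue-branch that pushes the annotated symbol with a dummy $1$-link, collapses are resolved against the annotation via the link-rank $\LinkRk$, the map $\nu$ annotates pre-existing $n$-links with winning sets, winner equivalence rests on the same excursion-contraction/$\liminf$ argument, and the transducer recovers the frozen stack through the synchronised collapse while carrying the jump data. The only loose point is phrasing ($\Srk$ cannot literally ``store the $\Slf$-configuration'' in a symbol; as in the paper, the stack content is recovered by the collapse itself and only the finitely many post-jump control states of $\Slf$ need be recorded in the pushed symbol), but this matches the paper's actual mechanism.
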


The whole section is devoted to the proof of Theorem \ref{theo:outmost} and we thus fix from now on, a \emph{rank-aware} $n$-CPDA $\Ark=\anglebra{\Gammark,\Qrk,\Deltark,q_{0,\rk}}$ (together with a function $\LinkRk$), a 
partition $Q_{\rk,\Ei}\uplus Q_{\rk,\Ai}$ of $\Qrk$, a colouring function
$\col:\Qrk\rightarrow\colors\subset\mathbb{N}$ and we let $\colors=\{0,\ldots,\maxcolor\}$. Denote by $\pgraphrk$ the transition graph of $\pprocessrk$, by $\pggraphrk$ the arena induced by $\pgraphrk$ and the partition $Q_{\rk,\Ei}\uplus Q_{\rk,\Ai}$, and by $\pgamerk$ the parity game $(\pggraphrk,\WC_\col)$.

There are now two tasks. The first one is to prove that the previous simulation game can be generated by an $n$-CPDA with the extra property that it never creates $n$-links. The second one is to prove that this game correctly simulates the original one (\ie \Eloise wins in $\pgamerk$ from some vertex $v$ iff she wins in the $\pgamelf$ from the configuration $\nu(v)$ for some mapping $\nu$ --- to be defined --- transforming vertices of the first game into vertices of the second one). The first task (see Section \ref{section:Step2Task1}) is simple as the initial $n$-CPDA defining $\pgamerk$ is rank aware and therefore comes with a function $\LinkRk$ as in Lemma \ref{lemma:rank-aware}. The second task (see Section \ref{section:Step2Task2}) is more involved because we have to define $\nu$ and to prove that it preserves (arbitrary) winning configurations.

\subsection{The Simulation Game: $\pgamelf$}\label{section:Step2Task1}

We now define $\pprocesslf$ and the associated game $\pgamelf$. We start with an informal description of $\pprocesslf$ and then formally describe its structure.

The $n$-CPDA $\pprocesslf$ \emph{simulates} $\pprocessrk$ as follows. Assume that the play is in some configuration $(q,\stack)$ and that the player that controls it wants to simulate a transition $(p,\toprew{\alpha};op)\in \Deltark(q,\topn{1}(\stack))$. In case $op$ is neither of the form $\pushlk{n}{\beta}$ nor of the form $\collapse$ with $\topn{1}(\stack)$ having an $n$-link then the same transition $(p,\toprew{\alpha};op)$ is available in $\pprocessrk$ and is performed. 
The interesting case is when $op=\pushlk{n}{\beta}$, and it is simulated by $\pprocesslf$ as follows.
\begin{itemize}
\item The control state of $\pprocesslf$ is updated to be $p^\beta$ and one performs $\toprew{\alpha}$.
\item From $p^\beta$, \Eloise has to move to a new control state $p^?$ and can push any symbol of the form $(\alpha,\vect{R})$ where $\vect{R}=(R_0,\cdots R_d)\in (2^Q)^{\maxcolor+1}$. A dummy $1$-link is attached (and will never be used for a $\collapse$). 
\item From $p^?$, \Abelard has to play and choose between one of the following two options:
\begin{itemize}
\item either go to state $p$ and perform no action on the stack, 
\item or pick a state $r$ in some $R_i$, go to an intermediate new state $r^i$ (of colour $i$) without changing the stack and from this new configuration go to state $r$ and  perform a $pop_n$ action.
\end{itemize}
\end{itemize} 

The intended meaning of such a decomposition of the $\pushlk{n}{\beta}$ operation is the following: when choosing the sets in $\vect{R}$, \Eloise is claiming that she has a strategy such that if the $n$-link (or a later copy of it) created by pushing $\beta$ is eventually used for collapsing the stack then the control state after collapsing will belong to $R_i$ where $i$ is meant to be the smallest colour from the creation of the link to the collapse of the stack (equivalently it will be the link rank ---~as computed in $\Ark$~--- just before collapsing). Note that the $R_i$ are arbitrary sets because \Eloise does not have full control over the play (and in general cannot force $R_i$ to be a singleton). Then \Abelard can either choose to simulate the $\collapse$ (here state $r^i$ is only used for going through a state of colour $i$). If he does not want to simulate a $\collapse$ then one stores $\vect{R}$ since its truth may be checked later in the play. 

Assume that later, in configuration $(p',\varstack)$ one of the two players wants to simulate a transition $(r,\toprew{\beta};\collapse)$ involving an $n$-link. By construction, $\topn{1}(\varstack)$ is necessarily of the form $(\gamma,\vect{R})$. Then the simulation is done by going to a sink configuration that is winning for \Eloise iff $r\in R_{\LinkRk(p,\gamma)}$, \ie \Eloise wins iff her former claim on $\vect{R}$ was correct.

Formally we let $\pprocesslf = \anglebra{\Gammalf,\Qlf,\Deltalf,q_{0,\lf}}$
with 
\begin{itemize}
\item $\Gammalf = \Gammark \cup \Gammark\times (2^{\Qrk})^{\maxcolor+1}$
\item $\Qlf = \Qrk\cup \{p^\beta\mid p\in \Qrk,\ \beta\in\Gammark\}\cup \{p^?\mid p\in \Qrk\}\cup \{r^i\mid r\in \Qrk,\ 0\leq i\leq d\} \cup\{\ttrue,\ffalse\}$
\item $\Deltalf$ is defined as follows, where $p,q,r$ range over $\Qrk$, $\alpha,\beta,\gamma$ range over $\Gammark$ and $\vect{R}=(R_0,\dots,R_\maxcolor)$ ranges over $ (2^{\Qrk})^{\maxcolor+1}$.
\begin{itemize}
	\item  If $(p,\toprew{\alpha};op)\in\Deltark(q,\gamma)$ and if  $op$ is neither of the form $\pushlk{n}{\beta}$ nor $\collapse$, then $(p,\toprew{\alpha};op)\in\Deltalf(q,\gamma)$ and $(p,\toprew{(\alpha,\vect{R})};op)\in\Deltalf(q,(\gamma,\vect{R}))$.
	\item If $(p,\toprew{\alpha};\pushlk{n}{\beta})\in\Deltark(q,\gamma)$, then $(p^\beta,\toprew{\alpha};id)\in\Deltalf(q,\gamma)$ and $(p^\beta,\toprew{(\alpha,\vect{R})};id)\in\Deltalf(q,(\gamma,\vect{R}))$.
	\item For all $p^\beta\in \Qlf$, $\Delta(p^\beta,\gamma)=\Delta(p^\beta,(\gamma,\vect{R}))=\{(p^?,\pushlk{1}{(\beta,\vect{S})})\mid \vect{S}\in (2^{\Qrk})^{\maxcolor+1})\}$.
	\item For all $p^?\in \Qlf$, $\Delta(p^?,(\gamma,\vect{R}))=
	\{
		(p,id)
	\}
	\cup
	\{
		(r^i,id)\mid 0\leq i\leq \maxcolor \text{ and }r\in R_i
	\}$.
	\item For all $r^i\in \Qlf$, $\Delta(r^i,(\gamma,\vect{R}))=\{(r,\popn{n})\}$.
	\item  If $(p,\toprew{\alpha};\collapse)\in\Deltark(q,\gamma)$, then $(p,\toprew{\alpha};\collapse)\in\Deltalf(q,\gamma)$.
	\item  If $(r,\toprew{\alpha};\collapse)\in\Deltark(q,\gamma)$, then $(\ttrue,id)\in\Deltalf(q,(\gamma,\vect{R}))$ if $r\in R_{\LinkRk(q,\gamma)}$ and $(\ffalse,id)\in\Deltalf(q,(\gamma,\vect{R}))$ if $r\notin R_{\LinkRk(q,\gamma)}$.
	\item $\Deltalf(\ttrue,(\gamma,\vect{R})) = \{(\ttrue,id)\}$ and $\Deltalf(\ffalse,(\gamma,\vect{R})) = \{(\ffalse,id)\}$.
\end{itemize}
\end{itemize}

We let $\pgraphlf$ be the transition graph of $\pprocesslf$. Now, in order to define a game graph $\pggraphlf$ out of $\pgraphlf$ we let 
$Q_{\lf,\Ei} = Q_{\rk,\Ei} \cup  \{p^\beta\mid p\in \Qrk,\ \beta\in\Gammark\}$. 
Finally to define a corresponding $n$-CPDA parity game $\pgamelf$ we extend $\col$ by letting, $\forall p,r\in \Qrk$ and $\beta\in \Gammark$, $\col(p^\beta)=\col(p^?)=d$ (as one cannot loop forever in such states, it means that they have no influence on whether a play will be winning or not), $\col(r^i)=i$ for every $0\leq i\leq d$, $\col(\ttrue)=0$ and $\col(\ffalse)=1$ (hence a play that visits $\ttrue$ is winning for \Eloise and a play that visits $\ffalse$ is winning for \Abelard, as these states are sinks). 

Note that $\pprocesslf$ never creates an $n$-link.

\subsection{Correctness of the Simulation}\label{section:Step2Task2}

Consider some configuration $v_0=(p_0,s_0)$ in $\pgamerk$. We explain now how to define an ``equivalent'' configuration $\nu(v_0)$ in $\pgamelf$ (here equivalent is in the sense of Lemma~\ref{lemma:removinglinks} below). The transformation consists in replacing any occurrence of a stack letter (call it $\gamma$) with an $n$-link in $\stack_0$ by another letter of the form $(\gamma,\vect{R})$ and replacing the $n$-link by a $1$-link. The vector $\vect{R}$ is defined as follows. Let $\stack'$ be the stack obtained by popping every symbol and stack above $\gamma$, and let $R=\{q\mid \text{ \Eloise wins in $\pgamerk$ from }(q,collapse(\stack'))\}$. Then one sets $\vect{R}=(R,\cdots,R)$.

\begin{example}
Assume we are playing a two-colour parity game and let
$$         \stack_0=\pstr[0.8cm]{\mksk{\mklksk{n1}{\mksk{ \, a}} \;
                     \mklksk{n19}{\mklksk{n2}{}
                           \mksk{ \, a \, \nd(n3-n2,40){b} \, \nd(n4-n1,39){c}}} \;
                     \mksk{\mklksk{n5}{}
                           \mksk{ \, a \, \nd(n6-n5,40){b} \, \nd(n7-n1,37){c}\, \nd(n117-n19,41){d}}}}},
 $$
$$R_1=\{r\mid (r,\mksk{\mksk{\mksk{a}}}) \text{ is winning for \Eloise in }\pgamerk\}$$
$$R_2=\{r\mid (r,\pstr[0.3cm]{\mksk{\mklksk{n1}{\mksk{ \, a}} \;
                     \mklksk{n19}{\mklksk{n2}{}
                           \mksk{ \, a \, \nd(n3-n2,40){b} \, \nd(n4-n1,39){c}}}}}) \text{ is winning for \Eloise in }\pgamerk\}$$
 Then 
 $$         \nu(\stack_0)=\pstr{\mksk{\mklksk{n1}{\mksk{ \, a}} \;
                     \mksk{\mklksk{n2}{}
                           \mksk{ \, a \, \nd(n3-n2,40){b} \, {(c,(R_1,R_1))}}} \;
                     \mksk{\mklksk{n5}{}
                           \mksk{ \, a \, \nd(n6-n5,40){b} \, {(c,(R_1,R_1))\, (d,(R_2,R_2))}}}}}.
 $$
\end{example}

The rest of this section is devoted to the proof of the following result.

\begin{lemma}\label{lemma:removinglinks}
\Eloise wins in $\pgamerk$ from some configuration $v_0$ if and only if she wins in $\pgamelf$ from $\nu(v_0)$.
\end{lemma}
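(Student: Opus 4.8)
The plan is to prove the two implications of Lemma~\ref{lemma:removinglinks} separately, in each case translating a winning strategy from one game to the other through a step-by-step simulation of plays. Throughout I use that the parity condition $\WC_\col$ is prefix-independent (the winner of an infinite play depends only on the colours seen infinitely often) and that, by positional determinacy \cite{EJ91}, \Eloise has a single positional strategy $\strat$ that is winning from every configuration of her winning region $W$ in $\pgamerk$. I also record the \emph{honest-set invariant}: for every symbol $(\gamma,\vect{R})$ occurring in a reachable stack of $\pgamelf$, the vector $\vect{R}=(R_0,\dots,R_\maxcolor)$ has all its components equal to $\{r\mid (r,t)\text{ is winning for \Eloise in }\pgamerk\}$, where $t$ is the $\pgamerk$-configuration obtained by collapsing the $n$-link that $(\gamma,\vect{R})$ encodes. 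By definition of $\nu$ this holds initially, and it is preserved by every stack operation, because $\pushn{k}$ copies a symbol together with its (fixed) collapse target and because winning-from-$t$ is colour-independent by prefix-independence; this is the point on which the whole correspondence rests.

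For the direction ``$\pgamerk\Rightarrow\pgamelf$'', I would have \Eloise play in $\pgamelf$ from $\nu(v_0)$ while maintaining a simulated $\pgamerk$-configuration that stays in $W$ and is followed using $\strat$. Transitions that are neither $\pushlk{n}{\beta}$ nor an $n$-collapse are mirrored verbatim. To simulate $\pushlk{n}{\beta}$ she moves to $p^\beta$, then plays the honest vector $\vect{R}$ for the freshly created link (legal, since she owns $p^\beta$). \Abelard then either chooses \emph{continue}, in which case she resumes the simulation after a genuine $\pushlk{n}{\beta}$, or chooses an \emph{immediate test} $(i,r)$ with $r\in R_i$: since $R_i$ is the winning region of the collapse target $t$, the reached configuration $(r,t)$ lies in $W$, so \Eloise restarts the simulation from $(r,t)$ under the same positional $\strat$. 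A consistent play that ever performs a genuine (later) $n$-collapse is resolved at the sink rule by the test $r\in R_{\LinkRk(q,\gamma)}$; by the honest-set invariant and the fact that a $\strat$-play never leaves $W$, this test always succeeds, so the play reaches $\ttrue$ and is won. Otherwise the play is infinite and sink-free, and it decomposes into segments that each faithfully reproduce the colours of a $\strat$-consistent $\pgamerk$-play, separated by immediate tests, each injecting the colour $i$ indexing the set $R_i$ chosen by \Abelard.

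The symmetric direction ``$\pgamelf\Rightarrow\pgamerk$'' uses a winning strategy $\psi$ in $\pgamelf$ to guide \Eloise in $\pgamerk$. Here the vector that $\psi$ prescribes at each $p^\beta$ is recorded (it coincides with the honest sets, since $\psi$ is winning), \Abelard's play on top of a link is mirrored through the \emph{continue} branch, and whenever \Abelard eventually collapses an $n$-link—reaching some $(r,t)$ with link-rank $j$—\Eloise switches to the continuation that $\psi$ prescribes after the history ending in the immediate test $(j,r)$; this branch is available precisely because the corresponding sink test $r\in R_j$ is passed (otherwise the $\psi$-play would reach $\ffalse$, contradicting that $\psi$ wins), and from $(r,t)$ after injecting colour $j$ the strategy $\psi$ is still winning. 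In both directions the constructed play is thus a concatenation of simulated segments glued at collapse/test points.

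The main obstacle—and the place where rank-awareness is indispensable—is verifying that this gluing preserves the parity outcome, i.e.\ that the $\liminf$ of the colour sequence of the constructed play has the same parity as that of the simulated one, even when infinitely many collapses (equivalently, immediate tests) occur. The key point is that each immediate test injects the single colour $i=j=\LinkRk(q,\gamma)$, which by rank-awareness is \emph{exactly} the least colour visited during the entire lifetime of the collapsed link in $\pgamerk$. Hence a whole collapsed segment contributes to the $\liminf$ of the spliced play precisely its true minimal colour, neither more nor less. I would carry this out by fixing the least colour $c$ occurring infinitely often in one play and exhibiting, through the segment correspondence and the identity ``injected colour $=$ segment minimum'', that $c$ also occurs infinitely often and is a lower bound in the other play, so that the two $\liminf$s coincide; prefix-independence then disposes of the finitely many boundary colours and of the finite initial prefix. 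Combined with the sink analysis, this shows the translated strategies are winning, establishing both implications of Lemma~\ref{lemma:removinglinks}.
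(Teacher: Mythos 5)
Your forward direction (``$\pgamerk\Rightarrow\pgamelf$'') contains a genuine gap, and it is located exactly at your ``honest-set invariant''. In $\pgamelf$ the vector $\vect{R}$ that \Eloise pushes with $\beta$ is \emph{not} meant to be the colour-independent winning region of the collapse target: it is colour-indexed, and the construction only works if $R_i$ depends on $i$. With your constant vector $\vect{R}=(R,\dots,R)$, at the vertex $p^?$ \Abelard may perform the jump move with \emph{any} colour $0\leq i\leq \maxcolor$ and any $r\in R$, since $R_i=R$ for every $i$; the injected colour $i$ is thus completely unconstrained by anything that happens in $\pgamerk$. Your key identity --- ``each immediate test injects exactly the least colour visited during the lifetime of the collapsed link'' --- is false in this direction: at jump time the link has just been created, no segment has been simulated at all, and $i$ is \Abelard's free choice. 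Concretely, if $R\neq\emptyset$ and \Eloise's winning play keeps creating $n$-links, \Abelard can jump with $i=1$ at every opportunity; every restart lands in a configuration of \Eloise's winning region $W$, yet the spliced play in $\pgamelf$ sees colour $1$ infinitely often and (if all other colours seen infinitely often are larger) has odd $\liminf$, so it is \emph{lost}. Remaining inside $W$ forever does not imply satisfying the parity condition --- that is precisely the classic trap your argument falls into. The paper avoids it by having \Eloise declare $S_i=\{r\mid$ some continuation consistent with $\pstratrk$ eventually collapses (a copy of) the link with link-rank $i$ and reaches state $r\}$; then each jump $(i,r)$ is answered by splicing into the memory an \emph{actual finite bump} of $\pgamerk$, consistent with $\pstratrk$, whose minimal colour is exactly $i$. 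This is what makes the colour-by-colour (factorisation) correspondence, and hence the $\liminf$ comparison, sound; note that $S_i$ may be empty for colours that $\pstratrk$ never realises, which is exactly what protects \Eloise from spurious odd injections.

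Your backward direction is essentially the paper's: there the injected colour $j$ is the link rank at collapse time, which by rank-awareness really does equal the minimal colour over the link's lifetime, so your identity is correct in that direction, and the backtrack-to-the-link-ancestor update of the $\pgamelf$-memory matches the paper's construction. Two caveats remain. First, your side remark that the vectors prescribed by $\psi$ ``coincide with the honest sets, since $\psi$ is winning'' is both unfounded and unnecessary: all that is used is that whenever a collapse with link rank $j$ reaches state $r$, the sink test forces $r\in R_j$ (otherwise \Abelard could reach $\ffalse$). Second, constant winning-region vectors \emph{are} genuinely used, but only by $\nu$ for the $n$-links already present in $v_0$'s stack; collapses of those links must be handled by a separate argument (one shows the configuration reached is winning and switches to a fresh winning strategy from it), which your proposal conflates with the treatment of links created during the play.
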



Assume that the configuration $v_0=(p_0,\stack_0)$ is winning for \Eloise in $\pgamerk$, and let $\pstratrk$ be a winning strategy for her. Using $\pstratrk$, we define a strategy $\pstratlf$ for \Eloise in $\pgamelf$ from $\nu(v_0)$. The strategy $\pstratlf$ maintains as a memory a partial play $\pplayrk$ in $\pgamerk$, that is an element in $\Vrk^*$ (where $\Vrk$ denotes the set of vertices of $\pgraphrk$).  
At the beginning $\pplayrk$ is initialised to be $(p_{0},\stack_0)$. 
The play $\pplayrk$ will satisfy the following invariant: assume that the play ends in a configuration $(q,\stack)$, then the last configuration in $\pplayrk$ has control state $q$ and its $\topn{1}$-element is either $\topn{1}(\stack)$ or $(\topn{1}(\stack),\vect{R})$ for some $\vect{R}$ (and in this case there is an $n$-link from the $\topn{1}$-symbol of $\stack$).

We first describe $\pstratlf$, and then we explain how $\pplayrk$ is updated.

\medskip
\noindent\textbf{Choice of the move.}
Assume that the play is in some vertex $(q,\stack)$ with $q\in Q_{\lf,\Ei}\setminus\{p^\beta\mid q\in \Qrk,\ \beta\in\Gammark\}$. The move given by $\pstratlf$ depends on $\pstratrk(\pplayrk)=(p,\toprew{\alpha};op)$ (we shall later argue that $\pstratlf$ is well defined whilst proving that it is winning).
\begin{itemize}
\item If $op$ is neither of the form $\pushlk{n}{\beta}$ nor $\collapse$ then \Eloise plays $(p,\toprew{\alpha};op)$ if $\topn{1}(\stack)=\gamma$ and she plays  $(p,\toprew{(\alpha,\vect{R})};op)$ if $\topn{1}(\stack)=(\gamma,\vect{R})$.
\item If $op=collapse$ and $\topn{1}(\stack)=\gamma\in\Gammark$ then \Eloise plays $(p,\toprew{\alpha};\collapse)$.
\item If $op=collapse$ and $\topn{1}(\stack)=(\gamma,\vect{R})$ then \Eloise plays $(\ttrue,id)$. We shall later see that this move is always valid.
\item If $op=\pushlk{n}{\beta}$ then \Eloise plays $(p^\beta,\toprew{\alpha};id)$ if $\topn{1}(\stack)=\gamma$ and she plays  $(p^\beta,\toprew{(\alpha,\vect{R})};id)$ if $\topn{1}(\stack)=(\gamma,\vect{R})$.
\end{itemize}

In this last case, or in the case where $q\in Q_\Ai$ and \Abelard plays some $(p^\beta,\toprew{\alpha};id)$ (\resp some $(p^\beta,\toprew{(\alpha,\vect{R})};id)$), we also have to explain how \Eloise behaves from $(p^\beta,\toprew{\alpha}(\stack))$ (\resp $(p^\beta,\toprew{(\alpha,\vect{R})}(\stack))$.

\Eloise has to play $(p^?,\pushlk{1}{(\beta,\vect{S})})$ where $\vect{S}\in (2^{\Qrk})^{d+1}$ describes which states
can be reached if the $n$-link created by pushing $\beta$ (or a copy of it) is used for collapsing the stack, depending on the smallest visited colour in the
meantime. In order to define $\vect{S}$, she considers the set of
all possible continuations of $\pplayrk\cdot(p,\pushlk{n}{\beta}(\varstack))$ (where
$(q,\varstack)$ denotes the last vertex of $\pplayrk$) where she
respects her strategy $\pstratrk$. For each such play, she checks whether some
configuration of the form $(r,\popn{n}(\varstack))$ is eventually reached by collapsing (possibly a copy of the) $n$-link created by $\pushlk{n}{\beta}$. 
If such an $r$ exists, she considers the smallest colour $i$ visited from the moment where the link was created to the moment $\collapse$ is performed (\ie the link rank just before collapsing).
For every $i\in\{0,\dots d\}$, the set $S_i$ is defined to be the set of states
$r\in Q$ such that the preceding case happens. 
Formally, 
\begin{multline*}
S_i=\{r\mid \exists\ \pplayrk\cdot v_0\cdots v_k\cdot v_{k+1}\cdots\text{ play in } \pgamerk \text{ where \Eloise respects } \pstratrk \text{ and s.t. }\\ v_0=(p,\pushlk{n}{\beta}(\varstack)),\,   v_{k+1}=(r,pop_n(\varstack))\text{ is obtained by applying $\collapse$ from } v_k,\\ 
v_0 \text{ is the link ancestor of } v_k\text{ and }  i \text{ is the link rank in $v_k$}\}
\end{multline*}

Finally, we set $\vect{S}=(S_0,\dots,S_d)$ and \Eloise plays $(p^?,\pushlk{1}{(\beta,\vect{S})})$.

\medskip
\noindent\textbf{Update of $\pplayrk$.}
The memory $\pplayrk$ is updated after each visit to a configuration with a control state in $\Qrk\cup\{\ttrue,\ffalse\}$. We have several cases depending on the transition.
\begin{itemize}

\item If the last transition is of the form $(p,\toprew{\alpha};op)$ or $(p,\toprew{(\alpha,\vect{R})};op)$ with $op$ being neither of the form $\pushlk{n}{\beta}$ nor $\collapse$, then we extend $\pplayrk$ by applying transition $(p,\toprew{\alpha};op)$, \ie if $(q,\varstack)$ denotes the last configuration in $\pplayrk$, then the updated memory is $\pplayrk\cdot (p,op(\toprew{\alpha}(\varstack)))$.
%
\item If the last transition is of the form $(\ttrue,id)$ or $(\ffalse,id)$, the play is in a sink configuration. Therefore we do not update $\pplayrk$ as the play will loop forever.
\item If the last transitions form a sequence of the form $(p^\beta,\toprew{\alpha};id)\cdot(p^?,\pushlk{1}{(\beta,\vect{S})})\cdot (p,id)$ or of the form $(p^\beta,\toprew{(\alpha,\vect{R})};id)\cdot(p^?,\pushlk{1}{(\beta,\vect{S})})\cdot (p,id)$, then the updated memory is 
$\pplayrk\cdot (p,\pushlk{n}{\beta}(\varstack))$, where $(q,\varstack)$ denotes the last configuration in $\pplayrk$.
\item If the last transitions form a sequence of the form 
$(p^\beta,\toprew{\alpha};id)\cdot(p^?,\pushlk{1}{(\beta,\vect{S})})\cdot (r^i,id)\cdot(r,\popn{n})$
or of the form
$(p^\beta,\toprew{(\alpha,\vect{R})};id)\cdot(p^?,\pushlk{1}{(\beta,\vect{S})})\cdot (r^i,id)\cdot(r,\popn{n})$, then we extend $\pplayrk$ by a sequence of actions (consistent with $\pstratrk$) that starts by performing transition $(p,\pushlk{n}{\beta})$ and ends up by collapsing (possibly a copy of) the link created at this first step and goes to state $r$ whilst visiting $i$ as a minimal colour in the meantime. By definition of $\vect{S}$ such a sequence always exists. More formally, if $(q,\varstack)$ denotes the last configuration in $\pplayrk$, then the updated memory is a play in $\pgamerk$, $\pplayrk\cdot v_0\cdots v_k\cdot v_{k+1}$, where \Eloise respects $\pstratrk$ and such that 
$v_0=(p,\pushlk{n}{\beta}(\varstack))$, $v_{k+1}=(r,pop_n(\varstack))$ is obtained by applying $\collapse$ from $v_k$, $v_0$ is the link ancestor of $v_k$ and $i$ is the link rank in $v_k$.
\end{itemize}

Therefore, with any partial play $\pplaylf$ in $\pgamelf$ starting from $v_0$ in which \Eloise
respects her strategy $\pstratlf$, is associated a partial play $\pplayrk$ in
$\pgamerk$. An immediate induction shows that $\pplayrk$ is a play where \Eloise respects $\pstratrk$. 
The same argument works for any infinite play $\pplaylf$ that does not contain a state in $\{\ttrue,\ffalse\}$, and the
corresponding play $\pplayrk$ is therefore infinite, starts from
$\nu(v_0)$ and \Eloise respects $\pstratrk$ in that play. Therefore it is
a winning play.

Moreover, if $\pplaylf$ is an infinite play that does not contain a state in $\{\ttrue,\ffalse\}$, it easily follows from the definitions of $\pstratlf$ and $\pplayrk$ that the smallest infinitely visited colour in $\pplaylf$ is the same as the one in $\pplayrk$. Hence, any infinite play in $\pgamelf$ starting from $\nu(v_0)$ where \Eloise respects $\pstratlf$ and that does not contain a state in $\{\ttrue,\ffalse\}$ is won by \Eloise.

Now, consider a play that contains a state in $\{\ttrue,\ffalse\}$ (hence loops on it forever). Reaching a configuration with state in $\{\ttrue,\ffalse\}$ is necessarily by simulating a $\collapse$ from some configuration with a $\topn{1}$-element of the form $(\alpha,\vect{R})$. We should distinguish between those elements $(\alpha,\vect{R})$ that are ``created'' before (\ie by the $\nu$ function) or during the play (by \Eloise). For the second ones, note that whenever \Eloise wants to simulate a collapse, she can safely go to state $\ttrue$ (meaning $\pstratlf$ is well defined): indeed, if this was not the case, it would contradict the way $\vect{S}$ was defined when simulating the original creation of the link. For the same reason, \Abelard can never reach state $\ffalse$ provided \Eloise respects her strategy $\pstratlf$. Now consider an element $(\alpha,\vect{R})$ created by $\nu$ and assume that one player wants to simulate a collapse from some configuration with such a $\topn{1}$-element. Call $\pplaylf$ the partial play just before and call $\pplayrk$ the associated play in $\pgamerk$. Then in $\pplayrk$, \Eloise respects her winning strategy $\pstratrk$. If she has to play next in $\pplayrk$, strategy $\pstratrk$ indicates to play $\collapse$; if it is \Abelard's turn to move he can play $\collapse$. In both cases, the configuration that is reached after collapsing is winning for \Eloise (it is a configuration visited in a winning play). Hence, by definition of $\nu$, its control state belongs to $R$ where $\vect{R}=(R,\cdots,R)$, and therefore from the current vertex in $\pgamelf$ there is no transition to $\ffalse$ and there is at least one to $\ttrue$. Therefore plays  where \Eloise respects $\pstratlf$ and that contain a state in $\{\ttrue,\ffalse\}$ necessarily contains state $\ttrue$ hence are won by \Eloise.

Altogether, it proves that $\pstratlf$ is a winning strategy for \Eloise in $\pgamelf$ from $\nu(v_0)$.

\medskip

Let us now prove the converse implication. 
%
Assume that the configuration $\nu(v_{0})$ is winning for \Eloise in $\pgamelf$, and let $\pstratlf$ be a winning strategy for her. Using $\pstratlf$, we define a strategy $\pstratrk$ for \Eloise in $\pgamerk$ from $v_0=(p_{0},\stack_0)$. 
First, recall how $\nu(v_0)$ is defined: every symbol $\gamma$ in $\stack_0$ with an $n$-link is replaced by a pair $(\gamma,(R,\ldots,R))$ where $R$ is the set of states $r$ such that \Eloise wins from $(r,\stack')$ where $\stack'$ is the stack obtained by first removing every symbol (and stack) above $\gamma$ and then performing a $\collapse$. We can therefore assume that we have a collection of winning strategies, one for each such configuration $(r,\stack')$; call such a strategy $\pstratrk^{r,\stack'}$. Then, during a play where \Eloise respects $\pstratrk$, if one eventually visits such a configuration $(r,\stack')$, the strategy $\pstratrk$ will mimic the winning strategy $\pstratrk^{r,\stack'}$ from that point and therefore the resulting play will be winning for \Eloise. Then in the rest of this description we mostly focus on the case of plays where this situation does not occur.

The strategy $\pstratrk$ maintains as a memory a partial play $\pplaylf$ in $\pgamelf$, that is an element in $V_{\lf}^*$ (where $V_{\lf}$ denotes the set of vertices of $\pgraphlf$).
At the beginning $\pplaylf$ is initialised to the configuration $\nu(v_{0})$. 
After having played $\pplayrk$, the play $\pplaylf$ will satisfy the following invariant.
Assume that the play $\pplaylf$ ends in a configuration $(q,\stack)$ then the following holds.
\begin{itemize}
\item If $\topn{1}(\stack)=\alpha$, the last configuration of $\pplayrk$ has control state $q$ and its $\topn{1}$-element is $\alpha$ and it has a $k$-link for some $k<n$.
\item If $\topn{1}(\stack)=(\alpha,\vect{R})$, the last configuration of $\pplayrk$ has control state $q$, its $\topn{1}$-element is $\alpha$ and it has an $n$-link. Moreover, if \Eloise keeps respecting $\pstratrk$ in the rest of the play, if (possibly a copy of) this link is eventually used in a $\collapse$, then the state that will be reached just after doing the $\collapse$ will belong to $R_i$ where $i$ will be the link rank just before collapsing.
\end{itemize}

We first describe $\pstratrk$ and we then explain how $\pplaylf$ is updated. Recall that we switch to a known winning strategy in case we do a $\collapse$ from (possibly a copy of) an $n$-link that was already in $\stack_0$.

\medskip
\noindent\textbf{Choice of the move.}
Assume that the play is in some vertex $(q,\stack)$ with $q\in Q_{\rk,\Ei}$. The move given by $\pstratrk$ depends on $\pstratlf(\pplaylf)=(q',rew;op)$ (we shall later argue that $\pstratrk$ is well defined whilst proving that it is winning).
\begin{itemize}
\item If $q'\in \Qrk$ then \Eloise plays $(q',\toprew{\alpha};op)$ where $\alpha$ is such that either $rew = \toprew{\alpha}$ or $rew = \toprew{(\alpha,\vect{R})}$. Note that in this case, $op$ is neither a $\collapse$ involving an $n$-link nor of the form $\pushlk{n}{\beta}$.
\item If $q'=p^\beta$ then \Eloise plays to $(p,\toprew{\alpha};\pushlk{n}{\beta})$ where $\alpha$ is such that either $rew = \toprew{\alpha}$ or $rew = \toprew{(\alpha,\vect{R})}$.
\item If $q'=\ttrue$ then \Eloise plays $(r,\collapse)$ for some arbitrary $r\in R_{\LinkRk(p,\topn{1}(\stack))}$ where $(\alpha,\vect{R})$ denotes the $\topn{1}$-element of the last vertex of $\pplaylf$. Note that in this case, the collapse involves an $n$-link.
\end{itemize}

\medskip
\noindent\textbf{Update of $\pplaylf$.}
The memory $\pplaylf$ is updated after each move (played by any of the two players). We have several cases depending on the last transition.
\begin{itemize}
\item If the last transition is of the form $(q',\toprew{\alpha};op)$ and $op$ is neither a $\collapse$ involving an $n$-link nor of the form $\pushlk{n}{\beta}$, then $\pplaylf$ is extended by mimicking the same transition, \ie if $(q,\varstack)$ denotes the last configuration in $\pplaylf$, then the updated memory is $\pplaylf\cdot (q',op(\toprew{\alpha}(\varstack))$ if $\topn{1}(\varstack)=\gamma$ for some $\gamma\in\Gammark$, and is $\pplaylf\cdot (q',op(\toprew{(\alpha,\vect{R})}(\varstack))$ if $\topn{1}(\varstack)=(\gamma,\vect{R})$ for some $(\gamma,\vect{R})\in\Gammalf$.
\item If the last transition is of the form $(p,\toprew{\alpha};\pushlk{n}{\beta})$ then, we let $(q,\varstack)$ denote the last configuration in $\pplaylf$. 
If $\topn{1}(\varstack)=\gamma$ for some $\gamma\in\Gammark$ then the updated memory is
$\pplaylf\cdot (p^\beta,\toprew{\alpha}(\varstack))\cdot (p^?,\pushlk{1}{(\beta,\vect{R})}(\toprew{\alpha}(\varstack))) \cdot (p,id)$ where $\pstratlf(\pplaylf\cdot (p^\beta,\toprew{\alpha}(\varstack)))=(p^?,\pushlk{1}{(\beta,\vect{R})}(\toprew{\alpha}(\varstack)))$.
\\
If $\topn{1}(\varstack)=(\gamma,\vect{S})$ for some $(\gamma,\vect{S})\in\Gammalf$ then the updated memory is
$\pplaylf\cdot (p^\beta,\toprew{(\alpha,\vect{S})}(\varstack))\cdot (p^?,\pushlk{1}{(\beta,\vect{R})}(\toprew{(\alpha,\vect{S})}(\varstack))) \cdot (p,id)$ where $\pstratlf(\pplaylf\cdot (p^\beta,\toprew{(\alpha,\vect{S})}(\varstack)))=(p^?,\pushlk{1}{(\beta,\vect{R})}(\toprew{(\alpha,\vect{S})}(\varstack)))$.
\item If the last transition is of the form $(r,\collapse)$ and the $\collapse$ follows an $n$-link, then we have two cases. In the first case, the $\collapse$ follows (possibly a copy of) an $n$-link that was already in $\stack_0$ and we claim (and prove later) that one ends up in a winning configuration and thus one switches to a corresponding winning strategy as already explained.  In the other case, it follows an $n$-link that was created during the play, in which case we let $\pplaylf=v_0\cdots v_m$ and denote by $v_i$ the link ancestor of $v_m$\footnote{Here we implicitly extend the notion of link ancestor as follows. In $\pgamelf$ instead of creating $n$-link one pushes symbol of the form $(\beta,\vect{R})$: hence whenever doing a $\pushlk{1}{(\beta,\vect{R})}$ one attaches to the vector $\vect{R}$ the index of the current configuration. Then if the $\topn{1}$ element of $v_n$ is some $(\beta,\vect{R})$ then the link ancestor of $v_m$ is defined to be $v_i$ where $i$ is the indexed attached with $\vect{R}$. Note in particular that the control state in the link ancestor is necessarily of the form $p^?$.}.
Then the updated memory is obtained by backtracking inside $\pplaylf$ until reaching the configuration where the (simulation of the) collapsed $n$-link was created (this configuration is $v_i$, the link ancestor) and then extending it by a choice of \Abelard consistent with the $\collapse$. 
That is the updated memory is $v_0\cdots v_i\cdot (r^\ell,\varstack)\cdot (r,\popn{n}(\varstack))$ where $v_i = (p^?,\varstack)$ and
$\ell$ denotes the link rank in the configuration $\pplayrk$ was just before doing the $\collapse$.
\end{itemize}

Therefore, with any partial play $\pplayrk$ in $\pgamerk$ in which \Eloise
respects her strategy $\pstratrk$, is associated a partial play $\pplaylf$ in
$\pgamelf$. Note that if we end up in a configuration that is known to be winning, $\pplaylf$ becomes useless and is no longer extended. This also implies that when collapsing an $n$-link that was already in $\stack_0$ one necessarily ends up in a winning configuration. Indeed, assume the contrary and let $\pplaylf$ be the constructed play before collapsing: then either \Eloise has to play and therefore moves to $\ttrue$ (and therefore the configuration in $\pplayrk$ after collapsing is winning by definition of $\nu$, leading a contradiction) or \Abelard could move to $\ffalse$ (leading a contradiction with $\pstratlf$ being winning). Therefore, from now on, we restrict our attention to the case where the $n$-links (and their copies) originally in $\stack_0$ are never used to do a $\collapse$.

An easy induction shows that \Eloise respects $\pstratlf$ in
$\pplaylf$. The same argument works for an infinite play $\pplayrk$, and the
corresponding play $\pplaylf$ is therefore infinite (one simply considers the limit of the $\pplaylf$ in the usual way\footnote{
\label{footnote:limit}
Let $(u_m)_{m\geq 0}$ be a sequence of finite words. For any $m\geq 0$ let $u_m=u_{m,0}\cdots u_{m,{k_m}}$. Then the limit of the sequence $(u_m)_{m\geq 0}$ is the (possibly infinite) word $\alpha=\alpha_0\alpha_1\cdots$ such that $\alpha$ is maximal for the prefix ordering and for all $0\leq i<|\alpha|$ there is some $N_i$ such that $u_{m,i} = \alpha_i$ for all $m\geq N_i$. 

In our setting, the play $\pplaylf$ associated with an infinite play $\pplayrk$ is defined as the limit of the sequence of partial plays $(\pplaylf^m)_{m\geq 0}$ where $\pplaylf^m$ is the partial play associated with $\pplayrk$ truncated to its $m+1$ first vertices. From the definitions of the $\pplaylf^m$ it is easily verified that the limit $\pplaylf$ is infinite.
}), starts from
$\nu(v_0)$, never visits a state in $\{\ttrue,\ffalse\}$ and \Eloise respects $\pstratlf$ in that play. Therefore it is
a winning play.

Now, in order to conclude that any play $\pplayrk$ in $\pgamerk$ in which \Eloise
respects strategy $\pstratrk$ is winning for her, one needs to relate the sequence of colours in $\pplayrk$ with the one in $\pplaylf$. For this, we introduce a notion of factorisation of a partial play $\pplayrk=v_0v_1\cdots v_m$ in $\pgamerk$ (we should later note that it directly extends to infinite plays). A factor is a nonempty sequence of vertices of the following kind:
\begin{enumerate}
\item[(1)] it is a sequence $v_h\cdots v_k$ such that the stack operation from $v_{h-1}$ to $v_h$ is of the form $\toprew{\alpha};\pushlk{\beta}{n}$, the stack operation from $v_{k-1}$ to $v_{k}$ is a $\collapse$ involving an $n$-link, and $v_h$ is the link ancestor of $v_k$. 
\item[(2)] or it is a single vertex;
\end{enumerate}
Then the factorisation of $\pplayrk$ denoted $Fact(\pplayrk)$ is a sequence of factors inductively defined as follows (we underline factors to make them explicit):
$Fact(\pplayrk)=\underline{v_0\cdots v_k},Fact(v_{k+1}\cdots v_n)$ if there exists some $k$ such that $v_0\cdots v_k$ is as in (1) above, and 
$Fact(\pplayrk)=\underline{v_0},Fact(v_1\cdots v_n)$ otherwise.
%
In the following, we refer to the \emph{colour} of a factor as the minimal colour of its elements. 

Note that the previous definition is also valid for infinite plays. Now we easily get the following proposition (the result is obtained by reasoning on partial play using a simple induction combined with a case analysis. Then it directly extends to infinite plays).

\begin{proposition}
Let $\pplayrk$ be some infinite play in $\pgamerk$ starting from $v_0$ where \Eloise respects $\pstratrk$ and assume that there is no $\collapse$ that follows (possibly a copy of) an $n$-link already in $\stack_0$. Let $\pplaylf$ be the associated infinite play in $\pgamelf$ constructed from $\pstratrk$. Let $\play_{\rk,0},\play_{\rk,1},\cdots$ be the factorisation of $\pplayrk$ and, for every $i\geq 0$, let $c_i$ be the colour of $\play_{\rk,i}$. 

Then the sequence $(c_i)_{i\geq 0}$ and the sequence of colours visited in $\pplaylf$ have the same $\liminf$.
\end{proposition}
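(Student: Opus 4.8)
The plan is to cut the infinite play $\pplaylf$ into consecutive finite blocks $B_0,B_1,\dots$, one per factor of the factorisation $\play_{\rk,0},\play_{\rk,1},\dots$ of $\pplayrk$, in such a way that the minimal colour of $B_i$ equals $c_i$; the claim then follows from an elementary property of $\liminf$. First I would read off the blocks from the memory-update rules defining $\pplaylf$. A type-(2) factor is a single vertex $v$ of $\pplayrk$. If $v$ is reached by a transition that is neither a $\pushlk{n}{\beta}$ nor a $\collapse$ along an $n$-link, the simple-transition rule appends a single vertex with the same control state, which I take as $B_i$, of colour $\col(v)$. If $v$ is the target of a $\pushlk{n}{\beta}$ whose link is never collapsed, the push-link rule appends $(p^\beta,\dots)(p^?,\dots)(p,\dots)$, none of which is removed later, and these three vertices form $B_i$. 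A type-(1) factor $v_h\cdots v_k$ runs from the creation of an $n$-link at $v_h$ to the $\collapse$ of a copy of it producing $v_k$: when the collapse is simulated, the update rule backtracks $\pplaylf$ to the link ancestor, which is exactly the vertex $(p^?,\dots)$ inserted when the link was created, discarding the tentative continuation $(p,\dots)$ together with the whole simulation of $v_{h+1}\cdots v_{k-1}$, and then appends $(r^\ell,\dots)(r,\popn{n}(\dots))$; the surviving block is $B_i=(p^\beta,\dots)(p^?,\dots)(r^\ell,\dots)(r,\popn{n}(\dots))$. I attribute the two auxiliary vertices $(p^\beta,\dots)(p^?,\dots)$ to the factor whose link they serve; since $\colors=\{0,\dots,\maxcolor\}$ and $\col(p^\beta)=\col(p^?)=\maxcolor$, this choice does not affect any minimum.

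Next I would verify that $\min\{\col(w)\mid w\in B_i\}=c_i$ for every factor. For the two singleton-style cases the block minimum is plainly $\col(v)=c_i$. For a type-(1) factor, $c_i=\min\{\col(v_h),\dots,\col(v_k)\}=\min(\ell,\col(r))$, where $\ell=\min\{\col(v_h),\dots,\col(v_{k-1})\}$ is exactly the link-rank of the configuration just before the $\collapse$ and is carried into $\pplaylf$ as the colour of the intermediate state $r^\ell$, recalling that $\col(r^\ell)=\ell$. Hence the block minimum is $\min(\maxcolor,\maxcolor,\ell,\col(r))=\min(\ell,\col(r))=c_i$. This step is the heart of the argument: although the segment $v_{h+1}\cdots v_{k-1}$, and every factor nested inside it, is erased from $\pplaylf$ by the backtracking, its minimal colour is recorded faithfully by $\ell$, which is precisely what the rank-awareness of $\pprocessrk$ provides. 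I would also record that the blocks are consecutive and, in the limit, exhaust $\pplaylf$: the backtracking that accompanies the collapse of factor $i$ never descends below the vertex $(p^?,\dots)$ created within factor $i$, so no committed block is ever revised; and since $\pplayrk$ is infinite while each factor is finite, there are infinitely many blocks, each of length at most four.

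Finally I would apply the elementary lemma that if an infinite sequence of colours is split into infinitely many consecutive finite blocks with block-minima $(c_i)_i$, then $\liminf_i c_i$ coincides with the $\liminf$ of the whole sequence. For the inequality $\liminf_i c_i\ge\liminf_j\col(w_j)$, each $c_i$ is attained at a position $j_i$ inside $B_i$, and $j_i\to\infty$ because the blocks are finite, so $(c_i)_i$ lists the values along a subsequence of positions and therefore has $\liminf$ at least that of the whole sequence. For the reverse inequality, the value $L=\liminf_j\col(w_j)$ is attained at infinitely many positions; finiteness of the blocks forces these positions into infinitely many distinct blocks, each of minimum at most $L$, so $\liminf_i c_i\le L$. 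Instantiating this with $\pplaylf$ and the block decomposition above, and using $\min\{\col(w)\mid w\in B_i\}=c_i$, shows that $(c_i)_i$ and the colour sequence of $\pplaylf$ have the same $\liminf$, which is the assertion; the passage from partial to infinite plays is handled by the limit construction of $\pplaylf$. The only genuinely delicate point is the type-(1) block-minimum computation, where one must be sure that $\ell$ equals the minimum colour over the discarded segment; the rest is bookkeeping and the standard $\liminf$ argument.
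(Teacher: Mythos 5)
Your proof is correct and takes essentially the approach the paper intends: the paper merely asserts the proposition ``by reasoning on partial plays using a simple induction combined with a case analysis,'' and your explicit block decomposition of $\pplaylf$ — where the surviving block of a type-(1) factor is $(p^\beta,\cdot)(p^?,\cdot)(r^\ell,\cdot)(r,\cdot)$ and the link rank $\ell$ faithfully records the minimal colour of the inner segment erased by the backtracking — is exactly that induction and case analysis made precise, finished off by the standard $\liminf$-over-block-minima lemma. The only step you assert rather than prove, namely that later backtracking never revises a committed block, does hold: it follows from the well-nestedness of factors (a $\collapse$ on an $n$-link destroys every copy of any link created at or above its target, so the link ancestor of any later $\collapse$ lies strictly after a completed factor), so nothing essential is missing.
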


The previous proposition directly implies that $\pstratrk$ is a winning strategy for \Eloise from $v_{0}$ in $\pgamerk$.

\subsection{Regularity of the Winning Region is Preserved}

We established in Lemma~\ref{lemma:removinglinks} that 
\Eloise wins in $\pgamerk$ from some configuration $v_0$ if and only if she wins in $\pgamelf$ from $\nu(v_0)$. We now prove that regular sets of winning positions are preserved by inverse image by $\nu$.

\begin{proposition}\label{prop:inverseimagenu2}
Assume that we have an automaton $\Blf$ that recognises the set of winning configurations in $\pgamelf$. Then, one can compute an automaton $\Brk$ that recognises the set of winning configurations in $\pgamerk$.
\end{proposition}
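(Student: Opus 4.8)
The plan is to reduce everything to a simulation of $\Blf$. By Lemma~\ref{lemma:removinglinks}, a configuration $(q,\stack)$ of $\pprocessrk$ is winning for \Eloise in $\pgamerk$ iff $\nu(q,\stack)$ is winning in $\pgamelf$, i.e. iff $\Blf$ accepts the encoding of $\nu(q,\stack)$. Hence it suffices to build a deterministic automaton $\Brk$ that, while scanning the encoding $\anglebra{\flaten{\stack},\target{\stack}}\cdot q$ of $(q,\stack)$ from bottom to top, reconstructs on the fly the (unique) run of $\Blf$ on the encoding of $\nu(q,\stack)$ and accepts precisely when that run is accepting. Recall that $\nu$ touches $\stack$ only at its $n$-linked symbols: each such $\gamma$ becomes $(\gamma,(R,\dots,R))$ with its $n$-link demoted to a dummy $1$-link, where $R=\{r\mid (r,\collapse(\stack'))\text{ is winning in }\pgamerk\}$ and $\stack'$ is $\stack$ truncated above $\gamma$.

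First I would record the one structural fact that makes the simulation possible: $\nu$ commutes with prefixes. If $\stack_0\leq\stack$, then an $n$-linked symbol sitting inside $\stack_0$ receives the same annotation whether computed in $\stack_0$ or in $\stack$, because its link — like every link — points strictly below it, so the collapsed target is intrinsic to the configuration and does not see the part of the stack above. Consequently the run of $\Blf$ on $\nu(\flaten{\stack})$, truncated at any position $t$, coincides with its run on $\nu$ of the corresponding sub-stack. Since links point downwards and we scan bottom-up, by the time $\Brk$ reaches a position all annotations it will ever need have already been produced; the computation is therefore well founded, and there is no circularity despite $\nu$ being defined through the winning region.

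The core of the construction is the treatment of an $n$-linked symbol $\gamma$ at a position $p$ whose link targets position $t$. In its finite control $\Brk$ carries the current $\Blf$-state, and the automaton model grants it read access to the state $r_t$ it recorded at the target $t$. Now $r_t$ is the $\Blf$-state reached after $\nu(\flaten{\stack})$ up to the end of the top $(n-1)$-stack of the collapse target; applying one more closing bracket, $s:=\delta_{\Blf}(r_t,\rsk)$ is the $\Blf$-state after $\nu$ of the full collapsed $n$-stack $\stack_0=\collapse(\stack')$. Using determinism of $\Blf$ one then sets, in a single step, $R=\{\,r\in\Qrk\mid \delta_{\Blf}(s,r)\in F\,\}$; by Lemma~\ref{lemma:removinglinks} this is exactly the winning set of the collapse target, hence exactly the annotation prescribed by $\nu$. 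Finally $\Brk$ advances its simulated run by reading the letter $(\gamma,(R,\dots,R))$ together with its dummy $1$-link, whose target state is simply the $\Blf$-state of the symbol directly below (kept in the control). On every other input — symbols without an $n$-link, the brackets $\lsk,\rsk$, and the trailing control state — $\Brk$ just relays the symbol to $\Blf$, using the link mechanism unchanged for the surviving lower-order links. As $\Blf$ has finitely many states and the set of annotations $(2^{\Qrk})^{\maxcolor+1}$ is finite, $\Brk$ is a finite deterministic automaton effectively computable from $\Blf$, and by construction it accepts $\anglebra{\flaten{\stack},\target{\stack}}\cdot q$ iff $\Blf$ accepts $\nu(q,\stack)$, i.e. iff $(q,\stack)$ is winning in $\pgamerk$.

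The step I expect to be the genuine obstacle is the one in the previous paragraph: arguing that the single recorded state $r_t$ handed over by the link mechanism actually determines the whole set $R$. This rests on the prefix-commutation of $\nu$ and on pinning down the exact bracket bookkeeping relating $r_t$ to the $\Blf$-state after $\nu$ of the entire collapsed sub-stack. The remaining ingredients — determinism, finiteness, and the passive relaying of non-$n$-linked input — are routine product-automaton constructions.
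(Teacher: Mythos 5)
Your proposal is correct and takes essentially the same route as the paper: in both, $\Brk$ computes $\nu$ on the fly while simulating the deterministic $\Blf$ position-by-position, retrieves the annotation $\vect{R}$ of an $n$-linked symbol from the $\Blf$-state recorded at the link target via the dependency-automaton link mechanism, and justifies well-foundedness by the fact that links point strictly downwards (the paper phrases this as an induction on the number of $n$-links). The only cosmetic difference is bookkeeping: the paper assumes w.l.o.g.\ that $\Blf$'s states carry the component $\{r \mid \Blf \text{ accepts } (r,\stack\rsk^k)\}$, whereas you recompute exactly this set at the point of need from determinism, via $s=\delta(r_t,\rsk)$ and $R=\{r\mid \delta(s,r)\in F\}$.
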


\begin{proof}
We can safely assume that any control state of $\Blf$ is of the form $(\xi,R)$ with $R\subseteq \Qlf$ and such that, after reading some input stack $\stack$ (possibly with some pending open brackets) $\Blf$ is in a state of the form $(\xi,R)$ with $R=\{r\mid \Blf \text{ accepts }(r,\stack')\}$ where $\stack'$ is the stack obtained from $\stack$ by closing all the pending open brackets (\ie $\stack'=\stack]^k$ for some well chosen $k\leq n$).

On an input $(p_0,\stack_0)$ the automaton $\Brk$ computes \emph{on-the-fly} the image of $(p_0,\stack_0)$ by $\nu$ and simulates $\Blf$ on it.
In order to compute $\nu((p_0,\stack_0))$, $\Brk$ needs to retrieve, when reading a stack symbol with an $n$-link, the states that are winning for the stack obtained by collapsing the $n$-link. This is simple as it is given by the $2^{\Qlf}$ component of $\Blf$ (recall that $\Brk$ simulates $\Blf$, hence keeps track of this information) and hence the automaton can access it by definition of the model of automata. 
Indeed, the information (\ie the states winning when doing a collapse) is correct before reading the first stack symbol coming with an $n$-link, and by induction on the number of $n$-links, if it is correct after processing the $k$ first symbols with an $n$-link, on reading the $(k+1)$-th symbol with an $n$-link, the information is still correct as it was correct for the prefix read so far and therefore $\Brk$ correctly simulated $\Blf$ on this prefix.

We do not formally describe $\Brk$ as it is rather straightforward but we note that the size of $\Brk$ is linear in the size of $\Blf$.
\end{proof}

\subsection{Strategies}

In order to complete the proof of Theorem~\ref{theo:outmost} it remains to establish the following proposition.

\begin{proposition}\label{proposition:outmost-ws}
If there is an $n$-CPDA transducer $\Slf$ synchronised with $\pprocesslf$ realising a well-defined winning strategy for \Eloise in $\pgamelf$ from $\nu((q_{0,\rk},\bot_{n}))$, then one can effectively construct an $n$-CPDA transducer $\Srk$ synchronised with $\pprocessrk$ realising a well-defined winning strategy for \Eloise in $\pgamerk$ from the initial configuration $(q_{0,\rk},\bot_{n})$.
\end{proposition}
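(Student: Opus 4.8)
The plan is to \emph{internalise} the transducer $\Slf$ inside $\Srk$, turning the memory-based construction of $\pstratrk$ from $\pstratlf$ in the proof of Lemma~\ref{lemma:removinglinks} into a finite stack-based device. Since the initial stack is $\bot_{n}$, no $n$-link is present at the start, so every $n$-link met during play is created during that play and the ``switch to a known winning strategy'' case of Lemma~\ref{lemma:removinglinks} never occurs; it therefore suffices to implement the backtracking mechanism. $\Srk$ will keep in its control state the current $\Slf$-state together with the $\pprocessrk$-state, and each cell of its stack will carry, besides the $\pprocessrk$-symbol (needed to stay synchronised and to read off $\topn{1}$ for the rank-aware function $\LinkRk$), the corresponding $\Slf$-stack symbol. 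On every cell that $\Srk$ pushes with an $n$-link it will additionally store the finite data consisting of the $\Slf$-symbol $\sigma$ that $\Slf$ pushes and the $\Slf$-state at the link ancestor $v_{i}=(p^{?},\cdot)$; this is exactly the information the abstract construction recovers by backtracking to $v_{i}$.

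Concretely, I would describe $\Srk$ by three cases mirroring the update of $\pplaylf$. For a move whose operation is neither $\pushlk{n}{\beta}$ nor a $\collapse$ following an $n$-link, $\Srk$ performs the same (synchronised) operation, feeds $\Slf$ the identical $\pgamelf$-move, and updates the stored $\Slf$-state and the $\Slf$-symbol of the new top cell. To simulate $\pushlk{n}{\beta}$, $\Srk$ runs $\Slf$ internally through the three $\pgamelf$-transitions $(p^{\beta},\toprew{\alpha};id)$, then $(p^{?},\pushlk{1}{(\beta,\vect{R})})$ with $\vect{R}$ read off the choice function of $\Slf$ at the \Eloise-state $p^{\beta}$, then $(p,id)$; it performs one synchronised $\pushlk{n}{\cdot}$ whose pushed cell records $\beta$, its $\Slf$-counterpart $\sigma$, and the $\Slf$-state at $v_{i}$, and it sets its $\Slf$-state to the one after $(p,id)$. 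To simulate a $\collapse$ following an $n$-link with target state $r$, $\Srk$ first reads from $\topn{1}$ the stored $\Slf$-state at $v_{i}$ and the symbol $\sigma$, computes the link rank $\ell=\LinkRk(\cdot)$ using rank-awareness, runs $\Slf$ internally through $(r^{\ell},id)\,(r,\popn{n})$ from that recorded configuration, and finally performs a genuine synchronised $\collapse$.

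The point that makes this work, and the step I expect to be the main obstacle, is showing that after $\Srk$'s genuine $\collapse$ its stack carries precisely the $\Slf$-stack content of the abstract memory after its backtracking. The abstract $\pplaylf$ discards everything pushed after $v_{i}$ and then performs $\popn{n}$ from $v_{i}$, so the $\Slf$-stack it continues with equals $\popn{n}$ of the $\Slf$-stack at the link ancestor. On the other side, $\Srk$'s collapse pops down to the collapse target, which has the same shape as $\popn{n}$ of the creation stack; moreover, while the link is alive the cells below that target are never touched, so they still hold the $\Slf$-symbols written at creation time. Hence the two agree, the recovered top cell already carries the correct $\Slf$-symbol of the target, and $\Srk$'s tracked $\Slf$-configuration stays faithful. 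Note that the dummy order-$1$ links of $\pprocesslf$ and the order-$n$ links of $\pprocessrk$ make the two stacks diverge in link structure, but this is harmless: $\pprocesslf$ never follows its dummy links, so in order to run $\delta_{\lf}$ and the choice function of $\Slf$ it is enough for $\Srk$ to track $\Slf$'s symbols and state, which it does. This is precisely where the earlier normalisation to a rank-aware CPDA pays off, since $\ell$ must be available from the control state and $\topn{1}$ alone.

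With the simulation in place, I would conclude as follows. Each play of $\pgamerk$ in which \Eloise follows the strategy realised by $\Srk$ induces, step by step, exactly the memory play $\pplaylf$ and the move choices of the strategy $\pstratrk$ built in Lemma~\ref{lemma:removinglinks}; therefore the realised strategy is well defined and winning from $(q_{0,\rk},\bot_{n})$. Furthermore $\Srk$ is synchronised with $\pprocessrk$, because every stack operation it performs is of the same kind as the corresponding operation of $\pprocessrk$ ($\pushlk{n}{\cdot}$ for $\pushlk{n}{\beta}$, $\collapse$ for $\collapse$, and the identical operation otherwise), and the bookkeeping enlarges the state set and the stack alphabet only by finite factors. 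This establishes Proposition~\ref{proposition:outmost-ws}.
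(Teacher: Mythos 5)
Your proposal is correct and takes essentially the same approach as the paper: internalise $\Slf$ within $\Srk$, observe that from $\bot_n$ no pre-existing $n$-links arise, and handle the collapse/backtracking case by storing at the creation of each (simulated) $n$-link finite data that is propagated with copies of the link and lets $\Srk$ recover $\Slf$'s configuration after the abstract backtracking, with the link rank $\ell$ supplied by rank-awareness and the stack content below the collapse target being untouched while the link is alive. The only immaterial difference is bookkeeping: the paper modifies $\Slf$ to precompute and store in the pushed $(\beta,\vect{R})$-cell the states it would reach after $(r^\ell,id)(r,\popn{n})$ for all $0\leq \ell\leq d$ and $r\in R_\ell$, whereas you store the link-ancestor state and pushed $\Slf$-symbol and apply $\delta$ twice at collapse time — equivalent data.
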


\os{I think I am formal enough in the following proof. However note that it is actually tricky how to deal with the collapse involving $n$-links.}
\begin{proof}
The result follows from a carefully analysis of how we defined $\pstratrk$ from $\pstratlf$ in the proof of Lemma~\ref{lemma:removinglinks}. As we now only focus on the initial configuration $(q_{0,\rk},\bot_{n})$ we will not have to deal with the special case of doing a $\collapse$ following (possibly a copy of) an $n$-link originally in the initial configuration. Also note that $\nu((q_{0,\rk},\bot_{n}))=(q_{0,\rk},\bot_{n})$.

Recall that $\pstratrk$ uses as a memory a partial play $\pplaylf$ in $\pgamelf$ and considers the value of $\pstratlf(\pplaylf)$ to determine the next move to play. Now assume that $\pstratlf$ is realised by an $n$-CPDA transducer $\Slf$ synchronised with $\pprocesslf$. Hence, instead of storing $\pplaylf$ it suffices to store the configuration $\Slf$ is in after reading $\pplaylf$.

One can also notice that the stack $\stack_\rk$ in the last configuration of some partial play $\pplayrk$ and the stack $\stack_\lf$ in the last configuration of the associated $\pplaylf$ have the same shapes \emph{provided} one replaces in $\stack_\lf$ every $1$-link from a symbol in $\Gammark\times (2^{\Qrk})^{\maxcolor+1}$ by an $n$-link. Recall that these $1$-links are never used to perform a $\collapse$: hence replacing those $1$-links by $n$-links does not change the issue of the game, and if one does a similar transformation on $\Slf$ it still realises a winning strategy, and it is synchronised with the transformed version of $\pplaylf$.

Now, it follows from the way one defined $\pstratrk$ (both the choice of the move and the memory update) that one can design an $n$-CPDA transducer $\Srk$ synchronised with $\pprocessrk$ realising a well-defined winning strategy for \Eloise in $\pgamerk$ from the initial configuration $(q_{0,\rk},\bot_{n})$. In all cases but one $\Srk$ simulates $\Slf$. The only problematic case is when the move to play is some $(r,\collapse)$ involving an $n$-link.
Indeed, one needs to backtrack in $\pplaylf$ (namely retrieve the configuration of $\Slf$ right after the link ancestor) and extend it by doing $(r^\ell,id)$ (where $\ell$ is the link rank) and then $(r,\popn{n})$; one needs to retrieve the configuration of $\Slf$ right after this. 
If one performs a $\collapse$ in $\Srk$, one directly retrieves the stack content, but the control state of $\Slf$ is still missing. However, one can modify $\Slf$ so that after the simulation of the creation of an $n$-link, \ie after a symbol of the form $(\beta,\vect{R})$ is pushed, it stores in its $\topn{1}$-element the control state it will be in after doing the transitions $(r^\ell,id)(r,\popn{n})$, for each $0\leq \ell \leq d$ and each $r\in R_\ell$ (this can easily be computed).
As this information is then propagated when copying the symbol/link, it is available in the $\topn{1}$-element before doing a $\collapse$ involving an $n$-link, hence $\Srk$ can also correctly retrieve the control state of $\Slf$. 

From this (somehow informal) description of $\Srk$ the reader should be convinced that $\Srk$ correctly simulates $\Slf$ on $\pplaylf$ and hence, realises a winning strategy in $\pgamerk$. The fact that $\Srk$ is synchronised with $\pprocessrk$ follows from the fact that it is synchronised with the variant of $\Slf$ that itself is synchronised with the variant of $\pplaylf$ which is synchronised with $\pplayrk$.
\end{proof}

\subsection{Optimising the Construction}\label{section:optimisationlf}
\os{I am not very happy with the following. Actually, I noticed this after doing the whole proof. Of course I could fix the proof but I think it would even make it more complicated whilst I think the patch below should be convincing enough...}

The set $\Qlf$ has size $\mathcal{O}(|\Qrk|(|\Gammark|+d+3))$, which is not very satisfactory for complexity reasons. Actually, one would prefer a variant of the construction where $|\Gammark|$ does not appear in the blowup concerning states. This factor actually comes from states $\{q^\gamma\mid q\in \Qrk,\ \gamma\in\Gammark\}$, and one can easily get rid of them by doing the following modification on $\pprocesslf$. When simulating a $\pushlk{n}{\beta}$, instead of going to $q^\beta$, one stores the information on $\beta$ (thanks to a $\toprew{}$ operation) in the $\topn{1}$ element of the stack (hence, the stack alphabet increases by a linear factor in $|\Gammark|$) and goes to a special state $q^!$. State $q^!$ is controlled by \Eloise and the transition function is the same as from $q^\beta$ where $\beta$ is the symbol stored on the $\topn{1}$-element of the stack.

It is straightforward that this modification does not change the validity of Proposition~\ref{prop:inverseimagenu2} nor Proposition~\ref{proposition:outmost-ws}.
\subsection{Complexity}

If we summarise, the overall blowup in the transformation from $\pgamerk$ to $\pgamelf$ given by Theorem \ref{theo:outmost} is as follows.

\begin{proposition}\label{proposition:complexity-step2}
Let $\pprocessrk$ and $\pprocesslf$ be as in Theorem \ref{theo:outmost}. Then the set of states of $\pprocesslf$ has size 
$\mathcal{O}(|\Qrk|(|\colors|+3))$ and the stack alphabet of $\pprocesslf$ has size $\mathcal{O}(|\Gammark|^2 \cdot 2^{|\Qrk||\colors|})$.
Finally, the set of colours used in $\pgamerk$ and $\pgamelf$ are the same.
\end{proposition}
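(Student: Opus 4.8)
The plan is to read the two cardinalities directly off the explicit presentation of $\pprocesslf=\anglebra{\Gammalf,\Qlf,\Deltalf,q_{0,\lf}}$ given in Section~\ref{section:Step2Task1}, and then to adjust both counts according to the optimisation of Section~\ref{section:optimisationlf}. Throughout I would use that $\colors=\{0,\ldots,\maxcolor\}$, so that $\maxcolor+1=|\colors|$.

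First I would bound the state set. From its definition,
$$\Qlf = \Qrk\cup \{p^\beta\mid p\in \Qrk,\ \beta\in\Gammark\}\cup \{p^?\mid p\in \Qrk\}\cup \{r^i\mid r\in \Qrk,\ 0\leq i\leq \maxcolor\} \cup\{\ttrue,\ffalse\},$$
so the unoptimised state set has size $|\Qrk|(|\Gammark|+\maxcolor+3)+2$. The only undesirable factor is the $|\Gammark|$ coming from the states $\{p^\beta\}$. I would then invoke the optimisation of Section~\ref{section:optimisationlf}, which replaces the $|\Qrk|\cdot|\Gammark|$ states $\{p^\beta\}$ by the single family $\{q^!\mid q\in\Qrk\}$, moving the information about $\beta$ into the topmost stack symbol. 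After this substitution the state set is $\Qrk\cup\{q^!\}\cup\{p^?\}\cup\{r^i\}\cup\{\ttrue,\ffalse\}$, of size $|\Qrk|(|\colors|+3)+2=\mathcal{O}(|\Qrk|(|\colors|+3))$, as claimed.

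Next I would bound the stack alphabet. From $\Gammalf = \Gammark \cup \Gammark\times (2^{\Qrk})^{\maxcolor+1}$ the unoptimised size is $|\Gammark|+|\Gammark|\cdot 2^{|\Qrk|(\maxcolor+1)}=\mathcal{O}(|\Gammark|\cdot 2^{|\Qrk||\colors|})$. The optimisation stores the pushed symbol $\beta$ in the topmost stack symbol (via a $\toprew{}$ operation), which multiplies the alphabet by a factor linear in $|\Gammark|$; this yields the stated bound $\mathcal{O}(|\Gammark|^2\cdot 2^{|\Qrk||\colors|})$. Finally, for the colour set I would simply inspect the colouring fixed in Section~\ref{section:Step2Task1}: every state introduced by the construction (both in the original form and after the optimisation) receives one of the colours $\col(p^?)=\maxcolor$, $\col(r^i)=i\in\{0,\ldots,\maxcolor\}$, $\col(\ttrue)=0$ and $\col(\ffalse)=1$, all of which already lie in $\colors$, so no new colour is introduced and the colour sets of $\pgamerk$ and $\pgamelf$ coincide. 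The whole argument is routine counting; the only point demanding care is to apply the optimisation consistently in both counts, since it simultaneously shrinks the state set by a factor $|\Gammark|$ and enlarges the stack alphabet by a factor $|\Gammark|$.
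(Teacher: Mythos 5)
Your proof is correct and is precisely the paper's argument spelled out: the paper's own proof reads ``By construction together with the optimisation discussed in Section~\ref{section:optimisationlf}'', and you perform exactly that count — $|\Qlf|=|\Qrk|(|\colors|+3)+2$ after replacing the states $p^\beta$ by $q^!$, and $|\Gammalf|=\mathcal{O}(|\Gammark|^2\cdot 2^{|\Qrk||\colors|})$ after accounting for the extra $\toprew{}$-stored symbol, with the new states coloured within $\colors$. Your closing caveat about applying the optimisation consistently on both sides of the trade-off is exactly the right point of care.
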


\begin{proof}
By construction together with the optimisation discussed in Section \ref{section:optimisationlf}.
\end{proof}


\section{Reducing the Order}\label{section:reducingOrder}

In the previous section,  given a game played on a \emph{rank-aware} $n$-CPDA,
we have constructed another game played on an $n$-CPDA that does not create $n$-links. The winning region (\emph{resp.} a winning strategy realised by an $n$-CPDA transducer) in the original game can then be recovered from the winning region (\emph{resp.}
a  winning strategy realised by $n$-CPDA transducer) in the latter game. 
 
 In this section, we prove a result of a similar flavour. Namely, starting from a game played on an $n$-CPDA that does not create $n$-links, we construct a game played on an $(n-1)$-CPDA, and we show that the winning region (\emph{resp.} a winning strategy realised by an $n$-CPDA transducer) in the original game can be recovered from the winning region (\emph{resp.} a winning strategy realised by an $(n-1)$-CPDA transducer) in the latter game.

We situate the techniques developed here in a general and abstract framework of (\hbox{order-1}) pushdown automata whose stack alphabet is a \emph{possibly infinite} set: \emph{\sdps.}
We start by introducing this concept and show how $n$-CPDA that do not create $n$-links fit into it. Then, we introduce a model of automata, \emph{automata with oracles},  that accept configurations of \sdps and we relate this model with automata accepting configurations of $n$-CPDA as defined in Section~\ref{ssection:regSets}. Then, we introduce the notion of \emph{conditional games} and show that it is the notion that captures the winning region in the original game. Finally, we show how such games can be solved by reduction to an $(n-1)$-CPDA parity game, and from the proof we also get the expected result on the regularity of the winning region and on the existence of a winning strategy realised by a CPDA transducer.

\subsection{Abstract Pushdown Automata}

We introduce a general and abstract framework of (order-1) pushdown automata whose stack alphabet is a \emph{possibly infinite} set.

An \concept{\sdp} is a tuple $\pprocess= \anglebra{A,Q,\Delta,q_0}$ where $A$ is a (possibly infinite) set called an \concept{\sda} and containing a bottom-of-stack symbol denoted $\bot\in A$, $Q$ is a finite set of states, $q_0\in Q$ is an initial state and
$$\Delta:Q\times A\rightarrow 2^{Q\times A^{\leq 2}}$$ 
is the transition relation (here $A^{\leq 2}=\{\epsilon\}\cup A\cup A\cdot A$ are the words over $A$ of length at most $2$). 
We additionally require that for all $a\neq\bot$, $\Delta(q,a)\subseteq Q\times(A\setminus\{\bot\})^{\leq 2}$ and that $\Delta(q,\bot)\subseteq Q\times (\{\bot\}\cup\{\bot b\mid b\neq \bot\})$, \ie the bottom-of-stack symbol can only occur at the bottom of the stack, and is never popped nor rewritten.

An \concept{\sd} is a word in $\stacks=\bot(A\setminus\{\bot\})^*$. A configuration of $\pprocess$ is a pair $(q,\stack)$ with $q\in Q$ and $\stack\in\stacks$. 

\begin{remark}
In general an \sdp is not finitely describable, as the domain of $\Delta$ is infinite and no further assumption is made on $\Delta$.
\end{remark}

A \sdp $\pprocess$ induces a possibly infinite graph $G=(V,E)$, called an
\concept{\sdg,} whose vertices are
the configurations of $\pprocess$ and edges are defined by the
transition relation $\Delta$, \ie, from a vertex
$(q,\stack\cdot a)$ one has an edge to $(q',\stack\cdot u)$ whenever $(q',u)\in\Delta(q,a)$.

\begin{example}\label{example:abstractPDA}
An order-$1$ pushdown automaton is an \sdp whose stack alphabet is finite.
\end{example}

\begin{example}\label{ex:cpda}
Order-$n$ CPDA that do not create $n$-links are special cases of \sdps. Indeed, let $n>1$ and consider such an order-$n$ CPDA $\mathcal{A}=\anglebra{\Gamma,Q,\Delta,q_0}$. Let $A$ be the set of all order-$(n-1)$ stacks over $\Gamma$, and for every $p\in Q$ and $a\in A$ with $\gamma=top_1(a)$, we define $\Delta'(p,a)$ by
\begin{itemize}
\item $(q,\epsilon)\in\Delta'(p,a)$ iff $(q,\toprew{\alpha};\popn{n})\in\Delta(q,\gamma)$;
\item $(q,a'\cdot a')\in\Delta'(p,a)$ with $a'=\toprew{\alpha}(a)$ iff $(q,\toprew{\alpha};\pushn{n})\in\Delta(q,\gamma)$;
\item $(q,a')\in\Delta'(p,a)$ with $a'=op(\toprew{\alpha}(a))$ iff $(q,\toprew{\alpha};op)\in\Delta(q,\gamma)$ and $op\notin\{\popn{n},\pushn{n}\}$.
\end{itemize}
It follows from the definitions that $\mathcal{A}$ and the \sdp $\anglebra{A,Q,\Delta',q_0}$ have isomorphic transition graphs.
\end{example}

Consider now a partition $Q_\Ei\cup Q_\Ai$ of $Q$ between \Eloise
and \Abelard. It induces a natural partition $V_\Ei\cup V_\Ai$
of $V$ by setting $V_\Ei=Q_\Ei\times\stacks$ and
$V_\Ai=Q_\Ai\times\stacks$. The resulting arena
$\pggraphabs=(V_\Ei,V_\Ai,E)$ is called an \concept{\sdgg}. 
Let $\col$ be a colouring function from $Q$ to a finite set of
colours $\colors\subset \mathbb{N}$. This function is easily
extended to a function from $V$ to $\colors$ by setting
$\col((q,\varstack))=\col(q)$.
Finally, an \concept{\sdpgame} is a parity game played on such an \sdgg where the colouring function is defined as above.


\subsection{Automata with Oracles}

We now define a class of automata to accept the winning positions in an \sdgame.
An \concept{automaton with oracles} is a tuple $\mathcal{B}=\anglebra{P,Q,A,\delta,p_0,\mathcal{O}_1\cdots\mathcal{O}_k,Acc}$ where $P$ is a finite set of control states, $Q$ is a set of input states, $A$ is a (possibly infinite) input alphabet, $p_0\in P$ is the initial state, $\mathcal{O}_i$ are subsets of $A$ (called \concept{oracles}) and $\delta: P\times \{0,1\}^k\rightarrow S$ is the transition function. Finally $Acc$ is a function from $P$ to $2^Q$. Such an automaton is designed to accept in a \emph{deterministic} way configurations of an \sdp whose \sd alphabet is $A$ and whose set of control states is $Q$.

Let $\mathcal{B}=(P,Q,A,\delta,p_0,\mathcal{O}_1\cdots\mathcal{O}_k,Acc)$ be such an automaton. With every $a\in A$ we associate a Boolean vector 
  $\pi(a)=(b_1,\cdots b_k)$ where
  \[ b_i = \begin{cases}
             1 & \text{if $a\in\mathcal{O}_i$}\\
             0 & \text{otherwise.}
           \end{cases}
  \]    

The automaton reads a configuration $C=(q,a_1 a_2\cdots a_{\ell})$ from left to right. A \concept{run} over $C$ is the sequence $r_0,\cdots, r_{\ell+1}$ such that $r_0=p_0$ and $r_{i+1}=\delta(r_i,\pi(a_i))$ for every $i=0,\cdots,\ell$. Finally the run is \concept{accepting} if and only if $q\in Acc(r_{\ell+1})$.

\begin{remark}
\label{remark:oraclefinitealphabet}

When the input alphabet is finite, it is easily seen that automata with oracles have the same expressive power as usual deterministic finite automata.
\end{remark}

We are going to
use automata with oracles to accept sets of configurations of $n$-CPDA that do not have $n$-links. As seen in Example~\ref{ex:cpda} for an order-$n$
CPDA that does not have $n$-links, we take $A$ to be the set of all order-$(n-1)$ stacks.
The sets of configurations of an order-$n$ CPDA without $n$-links accepted by automata that use as oracles regular sets of order-$(n-1)$ stacks are easily seen to be regular.


\begin{proposition}\label{remark:oracleregular}
Let $\pprocess$ be an order-$n$ CPDA $\pprocess$ that never creates $n$-links. 
Let $\mathcal{B}$ be an automaton with oracles $\mathcal{O}_1,\ldots,\mathcal{O}_k$ and 
assume that  each $\mathcal{O}_i$ is a regular set of $(n-1)$-stacks (and denote by $\mathcal{C}_i$ an associated automaton).
Let $C$ be the set of configurations of $\pprocess$ accepted by $\mathcal{B}$.
Then $C$ is regular and we can construct an automaton $\mathcal{C}$ (now working on order-$n$ stacks without $n$-links) of size $\mathcal{O}(n|\mathcal{B}| |\mathcal{C}_1| \cdots |\mathcal{C}_k|)$ accepting it.
\end{proposition}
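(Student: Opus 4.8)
The plan is to build $\mathcal{C}$ as a \emph{nested} simulation: it runs $\mathcal{B}$ at the top (order-$n$) level while running the synchronous product of $\mathcal{C}_1,\ldots,\mathcal{C}_k$ inside each top-level $(n-1)$-stack block, restarting the product at every block boundary. Recall from Example~\ref{ex:cpda} that a configuration $(q,\stack)$ of $\pprocess$ is read by $\mathcal{B}$ as $(q,a_1\cdots a_\ell)$ where $\stack=\mksk{a_1\cdots a_\ell}$ and each $a_i$ is an $(n-1)$-stack, whereas the automaton $\mathcal{C}$ of Section~\ref{ssection:regSets} reads the flattened word $\flaten{\stack}=\lsk\flaten{a_1}\cdots\flaten{a_\ell}\rsk$ letter by letter. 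Hence the substring of $\flaten{\stack}$ delimited by the outermost order-$(n-1)$ brackets of block $i$ is exactly $\flaten{a_i}$, i.e.\ precisely the input on which $\mathcal{C}_j$ runs to decide $a_i\in\mathcal{O}_j$.

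The key structural fact, which I would isolate first, is that since $\pprocess$ never creates $n$-links, every link occurring in $\stack$ has order at most $n-1$ and therefore targets a position lying in the \emph{same} top-level block $a_i$ as its source. Consequently $a_i\in\mathcal{O}_j$ depends only on $a_i$ in isolation (together with its internal links), so running $\mathcal{C}_j$ on the substring $\flaten{a_i}$ occurring inside $\flaten{\stack}$ yields the same verdict as running it on the standalone stack $a_i$.

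I would take the state set of $\mathcal{C}$ to be $P\times R_1\times\cdots\times R_k\times\{0,1,\ldots,n\}$, where $R_j$ is the state set of $\mathcal{C}_j$ and the last component is a bracket-nesting counter used to recognise the order-$n$ level (depth $1$) and hence the block boundaries. On reading the opening order-$n$ bracket, $\mathcal{C}$ sets its $\mathcal{B}$-component to $p_0$ and moves to depth $1$. Whenever it is at depth $1$ and reads the opening bracket of a new block it resets the product component to the initial states of the $\mathcal{C}_j$ and enters the block; inside the block it runs the product $\mathcal{C}_1\times\cdots\times\mathcal{C}_k$ on the letters of $\flaten{a_i}$, updating the depth counter on brackets. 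A linked letter is treated with the ternary clause of the transition function: $\mathcal{C}$ consults its own stored state at the link target and projects onto the relevant $R_j$-component, which by the structural fact is exactly the state $\mathcal{C}_j$ had at that position, so each $\mathcal{C}_j$'s link-following is faithfully reproduced. The $\mathcal{B}$-component is held constant throughout the block (hence consistent at every link target). When the matching closing bracket brings the depth back to $1$, the product component has reached its final configuration for $a_i$, and $\mathcal{C}$ reads off $\pi(a_i)=(b_1,\ldots,b_k)$ with $b_j=1$ iff the $j$-th component is accepting for $\mathcal{C}_j$, then applies one step of $\delta$ to its $\mathcal{B}$-component. Finally, on the appended control-state letter $q$, $\mathcal{C}$ accepts iff $q$ lies in $Acc$ of the current $\mathcal{B}$-state.

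Correctness would follow by induction on the number of blocks, with invariant that when $\mathcal{C}$ returns to depth $1$ after block $a_i$ its $\mathcal{B}$-component equals the corresponding state in the run of $\mathcal{B}$ and its product component records the $\mathcal{C}_j$-runs on $a_i$. The size bound $\mathcal{O}(n\,|\mathcal{B}|\,|\mathcal{C}_1|\cdots|\mathcal{C}_k|)$ is immediate from the chosen state set. I expect the only genuinely delicate point to be the faithful reproduction of each $\mathcal{C}_j$'s link-following through $\mathcal{C}$'s single link-following mechanism; everything else (depth counting, resets, the top-level $\mathcal{B}$-simulation, and the final $Acc$-test) is bookkeeping. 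The structural observation that all links stay within a block is exactly what makes this delicate point go through, since the target of every link is a position whose stored product state is meaningful for the block currently being processed.
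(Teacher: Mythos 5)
Your construction coincides with the paper's proof: a synchronised product of $\mathcal{B}$ with $\mathcal{C}_1,\ldots,\mathcal{C}_k$ augmented by a bracket-depth counter over $\{0,1,\dots,n\}$, simulating $\mathcal{B}$ at depth at most $1$, running (and resetting) the product of the $\mathcal{C}_i$ inside each top-level $(n-1)$-stack block, reading off the oracle bits $\pi(a_i)$ when the depth returns to $1$, and testing $Acc$ on the final control-state letter, yielding the same $\mathcal{O}(n\,|\mathcal{B}|\,|\mathcal{C}_1|\cdots|\mathcal{C}_k|)$ bound. Your explicit verification of the link-following step --- that in an $n$-link-free stack every link stays within its top-level block, so feeding the stored product state at the link target into the ternary transition faithfully reproduces each $\mathcal{C}_j$'s run --- is correct and merely spells out a point the paper's proof leaves implicit.
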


\begin{proof}
It suffices to mimic the behaviour of $\mathcal{B}$ and to run in parallel the $\mathcal{C}_i$s to compute the value of the oracles. 
Hence, the automaton $\mathcal{C}$ is obtained by taking a synchronised product of $\mathcal{B}$ together with the automata $\mathcal{C}_1,\cdots,\mathcal{C}_k$. An extra component, coding a counter taking its values in $\{0,1,\dots,n\}$, is needed to keep track of the bracketing depth (initially the counter equals $0$; on reading an opening bracket $[$ the counter is incremented, on reading a closing bracket $]$ it is decremented). When the counter is equal to $0$ or $1$ one simulates $\mathcal{B}$. When the counter goes to $2$ (and as long as it differs from $1$) one simulates in parallel the $\mathcal{C}_i$s. When the counter returns to $1$ the components corresponding to the $\mathcal{C}_i$s give the value of the oracles on the last $(n-1)$-stack (\ie $b_i=1$ if and only if the control state of the $\mathcal{C}_i$s component is final). Hence the $\mathcal{B}$ component can be updated. Then the control states of the $\mathcal{C}_i$s are put back to the initial state and the next $(n-1)$-stack is processed. Finally, when the counter is again equal to $0$ (\ie the last closing bracket has been read), the control state $q$ of the input configuration is read and $\mathcal{C}$ goes to a final state if and only if the current state $p$ in the $\mathcal{B}$ component is such that $q\in Acc(p)$.
\end{proof}

\subsection{Conditional Games and Winning Regions of \SDPGAMES}

We fix an \sdp  $\pprocess=\anglebra{A,Q,\Delta,q_0}$ together with a partition $Q_\Ei\cup Q_\Ai$ of $Q$ and a colouring function $\col$ using a finite set of colours $\colors$. We denote respectively by $\pggraphabs=(V,E)$ and $\pgameabs$ the associated \sdgg and abstract pushdown parity game.

We show in Lemma~\ref{lemma:reg} below how to define an automaton with oracles that accepts \Eloise's winning region
in the game $\pgameabs$.  The oracles of this automaton are defined using the concept of conditional game.  For every subset $R\subseteq Q$ we define the \concept{conditional game induced by $R$ over $\pggraphabs$}, denoted $\pgameabs(R)$, as the game played over $\pggraphabs$ where a play $\play$ is
winning for \Eloise iff one of the following happens:
\begin{itemize}
\item In $\play$ no configuration with an empty stack, \ie of the form $(q,\bot)$, is visited, and $\play$ satisfies the parity condition.
\item In $\play$ a configuration with an empty stack is visited and the control state in the first such configuration belongs to $R$.
\end{itemize}
More formally, the set of winning plays $\WC(R)$ in $\pgameabs(R)$ is defined as follows:
$$\WC(R)=[\WC_{\col}\setminus V^*(Q\times\{\bot\})V^\omega]\; \cup \; V^*(R\times\{\bot\})V^\omega$$

For any state $q$, any stack letter $a\neq\bot$, and any subset $R\subseteq Q$ it follows from Martin's Determinacy theorem \cite{Martin75} that either \Eloise or \Abelard has a winning strategy from $(q,\bot a)$ in $\pgameabs(R)$. We denote by $\mathcal{R}(q,a)$ the set of subsets $R$ for which \Eloise wins in $\pgameabs(R)$ from $(q,\bot a)$:
$$\mathcal{R}(q,a)=\{
R\subseteq Q\mid (q,\bot a) \text{ is winning for \Eloise in } \pgameabs(R)
\}$$

Then one has the following characterisation of the set of winning positions in $\pgameabs$ in terms of automaton with oracles.

\begin{lemma}\label{lemma:reg}
Let $\pgameabs$ be an abstract pushdown parity game induced by an \sdp $\pprocess=\anglebra{A,Q,\Delta,q_0}$. Then the set of winning positions in $\pgameabs$ for \Eloise is accepted by an automaton with oracles $\mathcal{A}=(P,Q,A,\delta,p_0,\mathcal{O}_1\cdots\mathcal{O}_k,Acc)$ such that
\begin{itemize}
\item $P=2^Q$
\item $p_0=\emptyset$
\item There is an oracle $\mathcal{O}_{q,R}$ for every $q\in Q$ and $R\subseteq Q$, and $a \in \mathcal{O}_{q,R}$ iff $R\in\mathcal{R}(q,a)$ and $a\neq \bot$
\item There is an oracle $\mathcal{O}_{\bot}$ and $a \in \mathcal{O}_{\bot}$ iff $a=\bot$
\item Using the oracles, $\delta$ is designed so that:
\begin{itemize}
\item From state $\emptyset$ on reading $\bot$, $\mathcal{A}$ goes to $\{q\mid (q,\bot)\text{ is winning for \Eloise in }\pgameabs\}$
\item From state $R$ on reading $a$, $\mathcal{A}$ goes to $\{q\mid R\in\mathcal{R}(q,a)\}$
\end{itemize}
\item $Acc$ is the identity function
\end{itemize}
\end{lemma}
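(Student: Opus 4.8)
The plan is to show that, on an input configuration $(q,\bot a_1\cdots a_m)$, the automaton $\mathcal{A}$ computes along the stack a sequence of subsets of $Q$ that records, level by level, exactly the \Eloise{}-winning control states at each stack prefix, and then to read off acceptance at the top symbol. Concretely, writing $R_0=\emptyset$ and letting $R_{j+1}$ denote the state reached after $\mathcal{A}$ has processed $\bot a_1\cdots a_j$, the transition function forces $R_1=\{p\mid (p,\bot)\text{ is winning for \Eloise in }\pgameabs\}$ (detected through the oracle $\mathcal{O}_{\bot}$) and $R_{j+1}=\{p\mid R_j\in\mathcal{R}(p,a_j)\}$ for $j\geq 1$ (detected through the oracles $\mathcal{O}_{p,R}$). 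Since $Acc$ is the identity, $\mathcal{A}$ accepts $(q,\bot a_1\cdots a_m)$ iff $q\in R_{m+1}$, so it suffices to prove the invariant $R_{j+1}=W_j$, where $W_j:=\{p\mid (p,\bot a_1\cdots a_j)\text{ is winning for \Eloise in }\pgameabs\}$ and $W_0:=\{p\mid(p,\bot)\text{ winning}\}$. The base case $R_1=W_0$ holds directly by the definition of $\delta$.

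The inductive step reduces, using the hypothesis $R_j=W_{j-1}$, to the single equivalence that is the heart of the argument: for $j\geq 1$,
\[
(p,\bot a_1\cdots a_j)\text{ is winning for \Eloise in }\pgameabs
\iff
W_{j-1}\in\mathcal{R}(p,a_j),
\]
that is, iff \Eloise wins the conditional game $\pgameabs(W_{j-1})$ from $(p,\bot a_j)$. This is a generalisation to \sdps of Walukiewicz's reduction for pushdown parity games. The decisive structural fact is a correspondence between plays: as long as a play from $(p,\bot a_1\cdots a_j)$ never pops $a_j$, its stack stays of the form $\bot a_1\cdots a_{j-1}\cdot w$ and, since transitions depend only on the topmost symbol, it runs in lockstep with a play from $(p,\bot a_j)$ whose stack is $\bot\cdot w$, the two carrying identical colour sequences. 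The first moment (if any) at which the real play pops $a_j$, it exposes some $(r,\bot a_1\cdots a_{j-1})$, which corresponds exactly to the conditional play first reaching an empty stack $(r,\bot)$; so the conditional winning condition $\WC(W_{j-1})$ tests precisely whether $r\in W_{j-1}$.

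Both directions then follow by strategy transfer along this correspondence. For the direction from the conditional game to $\pgameabs$, I would take an \Eloise{} winning strategy in $\pgameabs(W_{j-1})$ from $(p,\bot a_j)$ and play its translated moves in $\pgameabs$: a play that never pops $a_j$ satisfies the parity condition by construction, whereas a play that pops $a_j$ reaches some $(r,\bot a_1\cdots a_{j-1})$ with $r\in W_{j-1}$, from where \Eloise{} switches to a fixed winning strategy (that configuration being winning). For the converse, from an \Eloise{} winning strategy in $\pgameabs$ from $(p,\bot a_1\cdots a_j)$ I would simulate it inside the conditional game; since a winning strategy never leaves the winning region, every configuration $(r,\bot a_1\cdots a_{j-1})$ reached by popping $a_j$ is winning, hence $r\in W_{j-1}$ and the matching conditional play meets $\WC(W_{j-1})$, while the non-popping plays again transport the parity condition. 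Determinacy of the conditional games (Martin's theorem, already invoked to define $\mathcal{R}(p,a)$) supplies the dichotomy that makes the membership $W_{j-1}\in\mathcal{R}(p,a_j)$ the right notion.

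The main obstacle is the careful bookkeeping in this equivalence: one must verify that the lockstep correspondence is faithful up to and including the first pop of the relevant symbol, that the colour sequences---and hence the $\liminf$ parity outcomes---agree on every non-popping branch, and that the two strategies can be spliced across the pop boundary without compromising well-definedness or the winning guarantee. Everything else---the characterisation of the run of $\mathcal{A}$, the outer induction on stack height, and reading off acceptance from $Acc$ being the identity---is routine once this equivalence is in place.
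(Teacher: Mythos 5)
Your proposal is correct and follows essentially the same route as the paper: the paper likewise reduces the lemma to a level-by-level proposition — \Eloise wins from $(q,\bot\stack a)$ iff some $R\in\mathcal{R}(q,a)$ satisfies that every $(r,\bot\stack)$ with $r\in R$ is winning — proved by exactly your prefix-stripping correspondence $\tau/\tau^{-1}$ and strategy splicing across the first pop. The only cosmetic difference is that you instantiate $R$ directly as the full winning set $W_{j-1}$ (using that winning strategies stay inside the winning region), whereas the paper quantifies existentially over $R$; these formulations coincide by the evident upward closure of $\mathcal{R}(q,a)$, so nothing is missing.
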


The proof of Lemma~\ref{lemma:reg} is a direct consequence of the following proposition.

\begin{proposition}
Let $\stack\in(A\setminus\{\bot\})^*$, $q\in Q$ and $ a\in A\setminus\{\bot\}$. Then \Eloise has a winning strategy in $\pgameabs$ from $(q,\bot \stack a)$ if and only if there exists some $R\in\mathcal{R}(q, a)$ such that $(r,\bot \stack)$ is winning for \Eloise in $\pgameabs$ for every $r\in R$.
\end{proposition}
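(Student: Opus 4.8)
The plan is to exploit the defining feature of an \sdp: the transitions available at a configuration depend only on its control state and its topmost stack symbol, so the segment of the stack lying strictly above the fixed bottom block $\bot\stack$ evolves independently of $\stack$. Concretely, I would set up a correspondence between partial plays of $\pgameabs$ issued from $(q,\bot\stack a)$ whose stack keeps $\bot\stack$ as a strict prefix (equivalently, which never pop below the occurrence of $a$) and partial plays of $\pgameabs$ issued from $(q,\bot a)$ which never empty their stack: the correspondence simply erases, respectively restores, the frozen bottom block, replacing $\bot\stack$ by $\bot$. Since transitions, vertex ownership and colours depend only on the control state and the top symbol, this is an isomorphism of the two induced sub-arenas, and under it the event ``the stack is popped down to $(r,\bot\stack)$ for the first time'' matches exactly ``the first visit to an empty-stack configuration $(r,\bot)$'', which is precisely the event governing the winning condition $\WC(R)$ of the conditional game.

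For the implication $(\Leftarrow)$, assume $R\in\mathcal{R}(q,a)$ and that every $(r,\bot\stack)$ with $r\in R$ is winning for \Eloise; fix a winning strategy $\sigma_R$ in $\pgameabs(R)$ from $(q,\bot a)$ together with winning strategies $\sigma_r$ from each $(r,\bot\stack)$. From $(q,\bot\stack a)$, \Eloise plays the image of $\sigma_R$ under the correspondence until, if ever, the stack is popped down to $\bot\stack$. If this never happens, the mirror play in $\pgameabs(R)$ never visits an empty stack, is won by $\sigma_R$, hence satisfies the parity condition; as corresponding plays carry identical control states and thus identical colours, the real play satisfies $\WC_\col$. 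If it does happen, reaching some $(r,\bot\stack)$ first, the mirror play reaches $(r,\bot)$ as its first empty-stack configuration, so winning of $\sigma_R$ forces $r\in R$; \Eloise then switches to $\sigma_r$, and since the plain parity condition $\WC_\col$ is prefix-independent the whole play is won.

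For the implication $(\Rightarrow)$, fix a winning strategy $\sigma$ from $(q,\bot\stack a)$ and let $R$ be the set of states $r$ such that some partial play consistent with $\sigma$ reaches $(r,\bot\stack)$ as its first configuration whose stack equals $\bot\stack$. Each such $(r,\bot\stack)$ is then winning for \Eloise: the residual of $\sigma$ after such a prefix is a strategy all of whose plays, once reprefixed, are consistent with $\sigma$ and hence won, and by prefix-independence of $\WC_\col$ the suffix from $(r,\bot\stack)$ is won as well. It remains to show $R\in\mathcal{R}(q,a)$, which I would obtain by transporting $\sigma$ through the correspondence to a strategy $\sigma'$ from $(q,\bot a)$ in $\pgameabs(R)$. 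A play consistent with $\sigma'$ either never empties its stack, in which case its mirror stays above $\bot\stack$, is consistent with $\sigma$, hence satisfies parity, so the play lies in $\WC_\col\setminus V^*(Q\times\{\bot\})V^\omega$; or it reaches a first empty-stack configuration $(r,\bot)$, whose mirror witnesses $r\in R$, so the play lies in $V^*(R\times\{\bot\})V^\omega$. In either case the play belongs to $\WC(R)$, so $\sigma'$ is winning and $R\in\mathcal{R}(q,a)$.

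I expect the main obstacle to be making the play correspondence and the strategy transport fully rigorous, in particular defining the stack-abstraction map so that ``the first time the stack is popped to $\bot\stack$'' is in exact bijection with ``the first empty-stack visit'' in the conditional game, and then justifying the strategy switch in $(\Leftarrow)$ and the residual-strategy argument in $(\Rightarrow)$. Both of the latter rest on the prefix-independence of the parity condition $\WC_\col$ (equivalently, on positional determinacy of parity games, which is available since $\WC(R)$ is itself a parity-style condition and the games are determined by Martin's theorem); none of this is deep, but the bookkeeping on stack prefixes is where care is required.
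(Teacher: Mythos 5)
Your proposal is correct and follows essentially the same route as the paper's proof: you define $R$ exactly as the set of states reachable at the first return to $\bot\stack$ under the winning strategy, and your erase/restore correspondence is precisely the paper's maps $\tau$ and $\tau^{-1}$, with the same strategy transport, the same residual-strategy argument for $(r,\bot\stack)$, and the same switch to the strategies $\strat_r$ after the first pop to $\bot\stack$. The bookkeeping points you flag (first return to $\bot\stack$ matching first empty-stack visit, prefix-independence of the parity condition) are exactly the ones the paper also relies on, and your treatment of them is sound.
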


\begin{proof}
Assume \Eloise has a winning strategy from $(q,\bot \stack a)$ in $\pgameabs$ and call it $\strat$. Consider the set $\mathcal{L}$ of all plays in $\pgameabs$ that start from $(q,\bot \stack a)$ and where \Eloise respects $\strat$. Define $R$ to be the (possibly empty) set that consists of all $r\in Q$ such that there is a play in $\mathcal{L}$ of the form $v_0\cdots v_k (r,\bot \stack) v_{k+1}\cdots$ where each $v_i$ for $0\leq i\leq k$ is of the form $(p_i,\bot \stack \varstack_i)$ for some $ \varstack_i\neq\epsilon$. In other words, $R$ consists of all states that can be reached on popping (possibly a rewriting of) $a$ for the first time in a play where \Eloise respects $\strat$.
Define a (partial) function $\tau:V\rightarrow V$ by letting $\tau(p,\bot \stack \varstack)=(p,\bot \varstack)$ for every $p\in Q$. Define a function $\tau^{-1}:V\rightarrow V$ by letting $\tau^{-1}(p,\bot \varstack)=(p,\bot \stack \varstack)$ for all $\varstack\in A^*$. 
We extend $\tau^{-1}$ as a morphism over $V^*$. 

It is easily shown that $R\in\mathcal{R}(q, a)$. Indeed a winning strategy for \Eloise in $\pgameabs(R)$ is defined as follows:
\begin{itemize}
\item if some empty stack configuration has already been visited, play any legal move, 
\item otherwise go to $\tau(\strat(\tau^{-1}(\play))$, where $\play$ is the partial play seen so far.
\end{itemize}
By definition of $\mathcal{L}$ and $R$, it easily follows that the previous strategy is winning for \Eloise in $\pgameabs(R)$, and therefore $R\in\mathcal{R}(q, a)$.

Finally, for every $r\in R$ there is, by definition of $\mathcal{L}$, a partial play $\play_r$ that starts from $(q,\bot \stack a)$, where \Eloise respects $\strat$ and that ends in $(r,\bot \stack)$. A winning strategy for \Eloise in $\pgameabs$ from $(r,\bot \stack)$ is given by $\psi(\play)=\strat(\play'_r\cdot\play)$, where $\play'_r$ denotes the partial play obtained from $\play_r$ by removing its last vertex $(r,\bot \stack)$.

Conversely, let us assume that there is some $R\in\mathcal{R}(q, a)$ such that $(r,\bot \stack)$ is winning for \Eloise in $\pgameabs$ for every $r\in R$. and denote by $\strat_r$ a winning strategy for \Eloise from $(r,\bot \stack)$ in $\pgameabs$. Let $\strat_R$ be a winning strategy for \Eloise in $\pgameabs(R)$ from $(q,\bot a)$. We define $\tau$ and $\tau^{-1}$ as in the direct implication and extend them as (partial) morphism over $V^*$. We now define a strategy $\strat$ for \Eloise in $\pgameabs$ for plays starting from $(q,\bot \stack a)$. For any partial play $\play$,
\begin{itemize}
\item if $\play$ does not contain a configuration of the form $(p,\bot \stack)$ then $\strat(\play)=\tau^{-1}(\strat_R(\tau(\play)))$;
\item otherwise let $\play = \play'\cdot(r,\bot \stack)\cdot \play''$ where $\play'$ does not contain any configuration of the form $(p,\bot \stack)$. From how $\strat$ is defined in the previous case, it is follows that $r\in R$. One finally sets $\strat(\play)=\strat_r((r,\bot \stack)\cdot\play'')$.
\end{itemize} 

It is then easy to check that $\strat$ is a winning strategy for \Eloise in $\pgameabs$ from $(q,\bot \stack a)$.
\end{proof}

\subsection{Reducing the Conditional Game}\label{section:recusingConditionalGame}

The main purpose of this section is to build a new parity game $\fgame$ whose winning region provides all the information needed to compute the sets $\mathcal{R}(q, a)$.  Moreover, in the underlying arena the vertices no longer encode stacks.

To help readability, we will use upper-case letters, \emph{e.g.} $\play$ or $\strat$, to denote objects (plays, strategies\dots) in $\pgameabs$, and lower-case letters, \emph{e.g.} $\fplay$ or $\fstrat$, to denote objects in $\fgame$.

For an infinite play $\play=v_0v_1\cdots$ in $\pgameabs$, let
$\Stepsg{\play}$ be the set of indices of positions where no
configuration of strictly smaller stack height is visited later in the
play. More formally, $\Stepsg{\play}=\{i\in\mathbb{N}\mid \forall
j\geq i\ \sh(v_j)\geq \sh(v_i)\}$, where $\sh((q,\bot a_1\cdots a_n))=n+1$ is the stack height. Note that $\Stepsg{\play}$ is always
infinite and hence induces a decomposition of the play $\play$ into infinitely many finite pieces.

In the decomposition induced by $\Stepsg{\play}$, a factor $v_i\cdots v_j$ is called a
\defin{bump} if $\sh(v_j)=\sh(v_i)$, called a \defin{Stair} otherwise (that is, if $\sh(v_j)=\sh(v_i)+1$ and $j=i+1$).

For any play $\play$ with $\Stepsg{\play}=\{n_0<n_1<\cdots\}$, we can define the sequence $(\pcol{\play}_i)_{i\geq
0}\in\mathbb{N}^{\mathbb{N}}$ by letting $\pcol{\play}_i=\min\{\col(v_k)\mid n_i\leq k\leq n_{i+1}\}$.
Obviously, this sequence fully characterises the parity condition.

\begin{proposition}\label{prop:trans_cond}
For every play $\play$, one has $\play\in\WC_\col$ iff $\liminf((\pcol{\play}_i)_{i\geq 0})$ is even.
\end{proposition}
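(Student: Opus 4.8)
The plan is to establish the proposition by proving the stronger fact that the two quantities governing the winning condition are literally equal, namely
\[
  \liminf_{i\geq 0}\,\pcol{\play}_i \;=\; \liminf_{k\geq 0}\,\col(v_k).
\]
Once this is shown, the statement follows immediately, since by definition $\play\in\WC_\col$ holds exactly when the right-hand $\liminf$ (the least colour occurring infinitely often along $\play$) is even. Write $\Stepsg{\play}=\{n_0<n_1<\cdots\}$ and set $c=\liminf_{k}\col(v_k)$. Recall that $c$ is the smallest colour occurring infinitely often, so there is an index $K$ with $\col(v_k)\geq c$ for all $k\geq K$, while $c$ itself is attained by $\col(v_k)$ for infinitely many $k$; both facts use that $\col$ takes values in the finite set $\colors$.

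First I would prove $\liminf_{i}\pcol{\play}_i\geq c$. Since $\Stepsg{\play}$ is infinite we have $n_i\to\infty$, so we may pick $I$ with $n_I\geq K$. Then for every $i\geq I$ the entire block $\{n_i,\dots,n_{i+1}\}$ consists of indices $\geq K$, whence every colour occurring in it is $\geq c$ and therefore $\pcol{\play}_i=\min\{\col(v_k)\mid n_i\leq k\leq n_{i+1}\}\geq c$. This yields the desired lower bound on the $\liminf$.

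For the matching upper bound I would exhibit infinitely many $i$ with $\pcol{\play}_i\leq c$. Each block $\{n_i,\dots,n_{i+1}\}$ is finite (the $n_i$ are strictly increasing), and every index $k\geq n_0$ lies in at least one block (taking the largest $i$ with $n_i\leq k$ gives $n_i\leq k< n_{i+1}$). Hence the infinitely many indices $k$ with $\col(v_k)=c$ cannot all be confined to finitely many finite blocks, so infinitely many blocks contain such a $k$, and for each of those $\pcol{\play}_i\leq c$. Combining with the previous paragraph, for infinitely many $i\geq I$ we have simultaneously $\pcol{\play}_i\geq c$ and $\pcol{\play}_i\leq c$, forcing $\pcol{\play}_i=c$ infinitely often while $\pcol{\play}_i\geq c$ eventually; thus $\liminf_{i}\pcol{\play}_i=c$, completing the equality.

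I do not expect a genuine obstacle here: the claim is a purely combinatorial fact about chunking a sequence valued in the finite set $\colors$ into the finite blocks delimited by $\Stepsg{\play}$ and replacing each block by its minimum, an operation that preserves the least value occurring infinitely often. The only points deserving a little care are that $\Stepsg{\play}$ is infinite (already noted in the text, which guarantees both that the blocks are finite and that they exhaust a tail of $\play$) and that consecutive blocks share their common endpoint $n_{i+1}$; the latter is harmless because the argument is carried out entirely at the level of $\liminf$, so the finitely many indices below $n_0$ and the overlaps at block boundaries may be freely ignored.
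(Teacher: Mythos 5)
Your proof is correct, and it establishes the slightly stronger fact that $\liminf_i \pcol{\play}_i = \liminf_k \col(v_k)$, which immediately yields the proposition. The paper itself offers no proof --- it remarks just before the statement that the sequence $(\pcol{\play}_i)_{i\geq 0}$ ``obviously'' characterises the parity condition --- and your chunking argument (blocks delimited by $\Stepsg{\play}$ are finite and exhaust a tail of the play, so taking blockwise minima preserves the least colour seen infinitely often) is precisely the routine verification the authors left implicit.
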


In the sequel, we build a new parity game $\fgame$ over a new arena $\fggraph=(\widetilde{V},\widetilde{E})$.
This game
\emph{simulates} the \sdgame, in the sense that the
sequence of visited colours during a \emph{correct} simulation of a play $\play$ in $\pgameabs$ is
exactly the sequence $(\pcol{\play}_i)_{i\geq 0}$. Moreover, a play in which
a player does not correctly simulate the \sdgame is losing
for that player. We will then show how the winning region in $\fgame$ permits to compute the sets $\{ a\in A\mid R\in\mathcal{R}(q, a)\}$.

Before providing a description of the arena
$\fggraph$, let us consider the following informal description of
this simulation game. We aim at simulating a play in the \sdgame from its initial configuration $(q_0,\bot)$. In $\fggraph$ we
keep track of only the control state and the top stack symbol of
the simulated configuration.

The interesting case is when the simulated play is in a configuration with control state $p$ and top
stack symbol $a$, and the player owning $p$ wants to perform transition $(q,a'b)$, \ie go to state $q$, rewrite $a$ into $a'$ and push $b$ on top of it. 
For every strategy of \Eloise, there is a certain set of possible
(finite) prolongations of the play (consistent with her strategy) that will end with popping
$b$ (or actually a symbol into which $b$ was rewritten in the meantime) from the stack. We require \Eloise to declare a vector
$\vect{R}=(R_0,\dots,R_\maxcolor)$ of $(d+1)$ subsets of
$Q$, where $R_i$ is the set of all
states the game can be in after popping (possibly a rewriting of) $b$ along those
plays where in addition the smallest visited colour whilst (possibly a rewriting of) $b$ was on
the stack is $i$.

\Abelard has two choices. He can continue the game by pushing
$b$ onto the stack and updating the state; we call this a
\defin{pursue move}. Otherwise, he can select a set $R_i$
and pick a state $r\in R_i$, and
continue the simulation from that state $r$; we call this a
\defin{jump move}. If he does a pursue move, then he remembers the
vector $\vect{R}$ claimed by \Eloise; if later on, a transition of the form $(r,\epsilon)$ is simulated, the play goes into a sink state (either $\ttrue$ or $\ffalse$)
that is winning for \Eloise if and only if the resulting state is in
$R_\theta$ where $\theta$ is the smallest colour seen in
the current level (this information will be encoded in the control
state, reseted after each pursue move and updated after each jump
move). If \Abelard
does a jump move to a state $r$ in $R_{i}$, the currently
stored value for $\theta$ is updated to $\min(\theta,i,\col(r))$,
which is the smallest colour seen since the current stack level was
reached.

\begin{figure}[htb]
\begin{center}
\begin{tikzpicture}[>=stealth',thick,scale=1,transform shape]
\tikzstyle{Adam}=[draw,inner sep=4]
\tikzstyle{Eve}=[draw,rounded rectangle,inner sep=4]
\tikzstyle{AnyPlayer}=[inner sep=4]
\node[AnyPlayer] (current) at (0,0) {$(p,a,\vect{R},\theta)$};
\node[Eve] (nextEve) at (0,-1.5) {$(p,a',\vect{R},\theta,q,b)$};
\node[Eve] (nextEveG) at (-3.5,-1.5) {\phantom{$(p,a',\vect{R},\theta,q,b)$}};
\node[Eve] (nextEveD) at (3.5,-1.5) {\phantom{$(p,a',\vect{R},\theta,q,b)$}};

\node at (3.4,-0.6) {\textcolor{teal}{$\forall (q,a'b)\in\Delta(p,a)$}};

\node at (0,2.5) {\textcolor{teal}{$\forall (q,a')\in\Delta(p,a)$}};

\node[AnyPlayer] (int) at (0,2) {$(q,a',\vect{R},\min(\theta,\col(q) ))$};

\node[Eve] (ntrue) at (-3,1) {$(\ttrue,a)$};
\node at (-5,1.6) {\textcolor{teal}{If $\exists (r,\epsilon)\in\Delta(p,a)$ s.t. $r\in R_\theta$}};

\node[Eve] (nfalse) at (3,1) {$(\ffalse,a)$};
\node at (4.5,1.6) {\textcolor{teal}{If $\exists (r,\epsilon)\in\Delta(p,a)$ s.t. $r\notin R_\theta$}};

\node[Adam] (nextAdam) at (0,-3.5) {$(p,a',\vect{R},\theta,q,b,\vect{R'})$};
\node[Adam] (nextAdamG) at (-3.5,-3.5) {\phantom{$(p,a',\vect{R},\theta,q,b,\vect{R'})$}};
\node[Adam] (nextAdamD) at (3.5,-3.5) {\phantom{$(p,a',\vect{R},\theta,q,b,\vect{R'})$}};

\node at (3.4,-2.6) {\textcolor{teal}{$\forall \vect{R'}\in (2^{Q})^{d+1}$}};

\node[AnyPlayer] (currentJ) at (-4,-6) {$(q,b,\vect{R'},\col(q))$};
\node[AnyPlayer] (currentB) at (0,-6) {$(s,a',\vect{R},\min(\theta,i,\col(r)),i)$};
\node[AnyPlayer] (currentC) at (5,-6) {$(s,a',\vect{R},\min(\theta,i,\col(r)))$};

\node at (0,-6.6) {\textcolor{teal}{$\forall s \in R'_i$}};

\path[->] (current) edge (nextEve);\path[->] (current) edge (nextEveG);\path[->] (current) edge (nextEveD);
\path[->] (current) edge (ntrue);\path[->] (current) edge (nfalse);
\path[->] (current) edge (int);
\path[->] (nextEve) edge (nextAdam);\path[->] (nextEve) edge (nextAdamG);\path[->] (nextEve) edge (nextAdamD);

\path[->] (nextAdam) edge node[above right] {}  (currentB);
\path[->] (nextAdam) edge node[above left] {} (currentJ);

\path[->] (ntrue) edge  [loop left, loop] node[left] {} (ntrue);
\path[->] (nfalse) edge  [loop right, loop] node[right] {} (nfalse);

\path[->] (currentB) edge node[above left] {} (currentC);

\end{tikzpicture}
\end{center}
\caption{Local structure of $\fggraph$.}\label{fig:graphe_reduit}
\end{figure}

Let us now precisely describe the arena
$\fggraph$. We refer the reader to Figure~\ref{fig:graphe_reduit}.

\begin{itemize}
\item The main vertices of $\fggraph$ are those of the form
$(p, a,\vect{R},\theta)$, where $p\in Q$, $ a\in
 A$, $\vect{R}=(R_0,\dots,R_\maxcolor)\in
(2^Q)^{\maxcolor+1}$ and $\theta\in\{0,\dots,\maxcolor\}$. A vertex $(p, a,\vect{R},\theta)$ is reached when
simulating a partial play $\play$ in $\pgameabs$ such that:
\begin{itemize}

\item The last vertex in $\play$ is $(p,\stack a)$ for some $\stack\in A^*$.

\item \Eloise claims that she has a strategy to continue $\play$ in such
  a way that if $a$ (or a rewriting of it) is eventually popped, the control state
  reached after popping belongs to $R_i$, where $i$ is
  the smallest colour visited since the stack height was at least $|\stack a|$.

\item The colour $\theta$ is the smallest one since the current stack level was reached from a lower stack level.
\end{itemize}

A vertex $(p, a,\vect{R},\theta)$ is controlled by \Eloise if
and only if $p\in Q_\Ei$.

\item The vertices $(\ttrue,a)$ and $(\ffalse,a)$ are here to ensure
  that the vectors $\vect{R}$ encoded in the main vertices are correct. Both are sink vertices and are controlled by \Eloise. Vertex $(\ttrue,a)$ gets colour $0$ and vertex $(\ffalse,a)$ gets colour $1$. 
As these vertices are sinks, a play reaching $(\ttrue,a)$ is won by \Eloise whereas a play reaching $(\ffalse,a)$ is won by \Abelard.

There is a transition from some vertex
$(p, a,\vect{R},\theta)$ to $(\ttrue,a)$, if and
only if there exists a transition rule $(r,\epsilon)\in\Delta(p, a)$,
such that $r\in R_{\theta}$ (this means that $\vect{R}$ is correct with respect
to this transition rule).
Dually, there is a transition from a vertex
$(p, a,\vect{R},\theta)$ to $(\ffalse,a)$
if and only if there exists a transition rule
$(r,\epsilon)\in\Delta(p, a)$ such that $r\notin R_{\theta}$ (this means that
$\vect{R}$ is not correct with respect to this transition rule).

\item To simulate a transition rule $(q,a')\in\Delta(p, a)$, the
  player that controls $(p, a,\vect{R},\theta)$ moves to
$(q, a',\vect{R},\min(\theta,\col(q)))$. Note that the
  last component has to be updated as the
  smallest colour seen since the current stack level was reached is now $\min(\theta,\col(q))$.

\item To simulate a transition rule
  $(q,a'b)\in\Delta(p, a)$, the player that controls
  $(p, a,\vect{R},\theta)$ moves to
  $(p, a',\vect{R},\theta,q, b)$. This vertex is
  controlled by \Eloise who has to give a vector
$\vect{R'}=(R'_0,\dots,R'_\maxcolor)\in
(2^Q)^{\maxcolor+1}$ that describes the control states that can be
  reached if $ b$ (or a symbol that rewrites it later) is eventually popped. To describe this vector,
  she goes to the corresponding vertex $(p, a',\vect{R},\theta,q, b,\vect{R'})$.

Any vertex $(p, a',\vect{R},\theta,q, b,\vect{R'})$ is
controlled by \Abelard who chooses either to simulate a bump or a
stair. In the first case, he additionally has to pick the minimal colour of the
bump. To simulate a bump with minimal
colour $i$, he goes to a vertex
$(r', a',\vect{R},\min(\theta,i,\col(s)))$, for some $r'\in
R'_i$, through an intermediate vertex $(r', a',\vect{R},\min(\theta,i,\col(s)),i)$ coloured by $i$.

To simulate a stair, \Abelard goes to the vertex
$(q, b,\vect{R'},\col(q))$.

The last component of the vertex (that stores the
smallest colour seen since the currently simulated stack level was
reached) has to be updated in all those cases. After simulating a bump
of minimal colour $i$, the minimal colour is
$\min(\theta,i,\col(r'))$. After simulating a stair, this colour has to
be initialised (since a new stack level is simulated). Its value, is
therefore $\col(q)$, which is the unique colour since the (new) stack
level was reached.

\end{itemize}

The vertices of the form $(p, a,\vect{R},\theta)$ get colours $\col(p)$. 
Intermediate vertices of the form $(p, a',\vect{R},\theta,q, b)$ or $(p, a',\vect{R},\theta,q, b,\vect{R'})$ get colours $\maxcolor$ and hence, will be neutral with respect to the parity condition.


The following lemma relates the winning region in $\fgame$ with $\pgameabs$ and the conditional games induced over $\pggraphabs$.

\begin{lemma}\label{lemma:games}
For every $p_0,q\in Q$ and $a\in A$ the following holds.
\begin{enumerate}
\item Configuration $(p_{0},\bot)$ is winning for \Eloise in $\pgameabs$ if and only if $(p_{0},\bot,(\emptyset,\dots,\emptyset),\col(p_{0}))$ is winning for \Eloise in $\fgame$.
\item For every $R\subseteq Q$, $R\in\mathcal{R}(q,a)$ if and only if $(q,a,(R,\dots,R),\col(q))$ is winning for \Eloise in $\fgame$.
\end{enumerate}
\end{lemma}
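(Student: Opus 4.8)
The plan is to establish a bidirectional simulation between plays of $\pgameabs$ and plays of $\fgame$, realised through memory-based strategies, in the same spirit as the proof of Lemma~\ref{lemma:removinglinks}. Throughout, the bridge between the two games is the decomposition of a play $\play$ of $\pgameabs$ induced by $\Stepsg{\play}$ into bumps and stairs, together with Proposition~\ref{prop:trans_cond}: one phase in $\fgame$ corresponds either to an internal rewriting step, to a stair (a push that opens a new level), or to an entire bump of $\pgameabs$; and the colour $\theta$ carried in the control state of $\fgame$ is designed to track exactly the running minimum $\min\{\col(v_k)\mid n_i\le k\}$ between consecutive step positions. I would prove both parts simultaneously: part~(2) corresponds to simulating plays of the conditional game $\pgameabs(R)$ from $(q,\bot a)$, where popping the symbol $a$ reaches an empty-stack configuration and the vector $(R,\dots,R)$ encodes the acceptance set $R$; part~(1) is the special case where the simulated symbol is $\bot$, which is never popped, so the initial vector $(\emptyset,\dots,\emptyset)$ is irrelevant and the simulation runs over the full game $\pgameabs$ from $(p_0,\bot)$.

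First I would treat the implication from $\pgameabs$ to $\fgame$. Fixing a winning strategy $\strat$ for \Eloise (in $\pgameabs$ for part~(1), in $\pgameabs(R)$ from $(q,\bot a)$ for part~(2)), I build $\fstrat$ so that it maintains, as memory, a partial play $\play$ of $\pgameabs$ in which \Eloise respects $\strat$ and whose last configuration has control state and top symbol matching the current main vertex of $\fgame$. Whenever the simulation of a push $(q,a'b)$ requires \Eloise to declare a vector $\vect{R}'=(R'_0,\dots,R'_\maxcolor)$, she computes it from $\strat$ exactly as the vector was computed in the proof of Lemma~\ref{lemma:removinglinks}: $R'_i$ is the set of states reachable, in a $\strat$-consistent continuation of $\play$, when the pushed symbol (or a rewriting of it) is first popped, with $i$ being the smallest colour seen while it sat on the stack. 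A pursue move of \Abelard extends $\play$ by the corresponding push; a jump move to $r\in R'_i$ extends $\play$ by a $\strat$-consistent continuation realising that pop with minimal colour $i$, which exists by the very definition of $\vect{R}'$. At a sink choice, the invariant guarantees that popping leads to a state in $R_\theta$, so \Eloise can always move to $(\ttrue,a)$ and \Abelard can never reach $(\ffalse,a)$. An infinite $\fstrat$-play that avoids the sinks then projects to an infinite $\strat$-play $\play$ whose colour sequence is precisely $(\pcol{\play}_i)_{i\ge 0}$, so by Proposition~\ref{prop:trans_cond} it is won by \Eloise.

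For the converse, from a winning strategy $\fstrat$ in $\fgame$ I would construct $\strat$ in $\pgameabs$ maintaining a partial play $\fplay$ of $\fgame$ as memory. \Eloise translates the $\fgame$-moves prescribed by $\fstrat$ back into $\pgameabs$-moves; when a push occurs, the move of $\fstrat$ supplies the vector $\vect{R}'$, which is recorded. When in $\pgameabs$ a symbol is popped and the play returns to a strictly lower stack level (a completed bump), \Eloise backtracks inside $\fplay$ to the main vertex at which that symbol was pushed and follows \Abelard's jump move to the state just reached, through the intermediate vertex carrying the minimal colour of the bump. The crucial invariant is that this state always lies in the relevant component $R'_i$: were it not, \Abelard could in $\fgame$ have driven the play to $(\ffalse,a)$ via the corresponding sink transition, contradicting that $\fstrat$ is winning. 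The colour sequence of the resulting play $\play$ again matches the sequence read in $\fplay$, which equals $(\pcol{\play}_i)_{i\ge 0}$, so Proposition~\ref{prop:trans_cond} transfers the parity win. For part~(2), reaching $(\ttrue,a)$ at the top level corresponds exactly to popping $a$ into a state of $R$, i.e.\ to the winning condition of the conditional game $\pgameabs(R)$.

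The main obstacle is the colour bookkeeping in the $\fgame\to\pgameabs$ direction. One must verify, by a factorisation of the play analogous to the one used in the proof of Lemma~\ref{lemma:removinglinks}, that each jump move of $\fgame$ really corresponds to a finite bump of $\pgameabs$ and that the update rule for $\theta$ (reset to $\col(q)$ after a stair, and set to $\min(\theta,i,\col(r'))$ after a bump of minimal colour $i$) reproduces exactly the running minimum between consecutive elements of $\Stepsg{\play}$; this is what makes the two colour sequences coincide and lets Proposition~\ref{prop:trans_cond} do its work. A secondary point requiring care is that the strategies are well defined at sink choices, which reduces to the promise-consistency invariant above and to Martin's determinacy \cite{Martin75}, which guarantees that the conditional games, hence the sets $\mathcal{R}(q,a)$, are well defined.
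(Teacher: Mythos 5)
Your proposal follows essentially the same route as the paper's proof: both directions are handled by memory-based simulations in which the declared vectors $\vect{R'}$ are extracted from $\strat$-consistent continuations, pops are handled by backtracking in the memory play $\fplay$, the sink invariant is justified exactly as you do (no edge to $(\ffalse,a)$ from a winning vertex, and an edge to $(\ttrue,a)$ when \Eloise must pop), and the colour bookkeeping is settled by matching the rounds factorisation of $\fplay$ with the bump/stair factorisation induced by $\Stepsg{\play}$ before invoking Proposition~\ref{prop:trans_cond}. The only cosmetic difference is that you treat item~(2) explicitly via the conditional game $\pgameabs(R)$ with constant vector $(R,\dots,R)$, whereas the paper proves item~(1) in detail and observes that item~(2) is a subpart of the same argument — the content is the same.
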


\begin{remark}
Note that the above lemma is proved in~\cite[Theorem~5.1]{SerrePHD} in the case of usual pushdown automata, \ie when $A$ is finite as remarked in Example~\ref{example:abstractPDA}. A careful analysis of that proof shoes that it does not make use of the fact that $A$ is finite and therefore the proof of Lemma~\ref{lemma:games} could be skipped. Nevertheless, we give it below for completeness and also because we need a careful analysis later when dealing with the regularity of the winning configuration and when constructing a $(n-1)$-transducer realising a winning strategy (in Theorem~\ref{theo:reducingOrder} below).
\end{remark}


The rest of the section is devoted to the proof of Lemma~\ref{lemma:games}. We mainly focus on the proof of the first item, the proof of the second one being a subpart of it.
We start by introducing some useful concept and then prove both implications.

\subsubsection{Factorisation of plays in $\pgameabs$ and in $\fgame$}\saut

Recall that for an infinite play $\play=v_0v_1\cdots$ in $\pgameabs$, 
$\Stepsg{\play}$ denotes the set of indices of positions where no
configuration of strictly smaller stack height is visited later in the
play. 
Recall that for any play $\play$ with $\Stepsg{\play}=\{n_0<n_1<\cdots\}$, we define the sequence $(\pcol{\play}_i)_{i\geq
0}\in\mathbb{N}^{\mathbb{N}}$ by letting $\pcol{\play}_i=\min\{\col(v_k)\mid n_i\leq k\leq n_{i+1}\}$.

Indeed, for any play $\play$ with $\Stepsg{\play}=\{n_0<n_1<\cdots\}$, one can define the sequence $(\play_i)_{i\geq
0}$ by letting ${\play}_i=v_{n_i}\cdots v_{n_{i+1}}$. Note that each of the $\play_i$ is either a bump or a stair.
In the later we designate $(\play_i)_{i\geq 0}$ as the \concept{rounds factorisation} of $\play$.


For any play $\fplay$ in $\fgame$, a \defin{round} is a factor between two
visits through vertices of the form
$(p, a,\vect{R},\theta)$. We have the following possible forms for
a round.

\begin{itemize}
\item The round is of the form
  $(p, a,\vect{R},\theta)(q, a',\vect{R},\theta)$
  and corresponds therefore to the simulation of a transition $(q,a')$. We
  designate it as a \concept{trivial bump}.

\item The round is of the form 
$(p, a,\vect{R},\theta)(p, a',\vect{R},\theta,q, b)(p, a',\vect{R},\theta,q, b,\vect{R'})(s, a',\vect{R},\min(\theta,i,$\linebreak $\col(s)),i)(s, a',\vect{R},\min(\theta,i,\col(s)))$ and
corresponds therefore to the simulation of a transition $(q,a'b)$ pushing $b$
followed by a sequence of moves that ends by popping $b$ (or a rewriting of it). Moreover, $i$ is the smallest colour encountered whilst $b$ (or other stack symbol obtained by successively rewriting it) was on the stack. We designate it as a \concept{(non-trivial) bump}.

\item The round is of the form
$(p, a,\vect{R},\theta)(p, a',\vect{R},\theta,q, b)(p, a',\vect{R},\theta,q, b,\vect{R'})(q, b,\vect{R'},\col(q))$ and
corresponds therefore to the simulation of a transition $(q,a'b)$ pushing a symbol $ b$
leading to a new stack level below which the play will never go. We designate it as a \concept{stair}.
\end{itemize}

We define the \defin{colour} of a round as the smallest colour of the vertices in the round.

For any play $\fplay=v_0v_1v_2\cdots$ in $\fgame$, we consider the
subset of indices corresponding to vertices of the form
$(p, a,\vect{R},\theta)$. More precisely:
\begin{align*}
\Rounds{\fplay}=\{n\mid v_n=(p, a,\vect{R},\theta),\ p\in Q,\
 a\in A, \vect{R}\in(2^Q)^{\maxcolor+1},\
0\leq\theta\leq\maxcolor\}
\end{align*}

The set $\Rounds{\fplay}$ induces a natural factorisation of $\fplay$ into rounds. 
Indeed,  let $\Rounds{\fplay}=\{n_0<n_1<n_2<\cdots\}$, then for all $i\geq 0$, 
we let $\fplay_i=v_{n_i}\cdots v_{n_{i+1}}$. 
We call the sequence $(\fplay_i)_{i\geq 0}$ the \concept{round factorisation} of $\fplay$. 
For every $i\geq 0$, $\fplay_i$ is a round and the first vertex in $\fplay_{i+1}$ equals the last one in $\fplay_i$. Moreover,
$\fplay=\fplay_0\odot\fplay_1\odot\fplay_2\odot\cdots$, where $\fplay_i\odot\fplay_{i+1}$ denotes the concatenation of $\fplay_i$ with $\fplay_{i+1}$ without its first vertex.

In order to prove both implications of Lemma~\ref{lemma:games}, we
build from a winning strategy for \Eloise in one game a winning strategy
for her in the other game. The main argument to prove that the new
strategy is winning is to prove a correspondence between the
factorisations of plays in both games.

\subsubsection{Direct implication}\saut

Assume that the configuration $(p_{0},\bot)$ is winning for \Eloise in $\pgameabs$,
and let $\strat$ be a corresponding winning strategy for her.

Using $\strat$, we define a strategy $\fstrat$ for \Eloise in
$\fgame$ from $(p_{0},\bot,(\emptyset,\dots,\emptyset),\col(p_{0}))$.
The strategy $\fstrat$ maintains as a memory a partial play $\play$ in $\pgameabs$. At the beginning $\play$ is initialised to
the vertex $(p_{0},\bot)$. We first describe $\fstrat$, and then we
explain how $\play$ is updated. Both the strategy $\fstrat$ and the
update of $\play$, are described for a round.

\noindent\textbf{Choice of the move. } Assume that the play is in some
vertex $(p, a,\vect{R},\theta)$ for $p\in Q_\Ei$. The
move given by $\fstrat$ depends on $\strat(\play)$:
\begin{itemize}
\item If $\strat(\play)=(r,\epsilon)$, then \Eloise goes to $(\ttrue,a)$ (Proposition~\ref{prop:par_dir_dep_paritexp} will  prove that this move is always possible).
\item If $\strat(\play)=(q,a')$, then \Eloise goes to $(q,a',\vect{R},\min(\theta,\col(q)))$.
\item If $\strat(\play)=(q,a'b)$, then \Eloise goes to $(p, a',\vect{R},\theta,q, b)$.
\end{itemize}

In this last case, or in the case where $p\in Q_\Ai$ and \Abelard goes to $(p, a',\vect{R},\theta,q, b)$, we also have to explain
how \Eloise behaves from
$(p, a',\vect{R},\theta,q, b)$. She has to provide a
vector $\vect{R'}\in (2^Q)^{\maxcolor+1}$ that describes which states
can be reached if $b$ (or its successors by top rewriting) is eventually popped, depending on the smallest visited colour in the
meantime. In order to define $\vect{R'}$, \Eloise considers the set of
all possible continuations of $\play\cdot(q, \stack a' b)$ (where
$(p, \stack a)$ denotes the last vertex of $\play$) where she
respects her strategy $\strat$. For each such play, she checks whether some
configuration of the form $(r', \stack a')$ is visited after $\play\cdot
(q, \stack a' b)$, that is if the stack level of $b$ is eventually left. If it
is the case, she considers the first configuration $(r', \stack a')$
appearing after $\play\cdot (q, \stack a' b)$ and the smallest
colour $i$ since
$b$ and (possibly) its successors by top-rewriting were on the stack.
For every $i\in\{0,\dots, d\}$, $R'_i$ is exactly the set of states
$r'\in Q$ such that the preceding case happens. 
More formally, 
\begin{multline*}
R'_i=\{r'\mid \exists\ \play\cdot(q, \stack a' b)v_0\cdots v_k(r', \stack a')\cdots\text{ play in } \pgameabs \text{ where \Eloise respects } \strat \text{ and} \\ \text{s.t. } |v_j|\geq| \stack a'b|,\ \forall j=0,\dots,k  \text{ and }\min(\{\col(v_j)\mid j=0,\dots,k\}\cup\{\col(q)\})=i\}
\end{multline*}
Finally, we let $\vect{R'}=(R'_0,\dots,R'_\maxcolor)$ and \Eloise moves to $(p, a',\vect{R},\theta,q, b,\vect{R'})$.

\noindent\textbf{Update of $\play$. } The memory $\play$ is updated after
each visit to a vertex of the form $(p, a,\vect{R},\theta)$.
We have three cases depending on the kind of the last round:

\begin{itemize}
\item The round is a trivial bump and therefore a $(q,a')$
  transition was simulated. Let $(p, \stack a)$ be the last vertex in
  $\play$, then the updated memory is $\play\cdot(q, \stack a')$.

\item The round is a bump, and therefore a bump
  of colour $i$ (where $i$ is the colour of the round) starting with some
  transition $(q,a'b)$ and ending in a state $r'\in R'_i$ was simulated. Let $(p, \stack a)$ be the last vertex in
  $\play$. Then the memory becomes $\play$ extended by
  $(q, \stack a' b)$ followed by a sequence of moves, where \Eloise
  respects $\strat$, that ends by popping $b$ and reaches
  $(r', \stack a')$ whilst visiting $i$ as smallest colour. By definition of $R'_i$ such a sequence of moves always exists.

\item The round is a stair and therefore we have simulated a
  $(q,a' b)$ transition. If $(p, \stack a)$ denotes the last
  vertex in $\play$, then the updated memory is $\play\cdot (q, \stack a' b)$.

\end{itemize}

Therefore, with any partial play $\fplay$ in $\fgame$ in which \Eloise
respects her strategy $\fstrat$, is associated a partial play $\play$ in
$\pgameabs$. An immediate induction shows that \Eloise respects $\strat$ in
$\play$. The same arguments works for an infinite play $\fplay$, and the
corresponding play $\play$ is therefore infinite, starts from
$(p_{0},\bot)$ and \Eloise respects $\strat$ in that play. Therefore it is
a winning play.

The following proposition is a direct consequence of how $\fstrat$ was defined.

\begin{proposition}\label{prop:par_dir_dep_paritexp}
Let $\fplay$ be a partial play in $\fgame$ that starts from
$(p_{0},\bot,(\emptyset,\dots,\emptyset),\col(p_{0}))$,
ends in a vertex of the form $(p, a,\vect{R},\theta)$,
and where \Eloise respects $\fstrat$. Let $\play$ be the partial play associated with $\fplay$
built by the strategy $\fstrat$. Then the following holds:
\begin{enumerate}
\item $\play$ ends in a vertex of the form $(p, \stack a)$ for some $\stack \in A^*$.

\item $\theta$ is the smallest visited colour in $\play$ since
  $a$ (or a symbol that was later rewritten as $a$) has been pushed.

\item Assume that $\play$ is extended, that \Eloise keeps respecting
  $\strat$ and that the next move after $(p, \stack a)$ is to some
  vertex $(r, \stack)$. Then $r\in R_\theta$.
\end{enumerate}
\end{proposition}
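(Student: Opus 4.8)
The plan is to prove the three assertions simultaneously by induction on the length of the round factorisation of $\fplay$, \ie on the cardinality of $\Rounds{\fplay}$. The base case is when $\fplay$ is the single vertex $(p_{0},\bot,(\emptyset,\dots,\emptyset),\col(p_{0}))$, for which $\play=(p_{0},\bot)$: item~1 holds with $\stack=\epsilon$ and $a=\bot$; item~2 holds because $\play$ consists of the single configuration of colour $\col(p_{0})=\theta$; and item~3 is vacuous since $\bot$ can never be popped. For the inductive step I would assume the statement for a play $\fplay$ ending in $(p,a,\vect{R},\theta)$ with associated memory $\play$ ending in $(p,\stack a)$, and analyse the three possible shapes of the round that extends $\fplay$ to $\fplay'$ (trivial bump, bump, stair), using the corresponding clause of the ``Update of $\play$'' description to obtain the extended memory $\play'$.

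Items~1 and~2 are routine bookkeeping in each case. For item~1 one simply reads off from the update rule that the control state and top symbol of the last configuration of $\play'$ coincide with the first two components of the new main vertex: $(q,\stack a')$ for a trivial bump simulating $(q,a')$, $(s,\stack a')$ for a bump returning in state $s$, and $(q,\stack a' b)$ for a stair simulating $(q,a'b)$. For item~2 one tracks the colours appended to $\play$ during the round and checks they match the update of the last component of the main vertex: a trivial bump appends the single colour $\col(q)$ and sets $\theta'=\min(\theta,\col(q))$; a stair resets the last component to $\col(q)$, which is correct since a brand new stack level is created whose only colour so far is $\col(q)$; and a bump appends the push-colour $\col(q)$, an excursion of minimal colour $i$ and the return-colour $\col(s)$, so that the smallest colour since the level was created becomes $\min(\theta,i,\col(s))=\theta'$ (here it is important that the index $i$ of the bump was, by the definition of $R'_{i}$, taken to include $\col(q)$).

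The delicate point is item~3, which is exactly what guarantees that \Eloise's move to $(\ttrue,a)$ in the definition of $\fstrat$ is legal. The key observation to carry through the induction is that the vector $\vect{R}$ labelling a main vertex is never modified while its stack level is alive: it is introduced once, either as $(\emptyset,\dots,\emptyset)$ at the very bottom level or by \Eloise at the stair that created the level (as the vector $\vect{R'}$), and is then transported unchanged across every trivial bump and bump until the symbol of that level is popped. By construction of $\fstrat$, at the moment $\vect{R}$ was declared its $i$-th component was set to be exactly the set of states reachable, along a $\pgameabs$-play in which \Eloise respects $\strat$, by popping the freshly pushed symbol (or a later rewriting of it) with smallest intermediate colour equal to $i$ and without ever going below that level. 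Now suppose $\play$, extended while \Eloise keeps respecting $\strat$, pops $a$ by a move $(r,\stack)$. Since \Eloise respects $\strat$ throughout $\play$ (an immediate sub-induction, already noted before the statement), this extended play is one of the plays quantified over in the definition of $R_\theta$; by item~2 its smallest colour since the level was created is precisely $\theta$, and its popping state is $r$, whence $r\in R_\theta$. For the initial bottom level the vector is empty but $a=\bot$ is never popped, so no contradiction arises.

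The main obstacle is the bookkeeping hidden in item~3: one must be careful that the stack-height side conditions ($|v_j|\geq |\stack a' b|$) in the definition of $R'_i$ exactly isolate the life of the relevant level, that the colour $\col(q)$ of the pushing state is counted in the minimum, and that the ``smallest colour since the level was created'' computed by the last component $\theta$ of the main vertex (item~2) coincides with the index $i$ used to select $R'_i$. Once this matching of colours and the invariance of $\vect{R}$ along a level are established, item~3 --- and hence the legality of the move to $\ttrue$ --- follows directly from the defining property of the declared vectors.
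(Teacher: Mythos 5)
Your proof is correct, and it follows exactly the route the paper intends: the paper states Proposition~\ref{prop:par_dir_dep_paritexp} without proof, as ``a direct consequence of how $\fstrat$ was defined'', and your round-by-round induction (matching the three round shapes against the ``Update of $\play$'' clauses, observing that $\vect{R}$ is invariant along a stack level, and invoking the defining property of the declared vectors $\vect{R'}$ --- including the height condition $|v_j|\geq|\stack a'b|$ and the inclusion of $\col(q)$ in the minimum --- for item~3) is precisely the elaboration the authors leave implicit. Nothing is missing; your treatment of the bottom level via the convention that $\bot$ is never popped is also the right way to handle the vacuous case of item~3.
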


Proposition~\ref{prop:par_dir_dep_paritexp} implies that the
strategy $\fstrat$ is well defined when it provides a move to some 
$(\ttrue,a)$. Moreover, one can deduce that, if \Eloise respects $\fstrat$, no vertex of the form $(\ffalse,a)$ is reached.

For plays that never reach a sink vertex $(\ttrue,a)$, using the definitions of $\fggraph$ and $\fstrat$, we easily deduce the
following proposition.

\begin{proposition}\label{prop:colorsFact}
Let $\fplay$ be a play in $\fgame$ that starts from
$(p_{0},\bot,(\emptyset,\dots,\emptyset),\col(p_{0}))$,
and where \Eloise respects $\fstrat$. Assume that $\fplay$ never visits $\ttrue$, let $\play$ be the associated
play built by the strategy $\fstrat$, and let $(\play_i)_{i\geq 0}$ be its rounds factorisation. Let $(\fplay_i)_{i\geq 0}$ be
the rounds factorisation of $\fplay$. Then, for every $i\geq 0$ the
following hold:
\begin{enumerate}
\item $\fplay_i$ is a bump if and only if $\play_i$ is a bump

\item $\fplay_i$ has colour $\pcol{\play}_i$.
\end{enumerate}
\end{proposition}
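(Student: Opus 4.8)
The plan is to prove both items simultaneously by induction on $i$, exploiting the fact that $\fstrat$ was designed precisely so that each round $\fplay_i$ of $\fplay$ records, in the memory $\play$, exactly one round $\play_i$ of the associated play. Writing $\play^{(i)}$ for the value of the memory after \Eloise has played the first $i$ rounds of $\fplay$, I would first set up the invariant that $\play^{(i)}$ is a prefix of $\play$ whose last vertex is $v_{n_i}$, where $\Stepsg{\play}=\{n_0<n_1<\cdots\}$, and whose stack height $\sh(v_{n_i})$ equals the level simulated in the main vertex $(p,a,\vect{R},\theta)$ reached at the end of $\fplay_i$ (that $\play^{(i)}$ ends in $(p,\stack a)$ is already Proposition~\ref{prop:par_dir_dep_paritexp}(1)). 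The base case is immediate since $\play^{(0)}=(p_{0},\bot)$ and $n_0=0$. The inductive step is a case analysis on the three shapes a round of $\fplay$ may take, reading off from the definition of the memory update which segment is appended: a single rewriting move for a trivial bump, a push immediately followed by a matching pop sequence for a non-trivial bump, and a single push for a stair.

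The crucial and genuinely subtle point is to verify that the endpoint of each appended segment is a \emph{genuine} element of $\Stepsg{\play}$, i.e. a position below whose height the play never descends again, and that no internal position of the segment is such an element. This is where the hypothesis that $\fplay$ never visits $\ttrue$, together with Proposition~\ref{prop:par_dir_dep_paritexp} (which already guarantees that $\ffalse$ is never reached either), is indispensable: in $\fgame$ the only way to simulate popping the symbol of the currently simulated level is to fire a transition of the form $(r,\epsilon)$, which forces a move into one of the sinks $\ttrue$ or $\ffalse$. Since neither sink is ever visited, no round of $\fplay$ ever simulates such a pop directly, so every pop performed in $\play$ occurs strictly inside a self-contained push/pop segment that returns to its starting level while staying strictly above it in between. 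Consequently the two endpoints of each appended segment are consecutive steps of $\play$, no interior position is a step, and $\play_i=v_{n_i}\cdots v_{n_{i+1}}$ is exactly the segment appended during $\fplay_i$. This yields item~(1): a trivial or non-trivial bump in $\fgame$ appends a segment returning to the same height (a bump of $\play$), whereas a stair appends a single push that is never undone (a stair of $\play$).

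Item~(2) then follows by matching colours round by round against the colouring of $\fggraph$. Since the intermediate vertices of the forms $(p,a',\vect{R},\theta,q,b)$ and $(p,a',\vect{R},\theta,q,b,\vect{R'})$ carry the neutral colour $\maxcolor$ and can never lower a minimum, in each case the colour of $\fplay_i$ reduces to a minimum over the genuinely coloured vertices. For a trivial bump this is $\min(\col(p),\col(q))$, which matches $\pcol{\play}_i=\min(\col(v_{n_i}),\col(v_{n_i+1}))$, and the stair computation is identical. For a non-trivial bump the colour of $\fplay_i$ is $\min(\col(p),i,\col(r'))$, and here one uses that $i$ is, by the very definition of the vector $\vect{R'}$ selected by $\fstrat$, the smallest colour visited while the pushed symbol (and its rewritings) sat on the stack, taken together with $\col(q)$; since $\play_i$ consists of $(p,\stack a)$, the push $(q,\stack a'b)$, the interior configurations of the bump, and the popping vertex $(r',\stack a')$, its minimal colour $\pcol{\play}_i$ is precisely $\min(\col(p),i,\col(r'))$.

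The main obstacle, as flagged above, is the global step argument of the second paragraph: $\Stepsg{\play}$ is a property of the whole infinite play, whereas the memory is built one round at a time, so one must rule out that a future round retroactively destroys the step structure already recorded. The no-sink hypothesis is exactly what forbids this, and stating that implication cleanly is the heart of the proof; once the round correspondence $\play_i\leftrightarrow\fplay_i$ is pinned down, the bump/stair bookkeeping and the colour computation are routine.
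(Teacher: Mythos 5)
Your proof is correct and follows exactly the route the paper intends: the paper dismisses this proposition as an easy consequence of the definitions of $\fggraph$ and $\fstrat$, and your round-by-round induction — identifying the end-of-round positions with $\Stepsg{\play}$, using the no-sink hypothesis to rule out any simulated $(r,\epsilon)$ pop of the current level, and then matching colours case by case (including $\col(q)$ in the minimum defining $i$ for non-trivial bumps) — is precisely the formalisation of that "easily deduce". Nothing in your argument deviates from or goes beyond what the construction already provides, so there is no gap to flag.
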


Now consider a play
$\fplay$ in $\fgame$ starting from
$(p_{0},\bot,(\emptyset,\dots,\emptyset),$ $\col(p_{0}))$
where \Eloise respects $\fstrat$. Either $\fplay$ loops in some $(\ttrue,a)$ (hence, is won by \Eloise). 
Or, thanks to Proposition~\ref{prop:colorsFact} the sequence of visited colours in $\fplay$ is $(\pcol{\play}_i)_{i\geq 0}$ for the corresponding play $\play$
in $\pgameabs$. Hence, using Proposition~\ref{prop:trans_cond} we conclude that $\fplay$ is winning if and only if $\play$ is winning; as $\play$ is winning for \Eloise, it follows that $\fplay$ is winning for her as well.

\subsubsection{Converse implication}\label{proof:theogamesconverse}\saut

First note that in order to prove the converse implication one could follow the same approach as for the direct implication by considering now the point of view of \Abelard. Nevertheless the proof we give here starts from a winning strategy for \Eloise in $\fgame$ and constructs a strategy for her in $\pgameabs$: this induces a more involved proof but has the advantage of leading to an effective construction of a winning strategy for \Eloise in $\pgameabs$ if one has an effective winning strategy for her in $\fgame$.

Assume now that \Eloise has a winning strategy $\fstrat$ in $\fgame$
from $(p_{0},\bot,(\emptyset,\dots,\emptyset),\col(p_0))$.
Using $\fstrat$, we build a strategy $\strat$ for \Eloise in
$\pgameabs$ for plays starting from $(p_{0},\bot)$.

The strategy $\strat$ maintains as a memory a partial play $\fplay$ in $\fgame$, that is an element in $\widetilde{V}^*$. At the beginning $\fplay$ is initialised to $(p_0,\bot,(\emptyset,\dots,\emptyset),\col(p_0))$.

For any play $\play$ where \Eloise respects $\strat$ the following will hold.
\begin{itemize}
\item $\fplay$ is a play in $\fgame$ that starts from $(p_{0},\bot,(\emptyset,\dots,\emptyset),\col(p_0))$ and where \Eloise respects her winning strategy $\fstrat$.
\item The last vertex of $\fplay$ is some $(p, a,\vect{R},\theta)$ if and only if the current configuration in $\play$ is of the form $(p, \stack a)$.
\item If \Eloise keeps respecting $\strat$, and if $ a$ (or a symbol that rewrites it later) is eventually popped the configuration reached will be of the form $(r, \stack)$ for some $r\in R_i$, where $i$ is the smallest visited colour since $ a$ (or some symbol that was later rewritten as $ a$) was on the stack.
\end{itemize}

Note that initially the previous invariants trivially hold.

In order to describe $\strat$, we assume that we are in some
configuration $(p, \stack a)$ and that the last vertex of $\fplay$ is some $(p, a,\vect{R},\theta)$. We first
describe how \Eloise plays if $p\in Q_\Ei$, and then we explain how
$\fplay$ is updated.

\noindent\textbf{Choice of the move.} Assume that $p\in Q_\Ei$. Then the move given by $\strat$ depends on $\fstrat(\fplay)$.
\begin{itemize}
\item If $\fstrat(\fplay)=(q, a',\vect{R},\min(\theta,\col(q)))$, \Eloise plays transition $(q,a')$.
\item If $\fstrat(\fplay)=(p, a',\vect{R},\theta,q, b)$, then \Eloise applies plays transition $(q,a'b)$.
\item If $\fstrat(\fplay)=(\ttrue,a)$, \Eloise plays transition $(r,\epsilon)$ for some state $r\in R_\theta$. Lemma~\ref{ini:lemma:games:ReturningSets_paritexp} will prove that such an $r$ always exists.
\end{itemize}

\noindent\textbf{Update of $\fplay$.} 
The memory $\fplay$ is updated after each move (played by any of the two players). We have several cases depending on the last transition.
\begin{itemize}
\item If the last move was from $(p, \stack a)$ to $(q, \stack a')$ then the updated memory is $\fplay\cdot(q, a',\vect{R},\min(\theta,\col(q)))$.
\item If the last move was from $(p, \stack a)$ to $(q, \stack a' b)$, let $(p, a',\vect{R},\theta,q, b,\vect{R'})=\fstrat(\fplay\cdot(p, a',\vect{R},\theta,q, b))$.
Then 
the updated memory is $\fplay\cdot(p, a',\vect{R},\theta,q, b)\cdot(p, a',\vect{R},\theta,q, b,\vect{R'})\cdot(q, b,\vect{R'},\col(q))$.
\item If the last move was from $(p, \stack a)$ to $(r, \stack)$ the update of $\fplay$ is as follows. 
One backtracks in $\fplay$ until one finds a configuration of the form
$(p', a',\vect{R'},\theta',p'', a'',\vect{R})$ that is not immediately followed by a vertex of the form $(s, a'',\vect{R},\theta'',i)$. This configuration is therefore in the stair that simulates the pushing
of $ a''$ onto the stack (here if $ a''\neq a$ it simply means that $ a''$ was later rewritten as $ a$). Call $\fplay'$ the prefix of $\fplay$ ending in this configuration. The updated memory is $\fplay'\cdot (r,a',\vect{R'},\min(\theta',\theta,\col(r)),\theta)\cdot (r,a',\vect{R'},\min(\theta',\theta,\col(r)))$.
Formally, write $\fplay=\fplay_0\odot\fplay_1\odot\cdots\odot\fplay_k$ where $(\fplay_i)_{0\leq i\leq k}$ is the round factorisation of $\fplay$. Let $h\leq k$ be the largest integer such that $\fplay_h$ is a stair and let $\fplay_h = (p', a',\vect{R'},\theta')(p', a',\vect{R'},\theta',p'', a'')(p', a',\vect{R'},\theta',p'', a'',\vect{R})(p'',a'',\vect{R},\col(p''))$.
Define $\fplay_h' = (p', a',\vect{R'},\theta')(p', a',\vect{R'},\theta',p'', a'')(p', a',\vect{R'},\theta',p'', a'',\vect{R})(r,a',\vect{R'},\min(\theta',\theta,\col(r)),\theta)\cdot (r,a',\vect{R'},\min(\theta',\theta,\col(r)))$.
Then the updated memory is $\fplay_1\odot\fplay_2\odot\cdots\odot\fplay_{h-1}\odot \fplay_h'$.
\end{itemize}

The following lemma gives the meaning of the information stored
in $\fplay$.

\begin{lemma}\label{ini:lemma:games:ReturningSets_paritexp}
Let $\play$ be a partial play in $\pgameabs$, where \Eloise respects
$\strat$, that starts from $(p_{0},\bot)$ and ends in a configuration $(p, \stack a)$. We have the
following facts:

\begin{enumerate}

\item The last vertex of $\fplay$ is of the form $(p, a,\vect{R},\theta)$ with
$\vect{R}\in(2^Q)^{\maxcolor+1}$ and $0\leq\theta\leq\maxcolor$.

\item $\fplay$ is a partial play in $\fgame$ that starts
from $(p_{0},\bot,(\emptyset,\dots,\emptyset),\col(p_0))$,
that ends with $(p, a,\vect{R},\theta)$ and where
\Eloise respects $\fstrat$.

\item $\theta$ is the smallest colour visited since $ a$ (or some symbol that was later rewritten as $ a$) was
pushed.

\item If $\play$ is extended by some move
that pops $ a$, the configuration $(r, \stack)$ that is reached
is such that $r\in R_\theta$.
\end{enumerate}
\end{lemma}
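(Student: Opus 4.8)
The plan is to establish items (1)--(4) simultaneously, as a single invariant maintained along the construction of $\strat$ and its memory $\fplay$, by induction on the length of the partial play $\play$. For the base case $\play=(p_{0},\bot)$ the memory is initialised to $\fplay=(p_{0},\bot,(\emptyset,\dots,\emptyset),\col(p_{0}))$, so items (1) and (2) hold by definition, item (3) holds because $p_0$ is the only state seen so far, and item (4) is vacuous since $\bot$ is never popped. The whole point of carrying all four clauses together is that clause (4) is \emph{consumed} in one case of the induction while being \emph{re-proved} in every case.

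For the inductive step I would assume $\play$ ends in $(p,\stack a)$ with $\fplay$ ending in $(p,a,\vect{R},\theta)$ and satisfying (1)--(4), and consider the three possible shapes of the next transition of $\pprocess$ fired from $(p,a)$, following exactly the case split of the ``Update of $\fplay$'' rule. When a rewrite $(q,a')$ is played, $\fplay$ grows by the trivial-bump vertex $(q,a',\vect{R},\min(\theta,\col(q)))$; items (1) and (2) hold because this is a legal edge of $\fgame$, taken by $\fstrat$ when $p\in Q_\Ei$ and merely mirrored from \Abelard when $p\in Q_\Ai$, while item (3) follows since the top cell is unchanged and only the colour $\col(q)$ has been added. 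When a push $(q,a'b)$ is played, $\fplay$ grows by a stair ending in $(q,b,\vect{R'},\col(q))$, where the intermediate vector $\vect{R'}$ is the one chosen by $\fstrat$; items (1) and (2) hold because \Eloise's declaration of $\vect{R'}$ is dictated by $\fstrat$ and the stair branch is a legitimate \Abelard move, and item (3) holds because $b$ has just been pushed, so the running minimum resets to $\col(q)$.

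The delicate case is the pop $(r,\epsilon)$, and I expect the backtracking update to be where the argument does real work. Here I would first invoke item (4) of the induction hypothesis to get $r\in R_\theta$; this is precisely the condition that makes the bump branch of $\fgame$ available, so the prescribed update --- backtrack to the last stair $\fplay_h$, which simulated the creation of $a$, keep the prefix $\fplay_0\odot\cdots\odot\fplay_{h-1}$, and replace $\fplay_h$ by the bump $\fplay_h'$ --- yields a genuine play of $\fgame$ in which \Eloise still respects $\fstrat$: the only \Eloise move inside $\fplay_h'$, namely the declaration of $\vect{R}$, is copied verbatim from $\fplay_h$, and the substituted move is an \Abelard move. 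For item (3) I would justify the colour bookkeeping $\min(\theta',\theta,\col(r))$: since $\fplay_h$ is the \emph{last} stair, every later round is a bump over the level of $a$, so $\theta$ is exactly the least colour accumulated during the lifetime of $a$; combining it with the value $\theta'$ stored when $a'$ was pushed and with the new colour $\col(r)$ gives the least colour since $a'$ was pushed, as required.

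It remains to re-establish item (4) at the new endpoint, and this is the conceptual crux. The key observation is that, because $\fplay$ is consistent with the \emph{winning} strategy $\fstrat$, every vertex it reaches --- in particular $(p,a,\vect{R},\theta)$ --- lies in \Eloise's winning region. If $p\in Q_\Ai$ the vertex is \Abelard's, so if some $(r,\epsilon)\in\Delta(p,a)$ had $r\notin R_\theta$ then the arena would offer an edge to the sink $(\ffalse,a)$ that \Abelard could take to win, contradicting that the vertex is winning for \Eloise; hence every pop lands in $R_\theta$. If $p\in Q_\Ei$ the construction of $\strat$ only ever pops when $\fstrat(\fplay)=(\ttrue,a)$, and the very existence of that edge guarantees a pop target in $R_\theta$. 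Thus item (4) is simultaneously consumed in the pop case, to legalise the bump, and re-proved from scratch at each vertex via the $(\ttrue,a)/(\ffalse,a)$ gadget together with the winning-region property of $\fstrat$; organising the induction so that these two roles coexist cleanly, and tracking the stack-height/colour correspondence through the backtracking step, is the main obstacle I foresee.
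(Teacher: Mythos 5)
Your proposal is correct and follows essentially the same route as the paper: the paper also proves items (1)--(3) by induction on $\play$ (rewrite/push cases immediate, pop case consuming item (4) to legalise the backtracking update), and derives item (4) exactly as you do, from the fact that $\fplay$ respects the winning strategy $\fstrat$ so its last vertex is winning for \Eloise, whence the $(\ttrue,a)$ edge exists when $p\in Q_\Ei$ and no $(\ffalse,a)$ edge can exist when $p\in Q_\Ai$. The only difference is presentational --- the paper factors item (4) out as a consequence of item (2) before the induction rather than carrying it as a fourth inductive clause --- which changes nothing in the argument.
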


\begin{proof}
We first note that the last point is a consequence of the second and third points. Indeed, assume that the next move after $(p, \stack a)$ is to play a transition
$(r,\epsilon)\in\Delta(p, a)$. The second point implies that
$(p, a,\vect{R},\theta)$ is winning for \Eloise in
$\fgame$. If $p\in Q_\Ei$, by definition of $\strat$, there is
some edge from that vertex to $(\ttrue,a)$, which means that
$r\in R_\theta$ and allows us to conclude. If $p\in Q_\Ai$, note that there is no
edge from $(p, a,\vect{R},\theta)$ (winning position
for \Eloise) to the losing vertex $(\ffalse,a)$. Hence we
conclude the same way.

Let us now prove the other points by induction on $\play$. Initially, they trivially hold. Now assume that the
result is proved for some play $\play$, and let $\play'$ be an
extension of $\play$. We have two cases, depending on how $\play'$
extends $\play$:

\begin{itemize}
\item $\play'$ is obtained by applying a transition of the form $(q,a')$ or
$(q,a'b)$. The result is trivial in that case.

\item $\play'$ is obtained by applying a transition of the form $(r,\epsilon)$. Let
$(p, \stack a)$ be the last configuration in $\play$, and let
$\vect{R}$ be the last vector component in the last vertex of $\fplay$ when
in configuration $(p, \stack a)$. By the induction
hypothesis, it follows that $\play'=\play\cdot(r, \stack)$ with
$r\in R_\theta$. Considering how $\fplay$ is updated, and
using the fourth point, we easily deduce that the new memory $\fplay$ is as desired.
\end{itemize}
\end{proof}

Actually, we easily deduce a more precise result.

\begin{lemma}\label{lemme:toto_paritexp}
Let $\play$ be a partial play in $\pgameabs$ starting from
$(p_0,\bot)$ and where \Eloise respects $\strat$ and let $(\play_i)_{i\geq 0}$ be its rounds factorisation. Let
$(\fplay_i)_{i=0,\dots,k}$ be the rounds factorisation of $\fplay$.
Then the following holds for every $i\geq 0$.
\begin{itemize}
\item $\fplay_i$ is a bump if and only if ${\play}_i$ is a bump.

\item $\fplay_i$ has colour $\pcol{\play}_i$.
\end{itemize}
\end{lemma}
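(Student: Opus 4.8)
The plan is to prove the two bullets together, by induction on the length of the partial play $\play$, exploiting the fact that the memory $\fplay$ was built in lockstep with $\play$ in the definition of $\strat$, together with the properties already recorded in Lemma~\ref{ini:lemma:games:ReturningSets_paritexp}. The inductive invariant I will carry is precisely the conjunction of the two bullets: the rounds factorisations $(\play_i)$ and $(\fplay_i)$ have the same (finite) length, $\play_i$ is a bump if and only if $\fplay_i$ is a bump, and the colour of $\fplay_i$ equals $\pcol{\play}_i$, the colour of $\play_i$. When $\play$ is the single configuration $(p_{0},\bot)$ both factorisations are empty and there is nothing to check.

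For the inductive step I will distinguish the three cases of the update rule for $\fplay$, according to the shape of the last move of $\play$. If the last move is a rewriting $(q,a')$, the stack height is unchanged, so $(\play_i)$ gains a trivial bump, while $\fplay$ gains the trivial bump $(p,a,\vect{R},\theta)\,(q,a',\vect{R},\min(\theta,\col(q)))$; both rounds are bumps and, since a main vertex of $\fggraph$ carries the colour of its control state, both have colour $\min(\col(p),\col(q))$. If the last move is a push $(q,a'b)$, then because it is the last move of a partial play the new position lies at a strictly greater height which is never left, so $(\play_i)$ gains a stair, while $\fplay$ gains the stair round ending in $(q,b,\vect{R'},\col(q))$; both are stairs and, the two intermediate vertices carrying the neutral colour $\maxcolor$, both have colour $\min(\col(p),\col(q))$. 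In each of these two cases the previously completed rounds are left untouched, so the invariant is preserved by the induction hypothesis.

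The substantial case is a pop $(r,\epsilon)$, reaching a configuration of height one smaller. Here the factorisation of $\play$ itself is reorganised: the stair $\play_h$ that pushed the now-exposed symbol, together with all the rounds $\play_{h+1},\dots,\play_k$ performed while that symbol was on the stack, coalesce into a single non-trivial bump ending at the pop. The reason is that the starting index $n_h$ of $\play_h$ remains in $\Stepsg{\play}$, whereas every position opened strictly above height $h$ after the push now has a strictly lower configuration (the pop) appearing later in the play and therefore leaves $\Stepsg{\play}$. This reorganisation is mirrored exactly by the update of $\fplay$, which backtracks to the stair $\fplay_h$, discards $\fplay_h,\dots,\fplay_k$, and replaces them by the single bump round $\fplay_h'$ ending in $(r,a',\vect{R'},\min(\theta',\theta,\col(r)))$. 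Hence the two factorisations once more have equal length and index-wise corresponding rounds, with a bump facing a bump.

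It remains to match the colours of these two new bumps, which is where the argument really has to be made. On the $\fgame$ side, $\fplay_h'$ runs through the colour-$\theta$ vertex $(r,a',\vect{R'},\min(\theta',\theta,\col(r)),\theta)$ and otherwise only through vertices of colour $\col(p')$, $\col(r)$, or the neutral $\maxcolor$, so its colour is $\min(\col(p'),\theta,\col(r))$. On the $\pgameabs$ side, the colour of the coalesced bump is the minimum of $\col$ over every configuration from the push to the pop. By point~3 of Lemma~\ref{ini:lemma:games:ReturningSets_paritexp}, $\theta$ is exactly the smallest colour visited since that symbol was pushed, i.e. the minimum over all the intervening configurations; adjoining the two endpoints $\col(p')$ and $\col(r)$ then yields precisely $\min(\col(p'),\theta,\col(r))$, so the colours agree. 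The main obstacle is exactly this identification: one must be certain that the single stored index $\theta$ faithfully records the least colour over the whole---possibly long and deeply nested---segment of $\play$ that the bump summarises, which is guaranteed by the way $\theta$ is reinitialised at each stair and carried along as a running minimum, as established in Lemma~\ref{ini:lemma:games:ReturningSets_paritexp}. The extension to infinite plays, needed to apply Proposition~\ref{prop:trans_cond} and conclude, then follows by passing to the limit of the finite factorisations.
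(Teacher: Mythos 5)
Your proof is correct and takes exactly the route the paper intends: the paper states this lemma without proof, as an easy consequence of the memory-update rules and of Lemma~\ref{ini:lemma:games:ReturningSets_paritexp}, and your induction with its three-case analysis (trivial bump, stair, and the pop case where the last stair and subsequent bumps of $\play$ coalesce into one bump, mirrored by the backtracking update of $\fplay$, with colours reconciled as $\min(\col(p'),\theta,\col(r))$ via point~3 of that lemma) is precisely the elided detail.
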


Both lemmas~\ref{ini:lemma:games:ReturningSets_paritexp} and
\ref{lemme:toto_paritexp} are for partial plays. A version for
infinite plays would allow us to conclude. Let $\play$
be an infinite play in $\pgameabs$. We define an
infinite version of $\fplay$ by considering the limit of the $(\fplay_i)_{i\geq 0}$ where $\fplay_i$ is the
memory after the $i$ first moves in $\play$. See Footnote~\ref{footnote:limit} on page \pageref{footnote:limit} for a similar construction. 
It is easily seen that such a limit
always exists, is infinite and corresponds to a play won by \Eloise in $\fgame$.
Moreover the results of Lemma~\ref{lemme:toto_paritexp} remain true.

Let $\play$ be a play in $\pgameabs$ with initial
vertex $(p_0,\bot)$, and where \Eloise respects $\strat$,
and let $\fplay$ be the associated play in $\fgame$.
Therefore $\fplay$ is won by \Eloise. Using Lemma
\ref{lemme:toto_paritexp} and Proposition~\ref{prop:trans_cond},
we conclude, as in the direct implication that $\play$ is
winning.

\subsection{Main Result}

Following Example~\ref{ex:cpda} we see an $n$-CPDA that does not create $n$-links as an \sdp and we apply the construction of Section~\ref{section:recusingConditionalGame}. We argue that the resulting game $\fgame$ is associated with an $(n-1)$-CPDA, which leads the following result.

\begin{theorem}\label{theo:reducingOrder}
For any $n$-CPDA $\pprocesslf=\anglebra{\Gammalf,\Qlf,\Deltalf,q_{0,\lf}}$ that \emph{does not create $n$-links} and any associated parity game $\pgamelf$, one can construct an $(n-1)$-CPDA $\fpprocess=\anglebra{\Gammaf,\Qf,\Deltaf,\qinif}$ and an associated parity game $\fgame$ such that the following holds.
\begin{itemize} 
	\item $(q_{0,\lf},\bot_{n})$ is winning for \Eloise in $\pgamelf$ if and only if $(\qinif,\bot_{n-1})$ is winning for \Eloise in $\fgame$  .
	\item If the set of winning configurations for \Eloise in $\fgame$ is regular, then the set of winning configurations for \Eloise in $\pgamelf$ is regular as well.
	\item If there is an $(n-1)$-CPDA transducer $\Sf$ synchronised with $\fpprocess$ realising a well-defined winning strategy for \Eloise in $\fgame$ from $(\qinif,\bot_{n-1})$, then one can effectively construct an $n$-CPDA transducer $\Slf$ synchronised with $\pprocesslf$ realising a well-defined winning strategy for \Eloise in $\pgamelf$ from the initial configuration $(q_{0,\lf},\bot_{n})$.
\end{itemize}
\end{theorem}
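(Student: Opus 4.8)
The plan is to instantiate the abstract reduction of Section~\ref{section:recusingConditionalGame} with the concrete \sdp of Example~\ref{ex:cpda} and then recognise the reduced game as an $(n-1)$-CPDA game. Viewing $\pprocesslf$ as an \sdp with \sda $A$ equal to the set of order-$(n-1)$ stacks over $\Gammalf$ and transition relation $\Delta'$ as in Example~\ref{ex:cpda}, every rule of $\Delta'$ has one of three forms: an $\epsilon$-rule (a $\popn{n}$), a rule $(q,a'a')$ with $a'=\toprew{\alpha}(a)$ (a $\pushn{n}$, so the pushed symbol $b$ equals $a'$), or a rule $(q,a')$ with $a'=op(\toprew{\alpha}(a))$ and $op\notin\{\popn{n},\pushn{n}\}$. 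The key point I would stress is that each such $op$ — together with the top rewriting $\toprew{\alpha}$ — is a legal order-$(n-1)$ stack operation, and that since $\pprocesslf$ creates no $n$-link, every $\collapse$ follows a link of order at most $n-1$ and hence acts inside the current top $(n-1)$-stack.

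Next I would build $\fgame$ exactly as in Section~\ref{section:recusingConditionalGame} and observe that each of its vertices $(p,a,\vect{R},\theta)$, as well as the intermediate and sink vertices of Figure~\ref{fig:graphe_reduit}, carries a single order-$(n-1)$ stack $a$ together with finitely much extra data ($p\in\Qlf$, $\vect{R}\in(2^{\Qlf})^{d+1}$, $\theta\in\{0,\dots,d\}$, plus the push bookkeeping). I therefore take $\fpprocess$ to be the $(n-1)$-CPDA with $\Gammaf=\Gammalf$ whose stack is this $a$ and whose control states absorb all the finite data, and I let its transitions realise the edges of Figure~\ref{fig:graphe_reduit}: a trivial bump performs $\toprew{\alpha};op$ on $a$; the push phase performs the single rewriting $\toprew{\alpha}$ (which already yields $b=a'$, as $\pushn{n}$ merely duplicates the top $(n-1)$-stack) and then acts as $\id$ while \Eloise declares $\vect{R'}$ and \Abelard chooses a bump or a stair; and both bumps and stairs act as $\id$ on $a$, only updating the control state. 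The sinks $\ttrue,\ffalse$ become sink states and the colouring of $\fgame$ is read off the control state alone. Setting $\qinif=(q_{0,\lf},(\emptyset,\dots,\emptyset),\col(q_{0,\lf}))$ and noting that $\bot_{n}$ corresponds to the \sd $\bot_{n-1}$, the transition graph of $\fpprocess$ is isomorphic to $\fggraph$.

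With this identification the first two items are immediate. The winning-region equivalence is exactly item~(1) of Lemma~\ref{lemma:games}, with the two initial configurations matching under the correspondence above. For regularity, Lemma~\ref{lemma:reg} presents the winning region of $\pgamelf$ as the language of an automaton with oracles whose oracles are the sets $\mathcal{O}_{q,R}=\{a\mid R\in\mathcal{R}(q,a)\}$ (together with $\mathcal{O}_{\bot}$); by item~(2) of Lemma~\ref{lemma:games}, $a\in\mathcal{O}_{q,R}$ if and only if the configuration of $\fpprocess$ with control state $(q,(R,\dots,R),\col(q))$ and stack $a$ is winning in $\fgame$. Hence, if the winning region of $\fgame$ is regular, each $\mathcal{O}_{q,R}$ is a regular set of $(n-1)$-stacks — a single control-state slice of that region — and there are only finitely many such oracles; Proposition~\ref{remark:oracleregular} then converts the automaton with oracles into an ordinary automaton over $n$-stacks without $n$-links, establishing regularity of \Eloise's winning region in $\pgamelf$.

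The hard part will be item~(3). I would follow the converse construction of Section~\ref{proof:theogamesconverse}, which turns $\fstrat$ into a strategy $\strat$ for \Eloise in $\pgameabs$ by maintaining, as memory, a play $\fplay$ in $\fgame$ that is rewound on every $\popn{n}$: when $\pprocesslf$ pushes a level one optimistically appends a stair to $\fplay$, and when that level is later popped one backtracks $\fplay$ to the point just before that stair and substitutes a bump of the appropriate colour. The crucial structural fact is that this rewinding obeys precisely the last-in-first-out discipline of the order-$n$ stack, so that the data to be restored when a level disappears is exactly the data that was current when that level was created. I would therefore realise $\strat$ by an $n$-CPDA transducer $\Slf$ that runs $\Sf$ on the fly to evaluate $\fstrat(\fplay)$ and stores, level by level on its own order-$n$ stack, the configuration of $\Sf$ (whose order-$(n-1)$ stack tracks $a$, since $\Sf$ is synchronised with $\fpprocess$) together with the least colour seen on the current level; performing $\pushn{n}$ and $\popn{n}$ in lock-step with $\pprocesslf$ — which is what keeps $\Slf$ synchronised with $\pprocesslf$ — then automatically copies and discards this saved data, so that each $\popn{n}$ recovers the pre-push $\Sf$-configuration and lets $\Slf$ apply the bump update with the tracked colour. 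The main obstacle, and the point requiring the most care, is exactly this bookkeeping: verifying that the per-level least colour and the per-level stored $\Sf$-configuration faithfully reproduce the backtracking of Section~\ref{proof:theogamesconverse}, and that the resulting $\Slf$ is genuinely synchronised with $\pprocesslf$ and realises a well-defined winning strategy.
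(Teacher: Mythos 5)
Your proposal is correct and follows essentially the same route as the paper: instantiating the abstract reduction of Section~\ref{section:recusingConditionalGame} via Example~\ref{ex:cpda}, recognising $\fggraph$ as an $(n-1)$-CPDA graph, deriving the first two items from Lemmas~\ref{lemma:games} and~\ref{lemma:reg} together with Proposition~\ref{remark:oracleregular}, and realising the backtracking of Section~\ref{proof:theogamesconverse} by storing the $\Sf$-configurations level by level on the order-$n$ stack so that $\pushn{n}$/$\popn{n}$ in lock-step effect the rewinding. This level-by-level bookkeeping is exactly the paper's $\Memory(\play)$ construction (with the control state of $\Sf$ appended to the top symbol of each saved $(n-1)$-stack), so no substantive difference remains.
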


\begin{proof}

Following Example~\ref{ex:cpda}, $\pprocesslf$ can be seen as an \sdp hence, we can apply the construction of Section~\ref{section:recusingConditionalGame}. We claim that the resulting game $\fgame$ is associated with an $(n-1)$-CPDA. 

Indeed, one simply needs to consider how the graph $\fgraph$ is defined and make the following observations concerning the local structure given in Figure~\ref{fig:graphe_reduit} when $\pgame$ is played on the transition graph of an $n$-CPDA that does not create links.
\begin{enumerate}
\item For every vertex of the form $(p,a,\vect{R},\theta)$, $(\ttrue,a)$,$(\ffalse,a)$,$(p,a,\vect{R},\theta,q,b)$, $(p,a,\vect{R},\theta,q,b,\vect{R'})$ or $(s,a,\vec{R},\theta',i)$, $a$ and $b$ are $(n-1)$-stacks.
\item For every vertex of the form $(p,a,\vect{R},\theta,q,b)$ or $(p,a,\vect{R},\theta,q,b,\vect{S})$, one has $a=b$.
\end{enumerate}
This implies that any vertex in $\fgraph$ can be seen as a pair formed by a state in a finite set and an $(n-1)$-stack. Then one concludes the proof by checking that the edge relation is the one of an $(n-1)$-CPDA.%

Therefore, the first point follows from Lemma~\ref{lemma:games} and the second one follows by combining Lemma~\ref{lemma:reg} with Proposition~\ref{remark:oracleregular} and Lemma~\ref{lemma:games}.

We now turn to the third point and therefore assume that there is an $(n-1)$-CPDA transducer $\Sf$ synchronised with $\fpprocess$ realising a well-defined winning strategy $\fstrat$ for \Eloise in $\fgame$ from $(\qinif,\bot_{n-1})$. We argue that the strategy $\strat$ constructed in the proof of Lemma~\ref{lemma:games} can be realised, when $\pgameabs$ is obtained from an $n$-CPDA $\pprocesslf$ that does not create $n$-links, by an $n$-CPDA transducer $\Slf$ synchronised with $\pprocesslf$.

For this, let us first have a closer look at $\strat$. The key ingredient in $\strat$ is the play $\fplay$ in $\fgame$, and the value of $\strat$ uniquely depends on $\fstrat(\fplay)$. In particular, if $\fstrat$ is realised by an $(n-1)$-CPDA transducer $\Sf$, it suffices to know the configuration of $\Sf$ after reading $\fplay$ in order to define $\strat$. We claim that it can be computed by an $n$-CPDA transducer $\Slf$ (synchronised with $\pprocesslf$); the hard part being to establish that such a device can update correctly its memory.

Let $\fplay = v_0v_1\cdots v_\ell$ and let $r_{\fplay}=(p_0,\stack_0)(p_1,\stack_1)\cdots (p_\ell,\stack_\ell)$ be the run of $\Sf$ associated with $\fplay$, \ie after having played $v_0\cdots v_k$, $\Sf$ is in configuration $(p_k,\stack_k)$. Denote by $\Last(r_{\fplay})$ the last configuration of $r_\fplay$, \ie $(p_\ell,\stack_\ell)$. To define $\strat$, $\Last(r_\fplay)$ suffices but of course, in order to update $\Last(r_\fplay)$, we need to recall some more configurations from $r_\fplay$. In the case where the last transition applies an order-$k$ stack operation with $k<n$ (\ie it is neither $\popn{n}$ nor $\pushn{n}$), then the update is simple, as it consists in simulating one step of $\Sf$. 
If the last stack operation is $\pushn{n}$ then the update of $\fplay$ consists in adding three vertices and the corresponding update of $r_\fplay$ is simple (as the only operation on the $(n-1)$-stack is to rewrite the $\topn{1}$-element).  
If the last stack operation is $\popn{n}$ one needs to backtrack in $\fplay$ (hence in $r_\fplay$): the backtrack is to some $v_k$ with $k$ maximal such that $v_k$ is of the form $(p',a',\vect{R'},\theta',p'',a'',\vect{R})$ and $v_{k+1}=(p'',a'',\vect{R},\col(p''))$. Once $v_k$ has been found, the update is fairly simple for both $\fplay$ and $r_{\fplay}$ (one simply extends the remaining prefix of $\fplay$ by two extra vertices whose stack content is unchanged compared with the one in $v_k$).

Define the following set of indices where $\fplay = v_0v_1\cdots v_\ell$
$$\Extremal(\fplay)=\{h\mid \text{$v_h$ is of the form $(p',a',\vect{R'},\theta',p'',a'',\vect{R})$ and $v_{h+1}=(p'',a'',\vect{R},\col(p''))$}\}\cup\{\ell\}$$
Note that after a partial play $\play$ the cardinality of $\Extremal(\fplay)$ is equal to the height of the stack in the last configuration of $\play$. 

For any partial play $\play$ in $\pgamelf$ define the following $n$-stack (note that it does not contain any $n$-link)
$$\Memory(\play)=\mksk{\stack'_{k_1}\stack'_{k_2}\cdots \stack'_{k_h}}$$ where we let
\begin{itemize}
\item $\Extremal(\fplay)=\{k_1<\cdots< k_h\}$, $\fplay$ being the memory associated with $\play$ as in the proof of Lemma~\ref{lemma:games};
\item $\stack'_j$ is the $(n-1)$-stack obtained from $\stack_j$ (recall that $(p_j,\stack_j)$ denotes the $j$-th configuration of $r_{\fplay}$) by appending $p_j$ to its $\topn{1}$-symbol (\ie we work on an enriched stack alphabet).
\end{itemize}

Note that $\Last(r_\fplay)$ is essentially $\topn{1}(\Memory(\play))$ as the only difference is that now the control state is stored in the stack. Moreover $\Memory(\play)$ can easily be updated by an $n$-CPDA transducer: for the case of a transition involving an order-$k$ stack operation with $k<n$ one simulates $\Sf$ on $\topn{1}(\Memory(\play))$; for the case of a transition involving a $\pushn{n}$ one first simulates $\Sf$ on $\topn{1}(\Memory(\play))$ (as one may do a $\toprew{}$ before $\pushn{n}$) and then makes a $\pushn{n}$ to duplicate the topmost $(n-1)$-stack in $\Memory(\play)$; finally, for the case of a $\popn{n}$, one simply needs to do a $\popn{n}$ in $\Memory(\play)$ to backtrack and then update the control state. This is how we define $\Slf$\footnote{Technically speaking, if we impose that a transition of $\Slf$ does a $\toprew{}$ (or $\id$) followed by another stack operation, we may not be able to do the update of the stack after doing a $\popn{n}$. However, we can use the same trick as the one used to define $\pprocessrk$, \ie we postpone the $\toprew{}$ action to the next transition (see Remark~\ref{rk:rankAwareTopRew}).}.

The fact that $\Slf$ is synchronised with $\pprocesslf$ comes from the definition of how $\Slf$ behaves when the transition in $\pprocesslf$ involves a $\popn{n}$ or a $\pushn{n}$, and for the other cases it follows from the initial assumption of $\Sf$ being synchronised with $\fpprocess$.
\end{proof}

\os{Il let this remark but I think we do not use it later on…}
\begin{remark}\label{rk:popnsuffices}
When applying the general construction of Section~\ref{section:recusingConditionalGame} to an $n$-CPDA $\pprocesslf$ that does not create links, we can safely enforce the following extra constraint on the vectors $\vect{R}$ and $\vect{S}$: they should be element in $(2^{Q_{\lf}^{\popn{n}}})^{d+1}$ where we let $Q_{\lf}^{\popn{n}}$ denote the set of control states of $\pprocesslf$ from which a $\popn{n}$ operation can be performed. Indeed, the various component of such vectors aims at representing set of states reachable by doing a $\popn{n}$. This is important later in the overall complexity for Theorem~\ref{theorem:main}.
\end{remark}

\subsection{Complexity}

If we summarise, the overall blowup in the transformation from $\pgamelf$ to $\fgame$ given by Theorem~\ref{theo:reducingOrder} is as follows.

\begin{proposition}\label{proposition:complexity-step3}
Let $\pprocesslf$ and $\fpprocess$ be as in Theorem~\ref{theo:reducingOrder}. Then the set of states of $\fpprocess$ has size $\mathcal{O}( 2^{2|\colors||\Qlf|})$ and the stack alphabet of $\fpprocess$ has size $\mathcal{O}(|\Gammalf|)$.
Finally, the set of colours used in $\pgamelf$ and $\fgame$ are the same.
\end{proposition}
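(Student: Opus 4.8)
The plan is to prove all three assertions by a direct inspection of the arena $\fggraph$ as described in Section~\ref{section:recusingConditionalGame} and specialised to the $n$-CPDA setting in the proof of Theorem~\ref{theo:reducingOrder}. Recall that, under that specialisation, every vertex of $\fggraph$ is a pair consisting of a finite-control component together with a \emph{single} order-$(n-1)$ stack over $\Gammalf$: observation~(2) in the proof of Theorem~\ref{theo:reducingOrder} guarantees that the two stack fields $a'$ and $b$ of the intermediate vertices always coincide, so there is genuinely only one stack component, and observation~(1) guarantees it is an $(n-1)$-stack over $\Gammalf$. Consequently the states of $\fpprocess$ are exactly these finite-control components and its stack alphabet is $\Gammalf$, so the whole proposition reduces to counting the finite-control components and reading off the colouring.

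For the state count I would enumerate the possible shapes of the finite-control component one by one. The main vertices $(p,a,\vect{R},\theta)$ contribute a factor $|\Qlf|\cdot|\colors|\cdot 2^{|\Qlf||\colors|}$, since $\vect{R}\in(2^{\Qlf})^{\maxcolor+1}$ and $\maxcolor+1=|\colors|$ give $|(2^{\Qlf})^{\maxcolor+1}|=2^{|\Qlf||\colors|}$. The sinks $(\ttrue,a)$ and $(\ffalse,a)$, the intermediate vertices $(p,a',\vect{R},\theta,q,b)$, and the bump vertices $(s,a',\vect{R},\theta,i)$ all contribute strictly lower-order terms. The dominant contribution comes from the vertices $(p,a',\vect{R},\theta,q,b,\vect{R'})$, whose control part carries \emph{two} independent vectors $\vect{R},\vect{R'}\in(2^{\Qlf})^{\maxcolor+1}$ alongside $p,q\in\Qlf$ and $\theta\in\{0,\dots,\maxcolor\}$; this yields $|\Qlf|^2\cdot|\colors|\cdot 2^{2|\Qlf||\colors|}$, and after absorbing the polynomial prefactor into the big-$\mathcal{O}$ we obtain the stated bound $\mathcal{O}(2^{2|\colors||\Qlf|})$.

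The remaining two assertions are then immediate. The stack alphabet of $\fpprocess$ is $\Gammalf$ itself (no enrichment is introduced in building the \emph{game} $\fgame$; the stack-alphabet enrichment appended in the transducer construction of Theorem~\ref{theo:reducingOrder} concerns $\Slf$, not $\fpprocess$), whence its size is $\mathcal{O}(|\Gammalf|)$. For the colours, the colouring of $\fggraph$ assigns $\col(p)$ to a main vertex $(p,a,\vect{R},\theta)$, the value $\maxcolor$ to the two families of intermediate vertices, the colour $i$ to the bump vertices $(s,a',\vect{R},\theta,i)$, and the colours $0$ and $1$ to $\ttrue$ and $\ffalse$ respectively; since $0$, $1$, $\maxcolor$ and every such $i$ already belong to $\colors$, the set of colours of $\fgame$ coincides with that of $\pgamelf$.

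The only delicate point, which I expect to be the sole mild obstacle, is pinning down the dominant term and confirming that the exponential blow-up arises \emph{exactly} from the pair of reachability vectors $(\vect{R},\vect{R'})$ and from nothing else; this is settled by checking that every vertex family other than $(p,a',\vect{R},\theta,q,b,\vect{R'})$ carries at most one vector, so that the total number of control components is governed by the single term $2^{2|\Qlf||\colors|}$ up to the polynomial factor that the stated $\mathcal{O}(\cdot)$ is understood to absorb. Given the explicit description of $\fggraph$, the argument is otherwise entirely routine and, as in Propositions~\ref{proposition:complexity-step1} and~\ref{proposition:complexity-step2}, follows by construction.
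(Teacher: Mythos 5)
Your proposal is correct and matches the paper's proof, which simply says ``By construction'': the dominant control component is indeed the family $(p,a',\vect{R},\theta,q,b,\vect{R'})$ carrying two vectors in $(2^{\Qlf})^{\maxcolor+1}$, the stack alphabet is $\Gammalf$ unchanged (the enrichment in Theorem~\ref{theo:reducingOrder} concerns only the transducer $\Slf$), and the colouring uses only values in $\colors$. Your explicit enumeration of the vertex families, with polynomial prefactors absorbed into the $\mathcal{O}(\cdot)$ exactly as the paper intends, is just the expanded form of that one-line argument.
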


\begin{proof}
By construction.
\end{proof}

\section{Proof of Theorem~\ref{theorem:main} and Complexity}\label{section:summary}

The proof of Theorem~\ref{theorem:main} consists in combining theorems \ref{lemma:rank-aware}, \ref{theo:outmost} and \ref{theo:reducingOrder}. Indeed, starting from an $n$-CPDA, we apply Theorem~\ref{lemma:rank-aware} to obtain a rank-aware $n$-CPDA, then Theorem~\ref{theo:outmost} to remove the order-$n$ links, and finally Theorem~\ref{theo:reducingOrder} to obtain an $(n-1)$-CPDA.  By $(n-1)$ successive applications of these three results, we end-up with a $1$-CPDA parity game. If we apply to this latter (pushdown) game the construction of Section~\ref{section:recusingConditionalGame} we end up with a game on a finite graph. Solving this game and following the chain of equivalences provided by theorems \ref{lemma:rank-aware}, \ref{theo:outmost} and \ref{theo:reducingOrder} concludes the proof.

Concerning complexity, one step of successive application of the construction in theorems \ref{lemma:rank-aware}, \ref{theo:outmost} and \ref{theo:reducingOrder} results in an $(n-1)$-CPDA with a state set of size $\mathcal{O}( 2^{2|Q|(|\colors|+3)^{n+5}})$, a stack alphabet of size $\mathcal{O}(|\Gamma|^2 \cdot 2^{|Q|(|\colors|+1)^{n+5}})$ and an unchanged number of colours.
Indeed,
\begin{itemize}
\item by Proposition~\ref{proposition:complexity-step1} one has $|Q_\rk| = \mathcal{O}(|Q|\cdot(|\colors|+1)^{n+3})$ and $|\Gamma_\rk| = \mathcal{O}(|\Gamma|\cdot(|\colors|+1)^{2n+5})$;
\item by Proposition~\ref{proposition:complexity-step2} one has $|Q_\lf| = \mathcal{O}(|\Qrk|\cdot(|\colors|+3)) = \mathcal{O}(|Q|\cdot(|\colors|+3)^{n+4})$ and\linebreak $|\Gamma_\lf| = \mathcal{O}(|\Gammark|^2 \cdot 2^{|\Qrk||\colors|})= \mathcal{O}(|\Gamma|^2 \cdot(|\colors|+1)^{4n+10}\cdot 2^{|Q|(|\colors|+1)^{n+4}})=\mathcal{O}(|\Gamma|^2 \cdot 2^{|Q|(|\colors|+1)^{n+5}})$; 
\item and finally, by Proposition~\ref{proposition:complexity-step3}, one has 
	$|\Qf|=\mathcal{O}( 2^{2|\colors||\Qlf|}) = \mathcal{O}( 2^{2|Q|(|\colors|+3)^{n+5}}) $ and\linebreak $|\Gammaf|=\mathcal{O}(|\Gammalf|)=\mathcal{O}(|\Gamma|^2 \cdot 2^{|Q|(|\colors|+1)^{n+5}})$.
\end{itemize}

If one lets, for a constant $K$, $\expn{h}^K$ be the function defined by $\expn{0}^K(x) = x$ for all $x$ and $\expn{h+1}^K(x) = 2^{K\expn{h}^K(x)}$, we conclude that the $1$-CPDA obtained after $(n-1)$ successive applications of the three reductions has
\begin{itemize}
\item a state set of size $\mathcal{O}(\expn{n-1}^{2(|\colors|+3)^{n+5}}(|Q|))$ and 
\item a stack alphabet of size $\mathcal{O}(|\Gamma|^{2(n-1)}\cdot \expn{n-1}^{(|\colors|+1)^{n+5}}(|Q|))$.
\end{itemize}

Solving this latter game can be done by reducing it using the construction of Section~\ref{section:recusingConditionalGame} which leads to solve a parity game on a finite graph with 
$\mathcal{O}(
\expn{n}^{2(|\colors|+3)^{n+5}}(|Q|)\cdot
(|\Gamma|^{2(n-1)}\cdot \expn{n-1}^{(|\colors|+1)^{n+5}}(|Q|))^2
)$ vertices. Solving this game can be achieved in time $\mathcal{O}(N^{|\colors|})$ where $N$ denotes the number of vertices. 
%
Hence, the overall complexity of deciding the winner in an $n$-CPDA parity game is:
\begin{itemize}
\item $n$-times exponential in the number of states of the CPDA;
\item {$n$}-times exponential in the number of colours;
\item polynomial in the size of the stack alphabet of the CPDA.
\end{itemize}

\os{Should we mention that the crucial parameter is the stack alphabet if one is interested in using this to model-check a higher-order recursive program and that the algorithm is therefore FPT…?}

Regarding lower bound, the problem is $n$-\exptime-hard. In fact, hardness already holds when one considers reachability condition (\emph{i.e.} does the play eventually visit a configuration with a final control state?) for games generated by higher-order pushdown automata (\emph{i.e.} CPDA that never use $\collapse$). 
A self-contained proof of this result was established by Cachat and  Walukiewicz, but is fairly technical \cite{CachatWalukiewicz07}.

  Here we sketch a much simpler proof of this result that relies on the
  following well-known result: checking emptiness of a nondeterministic order-$n$
  higher-order pushdown automaton is an $(n-1)$-\exptime-complete problem \cite{Engelfriet91} (here one uses higher-order pushdown automata as word acceptors)\footnote{
  The following result is also proved in \cite{Engelfriet91}:
  checking emptiness of an alternating order-$n$ higher-order pushdown automaton is an $n$-EXPTIME complete problem. Nevertheless, note
  that this result does not directly imply hardness for games on
  higher-order pushdown graphs. Indeed, in general it is 
  \emph{more difficult} to check emptiness for an alternating device than
  to solve a reachability game on the corresponding class of
  graphs: for instance, solving a reachability game on a
    finite graph is in $P$ while checking emptiness for an alternating
    automata on finite word (even if one considers a $1$-letter
    alphabet) is PSPACE-complete; the problems are trivially
  equivalent only when considering infinite words on a single letter
  alphabet.
}. 
  Trivially, this result is still true
  if we assume that the input alphabet is reduced to a single
  letter. 
  Now consider an order-$(n+1)$ nondeterministic higher-order pushdown
  automaton $\pprocess$ whose input alphabet is reduced to a single
  letter. The language accepted by $\pprocess$ is non-empty if and
  only if there is a path from the initial configuration of
  $\pprocess$ to a final configuration of $\pprocess$ in the
  transition graph $G$ of $\pprocess$. Equivalently, the language
  accepted by $\pprocess$ is non-empty if and only if \Eloise wins
  the reachability game $\pgame$ over $G$ where she controls all
  vertices (and where the play starts from the initial configuration
  of $\pprocess$ and where final vertices are those corresponding to
  final configurations of $\pprocess$). Now, consider the reduction
  used to prove Theorem \ref{theorem:main} and apply it to
  $\pgame$. As $\pprocess$ does not use links, we only need to do the third step, which leads to an \emph{equivalent} reachability game $\fgame$
  that is now played on the transition graph of an order-$n$ higher
  order pushdown automaton. In the new arena, the main vertices
  are of the form $(p,\stack,\vect{R},\theta)$: here $\stack$ is an $n$-stack (without links), $\vect{R}$ is actually
  a pair $(R_0,R_1)$ (we consider a reachability condition) and
  $\theta$ is either $0$ or $1$. The important fact is that $R_0$ and
  $R_1$ can be forced to be singletons: this follows from the fact
  that all vertices in $\pgame$ are controlled by \Eloise (and thus she can precisely force in which state the play goes if some $\popn{n+1}$ is eventually done). Therefore, one concludes that the
  size of the arena associated with $\fgame$ is polynomial in the
  size of $\pprocess$. Hence, one has shown the following: checking
  emptiness for an order-$(n+1)$ nondeterministic higher-order
  pushdown automaton whose input alphabet is reduced to a single
  letter can be polynomially reduced to solve a reachability game over
  the transition graph of an order-$n$ higher-order pushdown
  automaton. In conclusion, this latter problem is $n$-\exptime-hard.

\section{Consequences}\label{section:consequences}

\os{This is somehow new in its presentation and needs careful proofreading}

\subsection{Marking The Winning Region}\label{section:markingWR}

If one combines the fact that the winning region in a CPDA parity game is regular (Theorem~\ref{theorem:main}) together with the fact that the model of CPDA can perform regular test (Theorem~\ref{theo:closure-reg-test}) one directly gets the following result.

\begin{corollary}\label{corollary:marking}
Let $\mathcal{A}=\anglebra{\Gamma, Q,\delta, q_0}$ be an $n$-CPDA and let $\pgame$ be an $n$-CPDA parity game defined from $\mathcal{A}$. Then, one can build an order-$n$ CPDA $\mathcal{A}'$ with a state-set $Q'$, a subset $F\subseteq Q'$ and a mapping $\chi:Q'\rightarrow Q$ such that the following holds.
\begin{enumerate}
\item Restricted to the reachable configurations from their respective initial configuration, the transition graph of $\mathcal{A}$ and $\mathcal{A'}$ are isomorphic.
\item For every configuration $(q,\stack)$ of $\mathcal{A}$ that is reachable from the initial configuration, the corresponding configuration $(q',\stack')$ of $\mathcal{A'}$ is such that $q=\chi(q')$, and $(q,\stack)$ is winning for \Eloise in $\pgame$ if and only if $q'\in F$.
\end{enumerate}
\end{corollary}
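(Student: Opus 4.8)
The plan is to obtain the corollary as an immediate composition of the two principal results already at our disposal, namely the regularity of the winning region (Theorem~\ref{theorem:main}) together with the ability of CPDA to perform regular tests on their own configurations (Theorem~\ref{theo:closure-reg-test}). No genuinely new construction is required; the work lies entirely in matching the outputs of the first theorem to the inputs of the second.

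Concretely, I would proceed in three steps. First, apply the second item of Theorem~\ref{theorem:main} to the game $\pgame$: this yields that the winning region $W$ for \Eloise is a regular set of configurations of $\mathcal{A}$, and moreover produces an automaton $\mathcal{B}$ (in the sense of Section~\ref{ssection:regSets}) recognising $W$. Second, instantiate Theorem~\ref{theo:closure-reg-test} with the very same CPDA $\mathcal{A}$ and with the regular set $L := W$ (presented by the automaton $\mathcal{B}$). This directly returns an order-$n$ CPDA $\mathcal{A}'$, a subset $F \subseteq Q'$ of its states, and a mapping $\chi : Q' \rightarrow Q$. Third, I would transcribe the two guarantees of Theorem~\ref{theo:closure-reg-test} into the statement of the corollary: its first item is verbatim the isomorphism of the transition graphs of $\mathcal{A}$ and $\mathcal{A}'$ on the configurations reachable from their respective initial configurations, while its second item asserts that for every reachable $(q,\stack)$ the associated $(q',\stack')$ satisfies $q = \chi(q')$ and $(q,\stack) \in L$ iff $q' \in F$. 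Since $L = W$ is the winning region, the membership $(q,\stack) \in L$ is exactly the statement that $(q,\stack)$ is winning for \Eloise, which is what is claimed.

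There is no real obstacle at this level: the heavy lifting has already been carried out in the proofs of Theorem~\ref{theorem:main} and Theorem~\ref{theo:closure-reg-test}. The only points that deserve a line of care are that Theorem~\ref{theorem:main} provides regularity of the \emph{entire} winning region (not merely decidability of winning from the initial configuration), which is precisely why item~(2) of that theorem — rather than item~(1) — is the one we invoke, and that the order of $\mathcal{A}'$ is preserved at $n$, as guaranteed by Theorem~\ref{theo:closure-reg-test}. With these observations the two results plug together without modification, completing the proof.
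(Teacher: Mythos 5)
Your proposal is correct and follows exactly the paper's own argument: the paper derives the corollary directly by combining item~(2) of Theorem~\ref{theorem:main} (regularity of the winning region) with Theorem~\ref{theo:closure-reg-test} (CPDA can perform regular tests), instantiating the latter with $L$ equal to the winning region, just as you do. Your added remarks — that item~(2) rather than item~(1) of the main theorem is what is needed, and that the order $n$ is preserved — are accurate but routine, so nothing further is required.
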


In other words, it means that from $\pgame$ one can build a new game that behaves the same but where the winning region is explicitly marked (thanks to the subset $F$).

\subsection{Logical Consequences}

We now discuss the consequences of our main result regarding logical properties of structures generated by CPDA. Due to its strong connections with parity games, we obtain positive results regarding the $\mu$-calculus. Before discussing them, we will start with some consideration regarding monadic second-order (MSO) logic.

For both $\mu$-calculus and MSO logic, it is usual to consider structures given by an edge-labelled graphs coming with a labelling function that maps each vertex to a set of properties that hold in it. 

In the setting of CPDA, a natural way to define such a structure is by adding an input alphabet to the CPDA and defining the transition relation as a partial function depending on the current control state, the current top stack symbol and the input letter; the labelling function mapping vertices (\ie configurations) to properties can simply depend on the current control state (as we did when defining the colour in CPDA parity games). Rather than giving a formal definition we give an example that illustrates how to generate an edge-labelled graph using a CPDA with an input alphabet.

\begin{example}\label{Ex:Structure}
Let $\mathcal{A}=\anglebra{\Gamma,Q,\Delta,q_0}$	 be an order-2 CPDA over the input alphabet $A=\{a,b,c,1,2\}$ where $\Gamma=\{\alpha,\beta,\bot\}$, $Q=\{q_0,q_1,q_2\}$ and $\Delta: Q\times \Gamma\times A\rightarrow 2^{Q\times \Op{2}{\Gamma}\times \Op{2}{\Gamma}}$ is defined by 
\begin{itemize}
	\item $\Delta(q_0,\bot,2)=\Delta(q_0,\alpha,2)=\{(q_1,id;\pushn{2})\}$;
	\item $\Delta(q_1,\bot,a)=\Delta(q_1,\alpha,a)=\{(q_0,id;\pushlk{2}{\alpha})\}$;
	\item $\Delta(q_1,\bot,b)=\Delta(q_1,\alpha,b)=\{(q_2,id;\pushlk{2}{\beta})\}$;
	\item $\Delta(q_2,\alpha,1)=\Delta(q_2,\beta,1)=\{(q_2,id;\popn{1})\}$;
	\item $\Delta(q_2,\alpha,c)=\Delta(q_2,\beta,c)=\{(q_0,id;\collapse)\}$;
\end{itemize}
Then $\mathcal{A}$ generates the edge labelled graph from Figure~\ref{Fig:Example:Structure}.
\end{example}

\begin{figure}
\begin{tikzpicture}[scale=0.6,transform shape]
\tikzstyle{every node}=[font=\normalsize]
\tikzset{>=stealth}
\tikzset{edge from parent/.style={draw,-}}
\node (n00) at (0.4,0) {$(q_0,\mksk{\mksk{\bot}})$};

\node (n10) at (3.5,0) {$(q_1,\mksk{\mksk{\bot}\mksk{\bot}})$};
\draw[->] (n00) -- (n10) node[pos=0.5,above]{\small $2$};;
\node (n11) at (3.5,-2){$(q_2,\pstr[.1cm]{\mksk{\mklksk{s1}{\bot}\mksk{\bot \nd(n2-s1){b}}}})$};
\draw[->] (n10) -- (n11) node[pos=0.2,right]{\small $b$};;
\draw[->] (n11) -- (n00) node[pos=0.5,below left]{\small $c$};;
\node (n12) at (3.5,-4) {$(q_2,\mksk{\mksk{\bot}\mksk{\bot}})$};
\draw[->] (n11) -- (n12) node[pos=0.5,right]{\small $1$};;

\node (n20) at (7,0) {$(q_0,\pstr[.1cm]{\mksk{\mklksk{s1}{\bot}\mksk{\bot \nd(n2-s1){a}}}})$};
\draw[->] (n10) -- (n20) node[pos=0.5,above]{\small $a$};;

\node (n30) at (11,0) {$(q_1,\pstr[.1cm]{\mksk{\mklksk{s1}{\bot}\mksk{\bot \nd(n2-s1){a}}\mksk{\bot \nd(n3-s1){a}}}})$};
\draw[->] (n20) -- (n30) node[pos=0.5,above]{\small $2$};;
\node (n31) at (11,-2){$(q_2,\pstr[.1cm]{\mksk{\mklksk{s1}{\bot}\mklksk{s2}{\bot \nd(n2-s1){a}}\mksk{\bot \nd(n3-s1){a}\nd(n4-s2){b}}}})$};
\draw[->] (n30) -- (n31) node[pos=0.3,right]{\small $b$};;
\draw[->] (n31) -- (n20) node[pos=0.5,below left]{\small $c$};;
\node (n32) at (11,-4){$(q_2,\pstr[.1cm]{\mksk{\mklksk{s1}{\bot}\mklksk{s2}{\bot \nd(n2-s1){a}}\mksk{\bot \nd(n3-s1){a}}}})$};
\draw[->] (n32) -- (n00) node[pos=0.5,below left]{\small $c$};;
\draw[->] (n31) -- (n32) node[pos=0.25,right]{\small $1$};;
\node (n33) at (11,-6){$(q_2,\pstr[.1cm]{\mksk{\mklksk{s1}{\bot}\mklksk{s2}{\bot \nd(n2-s1){a}}\mksk{\bot }}})$};
\draw[->] (n32) -- (n33) node[pos=0.5,right]{\small $1$};;

\node (n40) at (15.7,0) {$(q_0,\pstr[.1cm]{\mksk{\mklksk{s1}{\bot}\mklksk{s2}{\bot \nd(n2-s1){a}}\mksk{\bot \nd(n3-s1){a}\nd(n4-s2){a}}}})$};
\draw[->] (n30) -- (n40) node[pos=0.5,above]{\small $a$};;

\node (n50) at (21,0) {$(q_0,\pstr[.1cm]{\mksk{\mklksk{s1}{\bot}\mklksk{s2}{\bot \nd(n2-s1){a}}\mksk{\bot \nd(n3-s1){a}\nd(n4-s2){a}}\mksk{\bot \nd(n5-s1){a}\nd(n6-s2){a}}}})$};
\draw[->] (n40) -- (n50) node[pos=0.5,above]{\small $2$};;
\node (n51) at (21,-2) {$(q_2,\pstr[.1cm]{\mksk{\mklksk{s1}{\bot}\mklksk{s2}{\bot \nd(n2-s1){a}}\mklksk{s3}{\bot \nd(n3-s1){a}\nd(n4-s2){a}}\mksk{\bot \nd(n5-s1){a}\nd(n6-s2){a}\nd(n7-s3){b}}}})$};
\draw[->] (n50) -- (n51) node[pos=0.5,right]{\small $b$};;
\draw[->] (n51) -- (n40) node[pos=0.5,below left]{\small $c$};;
\node (n52) at (21,-4) {$(q_2,\pstr[.1cm]{\mksk{\mklksk{s1}{\bot}\mklksk{s2}{\bot \nd(n2-s1){a}}\mklksk{s3}{\bot \nd(n3-s1){a}\nd(n4-s2){a}}\mksk{\bot \nd(n5-s1){a}\nd(n6-s2){a}}}})$};
\node (n53) at (21,-6) {$(q_2,\pstr[.1cm]{\mksk{\mklksk{s1}{\bot}\mklksk{s2}{\bot \nd(n2-s1){a}}\mklksk{s3}{\bot \nd(n3-s1){a}\nd(n4-s2){a}}\mksk{\bot \nd(n5-s1){a}}}})$};
\node (n54) at (21,-8) {$(q_2,\pstr[.1cm]{\mksk{\mklksk{s1}{\bot}\mklksk{s2}{\bot \nd(n2-s1){a}}\mklksk{s3}{\bot \nd(n3-s1){a}\nd(n4-s2){a}}\mksk{\bot}}})$};
\draw[->] (n51) -- (n52) node[pos=0.5,right]{\small $1$};;
\draw[->] (n52) -- (n53) node[pos=0.5,right]{\small $1$};;
\draw[->] (n53) -- (n54) node[pos=0.5,right]{\small $1$};;
\draw[->] (n52) -- (n20) node[pos=0.5,below left]{\small $c$};;
\draw[->] (n53) -- (n00) node[pos=0.5,below left]{\small $c$};;

\node at (23.5,0) {\dots};
\node at (23.5,-2) {\dots};
\node at (23.5,-4) {\dots};
\node at (23.5,-6) {\dots};
\node at (23.5,-8) {\dots};

\end{tikzpicture}
\caption{The edge labelled graph generated from the CPDA with input from Example~~\ref{Ex:Structure}.}\label{Fig:Example:Structure}
\end{figure}
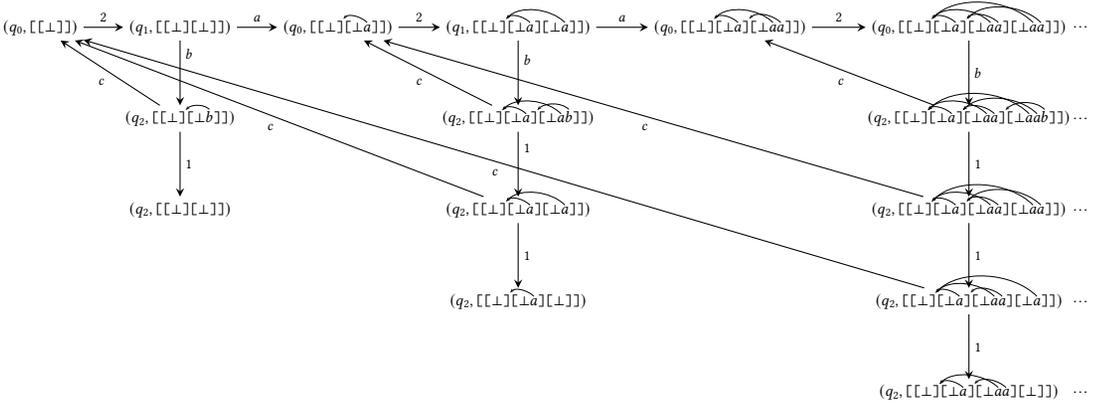

\subsection{Monadic Second-Order Logic}

We refer the reader to \cite{Thomas97} for classical definitions regarding MSO logic over graphs seen as relational structures. 

If one restricts its attention to higher-order pushdown automata, \ie CPDA that do not use the $\collapse$ operation, MSO logic is known to be decidable.

\begin{theorem}{\cite{Caucal02}}\label{thm:MSO-HOPDA}
The structures generated by higher-order pushdown automata have decidable MSO theories.
\end{theorem}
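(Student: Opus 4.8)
The plan is to prove the result by relating higher-order pushdown graphs to the \emph{Caucal hierarchy} and exploiting two preservation properties of MSO decidability. Since a higher-order pushdown automaton never uses $\collapse$, the links in its stacks are inert, so its transition graph is a classical (link-free) higher-order pushdown graph. I would argue by induction on the order $n$ that every order-$n$ pushdown graph has a decidable MSO theory.

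The two graph operations I would use are \emph{unfolding} $\mathrm{Unf}$ (the tree obtained from all finite paths out of a distinguished root, with edges inherited from the original graph) and \emph{MSO-interpretation} (a finite family of MSO formulas defining the vertex set and each edge relation of a new graph inside a given one). The two facts I would invoke are: first, that MSO-interpretations preserve MSO decidability, which is immediate since any MSO sentence $\varphi$ about an interpreted graph $I(G)$ translates syntactically into an MSO sentence $\varphi^{I}$ about $G$ with $I(G)\models\varphi$ iff $G\models\varphi^{I}$; and second, the theorem of Courcelle and Walukiewicz (resting on Rabin's tree theorem~\cite{Rabin69} and Muchnik's tree-iteration result) that unfolding preserves MSO decidability, \ie if $G$ has a decidable MSO theory then so does $\mathrm{Unf}(G)$.

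For the base case $n=1$, decidability is the classical result of Muller and Schupp~\cite{MullerS85} (equivalently, the order-$1$ instance of the hierarchy construction starting from a finite graph). For the inductive step, the key geometric observation is that an order-$(n+1)$ stack is a sequence of order-$n$ stacks, so the top-level operations $\pushn{n+1}$ and $\popn{n+1}$ navigate a tree whose nodes carry order-$n$ stacks, while all order-$\le n$ operations act within a single node. I would make this precise by presenting the order-$(n+1)$ pushdown graph as the image under a fixed MSO-interpretation of the unfolding of the order-$n$ pushdown graph: the unfolding supplies the outermost sequencing (each branch encoding the successive top $n$-stacks), and the interpretation reconstructs the edge relation by reading off, through MSO formulas, which stack operation relates two configurations. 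Combining this presentation with the two preservation facts and the induction hypothesis yields decidability at order $n+1$, completing the induction.

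The main obstacle is the inductive step: one must set up the encoding of order-$(n+1)$ configurations as nodes of the unfolding of the order-$n$ graph so faithfully that every stack operation --- in particular the duplication performed by $\pushn{n+1}$ and the resynchronisation of the current inner $n$-stack after a $\popn{n+1}$ --- is captured by a single MSO-interpretation of bounded quantifier complexity. The genuinely deep ingredient, however, is external: the Courcelle--Walukiewicz preservation theorem for unfolding, whose proof ultimately reduces to the decidability of MSO on the infinite binary tree.
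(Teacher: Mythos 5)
The paper offers no proof of this theorem: it is stated as a quoted result of \cite{Caucal02}, and your proposal reconstructs exactly the argument behind that citation — the Caucal-hierarchy induction alternating unfoldings (MSO-compatible by the Courcelle--Walukiewicz/Muchnik theorem) with MSO interpretations, anchored at order $1$ by Muller--Schupp \cite{MullerS85}. Your outline is correct, and your inductive step (order-$(n+1)$ pushdown graphs as MSO interpretations of unfoldings of order-$n$ ones) is precisely the known level-by-level characterisation of the hierarchy due to Caucal and, in the equivalence with higher-order pushdown graphs, Carayol and W\"ohrle, so you are following the intended route.
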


The next theorem shows that this is no longer the case for collapsible pushdown automata. In the statement below, FO(TC) is the \emph{transitive closure first-order logic} which is defined by extending the first-order logic with a transitive closure operator (see \eg \cite{WohrleT07}); in particular it subsumes the extension of first-order logic with a reachability predicate.

\begin{theorem}\label{thm:MSO-CPDA}
There exists a structure generated by a collapsible pushdown automata that has an undecidable MSO theory (actually even an undecidable FO(TC) theory).
\end{theorem}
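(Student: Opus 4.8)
The plan is to exhibit a \emph{fixed} graph generated by an order-$2$ CPDA in which first-order logic with transitive closure (FO(TC)) interprets full arithmetic, and then to transfer undecidability through the interpretation. Since reachability, and hence the transitive closure of any FO-definable relation, is MSO-definable, FO(TC) is over graphs a fragment of MSO; thus it suffices to prove undecidability of the FO(TC) theory, and undecidability of the MSO theory follows a fortiori. The candidate structure is precisely the one of Example~\ref{Ex:Structure}, whose graph $G$ is drawn in Figure~\ref{Fig:Example:Structure}: the $q_0$-configurations reached by reading $(2a)^m$ form a horizontal ``spine'', while from spine position $m$ the pattern $2\,b\,1^{j}\,c$ performs a $\collapse$ along a freshly created $2$-link and jumps back to spine position $m-j$. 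I would first confirm that, up to edge labelling, $G$ is isomorphic to the infinite triangular grid $\{(m,n)\mid 0\le n\le m\}$ equipped with one horizontal and one vertical adjacency.

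First I would make this geometric content precise by giving an FO(TC)-interpretation of the grid in $G$. The horizontal successor $S_1$ is recovered as a $2$-edge followed by an $a$-edge between consecutive $q_0$-spine vertices; the vertical successor $S_2$ is recovered from the $b/1^{\ast}/c$ descent, where the $\collapse$ edges realise the second coordinate. Because a $\collapse$ edge jumps ``diagonally'' (it changes depth and simultaneously returns to a lower spine level), the raw edge relation must be post-processed: using the transitive closures of $S_1$, of $S_2$, and of the diagonal relations, one straightens the coordinates so that the interpreted structure is exactly the triangular grid, with the auxiliary $q_1$- and $q_2$-vertices excluded from the domain by an FO formula. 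Verifying that these definable relations coincide with genuine grid adjacencies on \emph{all} configurations, and that no spurious vertex pollutes the domain, is where the $\collapse$ semantics of the CPDA really enters, and I expect this to be the main obstacle.

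Next, working inside the grid $\langle\mathbb{N}^2,S_1,S_2\rangle$, I would FO(TC)-interpret arithmetic $\langle\mathbb{N},+,\times\rangle$. Identify $\mathbb{N}$ with the $x$-axis (points with no $S_2$-predecessor), ordered by the transitive closure of $S_1$. The relation ``same anti-diagonal'' is the transitive closure of the move $(x,y)\mapsto(x+1,y-1)$, i.e.\ $S_1$ composed with $S_2^{-1}$; sliding a point down its anti-diagonal to the axis then defines addition, with $\mathrm{add}(p,q,r)$ holding iff the point whose $x$-coordinate is given by $p$ and whose $y$-coordinate is given by $q$ (the latter recovered by sending $q$ across its anti-diagonal to the $y$-axis) lies on the anti-diagonal through $r$. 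Multiplication is obtained by one further transitive closure: on pairs of points one defines the step $(t,s)\mapsto(t+1,\,s+i)$ using the already-definable addition, and $r=i\cdot j$ holds iff $(j,r)$ is reachable from $(0,0)$ by this step. Since FO(TC) is closed under substitution of FO(TC)-definable relations, all of this stays within FO(TC); keeping multiplication out of genuine set quantification (which would only give MSO) is the second delicate point. As the first-order theory of $\langle\mathbb{N},+,\times\rangle$ is undecidable and an FO(TC)-interpretation translates sentences effectively, composing the two interpretations shows that the FO(TC) theory of $G$ is undecidable, and by the fragment remark so is its MSO theory. I would close by stating the standard transfer lemma for FO(TC)/MSO interpretations as the justification for the final step.
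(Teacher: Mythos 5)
Your first step is exactly the paper's: the paper takes the same order-$2$ CPDA of Example~\ref{Ex:Structure} and interprets the infinite half-grid in its transition graph (via explicit regular path expressions $C=\overline{1}^{\ast}\,\overline{b}\,a\,2\,b\,1^{\ast}$ and $R$, restricting to non-isolated vertices), then simply \emph{cites} the known undecidability of the MSO theory of the grid and of its FO(TC) theory \cite{WohrleT07}. Your second step — re-proving the grid's undecidability from scratch by interpreting $\langle\mathbb{N},+,\times\rangle$ in FO(TC) — is a genuinely different, more self-contained route, but as written it contains a real gap in the logic bookkeeping. Your multiplication step applies a transitive closure to a step relation on \emph{pairs} of domain elements, i.e.\ a dyadic (4-ary) TC operator. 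MSO quantifies over sets of vertices only, so the transitive closure of a definable relation on pairs of vertices is not in general MSO-expressible; hence the sentences your composed interpretation produces are not MSO sentences, and your opening claim that ``FO(TC) is over graphs a fragment of MSO'' — which is what carries the undecidability of the MSO theory ``a fortiori'' — holds only for \emph{monadic} TC (closures of binary relations on elements, which is also all that the reachability-predicate remark in the paper's statement requires). You are caught in a fork: with full Immerman-style TC your multiplication works but the transfer to MSO fails; with monadic TC the transfer works but your multiplication step is not expressible. So as it stands the MSO half of the theorem is not established.

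The gap is fixable, and cheaply: since the interpreted structure is a grid, the pair $(t,s)$ need not be coded as a pair of axis points — it \emph{is} a single grid vertex with those coordinates. Replacing your step on pairs by the vertex-to-vertex relation ``$(t,s)\mapsto(t+1,s+i)$'' (with $i$ a first-order parameter, and coordinates recovered by your definable projections) makes the closure monadic, after which every sentence you use is both FO(TC) with monadic TC and MSO, and the whole pipeline goes through. Two smaller points need attention in the same spirit: (i) the structure is the triangular half-grid $\{(m,n)\mid n\le m\}$, so the orbit $(k,ik)$ of your multiplication step leaves the domain for $i\ge 2$; track $(ik,k)$ instead (step $x$ by $i$, $y$ by $1$), which stays inside the half-grid, and handle $i=0$ separately — and similarly define addition only in the case $q\le p$, using commutativity otherwise; (ii) the ``straightening'' of the $\collapse$-edges into genuine grid adjacencies, which you rightly flag as the main obstacle, is where the paper does concrete work with the expressions $C$ and $R$, and your proof would need an equally explicit verification there rather than a promissory note. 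With these repairs your argument is correct and buys a self-contained proof where the paper delegates to \cite{WohrleT07} and to the classical undecidability of the grid's MSO theory.
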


\begin{proof}
Consider the following MSO interpretation $\mathcal{I}$\footnote{In this proof think of an interpretation as a collection of formulas of the form $\phi_A(x,y)$. Applying such an interpretation to a structure leads to a new structure with the same domain but different transitions: there is an $A$-labelled edge from $x$ to $y$ in the new structure if and only if $\phi_A(x,y)$ holds in the original structure.} (see \eg \cite{Courcelle94})  applied to the structure defined by the order-$2$ CPDA from Example~\ref{Ex:Structure}.

\[\begin{array}{lll}
\phi_A(x, y) & = & x \stackrel{C}{\longrightarrow} y \; \wedge \;
x \stackrel{R}{\longrightarrow} y \\
\phi_B(x, y) & = & x \stackrel{1}{\longrightarrow} y\\
\end{array}\]
with $C  =  {\overline 1}^\ast \, \overline b \, a \, 2 \, b \, 1^\ast$
and $R  =  c \, 2 \, a \, \overline c \; \vee \; \overline 1 \, c \,2 \, a \, \overline c \, 1$ where a bar-version of an edge label refers to an edge which is taken in the other direction. Hence, $C$ is used to enforce that $A$-edges occur only between vertices from consecutive columns in the original structure while $R$ is used to enforce that $A$-edges occurs only between vertices from consecutive rows in the original structure. 

We observe that the image of the structure generated by $\mathcal{A}$ by the interpretation $\mathcal{I}$, when restricted to its non-isolated vertices, is the “infinite half-grid” (see Figure~\ref{fig:hgrid}).

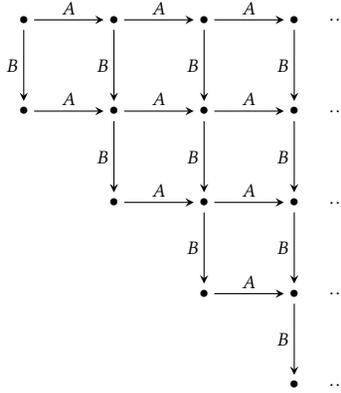
\begin{figure}
\begin{tikzpicture}[scale=.6,transform shape]
\tikzset{>=stealth}
\tikzstyle{every node}=[font=\Large]
\node(a1) at (0,0) {$\bullet$};
\node(a2) at (2,0) {$\bullet$};
\node(a3) at (4,0) {$\bullet$};
\node(a4) at (6,0) {$\bullet$};
\draw[->] (a1) -- (a2)  node[pos=0.5,above]{$A$};
\draw[->] (a2) -- (a3)  node[pos=0.5,above]{$A$};
\draw[->] (a3) -- (a4)  node[pos=0.5,above]{$A$};

\node(b1) at (0,-2) {$\bullet$};
\node(b2) at (2,-2) {$\bullet$};
\node(b3) at (4,-2) {$\bullet$};
\node(b4) at (6,-2) {$\bullet$};
\draw[->] (b1) -- (b2)  node[pos=0.5,above]{$A$};
\draw[->] (b2) -- (b3)  node[pos=0.5,above]{$A$};
\draw[->] (b3) -- (b4)  node[pos=0.5,above]{$A$};

\draw[->] (a1) -- (b1)  node[pos=0.5,left]{$B$};
\draw[->] (a2) -- (b2)  node[pos=0.5,left]{$B$};
\draw[->] (a3) -- (b3)  node[pos=0.5,left]{$B$};
\draw[->] (a4) -- (b4)  node[pos=0.5,left]{$B$};

\node(c2) at (2,-4) {$\bullet$};
\node(c3) at (4,-4) {$\bullet$};
\node(c4) at (6,-4) {$\bullet$};
\draw[->] (c2) -- (c3)  node[pos=0.5,above]{$A$};
\draw[->] (c3) -- (c4)  node[pos=0.5,above]{$A$};

\draw[->] (b2) -- (c2)  node[pos=0.5,left]{$B$};
\draw[->] (b3) -- (c3)  node[pos=0.5,left]{$B$};
\draw[->] (b4) -- (c4)  node[pos=0.5,left]{$B$};

\node(d3) at (4,-6) {$\bullet$};
\node(d4) at (6,-6) {$\bullet$};
\draw[->] (d3) -- (d4)  node[pos=0.5,above]{$A$};

\draw[->] (c3) -- (d3)  node[pos=0.5,left]{$B$};
\draw[->] (c4) -- (d4)  node[pos=0.5,left]{$B$};

\node(e4) at (6,-8) {$\bullet$};

\draw[->] (d4) -- (e4)  node[pos=0.5,left]{$B$};

\node at (7,0) {\dots};
\node at (7,-2) {\dots};
\node at (7,-4) {\dots};
\node at (7,-6) {\dots};
\node at (7,-8) {\dots};

\end{tikzpicture}
\caption{The “infinite half-grid”.}\label{fig:hgrid}
\end{figure}

As the infinite (half-) grid has an undecidable MSO theory and as MSO interpretations preserve MSO decidability we conclude that the MSO theory of the structure generated by $\mathcal{A}$ is undecidable.

To refine the result to FO(TC), we simply remark that the interpretation $I$ is FO(TC) definable and that the infinite (half) grid has an undecidable FO(TC) theory \cite{WohrleT07}.
	\end{proof}

\begin{remark}
One can wonder about fragments of MSO weaker than FO(TC), \emph{e.g.} FO(Reach) (the extension of first-order logic with the reachability predicate) or the classical first-order logic (FO). 
On a positive side, Kartzow proved in \cite{Kartzow10} that the structures generated by order-$2$- CPDA have decidable FO(Reach) theories.
But moving to order-3 leads to undecidability, even if one restricts to FO, as proved by Broadbent in \cite{Broadbent12}.
\end{remark}

The following is a direct consequence of Theorem~\ref{thm:MSO-HOPDA} and Theorem~\ref{thm:MSO-CPDA}.

\begin{corollary}
	The class of graphs generated by collapsible pushdown automata strictly contains the class of graphs generated by higher-order pushdown automata.
\end{corollary}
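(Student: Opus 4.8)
The plan is to derive the Corollary purely as a logical consequence of the two cited theorems, with the only genuine content being the (easy) inclusion of the two classes and an invariance remark about MSO decidability. First I would establish the containment direction: a higher-order pushdown automaton is, by definition, nothing but a collapsible pushdown automaton that never uses the $\collapse$ operation (equivalently, one whose transition function lands in the operation set $\Op{n}{\Gamma}\setminus\{\collapse\}$). Hence the very same automaton, read as a CPDA, produces the same transition graph, and so every graph generated by a HOPDA is generated by a CPDA. This step is a one-line unwinding of the definitions from Section~\ref{section:Preliminaries} and needs no real argument.

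The heart of the proof is strictness, which I would obtain by contradiction. Suppose the two classes of graphs were equal. Then the structure $G$ exhibited in Theorem~\ref{thm:MSO-CPDA}, which is generated by a collapsible pushdown automaton, would also be generated by some higher-order pushdown automaton. By Theorem~\ref{thm:MSO-HOPDA} (Caucal), every structure generated by a higher-order pushdown automaton has a decidable MSO theory, so $G$ would have a decidable MSO theory. But Theorem~\ref{thm:MSO-CPDA} asserts precisely that $G$ has an \emph{undecidable} MSO theory. This contradiction shows that the inclusion is strict, establishing the Corollary.

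The one point requiring a word of care, rather than being a genuine obstacle, is the matching of the two notions of \emph{generation}: Theorem~\ref{thm:MSO-CPDA} speaks of the edge-labelled graph of Example~\ref{Ex:Structure}, while Theorem~\ref{thm:MSO-HOPDA} speaks of structures generated by HOPDA. I would note that MSO satisfiability, hence MSO decidability, is invariant under graph isomorphism, so ``$G$ is generated by a HOPDA'' is to be read up to isomorphism and the transfer of decidability along Theorem~\ref{thm:MSO-HOPDA} is unaffected. With this remark in place, there is no computation to grind through: the whole argument is the inclusion plus the contradiction above. Consequently I do not expect any substantial difficulty in this final step — all the technical weight already resides in Theorem~\ref{thm:MSO-CPDA}, whose proof reduces the undecidable MSO theory of the infinite half-grid to $G$ via an MSO interpretation.
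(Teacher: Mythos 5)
Your proposal is correct and matches the paper's own argument: the paper derives this corollary directly from Theorem~\ref{thm:MSO-HOPDA} and Theorem~\ref{thm:MSO-CPDA}, exactly via the inclusion (HOPDA being CPDA that never invoke $\collapse$) plus the contradiction that a graph with an undecidable MSO theory cannot be generated by a HOPDA. Your added remarks on isomorphism-invariance of MSO decidability and on reading the link-decorated CPDA stacks as generating the same transition graph are just explicit versions of what the paper leaves implicit.
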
 

\subsection{$\mu$-Calculus}

We refer the reader to \cite{AN01} for classical definitions regarding $\mu$-calculus as well as its connections with games.

Due to the tight connection between $\mu$-calculus model-checking and solving parity games, and the fact that the class of structures generated by CPDA is trivially closed by taking a synchronised product with a finite graph, Theorem~\ref{theorem:main} directly leads the following result.

\begin{corollary}\label{cor:mu-calculus} The following holds.
\begin{enumerate}
\item The $\mu$-calculus model-checking problem against structures generated by collapsible pushdown automata is decidable and its complexity (where $n$ denotes the order of the CPDA) is $n$-times exponential in the number of states of the CPDA, $n$-times exponential in the alternation depth of greatest and smallest fixpoints in the $\mu$-calculus formula and polynomial in the size of the stack alphabet of the CPDA. 
\item The sets of configurations definable by a $\mu$-calculus formula over a graph generated by a collapsible pushdown automata are regular.
\end{enumerate}
\end{corollary}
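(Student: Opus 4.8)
The plan is to reduce both items to Theorem~\ref{theorem:main} by way of the classical \emph{model-checking game} for the modal $\mu$-calculus \cite{AN01}. Fix an $n$-CPDA $\mathcal{A}=\anglebra{\Gamma,Q,\Delta,q_0}$ (equipped with an input alphabet) generating an edge-labelled structure $G$, together with a $\mu$-calculus formula $\phi$. First I would build, \emph{from $\phi$ alone}, a finite parity gadget $\mathcal{F}_\phi=(P,P_{\Ei}\uplus P_{\Ai},\rightarrow,\col_\phi,\iota_\phi)$ — equivalently, the alternating parity automaton equivalent to $\phi$ — whose states record the current subformula together with the bookkeeping needed to resolve modalities and fixpoint unfoldings. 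Classically $|P|=\mathcal{O}(|\phi|)$, the number of colours used by $\col_\phi$ equals (up to an additive constant) the alternation depth of $\phi$, and $\iota_\phi\in P$ is the entry point associated with $\phi$; crucially $\mathcal{F}_\phi$ does not depend on $G$.

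The central step is to form the synchronised product of $G$ with $\mathcal{F}_\phi$. As noted just above the statement, this product is again generated by an $n$-CPDA: one folds the current $\mathcal{F}_\phi$-state into the control state, obtaining a CPDA $\mathcal{A}_\phi$ of the same order $n$, with control states $Q\times P$ and the \emph{same} stack alphabet $\Gamma$ (modalities trigger the stack operations of $\mathcal{A}$, whereas boolean and fixpoint steps leave the stack unchanged and only update the $P$-component). Colouring each configuration by the priority $\col_\phi$ of its $P$-component and assigning ownership through $P_{\Ei}\uplus P_{\Ai}$ turns $\mathcal{A}_\phi$ into an $n$-CPDA parity game $\mathbb{G}_\phi$. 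By correctness of the model-checking game, a configuration $(q,s)$ of $\mathcal{A}$ satisfies $\phi$ if and only if \Eloise wins $\mathbb{G}_\phi$ from the configuration $((q,\iota_\phi),s)$, whose stack is literally $s$.

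Item~(1) is then immediate: Theorem~\ref{theorem:main}(1) makes the winner of $\mathbb{G}_\phi$ decidable, and substituting the parameters of $\mathcal{A}_\phi$ — a state set of size $|Q|\cdot\mathcal{O}(|\phi|)$, a number of colours equal to the alternation depth of $\phi$, and the unchanged stack alphabet $\Gamma$ — into the complexity analysis of Section~\ref{section:summary} yields the advertised bounds ($n$-fold exponential in $|Q|$ and in $|\phi|$, $n$-fold exponential in the alternation depth, polynomial in $|\Gamma|$). For item~(2), Theorem~\ref{theorem:main}(2) provides an effectively regular set $W$ of configurations describing \Eloise's winning region in $\mathbb{G}_\phi$. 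The set of $\mathcal{A}$-configurations defined by $\phi$ is exactly $\{(q,s)\mid ((q,\iota_\phi),s)\in W\}$, and the correspondence $(q,s)\mapsto((q,\iota_\phi),s)$ is the identity on stacks, merely pairing the control state with the fixed component $\iota_\phi$. Hence an automaton (in the sense of Section~\ref{ssection:regSets}) recognising $W$ is converted into one recognising the $\phi$-definable set by hard-wiring $\iota_\phi$ when reading the final control-state letter; regularity is therefore preserved.

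Since everything is assembled from off-the-shelf ingredients, I expect no deep obstacle. The two points demanding genuine care are: (i) checking that the synchronised product really remains an $n$-CPDA parity game of the same order — i.e. the closure of CPDA-generated structures under product with a finite automaton, which holds because only the finite control is enriched while the stack dynamics are inherited verbatim from $\mathcal{A}$; and (ii) the bookkeeping that distributes the generic CPDA-game complexity of Theorem~\ref{theorem:main} into the separate dependencies on $|Q|$, on the formula (its size and alternation depth), and on $|\Gamma|$. The regularity transfer in item~(2) is clean precisely because the configuration-to-configuration map is stack-preserving, so no real difficulty arises there.
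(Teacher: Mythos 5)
Your proposal is correct and follows essentially the same route as the paper, which proves the corollary exactly by the synchronised product of the CPDA-generated structure with the finite parity gadget for $\phi$ (the alternating parity automaton / model-checking game), followed by an application of Theorem~\ref{theorem:main} for decidability, the complexity bookkeeping of Section~\ref{section:summary}, and regularity of the winning region. Your elaboration of the stack-preserving configuration correspondence for item~(2) is a faithful unpacking of what the paper leaves implicit.
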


\begin{remark}
In the case of higher-order pushdown automata, links are useless and therefore stacks can be seen as finite words over the alphabet $\Gamma\cup\{[,]\}$ (where $\Gamma$ denotes the stack alphabet) and regular sets of configurations are regular languages is the traditional sense of finite words. Hence, Corollary~\ref{cor:mu-calculus} permits to retrieve the main result in~\cite[Theorem~6]{CHMOS08} where the $\mu$-calculus global model-checking problem against higher-order pushdown automata was tackled. 

Also note that in this setting, a stronger notion of regularity was introduced in \cite{Carayol05} and shown to exactly capture MSO-definable subsets of configuration.
\end{remark}

As we did in Section~\ref{section:markingWR} to mark winning regions, combining item~(2) from Corollary~\ref{cor:mu-calculus}  together with the fact that the model of CPDA can perform regular test (Theorem~\ref{theo:closure-reg-test}) one directly gets the following result about marking a $\mu$-calculus defined subset of vertices in the transition graph of a CPDA.

\begin{corollary}\label{corollary:marking-mu-calculus}
Let $\mathcal{A}=\anglebra{\Gamma, Q,\delta, q_0}$ be an $n$-CPDA and let $\phi$ be a $\mu$-calculus formula defining a subset of vertices in the transition graph of $\mathcal{A}$. Then, one can build an order-$n$ CPDA $\mathcal{A}'$ with a state-set $Q'$, a subset $F\subseteq Q'$ and a mapping $\chi:Q'\rightarrow Q$ such that the following holds.
\begin{enumerate}
\item Restricted to the reachable configurations from their respective initial configuration, the transition graph of $\mathcal{A}$ and $\mathcal{A'}$ are isomorphic.
\item For every configuration $(q,\stack)$ of $\mathcal{A}$ that is reachable from the initial configuration, the corresponding configuration $(q',\stack')$ of $\mathcal{A'}$ is such that $q=\chi(q')$, and $\phi$ holds in $(q,\stack)$ if and only if $q'\in F$.
\end{enumerate}
\end{corollary}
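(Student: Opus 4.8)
The plan is to observe that this corollary is an exact replay of Corollary~\ref{corollary:marking}, with the winning region replaced by the set of configurations satisfying $\phi$: both are regular sets of configurations, and the machinery of Theorem~\ref{theo:closure-reg-test} lets a CPDA silently test membership in any regular set while leaving its transition graph unchanged. So the whole argument is a two-step combination of results already in hand, exactly as announced in Section~\ref{section:markingWR}.

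Concretely, first I would invoke item~(2) of Corollary~\ref{cor:mu-calculus}: the set $L_\phi$ of configurations of $\mathcal{A}$ at which $\phi$ holds is a regular set of configurations, and---since the regularity asserted there is effective---one can compute an automaton $\mathcal{B}$ recognising $L_\phi$. Then I would apply Theorem~\ref{theo:closure-reg-test} to the pair $(\mathcal{A}, L_\phi)$, which yields an order-$n$ CPDA $\mathcal{A}'$ together with a state set $Q'$, a subset $F\subseteq Q'$ and a map $\chi:Q'\to Q$ enjoying exactly the two properties listed in the statement of that theorem.

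It then remains only to transcribe the conclusions. Item~(1) of the corollary is literally item~(1) of Theorem~\ref{theo:closure-reg-test}. For item~(2), Theorem~\ref{theo:closure-reg-test} guarantees that every reachable $(q,\stack)$ corresponds to some $(q',\stack')$ with $q=\chi(q')$ and with $(q,\stack)\in L_\phi$ if and only if $q'\in F$; since membership of $(q,\stack)$ in $L_\phi$ is by definition equivalent to $\phi$ holding in $(q,\stack)$, this is precisely the desired equivalence.

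I expect no genuine obstacle here: all the difficulty has already been absorbed into Theorem~\ref{theorem:main} (through Corollary~\ref{cor:mu-calculus}, which makes the $\phi$-defined set regular) and into Theorem~\ref{theo:closure-reg-test} (which performs the in-place regular test). The only point demanding a word of care is effectiveness---one must be sure that the automaton for $L_\phi$ is genuinely computable rather than merely existent---but this is exactly what the effective version of Corollary~\ref{cor:mu-calculus}(2) provides, itself inherited from the effective automaton construction underlying Theorem~\ref{theorem:main}.
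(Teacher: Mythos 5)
Your proposal is correct and follows exactly the paper's argument: the paper derives this corollary precisely by combining item~(2) of Corollary~\ref{cor:mu-calculus} (effective regularity of the $\mu$-calculus-definable set, via the game reduction and Theorem~\ref{theorem:main}) with the regular-test construction of Theorem~\ref{theo:closure-reg-test}. Your added remark on effectiveness of the automaton for $L_\phi$ is the right point of care and is indeed supplied by the effective automaton construction underlying Theorem~\ref{theorem:main}.
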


\section{Perspectives}

A natural perspective is to combine the results presented here with the equi-expressivity result~\cite{HMOS08,HMOS17,CS12} between  higher-order recursion schemes and collapsible pushdown automata for generating trees.
In particular they imply the decidability of the MSO model-checking problem, both its local~\cite{HMOS08} and global version (also known as reflection)~\cite{BCOS10}, and the MSO selection problem (a synthesis-like problem)~\cite{CS12}. 

These results and other consequences are discussed in full detail in a companion paper~\cite{BCOS20}.

\bibliographystyle{ACM-Reference-Format}
\bibliography{abbrevs,main}

\end{document}